\theoremstyle{plain}
\newtheorem{acknowledgement}{Acknowledgement}
\newtheorem{question}{Question}
\newtheorem{theorem}{Theorem}[section]
\newtheorem{corollary}[theorem]{Corollary}
\newtheorem{lemma}[theorem]{Lemma}
\newtheorem{proposition}[theorem]{Proposition}
\newtheorem{remark}[theorem]{Remark}
\newtheorem{definition}[theorem]{Definition}
\newtheorem{notation}[theorem]{Notation}
\newtheorem{conjecture}{Conjecture}
\newtheorem{main theorem}{Main Theorem}
\newcommand{\BB}{{\partial}} 
\newcommand{\TB}{\partial_T} 
\newcommand{\STB}{\partial_{ST}} 
\newcommand{\PL}{~}
\newcommand{\PP}{P} 
\newcommand{\LS}{LS} 
\newcommand{\stopn}{\mathscr{O}^n} 
\title{Functional Integral Construction of Topological Quantum Field Theory}
\author{Zhengwei Liu}
\address{Z. Liu, Yau Mathematical Sciences Center and Department of Mathematics, Tsinghua University, Beijing, 100084, China}
\address{Beijing Institute of Mathematical Sciences and Applications, Huairou District, Beijing, 101408, China}
\email{liuzhengwei@mail.tsinghua.edu.cn}
\begin{document}

\begin{abstract}
We introduce regular stratified piecewise linear manifolds to describe lattices and investigate the lattice model approach to topological quantum field theory in all dimensions. We introduce the unitary $n+1$ alterfold TQFT and construct it from a linear functional on an $n$-dimensional lattice model on an $n$-sphere satisfying three conditions: reflection positivity, homeomorphic invariance and complete finiteness. A unitary spherical $n$-category is mathematically defined and emerges as the local quantum symmetry of the lattice model. The alterfold construction unifies various constructions of $n+1$ TQFT from $n$-dimensional lattice models and $n$-categories. 
In particular, we construct a non-invertible unitary 3+1 alterfold TQFT from a linear functional and derive its local quantum symmetry as a unitary spherical 3-category of Ising type with explicit 20j-symbols, so that the scalar invariant of 2-knots in piecewise linear 4-manifolds could be computed explicitly.     
\end{abstract}

\maketitle

\section{Introduction}
In this paper we propose a new program to address a long-standing open problem: how to construct a meaningful unitary topological quantum field theory (TQFT) in arbitrary dimension. We believe that the ideas here are relevant both to the development of category theory as well as to the mathematical foundations of physics. We will return to this theme in future works concentrating both on the theory and on examples. 

Here we construct a unitary $n+1$ alterfold TQFT from an functional integral on $n$-dimensional lattice models, for arbitrary $n$, in Theorem \ref{Thm: Alterfold TQFT}. The $n+1$ alterfold TQFT has alternating $A/B$-colored $n+1$ manifolds with the lattice model on the $n$-manifold between them. Our example reduces to Atiyah's TQFT \cite{Ati88} when the boundary is a $B$-color $n$-manifold. This present work builds on our recent work on alterfolds \cite{LMWW23a,LMWW23b}, and on the other works we now explain. 

Witten constructed a 2+1 TQFT using Chern-Simons theory and obtained an invariant of links in 3-manifolds as a path integral  \cite{Wit89}, generalizing the Jones polynomial originated from subfactor theory \cite{Jon83,Jon85,Jon87}, and other link invariants from the representation theory of Drinfeld-Jimbo quantum groups \cite{Jim85,Dri86,HOMFLY85,PT88,Kau90}. 
Feynman's path integral is a powerful method in physics, but the measure of the path space is only mathematically defined for a few cases \cite{GliJaf87}.
Atiyah provided a mathematical axiomatization of TQFT to study its topological invariants \cite{Ati88} which avoids the path integral.
The topological invariant of Witten's 2+1 TQFT can be rigorously defined using the link invariants from quantum groups and the surgery theory, known as the Witten-Reshetikhin-Turaev TQFT \cite{ResTur91}. 
The Turaev-Viro-Barrett-Westbury 2+1 TQFT from a spherical fusion category \cite{TurVir92,BarWes96} is a state sum construction over a triangulation, which is a combinatorial analogue of path integral.

Witten also constructed a 3+1 TQFT \cite{Wit88} which captures Donaldson's invariant of smooth 4-manifolds \cite{Don83,Don83b,Don90}. By this invariant, Donaldson constructed exotic four-dimensional spaces together with the seminal work of Freedman \cite{Fre82}. 
The Turaev-Viro state sum construction has been generalized to construct 3+1 TQFT using a braided fusion category in \cite{CraYet93,CKY97,Cui19} or a fusion 2-category in \cite{DogReu18}.

In higher dimensions, Dijkgraaf-Witten constructed an $n$-dimensional TQFT from the group cohomology of a finite group \cite{DijWit90}. 
Lurie introduced a fruitful theory of $(\infty,n)$ category in \cite{Lur09} to study non-invertible higher symmetries and to answer the cobordism hypothesis of Baez and Dolan \cite{BaeDol95}.  It is widely believed that the state sum construction of TQFT from spherical categories will work in any dimension e.g. \cite{KWZ15,Wal21}, but there is no agreement on the mathematical definition of a (unitary) spherical $n$-category; see a recent discussion in \cite{FHJ24}.

The 2+1 TQFT is exceptionally successful due to the fruitful examples of quantum symmetries coming from the representation theory of quantum groups, subfactors, vertex operator algebras, conformal field theory, etc.
It is highly expected, but remains challenging, to generalize those frameworks to higher dimensions, which should provide examples of non-invertible higher quantum symmetries, such spherical $n$-categories, from their higher representation theory. 

In this paper, we provide a functional integral approach to construct an $n+1$ TQFT using a linear functional $Z$ on an $n$-dimensional lattice model for any dimension $n$. 
This functional integral point of view has been well established in constructive quantum field theory (QFT) \cite{GliJaf87}; we bring that method to the study of TQFT. 

Mathematically, we introduce labelled regular stratified piece-wise linear manifolds to formulate the lattices of a general shape and their configuration space.
We study their basic properties in \S\ref{Sec: Labelled Regular Stratified Manifolds} and prove the transversal theorem between stratified manifolds and triangulations in Theorem \ref{Thm: transversality}. Based on it, we can compute the partition function in $n+1$ TQFT as a state sum according to a transversal triangulation.

In an $n$-dimensional lattice model, the configuration space of a lattice is a Hilbert space, which is the tensor product of Hilbert spaces of local spins, regarded as labels of stratified manifolds. 
We construct an $n+1$ TQFT from a linear functional $Z$ on the configuration spaces of lattices on the $n$-sphere $S^n$ in Theorem \ref{Thm: Alterfold TQFT}, such that the $(n+1)$-manifolds of the topological quantum field theory have $A/B$ colors and the $n$-dimensional hyper surfaces (or domain walls) between them are decorated by $n$-dimensional lattice models. We call such TQFT an $n+1$ alterfold TQFT (Def.~\ref{Def: alterfold TQFT}), generalizing the 2+1 alterfold TQFT in \cite{LMWW23a,LMWW23b}. 

To achieve the construction of the TQFT, we assume three conditions of the linear functional $Z$ in \S \ref{Sec: Hyper-Sphere Functions},
\begin{enumerate}
    \item (RP) reflection positivity;
    \item (HI) homeomorphic invariance;
    \item (CF) complete finiteness.
\end{enumerate}

Condition (RP) means that the inner product induced from $Z$ is positive semi-definite between configuration spaces on half $n$-discs of $S^n$ for any fixed common boundary on the equator $S^{n-1}$.  
The functional integral $Z$ satisfying Reflection Positivity (RP) is a mathematical formulation of the measure on the path space, by the Wick rotation from statistical physics to quantum field theory (QFT) \cite{OstSch73}. 
To formulate the theory over a general field $\mathbb{K}$, we need to replace the condition (RP) by strong semisimplicity.
 
Condition (HI) means that the linear functional $Z$ is invariant under homeomorphisms on $S^n$.
Condition (HI) ensures the QFT to be topological, namely the partition function of the TQFT, still denoted by $Z$, is homeomorphic invariant.

Condition (CF) means that the inner product induced from $Z$ has finite rank between configuration spaces of lattices on $D^k\times S^{n-k}$ for any fixed boundary on $D^{k-1}\times S^{n-k}$. Condition (CF) ensures the partition function of the TQFT to be a finite state sum. 

From the physics point of view, we know everything, if we know the linear functional $Z$. 
We implement this idea mathematically as the Null Principle that if we cannot distinguish two vectors from $Z$, then we consider them to be the same mathematically. 
The null principle encodes numerous algebraic relations from the kernel of $Z$. 
We highlight our identification of vectors on different lattices with the same boundary using the null principle of the configuration spaces, which is different from the renormalization group methods in condensed matter physics. 

In \S \ref{Sec: Hyper Disc Algebras and Sphere Algebras}, we introduce hyper discs (or $D^n$) algebras to study local algebraic relations, generalizing the notion of planar algebras of Jones \cite{Jon21} for $n=2$. It captures the action of the operad of regular stratified manifolds on the configurations on local $n$-discs.

In \S \ref{Sec: Idempotent Completion and Spherical n-category}, we study the higher representation theory of the $D^n$ algebra which form an $n$-category. It suggests a mathematical definition of a unitary/spherical $n$-category, together with the operad action and the linear functional $Z$ with the condition (RP)/(HI).

In \S \ref{Sec: Alterfold TQFT}, we derive the skein relations for the $n+1$ alterfold TQFT in Def.~\ref{Def: S-relation}, using the null principle. The relations can be considered algebraically as a resolution of identity and topologically as removing a $D^k\times \varepsilon D^{n-k+1}$ handle with a bistellar move on its boundary. 
We prove the consistency of the relations and construct an $n+1$ alterfold TQFT in Theorem \ref{Thm: space-time TQFT} and Theorem \ref{Thm: Alterfold TQFT}.
In \S \ref{Sec: Examples}, we give some examples to understand better the concepts in the alterfold TQFT.
If we shrink the $B$-color manifolds and evaluate the partition function according to the triangulation, then we obtain a higher analogue of the Turaev-Viro TQFT of the spherical $n$-category. 
If we shrink the $A$-color manifolds and evaluate the partition function according to surgery move, then we obtain a higher analogue of the Reshetikhin-Turaev TQFT of the higher Drinfeld center of the spherical $n$-category. The later one suggests a fruitful higher braid statistics of membranes, as discussed in \S \ref{Sec: Higher Braid Statistics}.

In \S \ref{Sec: 3+1 Ising}, we give a concrete example to illustrate our functional integral construction of TQFT. We construct a linear function $Z$ for a lattice model on $S^3$. We prove the three conditions (RP), (HI), (CF) and therefore we obtain a unitary 3-category of Ising type and a non-invertible unitary 3+1 alterfold TQFT. We compute all its simplicial morphisms and the 20j-symbols in Table \ref{tab:20j-symbols}. Using the skein relations and the 20j-symbols, the scalar invariant 2-knots in 4-manifolds could be computed explicitly.     
The 20j-symbols satisfies the (3,3), (2,4) and (1,5) Pachner moves \cite{Pan91}, as a one-dimensional higher analogue of the pentagon equations. In this example, there are already more than 50,000 equations, which seems too many to solve directly in general. We conjecture that the construction of the 3+1 TQFT of Ising type works in higher dimensions as well. 
This example is indeed related to the 3D Ising model and the 3D toric code \cite{HZW05,KTZ20}, which we will discuss in near future.

The connection between TQFT and topological orders has been extensively studied from the view of since \cite{Wen90,LevWen05} and from the view of particle excitations since \cite{Kit03}, and a classification of topological orders by unitary multi-fusion $n$-categories is proposed in \cite{KWZ15}. This is a classification of topological orders by quantum symmetries.
From the view of condensed matter physics, another natural approach to study topological orders is characterizing the ground states.

An important question proposed by Wen is what kind family of wave functions on lattice models are topological orders. 
The Riesz representation theorem induces a bijection between the linear functional $Z$ and a vector state on the configuration space for every lattice.
One can consider our linear functional $Z$ on a configuration space as a vector state up to a normalization.
Therefore by Theorem \ref{Thm: Alterfold TQFT}, we give an answer to Wen's question that a linear functional $Z$ with the three conditions (RP) (HI) and (CF) is a topological order.

In topological orders, we usually take this vector state as the ground state of a local Hamiltonian of neighbourhood interactions. The conditions (HI) and (RP) can be derived from the corresponding conditions of the Hamiltonian. In this paper, we focus on the construction of the alterfold TQFT without referring to the Hamiltonian. In a coming paper \cite{Liu24-preparation}, we study lattice models with a local Hamiltonian systematically and prove conditions (HI) and (RP) for the Hamiltonian at any temperature.
Condition (CF) corresponds to the finiteness of the entanglement rank of the ground state for lattices on $S^{k}\times S^{n-k}$ separated by $S^{k-1}\times S^{n-k}$.
The area law and projected entangled pair states (PEPS) have been conjectured to be the entanglement properties characterizing ground states of Hamiltonian with local interactions, see e.g., \cite{SCP10}.
Our functional integral approach will provide positive evidence of this conjecture.

\section{Labelled Regular Stratified Manifolds}\label{Sec: Labelled Regular Stratified Manifolds}
\subsection{Piecewise Linear Manifolds}

In this section, we briefly recall the definition of piecewise linear (PL) manifolds and some basic results. We refer readers to textbooks \cite{Hud69,RouSan72} for the general theory of PL manifolds.

 \begin{definition}
     A linear $n$-simplex $\Delta^n=[e_0,e_1,\cdots,e_n]$ is the convex hull of $n+1$ points $\{e_i\}_{0 \leq  i \leq n}$ in the Euclidean space $\mathbb{R}^n$, such that the vectors $\{e_i-e_0\}_{1 \leq  i \leq n}$ are linearly independent. Its orientation is the sign of the determinant of the matrix $\{e_i-e_0\}_{1 \leq  i \leq n}$.
 \end{definition}

\begin{definition}
The boundary of an oriented linear $n$-simplex is
$$\partial[e_0,e_1,\cdots,e_n]=\Sigma_{i=0}^{n} (-1)^i  [e_0,e_1,\cdots \hat{e_i} , \cdots e_n],$$
where the $\pm$ sign indicates the induced orientation of the $(n-1)$-simplex on the boundary. Its sub $(n-1-k)$-simplex is called a $k$-face, $0\leq k\leq n-1$.
\end{definition}

\begin{definition}
For convex sets $A$ and $B$ in $\mathbb{R}^n$, their convex sum is define as
$$A +_c B:=\{\lambda a+(1-\lambda) b : a\in A, ~b \in B, 0 \leq \lambda \leq 1\}.$$
\end{definition}

When we consider $\Delta^n$ as one-point suspension of $e_0$, i.e., $\{e_0\}+_c[e_1,e_2\cdots, e_n]$, we have:
$$\partial (\{e_0\}+_c[e_1,e_2\cdots, e_n])=[e_1,e_2\cdots, e_n]-(\{e_0\} +_c \partial [e_1,e_2\cdots, e_n]).
$$
When we consider $\Delta^n$ as one-point suspension of $e_n$, i.e., $[e_0,e_1\cdots, e_{n-1}] +_c \{e_n\}$, we have:
$$\partial ([e_0,e_1\cdots, e_{n-1}] +_c \{e_n\})=(\partial [e_0,e_1\cdots, e_{n-1}] +_c \{e_n\}) + (-1)^{n} [e_0,e_1\cdots, e_{n-1}].
$$

\begin{definition}
A polytope is the convex hull of finitely many points in $\mathbb{R}^n$. It is $k$-dimensional, if it is a $k$-dimensional topological manifold.
\end{definition}

A polytope is the bounded intersection of finitely many half-spaces in $\mathbb{R}^n$.

\begin{definition}
A transformation on $\mathbb{R}^n$ is called piecewise linear, if it is affine on every linear simplex of a triangulation.     
\end{definition}

\begin{definition}
A topological $n$-manifold $M$ with boundary is called PL, if it is equipped with an open covering $(U_i)_{i\in I}$ and homeomorphisms $\phi_i: U_i \to \mathbb{R}^{n-1}\times [0,\infty)$ onto their images, such that the transition map $\phi_j \phi_i^{-1}$ is PL on $\phi_i^{-1}(U_i\cap M_j)$, for any $i,j \in I$.
If all $\phi_j \phi_i^{-1}$ are oriented, then $M$ is oriented.
\end{definition}
A point $p$ is on the boundary $\partial(M)$, iff $\phi_i(p)\in \mathbb{R}^{n-1}\times \{0\},$ for some $i\in I$. 
A point $p$ is in the interior $\mathring{M}$, iff $\phi_i(p)\in \mathbb{R}^{n-1}\times (0,\infty),$ for some $i\in I$. 
The pair $(U_i,\phi_i)$ is called a chart of $M$. Two charts are compatible, if the transition map is PL. For convenience, we assume the collection $\{(U_i,\phi_i) : i\in I\}$ to be a maximal atlas. That means any chart $(U,\phi)$ compatible with all $(U_i,\phi_i)$ is in the atlas. The existence of a maximal atlas is guaranteed by Zorn's lemma.

\begin{definition}
A homeomorphism $\phi: M \to N$ of PL manifolds is a topological homeomorphism which is PL on $\phi_i^{-1}(U_i\cap \phi^{-1}V_j)$, for any chart $(U_i,\phi_i)$ of $M$ and  $(V_i,\phi_j)$ of $N$.
In this case, PL manifolds $M$ and $N$ are called homeomorphic, denoted as $M\sim N$.
\end{definition}

For a vector $x=(x_1,x_2,\ldots,x_n) \in \mathbb{R}^n$, its infinity norm is
$$\displaystyle \|x\|=\max_{1\leq i \leq n} |x_i|.$$

\begin{notation}
With the infinity norm, we denote the closed unit $n$-disc as $D^n=[-1,1]^n$, the open unit $n$-disc as $\mathring{D}^n=(-1,1)^n$, the boundary unit sphere as $S^{n-1}=\partial D^{n+1}$, the half $n$-discs as $D^n_+=D^{n-1}\times [0,1]$ and $D^n_-=D^{n-1}\times [-1,0]$. We take the convention that $S^{-1}=\emptyset.$
For those with radius $\lambda>0$, we denote them as $\lambda D^n$, $\lambda\mathring{D}^n$, $\lambda S^{n-1}$, $\lambda D^n_\pm$ respectively.
\end{notation}

An $n$-dimensional polytope $P$ is PL homeomorphic to $D^n$ and its boundary $\partial P$ is called a combinatorial $(n-1)$-sphere, which is \PL homeomorphic to $S^{n-1}$.
When the origin $O_n$ is in the interior of the polytope $P$, we construct the homeomorphism $P\sim D^n$ as
\begin{align}\label{Equ: polytope to Dn}
\phi(rx)= \|x\|^{-1} rx ~\forall~ x \in \partial P, r \in [0,1].    
\end{align}

\begin{definition}
Suppose $\Delta^n=[e_0,e_1,\cdots,e_n]$ is a linear $n$-simplex and $p$ is a point in the interior of $\Delta^k=[e_0,e_1,\cdots,e_k]$,
The $p$-subdivision of $\Delta^n$ is the union of $n$-simplices
$$\bigcup_{i=0}^k [e_0,e_1,\cdots \hat{e_i} , \cdots e_k] +_c \{p\} +_c [e_{k+1},e_{k+2},\cdots \cdots e_n].$$
\end{definition}

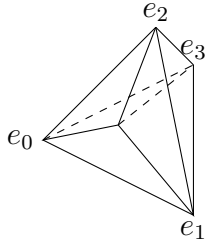
\begin{figure}[H]
    \centering
    \begin{tikzpicture}
\draw[dashed] (0,0.2) -- (1,1);
\draw (1,1)-- (.5,1.5);
\draw (0,0.2) -- (1,-1) -- (.5,1.5);
\draw (0,0.2) -- (-1,0) -- (.5,1.5);
\draw (0,0.2) -- (.5,1.5);
\draw (1,1)--(1,-1);
\draw (-1,0) --(1,-1); 
\draw[dashed] (-1,0) -- (1,1);
\node at (-1.3,0) {$e_0$};
\node at (1,-1.2) {$e_1$};
\node at (1,1.2) {$e_3$};
\node at (.5,1.7) {$e_2$};
\end{tikzpicture}
    \caption{The subdivision for $n=3$, $k=2$.}
    \label{fig:subdivision}
\end{figure}

\begin{definition}
Suppose $M$ is a PL $n$-manifold and $(U,\phi)$ is a chart of $M$. For a linear simplex $\Delta^n \subset \phi(U)$, we call $\phi^{-1}(\Delta^n)$ a simplex of $M$. 
\end{definition}

\begin{definition}
A triangulation of an oriented PL $n$-manifold $M$ is a union of simplices $M=\cup_i \sigma_i$, such that their interior are disjoint and each 0-face of a simplex is either on the boundary of $M$, or shared by two simplices with opposite orientations. 
\end{definition}

\begin{definition}
In a triangulation of an oriented PL $n$-manifold $M$,
the link of a $k$-simplex $\sigma$ is the union of $(n-k-1)$-simplices $\tau_i$, for which $\sigma \cup \tau_i$ is in an $n$-simplex and $\sigma \cap \tau_i=\emptyset$.
The triangulation is called combinatorial, if the link of every $k$-simplex is a
combinatorial $(n-k-1)$-sphere.
\end{definition}

The following result is well-known and fundamental to study PL manifolds by triangulations, see Chapter 3 in \cite{Hud69}.
\begin{theorem}\label{Thm: triangulation}
A compact PL $n$-manifold admits a combinatorial triangulation.
Moreover, two different triangulations have a common subdivision.
\end{theorem}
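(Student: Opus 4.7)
The plan is to reduce both statements to the local Euclidean setting via the compactness of $M$ and a finite chart covering, then apply a standard PL extension lemma.

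For existence, I would argue by induction on a finite subcover. Compactness yields finitely many charts $(U_i,\phi_i)_{i=1}^N$ covering $M$; after shrinking, I can assume each $\overline{U_i}$ is PL homeomorphic to a closed Euclidean PL ball (a half-ball near $\partial M$). Build triangulations $T_i$ of $\overline{V_i}:=\overline{U_1\cup\cdots\cup U_i}$ inductively. Given $T_{i-1}$, the restriction to $\overline{U_i}\cap \overline{V_{i-1}}$ pulls back through $\phi_i$ to a linear subcomplex $L$ of the PL ball $\phi_i(\overline{U_i})\subset\mathbb{R}^n$. The core technical input is the classical PL extension lemma: any PL subcomplex of a PL ball extends, after a common subdivision, to a linear triangulation of the whole ball. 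Pushing this extension back by $\phi_i^{-1}$ and, if necessary, applying enough barycentric subdivisions so that each simplex is contained in some chart, produces $T_i$. After $N$ steps this gives a triangulation of $M$. The combinatorial property — that the link of every $k$-simplex is PL homeomorphic to $S^{n-k-1}$ — can then be checked inside a single chart, where the triangulation is piecewise linear in $\mathbb{R}^n$ and the link of any linear simplex is manifestly a combinatorial sphere; PL homeomorphism transports this property to $M$.

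For the common subdivision, I would again work chart by chart and glue. Given two combinatorial triangulations $T_1,T_2$, refine them by barycentric subdivision until each simplex of $T_1\cup T_2$ lies in a single chart. Inside a chart, the images of $T_1$ and $T_2$ are two linear simplicial complexes in $\mathbb{R}^n$. For each pair $(\sigma,\tau)\in T_1\times T_2$ with $\sigma\cap\tau\neq\emptyset$, the intersection is a bounded convex polytope; partition each such polytope canonically by iterated barycentric coning. Because the construction depends only on the relative linear position of $\sigma$ and $\tau$, the local common subdivisions agree on chart overlaps and assemble into a global simplicial complex that subdivides both $T_1$ and $T_2$.

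The main obstacle is the PL extension lemma used in the inductive construction, which is classical but delicate: it requires showing that a PL embedding of a polyhedron into the boundary (and interior) of a PL ball always admits a simplicial extension after common subdivision. This is what makes the chart-by-chart assembly coherent. A secondary technical point is verifying that the stellar/barycentric subdivisions used in Step 3 preserve the combinatorial link condition, which is needed so that the common subdivision is itself a combinatorial triangulation; this follows inductively from the Euclidean case, since barycentric subdivision of a standard simplex has spherical links by direct computation.
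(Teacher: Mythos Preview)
The paper does not prove this theorem at all: it states the result as well known and fundamental, and refers the reader to Chapter~3 of Hudson's textbook \cite{Hud69}. So there is no in-paper argument to compare your proposal against.

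Your outline is the classical one and is broadly sound, but one step in the common-subdivision argument is not quite right as stated. Subdividing until ``each simplex of $T_1\cup T_2$ lies in a single chart'' does not guarantee that an \emph{intersecting pair} $\sigma\in T_1$, $\tau\in T_2$ lies in a \emph{common} chart in which both are linear. If $\sigma$ is linear in $U_i$ and $\tau$ is linear in $U_j$ with $i\neq j$, then $\tau$ is only piecewise linear in $U_i$, and $\sigma\cap\tau$ need not be a convex polytope there; your intersection-and-cone construction then breaks down. The standard fix is to first subdivide $T_1$ so finely that each of its simplices lies in a chart, and then subdivide $T_2$ so that its restriction to every simplex of $T_1$ is linear in that chart (this is the simplicial-map/common-subdivision lemma for polyhedra, which you are already invoking implicitly as the ``PL extension lemma''). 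Once that is arranged, the cell-by-cell intersection and the canonical coning go through, and compatibility on shared faces follows because the coning is intrinsic to each polytope. A related minor point: for the gluing to be well defined you should check that the ``canonical'' triangulation of a convex polytope you use depends only on its face lattice (e.g.\ pulling triangulations from an ordering of the vertices, or barycentric subdivision), not on the ambient chart coordinates.
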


Two triangulations can be changed from one to another by a sequences of Pachner moves \cite{Pan91}.

\begin{definition}
Suppose $M$ and $N$ are PL manifolds.
If $\phi_t: M \to N$, $t\in [0,1]$ are homeomorphisms onto the image, such that the map $[0,1] \times M \to N$ is PL, then $\phi_0$ and $\phi_1$ are called \PL isotopic, denoted by $\phi_0\sim \phi_1$; 
Moreover, $\phi_0(M)$ and $\phi_1(M)$ are called isotopic in $N$. 
\end{definition}

\begin{definition}
For a PL manifold $M$, its mapping class group $MCG(M)$ is the quotient of the group of homeomorphisms on $M$ by the subgroup of homeomorphisms isotopic to the identity. When $M$ is oriented, the homeomorphisms in $MCG(M)$ are required to be oriented. When $M$ has a boundary, the homeomorphisms are required to be the identity on the boundary.
\end{definition}

It is well-known that the mapping class group of $D^n$ and $S^n$ are trivial. For readers' convenience, we give a proof here.

\begin{proposition}\label{Prop: Alexander}
The mapping class group of the PL manifold $D^n$ is trivial.
\end{proposition}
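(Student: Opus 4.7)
The plan is to prove this via the classical Alexander trick, adapted to the PL setting. Let $\phi : D^n \to D^n$ be a PL homeomorphism with $\phi|_{\partial D^n} = \mathrm{id}$ (this is what representatives of $MCG(D^n)$ look like by definition). I want to exhibit an explicit PL isotopy $H : D^n \times [0,1] \to D^n$ with $H(\cdot,0) = \phi$, $H(\cdot,1) = \mathrm{id}$, and $H(\cdot,t)|_{\partial D^n} = \mathrm{id}$ for all $t$.

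The formula to try is the radial ``shrink and extend by identity'':
$$H(x,t) \;=\; \begin{cases} (1-t)\,\phi\!\left(\dfrac{x}{1-t}\right), & \|x\| \leq 1-t,\ t<1,\\[4pt] x, & \|x\| \geq 1-t, \end{cases}$$
with $H(x,1):=x$. First I would check this is well defined and continuous along the matching locus $\|x\|=1-t$: there $x/(1-t)\in\partial D^n$, so $\phi(x/(1-t))=x/(1-t)$ and both branches return $x$. At $t=1$ the shrinking region degenerates to $\{0\}$, and $H(0,1)=0$ by either convention.

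The main step, and the point that needs care, is verifying that $H$ really is PL as a map on the product $D^n\times[0,1]$, not merely a continuous family of PL maps. Using Theorem \ref{Thm: triangulation}, I would choose a combinatorial triangulation $K$ of $D^n$ with respect to which $\phi$ is simplicial, so on each $n$-simplex $\sigma\in K$ one has $\phi(y)=A_\sigma y + b_\sigma$ for an affine map. Substituting $y=x/(1-t)$ gives the crucial cancellation
$$H(x,t) \;=\; (1-t)\!\left(A_\sigma\tfrac{x}{1-t}+b_\sigma\right) \;=\; A_\sigma x + (1-t)\,b_\sigma,$$
which is affine in $(x,t)$ jointly. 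The subset $C_\sigma := \{(x,t) : x\in(1-t)\sigma,\ t\in[0,1]\}$ is the convex hull of $\sigma\times\{0\}$ and the apex $(0,1)$, hence a convex polytope, and on $C_\sigma$ the map $H$ is affine. Subdividing each $C_\sigma$ into simplices and combining with a PL triangulation of the outer region $\{\|x\|\geq 1-t\}$ (on which $H$ is the identity) produces a triangulation of $D^n\times[0,1]$ on which $H$ is simplicial. That is exactly the PL condition required in the definition of isotopy.

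Finally I would observe that $H(x,0)=\phi(x)$, $H(x,1)=x$, and for $x\in\partial D^n$ we always have $\|x\|=1\geq 1-t$, so $H(x,t)=x$, giving the rel-boundary isotopy. Since $\phi$ was an arbitrary representative, this shows every class in $MCG(D^n)$ is trivial. The only conceptual obstacle is the PL check in the middle paragraph; everything else is formal, and that obstacle is resolved by the algebraic simplification $(1-t)\cdot A_\sigma\cdot x/(1-t) = A_\sigma x$, which turns an apparently rational expression into an affine one.
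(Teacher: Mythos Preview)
Your proof is correct and uses exactly the same approach as the paper: the Alexander trick, with the paper writing $\phi_t(x)=t\phi(x/t)$ for $\|x\|\le t$ and $\phi_t(x)=x$ otherwise (your $1-t$ in place of their $t$ simply reverses the time direction). The paper's proof consists solely of writing down this formula, so your additional verification that the homotopy is genuinely PL on $D^n\times[0,1]$—via the cancellation $(1-t)A_\sigma\,x/(1-t)=A_\sigma x$ making $H$ affine on each cone $C_\sigma$—supplies detail the paper leaves implicit.
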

\begin{proof}
Suppose $\phi$ is a \PL homeomorphism of $D^n$ which is the identity on the boundary $S^{n-1}$. Then it is isotopic to identity by Alexander's trick,
\begin{align}\label{Equ: Alexander}
\phi_t(x) :=
\left\{
\begin{aligned}
 & t \phi(x/t), ~ 0 \leq \|x\| \leq t, \\
 & x, ~  t \leq \|x\| \leq 1.
\end{aligned}
\right.
\end{align}
\end{proof}

\begin{notation}
Note that $S^1$ contains four intervals, a clockwise translation of length $\varepsilon$ along $S^1$ of is a \PL homeomorphism on $S^1$.
The $180^{\circ}$ rotation $\rho$ on $S^1$ is a translation of length 2.
\end{notation}

\begin{proposition}
The mapping class group of the PL manifold $S^n$ is trivial.
\end{proposition}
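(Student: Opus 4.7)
The plan is to reduce any PL self-homeomorphism of $S^n$ to a pair of self-homeomorphisms of $D^n$ fixing a common boundary, each of which can be handled by Proposition \ref{Prop: Alexander}. I would proceed by induction on $n$, with the base case $n=0$ being trivial (the only orientation-preserving self-map of $S^0=\{-1,+1\}$ is the identity).

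For the inductive step, first I would decompose $S^n=D^n_+\cup_{S^{n-1}}D^n_-$ along the equator. Given a PL homeomorphism $\phi:S^n\to S^n$, the image $\phi(D^n_+)$ is another PL $n$-ball in $S^n$. Invoking the classical PL uniqueness of balls (any two PL $n$-balls in the interior of a connected PL $n$-manifold are PL ambient isotopic), I would find an ambient isotopy $h_t$ of $S^n$ with $h_0=\mathrm{id}$ and $h_1(\phi(D^n_+))=D^n_+$, and then replace $\phi$ by $h_1\phi$. After this normalization, $\phi$ preserves both hemispheres and hence their common boundary $S^{n-1}$.

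Next, the restriction $\phi|_{S^{n-1}}$ is a PL homeomorphism of the equator sphere, which by the inductive hypothesis is PL isotopic to the identity. Using a PL bicollar of $S^{n-1}$ inside $S^n$ (which exists by standard PL theory), I would extend this isotopy to an ambient isotopy of $S^n$ supported in a neighborhood of the equator and compose with $\phi$ once more. The resulting homeomorphism restricts to the identity on the equator while still preserving each hemisphere.

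Finally, I would apply Proposition \ref{Prop: Alexander} to each hemisphere separately, yielding PL isotopies from $\phi|_{D^n_+}$ and $\phi|_{D^n_-}$ to the respective identities, relative to $S^{n-1}$. Since both isotopies fix the equator throughout, they glue to a global PL isotopy of $\phi$ to the identity on $S^n$. The main obstacle is the first normalization step, which relies on the PL uniqueness of embedded balls; the remaining steps are either direct applications of Alexander's trick or follow from standard existence of PL collars and isotopy extension in the PL category.
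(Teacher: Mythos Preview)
Your proposal is correct and is a standard argument, but it differs from the paper's proof in its overall architecture and in the external tools it invokes.

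The paper works \emph{locally at a single point} rather than decomposing into hemispheres. It picks a point $p$ where $\phi$ is linear, PL-rotates $\phi(p)$ back to $p$, and then constructs by hand (using only that $GL(n,\mathbb{R})$ is generated by the $GL(2,\mathbb{R})$'s acting on coordinate pairs) a correcting homeomorphism $\phi'$ supported in a small neighbourhood so that $\phi\phi'$ is the identity on a smaller neighbourhood of $p$. Then both $\phi'$ and $\phi\phi'$ are identity on the boundary of a single disc, so Proposition~\ref{Prop: Alexander} is applied twice (to $\phi'$ and to $\phi\phi'$), and one concludes $\phi$ is isotopic to the identity. No induction on $n$ beyond the base case, no bicollar, no isotopy extension theorem, and no PL ball-uniqueness are needed.

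Your route, by contrast, invokes three external PL results (ambient uniqueness of codimension-zero balls, existence of bicollars, and isotopy extension) and uses the inductive hypothesis on $S^{n-1}$. This is perfectly valid and arguably more conceptual, but it imports more machinery than the paper's hands-on linear-algebra argument, which is entirely self-contained given only Alexander's trick on $D^n$.
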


\begin{proof}
Suppose $\phi$ is a \PL homeomorphism on $S^n$.
When $n=0$, $S^0$ has two points with opposite orientations. The only orientation preserving homeomorphism is the identity.

When $n\geq 1$, we take a point $p\in S^n$, such that $\phi$ is linear from a neighbourhood of $p$ to a neighbourhood of $\phi(p)$. We rotate $\phi(p)$ to $p$ by PL isotopy. So we may assume that $\phi$ preserves $p$ and it is linear in a neighbourhood of $p$.
Let $U_\varepsilon$ be the closed $\varepsilon$-neighbourhood of $p$.
There are $\varepsilon> \varepsilon'>0$, such that $(\phi')^{-1}(U_{\varepsilon'})$ is in the interior of $U_\varepsilon$.

Now we show that there is a homeomorphism $\phi'$ on $U_{\varepsilon'}$ which is the identity outside $U_{\varepsilon}$ and $\phi\phi'$ is the identity on $U_{\varepsilon'}$.
When $n=1$, it is obvious.
When $n=2$, we construct the homeomorphism $\phi'$ illustrated below.
(Here is the example when $\phi'$ is the $90^{\circ}$ on $|U_\varepsilon|\sim D^2$.)
\begin{center}
\begin{tikzpicture}
\begin{scope}[shift={(-.3,.15)}]
\node at (-1,-1) {$a$};
\node at (-1,1) {$b$};
\node at (1,1) {$c$};
\node at (1,-1) {$d$};
\end{scope}
\draw (-2,-2) rectangle (2,2);
\draw (-1,-1) rectangle (1,1);
\draw (-1,-1) -- (-2,-2);
\draw (-1,1) -- (-2,2);
\draw (1,1) -- (2,2);
\draw (1,-1) -- (2,-2);
\node at (3,.5) {$\phi'$};
\node at (3,0) {$\longrightarrow$};
\node at (0,0) {$U_{\varepsilon'}$};
\begin{scope}[shift={(6,0)}]
\begin{scope}[shift={(-.3,.15)}]
\node at (-1,-1) {$\phi'(b)$};
\node at (-1,1) {$\phi'(c)$};
\node at (1,1) {$\phi'(d)$};
\node at (1,-1) {$\phi'(a)$};
\end{scope}
\draw (-2,-2) rectangle (2,2);
\draw (-1,-1) rectangle (1,1);
\draw (-1,-1) -- (-2,2);
\draw (-1,1) -- (2,2);
\draw (1,1) -- (2,-2);
\draw (1,-1) -- (-2,-2);
\node at (0,0) {$\phi^{-1}(U_{\varepsilon'})$};
\end{scope}
\end{tikzpicture}
\end{center}
The general case for $n\geq 2$ reduces to the case $n=2$ as $GL(n,\mathbb{R})$ is generated by $GL(2,\mathbb{R})$ for pairs of coordinates.

By Prop.~\ref{Prop: Alexander}, both $\phi'$ and $\phi\phi'$ are isotopic to the identity. So $\phi$ is isotopic to the identity.
\end{proof}

\subsection{Stratified PL Manifolds}

In this section, we introduce regular stratified PL $n$-manifolds. It share ideas in the study of stratified smooth manifolds. We refer the readers to the textbook \cite{Pfl01} for the general theory of stratified smooth manifolds. Stratified smooth manifolds have been used to study defects in TQFT, see \cite{CRS19} and further reference therein.

\begin{definition}
Suppose $M^n$ is a closed PL manifold. A stratified PL $n$-manifold $\mathcal{M}$ with support $M^n$ is a stratification $\{M^{k} \supseteq M^{k-1}, 0\leq k \leq n \}$, $M^{-1}=\emptyset$, such that $M^k\setminus M^{k-1}$ is an open PL $k$-manifold with closure in $M^{k}$; and any point $p$ has a neighbourhood $U$, whose intersection with $M^k\setminus M^{k-1}$ has finite connected components, for all $k$.
\end{definition}

\begin{definition}
When $M^n$ has a boundary, a stratified PL $n$-manifold $\mathcal{M}$ with support $M^n$ is a stratification $\{M^{k} \supseteq M^{k-1}, 0\leq k \leq n \}$, $M^{-1}=\emptyset$, such that $(M^k\setminus M^{k-1})\cap \mathring{M}^n$ is an open PL $k$-manifold;
and the boundary of $\partial(\mathcal{M})$ is a closed stratified PL $(n-1)$-manifold with a stratification  $\{\partial(\mathcal{M})^{k} \supseteq \partial(\mathcal{M})^{k-1}, 1\leq k \leq n-1 \}$, where $\partial(\mathcal{M})^{k-1}$ is the transversal intersection of $\partial(M^n)$ and $M^{k}$. 
\end{definition}

\begin{remark}
Sometimes $M^{k-1}$ are considered as defects of $M$ or domain walls between connected components of $M^{k} \setminus M^{k-1}$ in topological quantum field theory.
\end{remark}

\begin{definition}
When $\mathcal{M}$ is a stratified manifold with support $S^{n-1}$, we denote $\Lambda \mathcal{M}$ as its one-point suspension with the origin $O_n$. Then $\Lambda \mathcal{M}$ has support $D^n$, boundary $\partial(\Lambda \mathcal{M})= \mathcal{M}$, and interior $\mathring{\Lambda} \mathcal{M}:= \Lambda \mathcal{M}\setminus \partial\mathcal{M}$.
It has a stratification $(\Lambda \mathcal{M})^{k+1}:= \Lambda M^{k} \cup M^{k+1}$, $0 \leq k \leq n-1$, and $(\Lambda \mathcal{M})^{0}:=\{O_n\}\cup M^0$.
Moreover $(\Lambda \mathcal{M})^{k+1} \setminus (\Lambda \mathcal{M})^{k}$ is homeomorphic to $((M^{k}\setminus M^{k-1})\times (0,1)) \cup (M^{k+1}\setminus M^k)$.
\end{definition}

\color{black}

\begin{definition}
Suppose $\mathcal{M}$ is a stratified manifold.
A point $p$ in $(\mathcal{M}\setminus \partial\mathcal{M}) \cap (M^{k} \setminus M^{k-1})$ is called regular, if there is a closed neighbourhood $U$ of $p$ in $M^n$, a stratified PL manifold $\mathcal{S}$ with support $S^{n-k-1}$, and a homeomorphism $\phi: \mathcal{M}|_U \to \Lambda \mathcal{S} \times D^{k}$;
such that 
\begin{enumerate}
    \item $\phi(M^k \cap U)=O_{n-k} \times  D^{k}$;
    \item $\phi(p)=O_{n}$;
    \item $U \cap (M^{j} \setminus M^{j-1})$ has finite connected components $\{V_{j,i} : i \in I_j\}$, for all $j$;
    \item and for each connected component $V_{j,i}$, $\phi(V_{j,i})$ is in a $j$-dimensional subspace.
\end{enumerate}
A point $p$ in $\partial \mathcal{M} \cap (M^{k} \setminus M^{k-1})$ is called regular, if there is a closed neighbourhood $U$ of $p$ in $M^n$, a stratified PL manifold $\mathcal{S}$ with support $S^{n-k-2}$, and a homeomorphism $\phi: \mathcal{M}|_U \to \Lambda \mathcal{S} \times D^{k+1}_+$, such that
\begin{enumerate}
    \item $\phi(M^k \cap U)=O_{n-k-1} \times  D^{k+1}_+$;
    \item $\phi(p)=O_{n}$;
    \item $U \cap (M^{j} \setminus M^{j-1})$ has finite connected components $\{V_{j,i} : i \in I_j\}$, for all $j$;
    \item and for each connected component $V_{j,i}$, $\phi(V_{j,i})$ is in a $j$-dimensional subspace.
\end{enumerate}
We call the triple $(U,\phi,\mathcal{S})$ a regular chart of $p$, $\mathcal{S}$ the link boundary of $p$, and $U$ the normal microbundle of the $k$-simplex $M^k \cap U$ near $p$.
\end{definition}
A boundary point $p$ is regular, iff $p$ has a neighbourhood $U$ and a regular chart $(\partial\mathcal{M} \cap U,\phi,\mathcal{S})$ in $\partial\mathcal{M}$, such that $U \sim (\partial\mathcal{M} \cap U)\times [0,1]$.
The link boundary of a regular point is well-defined up to homeomorphism.

\begin{proposition}\label{Prop: link boundary}
If a point $p$ has regular charts $(U_i,\phi_i,\mathcal{S}_i)$, $i=1,2$, then
$\mathcal{S}_1\sim \mathcal{S}_2$. 
\end{proposition}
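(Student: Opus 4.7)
The plan is to show that the link boundary $\mathcal{S}$ can be recovered intrinsically from the stratified germ of $\mathcal{M}$ at $p$: I will identify $\mathcal{S}_i$ with the stratified link of $p$ inside a transversal slice to the $k$-stratum, and then use the transition homeomorphism between the two charts to match these links.

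First I would pass to a common closed neighbourhood $V \subset U_1 \cap U_2$ of $p$ and introduce the canonical transversal slice $T_1 := \phi_1^{-1}(\Lambda \mathcal{S}_1 \times \{O_k\})$ of chart $1$. This is an $(n-k)$-dimensional stratified PL submanifold of $V$ through $p$, meeting the $k$-stratum $M^k$ only at $p$, and stratified-homeomorphic to $\Lambda \mathcal{S}_1$ via $\phi_1$. By the construction of the one-point suspension, the stratified link of the apex $O_{n-k}$ in $\Lambda \mathcal{S}_1$ is $\mathcal{S}_1$, hence the stratified link of $p$ in $T_1$ is $\mathcal{S}_1$.

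Now I would transport $T_1$ into the chart $2$ model via $\psi := \phi_2 \circ \phi_1^{-1}$. Since $\psi$ is a stratification-preserving PL homeomorphism sending $O_n$ to $O_n$ and $O_{n-k} \times D^k$ setwise to itself, the image $\psi(\phi_1(T_1))$ is an $(n-k)$-dimensional stratified submanifold of $\Lambda \mathcal{S}_2 \times D^k$ that meets $O_{n-k} \times D^k$ only at $O_n$. I would then compose with the first-factor projection $\pi : \Lambda \mathcal{S}_2 \times D^k \to \Lambda \mathcal{S}_2$; both source and target of $\pi$ restricted to $\psi\phi_1(T_1)$ have dimension $n-k$, and together with the single-point intersection $\psi\phi_1(T_1) \cap \pi^{-1}(O_{n-k}) = \{O_n\}$, PL invariance of domain (applied simplexwise on a triangulation refining $\phi_1$, $\phi_2$ and $\psi$) yields that $\pi \circ \psi \circ \phi_1|_{T_1}$ is a stratified PL homeomorphism from a neighbourhood of $p$ in $T_1$ onto a neighbourhood of $O_{n-k}$ in $\Lambda \mathcal{S}_2$. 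Taking link spheres on both sides gives $\mathcal{S}_1 \sim \mathcal{S}_2$. The boundary point case is identical, with $D^k$ replaced by $D^{k+1}_+$.

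The main obstacle I anticipate is the transversality step above: showing that $\pi$ restricted to $\psi\phi_1(T_1)$ is actually locally injective near $O_n$. Dimension counting and the point-intersection property make this heuristically immediate, but in principle a stratification-preserving PL homeomorphism could fold $T_1$ tangentially along the $D^k$-direction near $O_n$. Ruling this out rigorously will require exploiting the full conical product structure $\Lambda \mathcal{S}_i \times D^k$ of each regular chart, not just its strata --- for instance by choosing a joint triangulation adapted to both charts and arguing simplex-by-simplex that transversality to the fibres of $\pi$ is preserved in a neighbourhood of $O_n$.
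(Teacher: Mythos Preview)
Your strategy is sound in spirit, and you have correctly identified the real difficulty: the projection $\pi\circ\psi\circ\phi_1|_{T_1}$ need not be locally injective near $O_n$, because a PL transition map can tilt the slice $T_1$ into the $D^k$-direction. Nothing in your argument before the last paragraph rules this out, so the proof as written is incomplete; invariance of domain cannot be invoked until injectivity is already established, and the single-point intersection with the fibre $\pi^{-1}(O_{n-k})$ only controls one fibre, not a neighbourhood.

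The paper closes this gap by essentially the fix you sketch at the end, but organises it differently and thereby avoids the folding problem altogether. Rather than transporting a fixed transversal slice, the paper first triangulates $\phi_1(U_1\cap U_2)$ so that the transition map $\phi_2\phi_1^{-1}$ is \emph{linear on every simplex}, and subdivides until the $k$-stratum $O_{n-k}\times D^k$ contains a $k$-simplex $E$. The combinatorial link $L$ of $E$ is then used in place of your slice boundary. Because each $n$-simplex containing $E$ is linear and $E$ spans only the tangential directions, the projection $\pi:D^n\to D^{n-k}\times\{0\}$ restricts to a homeomorphism on each simplex of $L$, giving $L\sim\pi(L)\sim\mathcal{S}_1$ directly. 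Since $\phi_2\phi_1^{-1}$ is simplicial, $\phi_2\phi_1^{-1}(L)$ is the link of the $k$-simplex $\phi_2\phi_1^{-1}(E)\subset O_{n-k}\times D^k$ in chart~2, and the same projection argument there gives $L\sim\mathcal{S}_2$. The point is that by letting the triangulation choose the transversal object (the link of $E$) rather than fixing $T_1$ in advance, linearity of the transition on simplices does the transversality work for free---which is precisely the ``joint triangulation, simplex-by-simplex'' argument you anticipated, but executed so that the injectivity of $\pi|_L$ is automatic rather than something to be checked.
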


\begin{proof}
For $p$ in $(\mathcal{M}\setminus \partial\mathcal{M}) \cap (M^{k} \setminus M^{k-1})$, the transition map $\phi_2\phi_1^{-1}$ is PL on $V=\phi_1(U_1\cap U_2)$.
Take a triangulation of the closure $V$, such that the transition map is linear on every simplex. By subdivisions, we may assume that $V\cap O_{n-k}\times D^k$ contains a $k$-simplex $E$. Take all $n$-simplices $\{\sigma_j\}_{j\in J}$ containing $E$. Let $L$ be the link of $E$. Take the projection from $\pi: D^n \to D^{n-k} \times \{0\}$. Then $L\sim \pi(L)$. By the trick in Equ.\ref{Equ: polytope to Dn}, $\pi(L) \sim S^{n-k-1}$. So $\phi_1^{-1}(L) \sim \mathcal{S}_1$.
In $\phi_2(U_2)$, $\{\phi_2\phi_1^{-1}(\sigma_j)\}_{j\in J}$ are the simplices containing $\phi_2\phi_1^{-1}(E)$, $\phi_2\phi_1^{-1}(E) \subseteq D^{n-k} \times 0$ and it has the link $\phi_2\phi_1^{-1}(L)$. Similarly $\phi_1^{-1}(L)\sim \mathcal{S}_2$. So $\mathcal{S}_2 \sim \mathcal{S}_1$.

For $p$ in $(\partial \mathcal{M}) \cap (M^{k} \setminus M^{k-1})$, we consider it as a regular point in $\partial \mathcal{M}$. So $\mathcal{S}_2 \sim \mathcal{S}_1$ as well.
\end{proof}

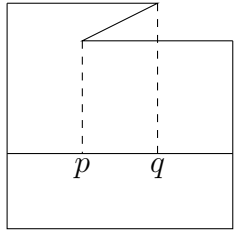
\begin{figure}[h]
    \centering
\begin{tikzpicture}
\draw (0,0) -- (2,0) --(1,-.5) -- (3,-.5);
\draw (0,-2) -- (3,-2);
\draw (0,0) -- (0,-3) -- (3,-3) -- (3,-.5);
\draw[dashed] (1,-.5) --(1,-2);
\draw[dashed] (2,0)--(2,-2);
\node at (1,-2.2) {$p$};
\node at (2,-2.2) {$q$};
\end{tikzpicture}
    \caption{Irregular Points}
    \label{fig:irregular points}
\end{figure}
\begin{center}

\end{center}

Fig.\ref{fig:irregular points} is $M^2$ of a stratified PL manifold $\mathcal{M}$. The horizontal line is $M^1$, which separates $M^2$ into two components, and $M_0=\emptyset$. The points $p,q \in M^{1}$ are not regular.

We describe the local shape of lattice models by link boundary. 
The local shapes shall be good, so that all points are regular in all dimensions.  
We want to describe such lattices by good stratified manifolds based on regular conditions. 
Now let us define regular stratified PL $n$-manifolds inductively on $n$. 
\begin{definition}
A regular stratified PL $0$-manifold is a $0$-manifold.
A stratified PL $n$-manifold $\mathcal{M}$ is called regular, if every point $p$ is regular and it has a regular chart $(U,\phi, \mathcal{S})$ for a regular PL manifold $\mathcal{S}$.
\end{definition}

\begin{definition}
A homeomorphism $\phi: \mathcal{M} \to \mathcal{N}$ of stratified PL manifolds is an homeomorphism of PL manifolds $M^n \to N^n$, and the restriction on every $M^{k} \setminus M^{k-1}$ is a homeomorphism.
In this case, we say $\mathcal{M}$ and $\mathcal{N}$ are homeomorphic, denoted by $\mathcal{M} \sim \mathcal{N}$. 

\end{definition}

Let us prove that homeomorphisms of PL manifolds can be extended to homeomorphisms of regular stratified PL manifolds. Based on it, we generalize good properties of PL manifolds, such as Theorem ~\ref{Thm: triangulation}, to regular stratified PL manifolds.

\begin{proposition}\label{Prop: Hom of PL manifolds}
Suppose $\mathcal{M}$ is a regular stratified PL $n$-manifold and $\phi: M^n \to N^n$ is a homeomorphism PL $n$-manifold. Then $\phi: \mathcal{M} \sim \mathcal{N}$ is a homeomorphism of regular stratified PL manifolds, where $\mathcal{N}$ has a stratification $\{\phi(M^{k}) \supseteq \phi(M^{k-1}), 0\leq k \leq n \}$.
\end{proposition}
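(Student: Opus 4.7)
The plan is to transport the regular stratified PL structure from $\mathcal{M}$ to $N^n$ along $\phi$ and then verify each clause of the definition. First I would check that $\{\phi(M^k) \supseteq \phi(M^{k-1})\}$ is a stratified PL $n$-manifold. Since $\phi$ is a PL homeomorphism of the supports, each restriction $\phi|_{M^k \setminus M^{k-1}}$ is a homeomorphism onto $\phi(M^k) \setminus \phi(M^{k-1})$, so the latter inherits the structure of an open PL $k$-manifold via push-forward. The closure condition $\overline{\phi(M^k \setminus M^{k-1})} \subseteq \phi(M^k)$ follows from continuity of $\phi$ applied to the analogous inclusion for $\mathcal{M}$, and the local finiteness of connected components of $U \cap (M^k \setminus M^{k-1})$ is preserved under a homeomorphism by taking $\phi(U)$ as the neighbourhood of $\phi(p)$. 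The boundary case reduces to the closed one applied to $\partial \mathcal{M}$.

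Next I would verify that every $q \in \mathcal{N}$ is regular. Set $p := \phi^{-1}(q)$, and pick a regular chart $(U, \psi, \mathcal{S})$ of $p$ guaranteed by regularity of $\mathcal{M}$. The natural candidate at $q$ is the triple $(\phi(U), \psi \circ \phi^{-1}, \mathcal{S})$, where $\psi \circ \phi^{-1}$ is a homeomorphism onto $\Lambda \mathcal{S} \times D^k$ (or $\Lambda \mathcal{S} \times D^{k+1}_+$ on the boundary). Each of the four defining conditions transports along $\phi$: for instance $\psi \circ \phi^{-1}(\phi(M^k) \cap \phi(U)) = \psi(M^k \cap U) = O_{n-k} \times D^k$, the value at $q$ is $\psi(p) = O_n$, and $\phi(U) \cap (\phi(M^j) \setminus \phi(M^{j-1})) = \phi(U \cap (M^j \setminus M^{j-1}))$ has the same finitely many components sitting in the same coordinate subspaces. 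Crucially the link boundary $\mathcal{S}$ is unchanged, so it is already a regular stratified PL manifold; no induction on $n$ is needed for this transfer, only for the ambient definition of regularity. That $\phi \colon \mathcal{M} \to \mathcal{N}$ is a homeomorphism of stratified PL manifolds is then immediate, since $\phi$ restricts to a homeomorphism on each stratum by construction.

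The main obstacle I anticipate is not conceptual but bureaucratic: one must carefully handle the boundary case of the regular chart, ensuring that the product collar $U \sim (\partial \mathcal{M} \cap U) \times [0,1]$ pushes forward to a product collar of $\partial \mathcal{N}$, and that the induced boundary stratification $\partial(\mathcal{N})^{k-1} = \phi(\partial(\mathcal{M})^{k-1})$ really arises as the transversal intersection of $\partial(N^n)$ with $\phi(M^k)$. The uniqueness of link boundaries established in Prop.~\ref{Prop: link boundary} serves as a useful consistency check that the reused $\mathcal{S}$ genuinely plays the role of the link boundary at $q$.
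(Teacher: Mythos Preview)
Your proposal is correct and follows essentially the same approach as the paper: given a regular chart $(U,\psi,\mathcal{S})$ at $p$, push it forward to the chart $(\phi(U),\psi\circ\phi^{-1},\mathcal{S})$ at $\phi(p)$, and read off that the stratification, the open-$k$-manifold condition, the closure, the finite-components condition, and regularity all transfer along the PL homeomorphism. The paper's proof is terser and does not separate out the boundary case or the stratified-manifold axioms as explicitly as you do, but the argument is the same.
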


\begin{proof}
For any point $p$ in $M^{k} \setminus M^{k-1}$, it has a regular chart $(U,\phi',\mathcal{S})$ for a regular $\mathcal{S}$ with support $S^{n-k-1}$, such that $\phi'(\mathcal{M}|_U)=\Lambda \mathcal{S} \times  D^{k}$ and $\phi'(M_k \cap U)=O \times D^{k}$.
Then the point $\phi(p)$ has a regular chart $(\phi(U),\phi\phi',\mathcal{S})$.
In particular, $\phi(M^{k}) \supseteq \phi(M^{k-1})$ is an open PL $k$-manifold with closure $\phi(M^{k})$, its intersection with $\phi(U)$ has finite connected components.
So $N^n$ is regular with a stratification $\{\phi(M^{k}) \supseteq \phi(M^{k-1}), 0\leq k \leq n \}$. Moreover, $\phi$ is a homeomorphism of regular stratified PL manifolds.
\end{proof}

\begin{definition}
Suppose $\phi_t: M \to N$, $t\in \{0,1\}$ are isotopic homeomorphisms onto the image, and $\mathcal{M}$ is a stratified manifold with support $M$.
We say their images $\phi_0(\mathcal{M})$ and $\phi_1(\mathcal{M})$ are isotopic in $N$. 
\end{definition}

\begin{definition}
Suppose $\mathcal{M}$ is a regular stratified PL $n$-manifold, and $N^j$ is sub PL $j$-manifold of the PL manifold $M^n$. 
For a point $p$ in $N^j \cap (M^k \setminus M^{k-1})$, we say $\mathcal{M}$ and $N^j$ are transversal at $p$, if 
$p$ has a regular chart $(U,\phi,\mathcal{S})$, such that 
$$\phi(U\cap N^j)= D^j \times O_{n-j} ~\text{or}~ D^{j}_+\times O_{n-j}.$$
We call $p$ a transversal point.
We say $\mathcal{M}$ is transversal to $N^j$, if they are transversal at all points of intersection.   
In this case, $\mathcal{N}:=\{ N^{j+k-n}=N^j \cap M^k : j+k\geq n\}$ is a regular stratified PL $j$-manifold with regular charts $(U\cap N^j, \phi|_{U\cap N^j}, \mathcal{S})$, called the sub $j$-manifold of $\mathcal{M}$.
\end{definition}

Note that if $\mathcal{M}$ and $N^j$ are transversal at $p$, then $j+k\geq n$.

\begin{definition}
Suppose $\mathcal{M}$ is a regular stratified PL $n$-manifold. A triangulation of $M^n$ is called transversal to $\mathcal{M}$, if every simplex is transversal to $\mathcal{M}$.      
\end{definition}

We state a special case of the transversal theorem for PL manifolds of Williamson in Theorem 3.3.1 in \cite{Wil66}. We try to follow his symbols in that theorem.
\begin{lemma}\label{Lem: transversality}
Suppose $B$ is a $j$-simplex in a PL $n$-manifold $T$. 
Suppose $S$ is a closed PL $k$-manifold in $T$ with boundary $Q$, and $Q$ has a neighbourhood $X$, such that $X$ is transversal to $B$, then in any given neighborhood of $S$, there are isotopy $H_t$ fixing $Q$, $0\leq t\leq 1$, such that $H_0$ is identity and $H_1(S)$ is transversal to $B$.    
\end{lemma}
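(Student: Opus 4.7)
The plan is to follow the standard PL general position strategy in its relative form, adapting Williamson's argument in \cite{Wil66} to the present situation. Since $B$ is a single simplex, the transversality condition reduces to a finite linear-algebraic condition on the affine span of $B$ against the stars of vertices of $S$, which simplifies the analysis considerably. A self-contained proof would run as follows.

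First I would use Theorem \ref{Thm: triangulation} to choose a common triangulation $\mathcal{T}$ of $T$ in which $B$, $S$, and $Q$ are all subcomplexes, refined enough that every simplex meeting $S$ lies in the prescribed neighborhood of $S$. I would then fix a smaller open neighborhood $X'$ of $Q$ with $\overline{X'} \subseteq X$, leaving the compact collar $\overline{X} \setminus X'$ as the interpolation zone. The transversality hypothesis on $X$ means that, after a further subdivision of $\mathcal{T}$ compatible with $X$, each simplex of $S$ lying in $X$ already meets $B$ in a simplex of the expected dimension $j+k-n$ (or not at all).

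Next I would construct the isotopy vertex by vertex. For each vertex $v$ of $\mathcal{T}$ lying in $S$ but outside $\overline{X'}$, choose a small displacement inside a single chart containing the star of $v$, so that after the move every $k$-simplex of $S$ having $v$ as a vertex meets the affine $j$-plane of $B$, read in the chart, in a set of dimension $j+k-n$ (empty when $j+k<n$). A standard general position dimension count in Euclidean space shows that a generic choice of such displacements achieves this simultaneously at all vertices. I would then assemble these vertex moves into an ambient PL isotopy $H_t$ of $T$ using a PL cutoff that equals $0$ on $X'$, equals $1$ outside $X$, and interpolates linearly on the collar. Requiring the displacements to be small enough keeps $H_t$ inside the prescribed neighborhood of $S$, and the cutoff ensures $H_t|_Q = \mathrm{id}$.

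The main obstacle is the interpolation zone $\overline{X} \setminus X'$, where neither the given transversality in $X$ nor the freshly produced transversality outside $X$ is directly inherited. Here I would use that transversality with the fixed simplex $B$ is an open condition: the transversal charts supplied by $X$ at points near $Q$ persist under $C^0$-small PL perturbations because the underlying affine-span non-degeneracy is robust. By choosing the vertex displacements much smaller than the modulus of transversality provided by $X$, the interpolated $H_1(S)$ remains transversal to $B$ throughout the collar; combining the three regions then yields the required isotopy, as in Williamson's Theorem 3.3.1.
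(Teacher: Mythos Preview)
The paper does not supply its own proof of this lemma: immediately before the statement it says that this is a special case of Williamson's transversality theorem (Theorem~3.3.1 in \cite{Wil66}) and simply cites that result, with no \texttt{proof} environment following. Your proposal explicitly follows Williamson's argument and ends by invoking the same theorem, so there is nothing to compare against---your sketch is consistent with, and more detailed than, what the paper actually does.

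As a sketch of the relative PL general-position argument your outline is reasonable. One point worth tightening if you want it to stand alone: you reduce transversality with the simplex $B$ to an affine-span condition in a single chart, but transversality with $B$ also involves its faces, and $B$ need not lie entirely in the chart containing $\mathrm{st}(v)$. In Williamson's setup this is handled by working simplex-by-simplex in a common subdivision and treating each face separately; your sketch implicitly relies on this but does not say so. The interpolation-zone argument via openness of transversality is the right idea, though in the PL category ``$C^0$-small'' should be phrased in terms of sufficiently fine subdivisions rather than a metric modulus.
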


\begin{theorem}\label{Thm: transversality}
Suppose $\Delta$ is a combinatorial triangulation of a compact PL $n$-manifold $M^n$. 
Suppose $\mathcal{M}$ is a regular stratified PL $n$-manifold with support $M^n$, which is transversal to $\Delta$ on the boundary $\partial M^n$. Then there is an ambient isotopy $\phi_t$ fixing the boundary, such that
$\phi_1(\mathcal{M})$ is transversal to $\Delta$. 
\end{theorem}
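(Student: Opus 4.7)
The plan is to argue by induction on the stratum dimension $k = 0, 1, \ldots, n-1$. At the $k$-th stage, with $M^{k-1}$ already transversal to $\Delta$ in the interior (the case $k=0$ being vacuous since $M^{-1} = \emptyset$), I would construct an ambient isotopy $\phi_t^{(k)}$ of $M^n$ fixing $\partial M^n$ and a neighborhood of $M^{k-1}$, after which the stratum closure $S_k := \overline{M^k \setminus M^{k-1}}$ is transversal to every simplex of $\Delta$. The final isotopy is obtained by concatenating $\phi_t^{(0)}, \phi_t^{(1)}, \ldots, \phi_t^{(n-1)}$ in the $t$-variable.

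For the base case $k=0$, note that $M^0 \cap \mathring{M}^n$ is a finite set of points. For each such point $p$ lying on the $(n-1)$-skeleton of $\Delta$, I choose a small PL disc neighborhood $U_p \subset \mathring{M}^n$, pairwise disjoint and disjoint from $\partial M^n$, and use a locally supported ambient isotopy to nudge $p$ into the interior of an incident $n$-simplex. After this step each point of $M^0 \cap \mathring{M}^n$ lies in the interior of an $n$-simplex, which is exactly the transversality condition for a $0$-stratum (the only simplex $B$ it meets has $\dim B = n \geq n - 0$, and locally the chart picks out $D^n \times O_0$ trivially).

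For the inductive step, regularity of $\mathcal{M}$ combined with the inductive transversality of $M^{k-1}$ and the hypothesis transversality along $\partial M^n$ produces an open neighborhood $X$ of $\partial S_k \subseteq M^{k-1} \cup \partial M^n$ inside which $S_k$ is already transversal to every simplex of $\Delta$; this uses the standard product form $\Lambda \mathcal{S} \times D^j$ in the regular charts of points of $M^{k-1}$, which locally makes $M^k$ look like $O_{n-j} \times D^j$ sitting inside a simplex $B$ whose intersection is standard. Enumerate the simplices of $\Delta$ as $B_1, \ldots, B_N$. I would then apply Williamson's Lemma~\ref{Lem: transversality} to $S = S_k$ with $Q = \partial S_k$ and $B = B_i$ in turn, obtaining for each $i$ an isotopy of $S_k$ fixed on a neighborhood of $\partial S_k$ and thus on $X$, and extend it to an ambient isotopy of $M^n$ via PL isotopy extension, supported in the interior of $M^n$ and away from $M^{k-1}$. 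Proposition~\ref{Prop: Hom of PL manifolds} then guarantees that the image is again a regular stratified PL manifold.

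The main obstacle is compatibility between the successive isotopies: the perturbation that achieves transversality of $S_k$ to $B_i$ must not destroy the transversality already arranged for $B_1, \ldots, B_{i-1}$. I would resolve this by invoking openness of the transversality condition in the $C^0$-topology on PL embeddings — once $S_k$ is transversal to a fixed simplex, the condition persists under any sufficiently small ambient isotopy — together with the finiteness of $\Delta$, which allows a uniform upper bound on how small each successive isotopy needs to be. A secondary subtlety is the correct choice of the collar-like neighborhood $X$ where transversality is automatic near $M^{k-1}$; the regular-chart product structure makes this routine but requires the boundary hypothesis to close the induction at the outermost stratum.
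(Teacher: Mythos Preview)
Your inductive framework on the stratum dimension $k$ is reasonable and parallels part of the paper's argument, but the resolution of the compatibility problem has a genuine gap. You write that you would ``invoke openness of the transversality condition in the $C^0$-topology on PL embeddings.'' This openness is \emph{false} in the PL category. For a simple illustration, take $B = [-1,1]\times\{0\}\subset\mathbb{R}^2$ and $S$ the $y$-axis, which meet transversally at the origin; a PL perturbation of $S$ that is $C^0$-small but replaces a short segment near the origin by a piece lying \emph{on} $B$ destroys transversality. The clause ``in any given neighborhood of $S$'' in Lemma~\ref{Lem: transversality} controls the $C^0$-size of the isotopy, but that is not enough to preserve PL transversality to the earlier $B_j$. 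So as written, the iteration over $B_1,\ldots,B_N$ does not close.

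The paper organizes the argument differently precisely to sidestep this issue. It runs an \emph{outer} induction on the ambient dimension $n$ and processes the $n$-simplices of $\Delta$ one at a time. For a single $n$-simplex $\sigma$ with support $S$, an inner induction on $k$ (much like yours) makes each stratum closure transversal to $S$; then, using the product neighborhood $\mathcal{M}|_S\times[-1,1]$, the inductive hypothesis in dimension $n-1$ is applied to $\partial\sigma$ to achieve transversality to all faces of $\sigma$. Crucially, when moving on to the next $n$-simplex one works relative to a neighborhood of the simplices already handled, so compatibility is enforced by the support of the isotopy rather than by any openness claim. If you want to rescue your simplex-enumeration scheme, you would need a genuinely \emph{relative} version of Williamson's lemma (transversality to $B_i$ rel a neighborhood where transversality to $B_1,\ldots,B_{i-1}$ already holds), not a $C^0$-stability argument; but the lemma as quoted only allows $Q=\partial S$, so this would require going back to the source for a stronger statement.
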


\begin{proof}
Let us prove the statement by induction on $n$.
The statement is trivial for $n=0$.
Suppose the statement is true for $(n-1)$-manifolds.

Take an $n$-simplex $\sigma$ of the triangulation of $M^n$. Denote the support of $\sigma$ by $S$ and the boundary by $B$. Take a neighbourhood $U$ of $\sigma$. 
We first ambient isotope $M_0$ away from $S$.

By induction on $k=1,\cdots, n-1$, we prove that we can ambient isotope $\mathcal{M}$ in $U$, such that $\mathcal{M}$ and $S$ are transversal at all points in $M^k$.

Suppose $\mathcal{M}$ and $S$ are transversal at all points in $S\cap M^{k-1}$.
Then the intersection of $M^{k-1}$ and the $n-k$-skeletons of $\sigma$ is empty.
By the compactness of $S\cap M^{k-1}$, it has a neighbourhood of transversal points.
We ambient isotope $M^k\setminus M^{k-1}$ away from the $n-k-1$-skeletons fixing a neighbourhood of $M^{k-1}$. By the regularity, for every connected component $C^k$ of $M^k\setminus M^{k-1}$, its closure is a PL sub $k$-manifold and its boundary is in $M^{k-1}$, which has a neighbourhood transversal to $S$. By Lemma \ref{Lem: transversality}, we ambient isotope every $C^k$ fixing a neighbourhood of $M^{k-1}$, so that $C^k$ is transversal to $S$. Moreover ambient isotope of different $C^k$'s are independent of each other. Then  the intersection of $M^{k}\setminus M^{k-1}$ and $S$ are transversal. It completes the induction on $k$ and then $\mathcal{M}$ is transversal to $S$.

By the compactness of $S$, $\mathcal{M}|_{S}$ has a neighbourhood homeomorphic to $\mathcal{M}|_{S} \times [-1,1]$. 
Note that $S$ has a triangulation $\partial \sigma$. By induction on $n$, we can ambient isotope $\mathcal{M}|_{S}$, so that $\mathcal{M}|_{S}$ is transversal to $\partial \sigma$. 
Then we extend the ambient isotope to the neighbourhood. 
We end up with $\mathcal{M}$ transversal to the $n$-simplex $\sigma$.
Similarly we ambient isotope $\mathcal{M}$ for other $n$-simplices without changing the neighbourhood of the previous simplices. Eventually, $\mathcal{M}$ is transversal to the triangulation.
\end{proof}

\begin{definition}
A regular stratified PL $n$-manifold $\mathcal{N}$ is called a sub manifold of a regular stratified PL $n$-manifold $\mathcal{M}$, if $N^{n}$ is a sub PL manifold of $M^n$ and $N^k=N^n\cap M^k$, for any $0\leq k \leq j$. 
\end{definition}

\begin{figure}[H]
    \centering
    \begin{tikzpicture}
\begin{scope}[yscale=.5]
\draw (-2.75,-3) --++ (4,0) --++(1.5,6) --++ (-4,0)--++(-1.5,-6);
\draw[dashed] (0,0)--(-1,0);
\draw[dashed] (0,0)--(1,1);
\draw[dashed] (0,0)--(.5,-1);
\draw (0,-2)--(0,2);
\draw (-1,-2)--(-1,2);
\draw (.5,-3)--(.5,1);
\draw (1,-1)--(1,3);
\begin{scope}[shift={(0,2)}]
\draw (0,0)--(-1,0);
\draw (0,0)--(1,1);
\draw (0,0)--(.5,-1);    
\end{scope}
\begin{scope}[shift={(0,-2)}]
\draw (0,0)--(-1,0);
\draw (0,0)--(1,1);
\draw (0,0)--(.5,-1);    
\end{scope}
\end{scope}
\end{tikzpicture}    
    \caption{An example of a regular stratified manifold}
    \label{fig:enter-label}
\end{figure}
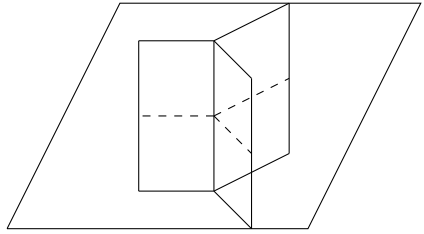

We mainly use the following two operations of PL $n$-manifolds: (1) disjoint union (2) gluing. 
We show that both operations extend to regular stratified PL $n$-manifolds.
Obviously the disjoint union of regular PL $n$-manifolds is a regular PL $n$-manifold.
We need to deal with the charts at the corner when we glue the boundary. 

\begin{lemma}\label{Lem: corner}
There is a PL homeomorphism $\phi$ from $[-1,1]\times [0,1]$ to $[0,1]\times [0,1]$ which preserves $[0,1]\times \{0\}$ and maps $[-1,0] \times \{0\}$ to $\{0\}\times [0,1]$.    
\end{lemma}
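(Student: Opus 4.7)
The plan is to exhibit $\phi$ explicitly by triangulating both source and target and specifying $\phi$ as an affine map on each triangle. First, I would introduce the five vertices $v_0=(-1,0)$, $v_1=(0,0)$, $v_2=(1,0)$, $v_3=(-1,1)$, $v_4=(1,1)$ of the rectangle and form the three triangles $T_1=[v_0,v_1,v_3]$, $T_2=[v_1,v_3,v_4]$, $T_3=[v_1,v_2,v_4]$, which triangulate $[-1,1]\times[0,1]$. In the target square I would take the vertices $w_0=(0,1)$, $w_1=(0,0)$, $w_2=(1,0)$, $w_3=(1/2,1)$, $w_4=(1,1)$ and form the corresponding triangles $T_i'$ with the same index patterns. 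A short geometric check shows that $T_1'$, $T_2'$, $T_3'$ are non-degenerate with the same (counterclockwise) orientation as the $T_i$, have disjoint interiors, and their union is exactly $[0,1]\times[0,1]$.

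Next I would set $\phi|_{T_i}$ to be the unique affine map sending $v_a\mapsto w_a$ on the vertices of $T_i$. The vertex assignment is consistent along the two shared edges $T_1\cap T_2=[v_1,v_3]$ and $T_2\cap T_3=[v_1,v_4]$, so the three affine pieces glue into a continuous piecewise linear map $\phi:[-1,1]\times[0,1]\to[0,1]\times[0,1]$. Each piece is orientation-preserving, and the image triangles tile the square, so $\phi$ is a continuous bijection from a compact space onto a Hausdorff space and is therefore a PL homeomorphism.

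Finally, the boundary conditions follow immediately from the vertex assignment. On $T_3$ the vertices $v_1,v_2,v_4$ map to themselves, so the affine piece $\phi|_{T_3}$ is the identity and in particular restricts to the identity on $[v_1,v_2]=[0,1]\times\{0\}$; on $T_1$ the edge $[v_0,v_1]=[-1,0]\times\{0\}$ is sent affinely to $[w_0,w_1]=\{0\}\times[0,1]$, as required. The construction is essentially routine; the only point that deserves any care is choosing the target triangulation so that the image pieces fit together into $[0,1]\times[0,1]$ with the prescribed boundary behavior, and the choice of $w_3=(1/2,1)$ is made precisely to accommodate this.
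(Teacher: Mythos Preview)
Your proof is correct. Both you and the paper construct the PL homeomorphism explicitly by piecewise linear pieces; the paper presents this as a sequence of three pictorial shears (rectangle $\to$ triangle $\to$ quadrilateral $\to$ square), whereas you give a single map via matching triangulations of source and target. The approaches are essentially equivalent, with yours being more explicit and easier to verify formally, and the paper's being more visual. One tiny inaccuracy: the triangles $T_2$ and $T_2'$ as you list them are clockwise rather than counterclockwise, but what matters is that each $T_i$ and $T_i'$ share the \emph{same} orientation, which they do, so the affine pieces are orientation-preserving and your argument goes through unchanged.
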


\begin{proof}
\begin{tikzpicture}
\draw (-1,0) rectangle (1,1);
\begin{scope}[shift={(-.2,-.2)}]
\node at (-1,0) {$A$};
\node at (0,0) {$B$};
\node at (1,0) {$C$};   
\end{scope}
\begin{scope}[shift={(4,0)}]
\begin{scope}[shift={(-.2,-.2)}]
\node at (-1,0) {$A$};
\node at (0,0) {$B$};
\node at (1,0) {$C$};   
\end{scope}
\node at (-2,.5)  {$\rightarrow$};
\draw (-1,0) -- (1,1)--(1,0)--(-1,0);
\end{scope}
\begin{scope}[shift={(8,0)}]
\begin{scope}[shift={(-.2,-.2)}]
\node at (0,.5) {$A$};
\node at (0,0) {$B$};
\node at (1,0) {$C$};   
\end{scope}
\node at (-2,.5)  {$\rightarrow$};
\draw (0,.5) -- (1,1)--(1,0)--(0,0)--(0,.5);
\end{scope}
\begin{scope}[shift={(12,0)}]
\begin{scope}[shift={(-.2,-.2)}]
\node at (0,1) {$A$};
\node at (0,0) {$B$};
\node at (1,0) {$C$};    
\end{scope}
\node at (-2,.5)  {$\rightarrow$};
\draw (0,0) rectangle (1,1);    
\end{scope}
\end{tikzpicture}    
\end{proof}

\begin{proposition}
Suppose $\mathcal{M}$ is a regular stratified PL $n$-manifold. Suppose $\mathcal{P}$ and $\mathcal{Q}$ are regular stratified closed PL sub $(n-1)$-manifolds of $\partial \mathcal{M}$, $\mathring{\mathcal{P}} \cap \mathring{\mathcal{Q}}=\emptyset$, $\partial\mathcal{P}\cap \partial\mathcal{Q}$ is a sub $(n-2)$-manifold $\partial \mathcal{M}$, and $\phi: \mathcal{P}\to \mathcal{Q}$ is an orientation reversing homeomorphism. 
Then we obtain a regular stratified PL $n$-manifold $\mathcal{M}/\phi$ by gluing $p\in \mathcal{P}$ with $\phi(p)\in \mathcal{Q}$.     
\end{proposition}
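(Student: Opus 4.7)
The plan is to verify that $\mathcal{M}/\phi$ inherits the structure of a regular stratified PL $n$-manifold in three stages: the underlying PL manifold, the stratification, and regularity at every point.

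\textbf{Step 1.} The topological quotient $M^n/\phi$ carries a natural PL $n$-manifold structure by the classical gluing construction, since $\phi$ is an orientation-reversing PL homeomorphism between sub $(n-1)$-manifolds of $\partial M^n$ with disjoint interiors (cf.\ \cite{Hud69,RouSan72}). Its boundary is $(\partial M^n\setminus(\mathring{P}\cup\mathring{Q}))/\phi|_{\partial P\cap\partial Q}$.

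\textbf{Step 2.} Because $\phi:\mathcal{P}\to\mathcal{Q}$ is a homeomorphism of stratified PL manifolds, it satisfies $\phi(P^k)=Q^k$ for every $k$. Hence the images $M^k/\phi$ under the quotient map give a well-defined stratification of $M^n/\phi$, and each stratum $M^k/\phi\setminus M^{k-1}/\phi$ is an open PL $k$-manifold by the same gluing applied at the stratum level.

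\textbf{Step 3.} For regularity we consider three cases for $\bar p\in\mathcal{M}/\phi$. (a) If $\bar p$ lifts to a unique $p\in M^n\setminus(\mathcal{P}\cup\mathcal{Q})$, a regular chart of $\mathcal{M}$ at $p$ descends directly. (b) If $\bar p$ lifts to a pair $p\in\mathring{\mathcal{P}}$, $\phi(p)\in\mathring{\mathcal{Q}}$ lying in the $k$-stratum, choose regular boundary charts $(U_p,\psi_p,\mathcal{S}_p)$ and $(U_{\phi(p)},\psi_{\phi(p)},\mathcal{S}_{\phi(p)})$ with images in $\Lambda\mathcal{S}_p\times D^{k+1}_+$ and $\Lambda\mathcal{S}_{\phi(p)}\times D^{k+1}_+$ respectively. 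Proposition~\ref{Prop: link boundary} together with the fact that $\phi$ is a stratified homeomorphism on a neighbourhood of $p$ in $\partial\mathcal{M}$ yields $\mathcal{S}_p\sim\mathcal{S}_{\phi(p)}$. Composing one chart with an Alexander-type PL isotopy of the $D^{k+1}_+$ factor (Prop.~\ref{Prop: Alexander}) aligns the two charts along the common $\{0\}$-slice in a way compatible with $\phi$. Gluing the two half-discs produces a chart whose image is a full $n$-disc with a stratification of the form $\Lambda\mathcal{S}'\times D^k$, where $\mathcal{S}'$ is the double of $\Lambda\mathcal{S}_p$ along its boundary $\mathcal{S}_p$, a regular stratified manifold with support $S^{n-k-1}$. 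This is the required interior regular chart. (c) If $\bar p$ lifts to a point in $\partial\mathcal{P}\cap\partial\mathcal{Q}$, the same construction yields a chart whose image acquires a corner of shape $[-1,1]\times[0,1]$ in two coordinate directions; Lemma~\ref{Lem: corner} straightens this into $D^{k+1}_+$, giving a regular boundary chart.

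The main obstacle is case (b): although $\mathcal{S}_p\sim\mathcal{S}_{\phi(p)}$ is immediate from Proposition~\ref{Prop: link boundary}, producing from two a priori unrelated regular charts a single chart whose transition across the glued slice respects $\phi$ requires moving through PL isotopies that fix the link boundary. This is where the triviality of the mapping class group of $D^{k+1}_+$ (Prop.~\ref{Prop: Alexander}) and the PL (rather than merely topological) nature of $\phi$ are used in an essential way. Once these alignments are in place, case (c) reduces to (b) modulo the corner lemma, and the proposition follows.
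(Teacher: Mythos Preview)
Your overall strategy—verify regularity pointwise in the quotient by assembling charts—is the same as the paper's, but two points deserve comment.

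First, your case analysis is incomplete. You cover (a) points outside $P\cup Q$, (b) points of $\mathring{\mathcal P}$, and (c) points of $\partial\mathcal P\cap\partial\mathcal Q$, but you omit the case $p\in\partial\mathcal P\setminus(\partial\mathcal P\cap\partial\mathcal Q)$. After gluing, such a point becomes a boundary point of $\mathcal{M}/\phi$ at which two half-neighbourhoods meet along only half of their boundary faces, producing a genuine corner; this is where the paper first invokes Lemma~\ref{Lem: corner}. The paper in fact distinguishes four cases on the glued locus (interior of $P$; $\partial P$ away from $\partial P\cap\partial Q$; interior points of $\partial P\cap\partial Q$; boundary points of $\partial P\cap\partial Q$), applying the corner lemma or its inverse as appropriate in the last three. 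Your case (c) also conflates the last two of these, which require different applications of the lemma.

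Second, in your case (b) you work harder than the paper does. You choose independent regular charts at $p$ and $\phi(p)$, match their link boundaries via Proposition~\ref{Prop: link boundary}, and then align them with an Alexander isotopy. The paper sidesteps this: given a regular chart $(U,\psi_p,\mathcal S)$ of $p$ in $\partial\mathcal M$, it simply \emph{transports} it to the chart $(\phi(U),\psi_p\circ\phi^{-1},\mathcal S)$ at $q=\phi(p)$, so compatibility across the glued slice is automatic and the link boundaries are identical by construction. What you flag as ``the main obstacle'' therefore dissolves with a better choice of chart at $q$.
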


\begin{proof}
When $p$ is in the interior of $P$, take a regular chart $(U,\phi_p,\mathcal{S})$ of $p$ in $\partial M$, then $(\phi U, \phi_p\phi^{-1}, \mathcal{S})$ is a regular chart of $q=\phi(p)$ in $\partial M$. Their neighbourhoods in $\mathcal{M}$ are normal bundles of $U_p$ and $\phi U_p$. The chart of the union the two neighbourhood is the union of the charts on the normal bundles. 

When $p$ is on the boundary of $\mathcal{P}$, but not in $\partial\mathcal{P}\cap \partial\mathcal{Q}$, we apply the homeomorphism in Lemma \ref{Lem: corner} to the charts of $p$ and $q$ along $\partial (P\cup Q)$, and then glue the charts. 

When $p$ is an inner point of $\partial\mathcal{P}\cap \partial\mathcal{Q}$, we apply the inverse homeomorphism in Lemma \ref{Lem: corner} to the charts of $p$ and $q$ along $\partial P\cap \partial Q)$, and then glue the charts. 

When $p$ is on the boundary of $\partial\mathcal{P}\cap \partial\mathcal{Q}$, we apply the homeomorphism in Lemma \ref{Lem: corner} to the charts of $p$ and $q$ along $\partial (P\cup Q)$, and then
we apply the inverse homeomorphism in Lemma \ref{Lem: corner} to the charts along $\partial P\cap \partial Q$
and then glue the charts. 
\end{proof}

\subsection{Labelled Stratified Manifolds}
In the rest of the paper, the manifold in the top dimension $n$ is always assumed to be oriented, compact, PL. All stratified manifolds and charts are regular. We will ignore the terminology oriented, compact, PL and regular for simplicity, if there is no confusion.

For a`lattice on an oriented compact $n$-manifold $M^n$,
we describe its lattice shape as a stratified $n$-manifold $\mathcal{M}$ with support $M^n$. The local lattice shape is described by the charts $(U,\phi,\mathcal{S})$. Usually there are only finitely many local lattice shapes, so there are finitely many choices of $\mathcal{S}$.
We consider a set $\PP$ as the positions to assign vectors, such as spins.
For points in $\PP$, if they have the same local shape $\Lambda\mathcal{S}$, $|\mathcal{S}|=S^{n-1}$, they shall be assigned with vectors in the same vector space $V_{\mathcal{S}}$.
If the local shape of a position point is $ \Lambda\mathcal{S}^{n-k-1} \times \mathring{D}^{k}$, $|\mathcal{S}^{n-k-1}|=S^{n-k-1}$, an assignment is a replacement of
the closed stratified manifold $\mathcal{S} \times D^{k}$ by a spin vector.
A configuration of the lattice is an assignment of vectors to points in $\PP$.

Now let us formulate their mathematical definitions.

\begin{definition}\label{Def:LS}
Suppose $\LS_k$ is a set of stratified PL manifolds with support $S^{n-k-1}$, $0\leq k\leq n$ and $\LS_{\bullet}=\cup_{k=0}^n \LS_k$.
We say a PL manifold $\mathcal{M}$ has local shape $\LS_{\bullet}$, if for any $0\leq k \leq n$ and $p\in M^{k} \setminus M^{k-1}$, $p$ has a chart $(U,\phi,\mathcal{S})$, $\mathcal{S}\in \LS_k$.
We call $\LS_{\bullet}$ the local shape set.
\end{definition}

We fix the local lattice shape first and we only consider stratified manifolds of a given type $\LS_{\bullet}$. For example, when $n=2$, $k=0$, the local shape of a point $p\in M^0$ is $S^1$ with a stratification of $m$ points. Then $p$ is a $m$-valent point. Usually $m$ is $2,3,4,5,6$ for 2D lattices.

\begin{definition}\label{Def:LL}
For a local shape set $\LS$, a label space $L$ is a set of vector spaces  $L=\{L_{\mathcal{S}} : \mathcal{S} \in \LS_{\bullet}\}$  over the field $\mathbb{K}$.
\end{definition}

\begin{definition}
For a label space $L$, an $L$-labelled stratified manifold $\mathcal{M}$ consists of
\begin{enumerate}
    \item a set of points $\PP$, called position points;
    \item a regular chart $(U_p, \phi_p, \mathcal{S}_p)$, for every $p\in \PP$; 
    \item and a vector $\otimes_{p\in \PP} v_p \in \otimes_{p\in \PP} L_{\mathcal{S}_p}$. 
\end{enumerate}
such that $M^0\subseteq \PP \subseteq  \mathring{\mathcal{M}}$; $\phi_p$ is orientation preserving; $\partial(\mathcal{M})$ and $U_p$, $p\in\PP$, are pairwise disjoint.
\end{definition}

\begin{notation}
We consider the $L$-labelled stratified manifold $\mathcal{M}$ as
a replacement of $U_p$ in $\mathcal{M}$ by the label $\phi_p^{-1}(v_p)$, for all $p\in \PP$.
Diagrammatically, we draw $\phi^{-1}(v_p)$ at $p$ of the stratified manifold.
\end{notation}

\begin{definition}\label{Def: condensation}
For a stratified manifold $\mathcal{M}$, we define its configuration space $\mathcal{M}(L)$ to be the vector space spanned by $L$-labelled stratified manifolds $\mathcal{M}$.
For a \PL manifold $M$ with boundary $S$ and a stratified manifold $\mathcal{S}$ with support $S$, 
We define the condensation space on $M$ with boundary $\mathcal{S}$ as
$$M_{\mathcal{S}}(L):=\{\ell\in \mathcal{M}(L): |\mathcal{M}|=M, \partial \mathcal{M}=\mathcal{S}\}.$$
When $\partial M=\emptyset$, we write $M(L)$ for short.
\end{definition}
The two operations disjoint union and gluing extend linearly to labelled stratified manifold.

By Prop.~\ref{Prop: Hom of PL manifolds}, a homeomorphism of manifolds $\phi: M\to N$,
induces a homeomorphism of stratified manifolds $\phi: M(L)\to N(L)$.

\begin{definition}
For an orientation preserving homeomorphism of manifolds $\phi: M\to N$,
and an $L$-labelled stratified manifold $\ell \in M(L)$, with positions points $\PP$, charts $\{(U_p, \phi_p, \mathcal{S}_p): p\in \PP\}$, and vectors $\{v_p \in L_{\mathcal{S}_p} : p\in \PP\}$,
we define a labelled stratified manifold $\phi(\ell)\in N(L)$ with
\begin{enumerate}
    \item a set of position points $\phi(\PP)$;
    \item every $\phi(p)\in \phi(\PP)$ has a chart $(\phi(U_p), \phi_p\phi^{-1}, \mathcal{S}_p)$, and a vector $v_p$ in $L_{\mathcal{S}_p}$.
\end{enumerate}
\end{definition}

\section{Hyper-Sphere Functions}\label{Sec: Hyper-Sphere Functions}

\subsection{Hyper-Sphere Functions}
\begin{definition}\label{Def: Z HI}
For a local shape set $\LS$ and label spaces $L=\{L_{\mathcal{S}} : \mathcal{S} \in \LS_{\bullet}\}$ over a field $\mathbb{K}$,
we call a linear functional $Z: S^n(L) \to \mathbb{K}$ homeomorphic invariant (HI), if
$$Z(\ell)=Z(\phi(\ell)),$$
for any vector $\ell\in S^n(L)$ and any orientation preserving \PL homeomorphism $\phi$ on $S^n$.
In this case, we call $Z$ an $S^n$ functional for short.
\end{definition}

\begin{remark}
Recall that the mapping class group of $S^n$ is trivial, so the isotopic invariance of $Z$ on $S^n(L)$ is equivalent to the homeomorphic invariance.
\end{remark}

We can extend the linear functional $Z$ to $M(L)$, $|M|\sim S^n$ by homeomorphisms.
To simplify the terminology, we first focus on the stratified manifolds with support $S^n$.
We will extend the linear functional $Z$ to $M(L)$ for a general $n$-manifold $M$ when we construct $n+1$ alterfold TQFT.

Recall that $D^{n+1}=[-1,1]^{n+1}= D^n \times [-1,1]$ and $S^n=\partial D^{n+1}$.
The equator of $S^n$ is $ S^{n-1} \times \{0\}$, which separates $S^n$ into two $n$-discs
\begin{align*}
D^n_{+}&=D^{n} \times \{1\} \cup S^{n-1} \times [0,1];\\
D^n_{-}&=D^{n} \times \{-1\} \cup S^{n-1} \times [-1,0].
\end{align*}
And $\partial D^n_{\pm}=\pm (-1)^n (S^{n-1} \times \{0\})$, where the sign $\pm (-1)^n$ indicates the orientation of the boundary.

\begin{remark}
As the stratified manifold contains more data, we would like to keep the stratified manifold as the first component in the product $\mathcal{D}^{n-k}\times D^k$. Then the new coordinate will be the last, and we obtain a global factor $(-1)^n$.

If we decompose $D^{n+1}=[-1,1]^{n+1}= [-1,1] \times [-1,1]^n$
into two part $\partial D^n_{\pm}$ according to the first coordinate. Then $\partial D^n_{\pm}=\pm (\{0\} \times S^{n-1})$.
In this case, the new coordinate will be the first. Then the stratified manifold will appear as the last component in the product $D^k \times \mathcal{D}^{n-k}$.
\end{remark}

Now let us study $L$-labelled stratified manifolds with boundary support $S^{n-1} \times \{0\}$.

\begin{definition}
Suppose $\mathcal{S}$ is an oriented stratified manifold with support $S^{n-1} \times \{0\}$.
We define
$V_{\mathcal{S},\pm}$ to be the vector space spanned by $L$-labelled stratified manifold with support $D^n_{\pm}$ and boundary $\pm(-1)^n\mathcal{S}$.
\end{definition}

\begin{definition}
The $S^n$ functional $Z$ defines a bi-linear form on $V_{\mathcal{S},-}\times V_{\mathcal{S},+}$,
$$Z(\ell_- \times \ell_+):=Z(\ell_-\cup\ell_+),$$
where $\ell_\pm \in V_{\mathcal{S},\pm}$ are $L$-labelled stratified manifolds.
\end{definition}

\begin{definition}
We call the rank of the bilinear form the entanglement rank of $Z$ over $\mathcal{S}$, denoted by  $r_Z(\mathcal{S})$.  
We say $Z$ has finite entanglement rank, if $r_Z(\mathcal{S})<\infty$ for any $\mathcal{S}$.
\end{definition}

\begin{definition}
A vector $v \in V_{\mathcal{S},\pm}$ is called a null vector, if
$$Z(v \cup w)=0, \forall w \in V_{\mathcal{S},\mp}.$$
The subspace of all null vectors are denoted by $K_{\mathcal{S},\pm}$.
Their quotient spaces are denoted by $\tilde{V}_{\mathcal{S},\pm}:=V_{\mathcal{S},\pm}/K_{\mathcal{S},\pm}$.
\end{definition}

Note that both $\dim \tilde{V}_{\mathcal{S},+}$ and $\dim \tilde{V}_{\mathcal{S},-}$ are equal to the rank of the bilinear form. When $Z$ has finite entanglement rank, $\tilde{V}_{\mathcal{S},\pm}$ are finite dimensional.

\begin{proposition}\label{Prop:phi-inv}
Suppose $\phi$ is a \PL homeomorphism on $D^n_\pm$, which is the identity on the boundary, then it induces the identity action on $\tilde{V}_{\mathcal{S},\pm}$.
\end{proposition}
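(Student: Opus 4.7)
The plan is to reduce the claim to an application of the homeomorphic invariance condition (HI) on $S^n$. Since $\tilde{V}_{\mathcal{S},\pm}=V_{\mathcal{S},\pm}/K_{\mathcal{S},\pm}$, it suffices to show that for every $\ell\in V_{\mathcal{S},\pm}$ the difference $\phi(\ell)-\ell$ lies in $K_{\mathcal{S},\pm}$, i.e.\ that $Z(\phi(\ell)\cup w)=Z(\ell\cup w)$ for every $w\in V_{\mathcal{S},\mp}$.

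First I would extend $\phi$ to a \PL homeomorphism $\tilde\phi$ of the whole sphere $S^n=D^n_+\cup D^n_-$ by declaring $\tilde\phi$ to be the identity on the complementary disc $D^n_\mp$. Since $\phi$ is the identity on the equator $\partial D^n_\pm=S^{n-1}\times\{0\}$, the two pieces agree there, so $\tilde\phi$ is a well-defined \PL self-homeomorphism of $S^n$; moreover, because $\tilde\phi$ restricts to the orientation-preserving identity on an open set, it is orientation preserving globally.

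Next, for any $\ell\in V_{\mathcal{S},\pm}$ and $w\in V_{\mathcal{S},\mp}$, the glued label $\ell\cup w$ lies in $S^n(L)$, and by construction
\begin{equation*}
\tilde\phi(\ell\cup w)=\phi(\ell)\cup w.
\end{equation*}
Applying condition (HI) from Def.~\ref{Def: Z HI} to $\tilde\phi$ gives
\begin{equation*}
Z(\phi(\ell)\cup w)=Z(\tilde\phi(\ell\cup w))=Z(\ell\cup w).
\end{equation*}
Hence $\phi(\ell)-\ell\in K_{\mathcal{S},\pm}$, which is exactly what was required.

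The only mildly delicate points are bookkeeping: checking that the definition of the action of a homeomorphism on labelled stratified manifolds is compatible with gluing along the equator (so that $\tilde\phi(\ell\cup w)=\phi(\ell)\cup w$ at the level of position points, charts and label vectors), and confirming that the extension-by-identity indeed preserves the required regularity and orientation data at the equator. These are routine from the definitions in Section~\ref{Sec: Labelled Regular Stratified Manifolds}, so I do not anticipate a genuine obstacle.
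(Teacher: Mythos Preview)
Your proof is correct and takes essentially the same approach as the paper's own proof: extend $\phi$ by the identity to a homeomorphism of $S^n$ and invoke (HI) to conclude $Z(\phi(v)\cup w)=Z(v\cup w)$ for all $w$, hence $\phi(v)-v\in K_{\mathcal{S},\pm}$. The paper states this in one line without spelling out the extension step, while you have made the details explicit.
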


\begin{proof}
For any $v \in V_{\mathcal{S},\pm}$ and $w \in V_{\mathcal{S},\mp}$.
$$Z(v\cup w)=Z(\phi(v)\cup w).$$
So $\phi(v)=v$ in $\tilde{V}_{\mathcal{S},\pm}$.
\end{proof}

\begin{corollary}\label{Cor: homeomorphism on Dk}
A homeomorphism $\phi$ on $S^{k-1}$ extends to a   homeomorphism $\Lambda\phi$ on $D^{k}$ by the one-point suspension. It extends to a homeomorphism $I_{D^{n-k}} \times \Lambda\phi$ on $\mathcal{D}^{n-k}\times D^{k}$. The corresponding action on $\tilde{V}_{\partial(\mathcal{D}^{n-k}\times D^{k}),\pm}$ is the identity map.        
\end{corollary}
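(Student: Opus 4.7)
The plan is to reduce to Proposition~\ref{Prop:phi-inv} by constructing an extension of $\Phi := I_{D^{n-k}}\times \Lambda\phi$ to a PL homeomorphism of $S^n$ and invoking HI. First observe that $\Phi$ preserves $\mathcal{S} := \partial(\mathcal{D}^{n-k}\times D^k)$ setwise as a stratified manifold, since all stratification lives on the $\mathcal{D}^{n-k}$-factor on which $\Phi$ acts by the identity; thus $\Phi$ descends to a well-defined map of $V_{\mathcal{S},\pm}$. Since $\text{MCG}(S^{k-1})$ is trivial, $\phi$ is PL isotopic to the identity via some $\phi_t$, and this produces an isotopy $\Phi_t := I\times \Lambda\phi_t$ on $\mathcal{D}^{n-k}\times D^k$ whose boundary restriction $\psi_t := \Phi_t|_\partial$ is an $\mathcal{S}$-preserving PL isotopy from the identity to $\psi := \Phi|_\partial$ (each $\psi_t$ is identity on the stratified $\mathcal{D}^{n-k}$-factor).

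Identify $D^n_+ \cong \mathcal{D}^{n-k}\times D^k$ and pick a collar $C \cong S^{n-1}\times[0,\delta]$ of $\partial D^n_-$ (with $s=0$ at the outer boundary). Define
$$\tilde\Phi|_{D^n_+} := \Phi,\qquad \tilde\Phi|_{D^n_-\setminus C} := I,\qquad \tilde\Phi|_C(x,s) := (\psi_{1-s/\delta}(x),\, s).$$
Continuity holds along the equator $S^{n-1}$ since both restrictions equal $\psi_1=\psi$ there, and along the inner edge of $C$ since $\psi_0=I$; PL-ness follows from that of the isotopy $\phi_t$. So $\tilde\Phi$ is a PL homeomorphism of $S^n$, and HI yields $Z\circ\tilde\Phi=Z$.

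The crucial step is to exhibit, for each $w\in V_{\mathcal{S},-}$, a representative $w'$ of the same class in $\tilde V_{\mathcal{S},-}$ that is literally fixed pointwise by $\tilde\Phi|_{D^n_-}$. Using the collar structure of regular stratified PL manifolds, I would first apply a boundary-fixing PL homeomorphism of $D^n_-$ to arrange that the stratification of $w'$ in $C$ is the product $\mathcal{S}\times[0,\delta]$; then by an Alexander-style shrinking (a second boundary-fixing homeomorphism) push all position points and their charts into $D^n_-\setminus C$. Both moves leave the class in $\tilde V_{\mathcal{S},-}$ unchanged by Proposition~\ref{Prop:phi-inv}. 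On the resulting $w'$, $\tilde\Phi|_{D^n_-}$ is the identity on $D^n_-\setminus C$, so positions, charts, and labels are untouched; and on $C$ the action $\psi_{1-s/\delta}\times I$ preserves the product stratification $\mathcal{S}\times[0,\delta]$ because each $\psi_t$ preserves $\mathcal{S}$ setwise. Hence $\tilde\Phi|_{D^n_-}(w')=w'$.

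Assembling the pieces, for any $v\in V_{\mathcal{S},+}$,
$$Z(\Phi(v)\cup w)=Z(\Phi(v)\cup w')=Z(\tilde\Phi(v\cup w'))=Z(v\cup w')=Z(v\cup w),$$
so $\Phi(v)=v$ in $\tilde V_{\mathcal{S},+}$; the statement for $\tilde V_{\mathcal{S},-}$ follows by running the symmetric construction with the roles of $D^n_+$ and $D^n_-$ reversed. The principal obstacle is that $\Phi$ is \emph{not} the identity on $\partial(\mathcal{D}^{n-k}\times D^k)$, so Proposition~\ref{Prop:phi-inv} cannot be applied to $\Phi$ directly; the workaround is to absorb the boundary motion into an interpolation in the opposite hemisphere that can be arranged, class by class in $\tilde V$, to act trivially on a suitably normalized representative.
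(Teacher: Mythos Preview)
Your argument is correct and fills in what the paper's one-line proof (``It follows from Prop.~\ref{Prop: Alexander} and~\ref{Prop:phi-inv}'') leaves implicit. You correctly diagnose the obstacle: $\Phi=I_{D^{n-k}}\times\Lambda\phi$ is \emph{not} the identity on $\partial(\mathcal{D}^{n-k}\times D^k)$, so Proposition~\ref{Prop:phi-inv} does not apply to $\Phi$ directly. Your collar-interpolation extension $\tilde\Phi$ together with the normalization of $w$ to a representative $w'$ that is literally fixed by $\tilde\Phi|_{D^n_-}$ is a sound resolution; the crucial observation that each $\psi_t$ preserves every stratum of $\mathcal{S}$ (because all stratification lives in the $\mathcal{D}^{n-k}$-factor, on which $\Phi_t$ acts as the identity) is exactly what makes the product stratification $\mathcal{S}\times[0,\delta]$ in the collar genuinely invariant.

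One minor point of attribution: the isotopy $\phi_t$ from $\phi$ to the identity on $S^{k-1}$ comes from the triviality of $\mathrm{MCG}(S^{k-1})$, which in the paper is the proposition immediately \emph{after} Prop.~\ref{Prop: Alexander} (the latter concerns $D^n$ rel boundary). The paper's terse citation blurs this distinction; your usage is the correct one. In substance your route is the natural unpacking of the paper's two citations rather than a genuinely different method---there is no shorter path, since without absorbing the boundary motion into the opposite hemisphere one cannot reconcile the moving boundary of $\Phi$ with the fixed-boundary hypothesis of Proposition~\ref{Prop:phi-inv}.
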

   
\begin{proof}
It follows from Prop.~\ref{Prop: Alexander} and \ref{Prop:phi-inv}.  
\end{proof}

\begin{definition}
We define $\rho$ to be the $180^{\circ}$ rotation along $S^1$ of the last two coordinates.
It induces a linear map from $V_{\mathcal{S}_\pm}$ to $V_{\rho(\mathcal{S})_\mp}$.

More precisely, for an $L$-labelled stratified manifold $\ell$ with support $D^n_{\pm}$, position points $\PP$, charts $\{(U_p, \phi_p, \mathcal{S}_p): p\in \PP\}$, and vectors $\{v_p \in L_{\mathcal{S}_p} : p\in \PP\}$,
we define its rotation $\rho(\ell)$ as an $L$-labelled stratified manifold with support $D^n_{\mp}$,
\begin{enumerate}
    \item position points $\rho(\PP)$;
    \item a regular chart $(\rho(U_p), \phi_p \rho, \mathcal{S}_p)$, for every $\rho(p)\in \rho(\PP)$; 
    \item and a vector $\otimes_{p\in \PP} v_p$.
\end{enumerate}
\end{definition}

\begin{proposition}
The rotation $\rho$ is well-defined from $\tilde{V}_{\mathcal{S},\pm}$ to $\tilde{V}_{\mathcal{S},\mp}$.
\end{proposition}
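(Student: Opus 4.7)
The statement amounts to showing that the linear map $\rho\colon V_{\mathcal{S},\pm}\to V_{\rho(\mathcal{S}),\mp}$ carries the null subspace $K_{\mathcal{S},\pm}$ into $K_{\rho(\mathcal{S}),\mp}$, so that it descends to a well-defined linear map on the quotients. My plan is to apply the homeomorphic invariance (HI) of $Z$ to $\rho$, viewed as a PL homeomorphism of the ambient sphere $S^n$, and to combine it with the definition of null vectors.

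First I would verify that $\rho$, the $180^\circ$ rotation of the last two coordinates, extends to an orientation-preserving PL homeomorphism of the whole sphere $S^n$: it is a rotation inside a coordinate $2$-plane, hence orientation-preserving, and it is PL in the sense of the notation preceding Prop.~\ref{Prop: Alexander}. Globally, $\rho$ interchanges $D^n_+$ and $D^n_-$, maps the equator to itself carrying $\mathcal{S}$ to $\rho(\mathcal{S})$, and has inverse $\rho^{-1}$ which is again a $180^\circ$ rotation. Consequently $\rho$ restricts to linear bijections $V_{\mathcal{S},\pm}\to V_{\rho(\mathcal{S}),\mp}$.

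The core step is then a one-line computation. For $v\in V_{\mathcal{S},\pm}$ and any $u\in V_{\rho(\mathcal{S}),\pm}$, applying (HI) with the homeomorphism $\rho^{-1}$ of $S^n$ gives
\begin{equation*}
Z\bigl(\rho(v)\cup u\bigr)=Z\bigl(\rho^{-1}(\rho(v)\cup u)\bigr)=Z\bigl(v\cup\rho^{-1}(u)\bigr),
\end{equation*}
and $\rho^{-1}(u)\in V_{\mathcal{S},\mp}$, since $\rho^{-1}(\rho(\mathcal{S}))=\mathcal{S}$. If $v\in K_{\mathcal{S},\pm}$, the right-hand side vanishes for every choice of $u$, so $\rho(v)\in K_{\rho(\mathcal{S}),\mp}$. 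Hence $\rho$ descends to a linear map $\tilde{V}_{\mathcal{S},\pm}\to\tilde{V}_{\rho(\mathcal{S}),\mp}$; applying the same argument to $\rho^{-1}$ shows this descended map is in fact a linear isomorphism.

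I do not anticipate any real obstacle here. The argument is a direct combination of (HI), the global bijectivity of $\rho$ on $S^n$, and the definition of $K_{\mathcal{S},\pm}$. The only point requiring a moment's care is the check that $\rho$ is orientation-preserving and PL as a homeomorphism of $S^n$, so that (HI) applies; everything else is essentially bookkeeping about how $\rho$ exchanges the hemispheres and identifies the relevant configuration spaces.
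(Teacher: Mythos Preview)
Your argument is correct and is essentially the same as the paper's: both reduce to the identity $Z(\rho(v)\cup w)=Z(v\cup\rho(w))$ (noting that $\rho^{-1}=\rho$ for a $180^\circ$ rotation), which follows from (HI), and then observe that null vectors go to null vectors. Your write-up is simply more explicit about why $\rho$ is an orientation-preserving PL homeomorphism of $S^n$ and about the target being $\tilde{V}_{\rho(\mathcal{S}),\mp}$, a detail the paper's statement elides.
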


\begin{proof}
If $v\in \tilde{V}_{\mathcal{S},\pm}$ is a null vector, then for any $w\in \tilde{V}_{\mathcal{S},\pm}$,
$$Z(\rho(v)\cup w)=Z(v\cup \rho(w))=0.$$
So $\rho(v)$ is a null vector.
So the rotation $\rho$ is well-defined on the quotient $\tilde{V}_{\mathcal{S},\pm}$.
\end{proof}

Suppose $M$ is an $n$-manifold with boundary $S$ and $\mathcal{S}$ is a stratified manifold with support $S$.
Recall that $M_{\mathcal{S}}(L)$ is the vector space spanned by $L$-labelled stratified manifold with support $M$ and boundary $\mathcal{S}$.

\begin{definition}
For an $S^n$ functional $Z$, we define the kernel $K_Z$ to be the vector space spanned by stratified manifolds labelled with a null vector in a local $n$-disc.
We define quotient space $\tilde{M}_{\mathcal{S}}(L):=M_{\mathcal{S}}(L)/K_Z$.
\end{definition}

By Prop.~\ref{Prop:phi-inv}, the vector in $\tilde{M}_{\mathcal{S}}(L)$ is well-defined up to isotopy in a local $n$-disc.

\begin{notation}
The two operations disjoint union and gluing preserve $K_Z$, so they are well-defined on the quotient spaces $\{\tilde{M}_{\mathcal{S}}(L)\}$, called the tensor and contraction respectively.    
\end{notation}

\section{Hyper Disc Algebras and Sphere Algebras}\label{Sec: Hyper Disc Algebras and Sphere Algebras}

Given an $S^n$ functional $Z$ on $L$-labelled stratified manifolds, the two operations disjoint union and gluing induce tensor product and contraction on the quotient spaces $\{\tilde{M}_{\mathcal{S}}(L)\}$. They provide fruitful algebraic structures according to the rich boundary conditions and the algebraic relations captured by the kernel $K_Z$.
In particular, we can glue two $n$-discs into one $n$-disc up to isotopy. We first study the $n$-category emerging from the local $n$-disc by idempotent completion.

Note that $Z$ is homeomorphic invariance on $S^n$, so for any homeomorphism $\phi$ on $\mathbb{R}^{n+1}$  and $\ell \in S^n(L)$, the value 
$$Z(\phi(\ell)):=Z(L)$$
is independent of $\phi$.   
Furthermore, if $\phi$ preserves $\mathcal{S}$, $|\mathcal{S}|=S^{n-1}\times 0$,
by Prop.~\ref{Prop:phi-inv}, we can identity $\phi(v)$ as $v$ in $\tilde{V}_{\mathcal{S},\pm}$.

\begin{notation}
Let $G$ be the group generated by affine transformations on $\mathbb{R}$ for every coordinate of $\mathbb{R}^{n+1}$.
If an element $g$ in $G$ preserves $S^n$, then it is the identity map on $S^n$.
So there is no ambiguity to identify a vector $v \in \tilde{V}_{\mathcal{S},\pm}$ with $g(v)$.
We will write $g(v)$ as $v$ to simplify the notation.
\end{notation}

\begin{notation}
We say we label $v_p$ at $p$ of a stratified manifold, if the labelled stratified manifold $\ell$ has a position point $p$, a regular chart $(U_p,\phi_p,\mathcal{S}_p)$ and a vector $v_p\in L_{\mathcal{S}_p}$, such that $\phi_p=g|_{U_p}$, for some $g\in G$, 
Diagrammatically, we draw the label $v_p$ at $p$ of the stratified manifold.
\end{notation}

\subsection{Hyper Disc Algebras}
We introduce an $n$-dimensional hyper disc algebra, or a $D^n$-algebra, to study the algebraic properties of 
$V_{\mathcal{S}_\pm}$ for all $\mathcal{S}=S^{n-1}$. It generalizes the notion of planar algebras introduced by Jones \cite{Jon21}.  
The algebraic operations come from combinations of the two elementary operations: tensor product and contraction.
They are extremely rich due to the complexity of the stratified manifolds and topological isotopy.

We remove the interior of $m$ disjoint $n$-discs in an $n$-disc, which are all \PL homeomorphic to the standard $D^n$ and denoted it as $T^n(m)$. It is unique up to \PL homeomorphism.
For example, $D_0=[-1,1]^n$, $D_i= [-1/2,1/2]^{n-1} \times [\frac{-m-2+2i}{m+1}, \frac{-m+2i}{m+1}]$, $i=1,2\cdots, m$. $T^n(m)=D_0\setminus \bigcup_{i=1}^{m} \mathring{D}_i$.
We consider $\partial D_0$ as the output boundary and $\partial D_i$ as the $i^{th}$ input boundary of $T^n(m)$. The output boundary has the same orientation as the bulk $T^n(m)$ and the input boundary has orientation opposite to the bulk.

\begin{definition}
     We define an $n$-tangle as a stratified $\mathcal{T}$ in $\mathbb{R}^n$, $|\mathcal{T}|\sim T^n(m)$, such that the 0-manifold $T_0$ is empty in the interior.
\end{definition}

\begin{remark}
From the view of TQFT, we only consider tangles up to \PL isotopy in the interior.
\end{remark}

If the output disc of $\mathcal{T}$ is identical to the $i^{th}$ input disc of $\mathcal{T}'$, then we obtain a tangle $\mathcal{T}'\circ_{i}\mathcal{T}$ by gluing the boundary.

\begin{definition}
For a local shape $LS_{\bullet}$, we consider a stratified manifold $\mathcal{S}$ with support $S^{n-1}$ as an object and an $n$-tangle as a morphism from the tensor of objects to one object. Under such compositions, all $n$-tangles form an operad of stratified \PL $n$-manifolds, denoted by $\stopn(LS_{\bullet})$, or just $\stopn$ for short.
\end{definition}

\begin{remark}
The tangle of the \PL manifold $T^n(m)$ (without stratification) is symmetric w.r.t. the inputs. The tangle of the stratified \PL manifold $\mathcal{T}$ is not symmetric.
\end{remark}

\begin{definition}
   Note $|\mathcal{T}|\sim T^n(1)$ is a morphism from one object to one object. We call it an annular $n$-tangle. These objects and morphisms form a category, called the annular category of stratified \PL $n$-manifolds.
\end{definition}

\begin{definition}
For any stratified $\mathcal{S}$, we denote its \PL homeomorphic
equivalence class as $[\mathcal{S}]:=\{ \mathcal{S}': \mathcal{S}'\sim \mathcal{S} \}$.
We denote $\hom[\mathcal{S}]$ to be the groupoid of \PL homeomorphisms between elements in $[\mathcal{S}]$, and  $\hom(\mathcal{S})$ to be the group of \PL homeomorphisms on $\mathcal{S}$.
\end{definition}

\begin{definition}
    a $D^n$-algebra $V$ is a covariant representation $\pi$ of the operad $\stopn$. That means
    \begin{enumerate}
        \item for any object in $\stopn$, namely a stratified $\mathcal{S}$, $|\mathcal{S}|\sim S^{n-1}$, there is a vector space $V_{\mathcal{S}}$;
        \item for any homeomorphism $\phi: \mathcal{S} \to \mathcal{S'}$, there is an invertible linear map $\pi(\phi)$ from $V_{\mathcal{S}}$ to $V_{\mathcal{S}'}$,
        \item for any morphism in $\stopn$, namely an $n$-tangle $\mathcal{T}$, there is a multilinear map $\pi(\mathcal{T}): \bigotimes_{i=1}^{m} V_{\mathcal{S}_i} \to  V_{\mathcal{S}_0}$;
    \end{enumerate}
    such that
    \begin{enumerate}
     \item $\pi$ is a representation of the groupoid $hom[\mathcal{S}]$;
     \item for any $n$-tangle $\mathcal{T}$ and $\mathcal{T'}$, s.t. $\mathcal{T}_0=\mathcal{T'}_i$, $$\pi(\mathcal{T}'\circ_{i}\mathcal{T})=\pi(\mathcal{T}')\circ_{i}\pi(\mathcal{T});$$
     \item The action of $\mathcal{T}$ is compatible with any orientation preserving \PL homeomorphism $\phi$ on $\mathbb{R}^n$:
     $$\pi(\phi(\mathcal{T}))(\bigotimes_{i=1}^{m} \pi(\phi|_{\mathcal{S}_i}))=
     \pi_{\phi|\mathcal{S}_0}\pi(\mathcal{T}).$$
    \end{enumerate}
\end{definition}

\begin{notation}
We will ignore the symbol $\pi$ if there is no confusion, and simply write the linear map $\pi(\mathcal{T})$ as $\mathcal{T}$,

$$\mathcal{T}: \bigotimes_{i=1}^{m} V_{\mathcal{S}_i} \to  V_{\mathcal{S}_0}.$$

\end{notation}

\begin{remark}
The compatibility between $\mathcal{T}$ and $\phi$ in condition (3) is equivalent to that
for any input vectors $v_i \in V_{\mathcal{S}_i}$,
    $$\phi(\mathcal{T}(\bigotimes_{i=1}^{n} v_i))=\phi(\mathcal{T})(\bigotimes_{i=1}^{n} \phi(v_i)).$$
\end{remark}

\begin{notation}
For $|\mathcal{S}|= \Lambda S^{n-1}$ , we represent a vector $\alpha$ in $V_{\mathcal{S}}$ as a stratified manifold $\Lambda \mathcal{S}$ which is the one-point suspension of $\mathcal{S}$ with the origin, and the origin is labelled with $\alpha$.
\end{notation}

\begin{figure}
    \centering
    \begin{tikzpicture}
    \draw  (-1,-1) rectangle (1,1);
    \draw[blue] (0,0) --(.5,1);
    \draw[blue] (0,0) --(-.5,1);
    \draw[blue] (0,0) --(0,-1);
    \node at (-.5,0) {$\alpha$};
  \begin{scope}[scale=2,shift={(2,0)}]  
  \begin{scope}[shift={(.25,.25)}]    
            \fill[gray!20] (0,0)--(1,0)--(1,-1)--(0,-1);
            \end{scope}
            \begin{scope}[shift={(0,-.5)}]
            \fill[opacity=.20] (0,0)--(1,0)--(1.5,.5)--(.5,.5);
            \end{scope}
            \begin{scope}[shift={(-.5,0)}]
                \fill[opacity=.20] (1,0)--(1.5,.5)--(1.5,-.5)--(1,-1);
            \end{scope}
            \draw[gray!60] (1,0)--(1.5,.5)--(1.5,-.5)--(1,-1)--(1,0);
            \draw[gray!60] (0,0)--(1,0)--(1.5,.5)--(.5,.5)--(0,0);
            \draw[gray!60] (0,0)--(1,0)--(1,-1)--(0,-1)--(0,0);
            \draw[gray!60,dashed] (.5,.5)--(.5,-.5);
            \draw[gray!60,dashed] (.5,-.5)--(1.5,-.5);
            \draw[gray!60,dashed] (.5,-.5)--(0,-1);
            \draw[blue] (.25,-.25)--(1.25,-.25);
            \draw[blue] (.75,-.75)--(.75,.25);
            \draw[blue] (.5,-.5)--(1,0);
            \node at (.7,-.2) {$\alpha$};
    \end{scope}
\end{tikzpicture} 
    \caption{Fig: Diagrammatic presentations of vectors for $n=2,3$.}
    \label{fig:2D 3D}
\end{figure}
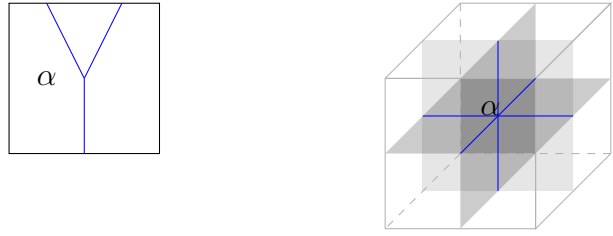

We can choose a representative $\mathcal{S}$ in $[\mathcal{S}]$ and only consider the action of its mapping class group $MCG(\mathcal{S})$.

\begin{definition}
    For a $D^n$-algebra $V$ and an $n$-tangle $\mathcal{T}$, $|\mathcal{T}|=T^n(m)$,
    we partially fill the input vectors $v_i \in V_{\mathcal{S}_i}$, for $i \in I \subset \{1,2,\cdots, m\}$, we call the result a labelled tangle which is a multilinear map on the rest inputs:
    $$\mathcal{T}(\bigotimes_{i\in I} v_i): \bigotimes_{i\notin I} V_{\mathcal{S}_i} \to  V_{\mathcal{S}_0}.$$
    If all inputs are filled by vectors, then it becomes a vector in $V$, and we call it a fully labelled tangle. 
\end{definition}

For an $n$-tangle $\mathcal{T}$ with $m$ inputs, $m\geq 2$, and the $i^{th}$ input boundary $\mathcal{S}_i$ has no stratification, we can fill the $PL$ manifold $D_i$, $D_i\sim D^n$, and we obtain an $n$-tangle with $m-1$ inputs, denoted by $\mathcal{T}(D_i)$.
For a vector $v \in V(S^{n-1})$, we can fill the $i^{th}$ input of $\mathcal{T}$ by $v_i$, denoted by $\mathcal{T}(v_i)$.

\begin{definition}
For a $D^n$-algebra $V$, a vector $v \in V(\partial=S^{n-1})$ is called the unit, if
$$\mathcal{T}(v_i)=\mathcal{T}(D_i),$$
as multilinear maps for any $\mathcal{T}$ and $i$.
In this case, we call the $D^n$-algebra $V$ unital.
\end{definition}

\begin{proposition}
    The unit is unique.
\end{proposition}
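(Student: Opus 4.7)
The plan is to argue uniqueness by applying the defining identity of a unit to a tangle whose action is already forced to be the identity on $V_{S^{n-1}}$. Suppose $v, v' \in V_{S^{n-1}}$ are both units. I want to produce a single annular tangle that evaluates both candidates to the same expression, and simultaneously acts as the identity operator so that the conclusion $v = v'$ is immediate.

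First I would single out the \emph{trivial annular $n$-tangle} $\mathcal{T}_0$: the tangle with support $T^n(1)=D^n \setminus \mathring{D}_1$ carrying no interior stratification, with input and output boundaries both equal to the unstratified $S^{n-1}$. Up to \PL isotopy this is the operadic identity of the object $S^{n-1}$ in $\stopn$, i.e.\ for every $n$-tangle $\mathcal{T}$ whose $i$-th input boundary is $S^{n-1}$ one has $\mathcal{T}\circ_i \mathcal{T}_0 = \mathcal{T}$ (gluing the trivial collar introduces no new strata). By the operad axioms in the definition of a $D^n$-algebra, namely $\pi(\mathcal{T}'\circ_i\mathcal{T})=\pi(\mathcal{T}')\circ_i\pi(\mathcal{T})$, this forces $\pi(\mathcal{T}_0)=\mathrm{id}_{V_{S^{n-1}}}$. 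Hence $\mathcal{T}_0(w)=w$ for every $w \in V_{S^{n-1}}$.

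Second, the single input of $\mathcal{T}_0$ has boundary $S^{n-1}$ with no stratification, so both $\mathcal{T}_0(v_1)$ and $\mathcal{T}_0(D_1)$ make sense, and the latter is the vector in $V_{S^{n-1}}$ represented by the plain unstratified disc. Applying the unit axiom $\mathcal{T}(v_i)=\mathcal{T}(D_i)$ to $\mathcal{T}=\mathcal{T}_0$ with $i=1$ yields
\[
\mathcal{T}_0(v) \;=\; \mathcal{T}_0(D_1) \;=\; \mathcal{T}_0(v').
\]
Combining this with the first step gives $v = \mathcal{T}_0(v) = \mathcal{T}_0(v') = v'$, completing the argument.

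The only nontrivial point is the identification of $\mathcal{T}_0$ as the operadic identity on $S^{n-1}$ and the fact that $\pi(\mathcal{T}_0)$ therefore acts as the identity linear map; everything else is a direct invocation of the definition. I do not expect a serious obstacle here, since the identity-element status of the trivial annular tangle is built into the very notion of the operad $\stopn$, and the rest of the reasoning is a two-line diagram chase analogous to the classical uniqueness of unit in a monoid.
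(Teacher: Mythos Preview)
Your argument is morally correct and isolates the same key ingredient as the paper---the trivial annular tangle acts as the identity on $V_{S^{n-1}}$---but it stumbles on a technicality of the paper's setup. The expression $\mathcal{T}(D_i)$, and with it the unit axiom $\mathcal{T}(v_i)=\mathcal{T}(D_i)$, is introduced in the paper only for tangles with $m\geq 2$ inputs (see the paragraph immediately preceding the definition of a unit). Invoking the axiom for the single-input tangle $\mathcal{T}_0=T^n(1)$ is therefore not formally licensed by the stated definition: $\mathcal{T}_0(D_1)$ has not been given a meaning.

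The paper sidesteps this by using the two-input tangle $T^n(2)$: if $v,w$ are both units then
\[
v \;=\; T^n(2)(w,v) \;=\; w,
\]
where the first equality uses that $w$ is a unit (so $T^n(2)(w,-)=T^n(2)(D_1)(-)$, the trivial annular tangle applied to $v$) and the second uses that $v$ is a unit symmetrically. This is the classical monoid argument $e=ee'=e'$ with $T^n(2)$ playing the role of multiplication; the unit axiom is applied only to $T^n(2)$, which has $m=2\geq 2$. Your route via $\mathcal{T}_0$ would be a clean alternative once the unit axiom is extended to $m=1$ (which is natural, since filling the last hole yields a zero-input tangle, i.e.\ a vector), but as written the paper's definition excludes that case. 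The quickest repair is simply to replace $\mathcal{T}_0$ by $T^n(2)$ and apply the unit axiom to each slot in turn.
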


\begin{proof}
If both $v$ and $w$ are units, then $v=T^n(2)(w,v)=w$.
\end{proof}

\begin{definition}
    A subspace $W$ of a $D^n$-algebra $V$ is a set of vector spaces $\{W(\partial=\mathcal{S}) \subseteq V(\partial=\mathcal{S}) : |\mathcal{S}|=S^{n-1}\}$.
    We call $W$ a sub $D^n$-algebra of $V$, if it is invariant under the action of $n$-tangles. We call $W$ an ideal of $V$, if it is invariant under the action of $V$-labelled tangles.
\end{definition}
The subspace/subalgebra/ideal $W$ is considered to be trivial, if $W=0$ or $V$.

\begin{notation}
   For a subspace $W$ of a $D^n$ algebra $V$, we denote $A(W)$ to be the sub algebra generated by $W$ and $I(W)$ to be the ideal generated by $W$.
\end{notation}

\begin{proposition}
    The quotient $V/W$ for an ideal $W$ of a $D^n$ algebra $V$ is a $D^n$ algebra.
\end{proposition}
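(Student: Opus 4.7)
The plan is to check that each piece of structure defining a $D^n$-algebra descends well-definedly to the quotient. First I would set $(V/W)(\partial=\mathcal{S}):=V(\partial=\mathcal{S})/W(\partial=\mathcal{S})$ for every object $\mathcal{S}$ with $|\mathcal{S}|\sim S^{n-1}$. The main thing to verify is that every structure map of $V$ sends $W$-vectors to $W$-vectors, so that it factors through these quotients.

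For the operad action, fix an $n$-tangle $\mathcal{T}$ with inputs $\mathcal{S}_1,\ldots,\mathcal{S}_m$ and output $\mathcal{S}_0$, and suppose one of the input vectors $v_i$ lies in $W(\partial=\mathcal{S}_i)$. The claim is that $\mathcal{T}(v_1\otimes\cdots\otimes v_m)\in W(\partial=\mathcal{S}_0)$. But this is exactly what ``$W$ is invariant under the action of $V$-labelled tangles'' means in the definition of ideal: the remaining inputs $v_j$ ($j\neq i$) can be viewed as $V$-labels on the tangle $\mathcal{T}$ with the $i$-th input left open, giving an annular-type map from $V(\partial=\mathcal{S}_i)$ to $V(\partial=\mathcal{S}_0)$ that sends $W$ into $W$. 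Hence $\mathcal{T}$ descends to a well-defined multilinear map
\begin{equation*}
\overline{\mathcal{T}}:\bigotimes_{i=1}^m (V/W)(\partial=\mathcal{S}_i)\to (V/W)(\partial=\mathcal{S}_0).
\end{equation*}

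Next I would check that homeomorphisms $\phi:\mathcal{S}\to\mathcal{S}'$ also descend: since $\phi$ is realized as an annular $n$-tangle (the trivial tangle $|\mathcal{T}|\sim T^n(1)$ carrying $\phi$) with no interior labels, the same ideal property gives $\pi(\phi)(W(\partial=\mathcal{S}))\subseteq W(\partial=\mathcal{S}')$, so $\pi(\phi)$ induces an invertible linear map on the quotients (invertibility comes from applying the same argument to $\phi^{-1}$). The groupoid representation axiom, the composition axiom $\pi(\mathcal{T}'\circ_i\mathcal{T})=\pi(\mathcal{T}')\circ_i\pi(\mathcal{T})$, and the compatibility with PL homeomorphisms of $\mathbb{R}^n$ all pass to the quotient automatically, since they are equalities of multilinear maps on $V$ and factoring through $V/W$ preserves such equalities.

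The only nontrivial step is really the first one, and it is handled directly by the definition of ideal; once that is in place everything else is formal. No dimensional or topological subtlety appears, because $W$ being an ideal was defined precisely so as to absorb the full operad action on partially labelled tangles. Thus $V/W$ inherits the structure of a $D^n$-algebra, with the quotient map $V\to V/W$ becoming a morphism of $D^n$-algebras.
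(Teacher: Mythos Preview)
Your proposal is correct and is exactly the content behind the paper's one-line proof ``It follows from the definitions.'' You have simply unpacked that sentence: the ideal condition (invariance under $V$-labelled tangles) is precisely what makes every structure map of the operad factor through the quotient, and the remaining axioms are equalities in $V$ that pass automatically to $V/W$.
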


\begin{proof}
    It follows from the definitions.
\end{proof}

\begin{definition}
A $D^n$-algebra $V$ is called decomposible, if $V=V_1\oplus V_2$ for two sub  $D^n$-algebras $V_1$ and $V_2$.
\end{definition}

\begin{definition}
A $D^n$-algebra $V$ is called reducible, if it has a non-trivial ideal.
\end{definition}

\begin{definition}
For two $D^n$-algebras with representations (or functors) $\pi_V$ and $\pi_W$, a homomorphism $\Phi: V\to W$ of $D^n$-algebras is a natural transformation between the functors $\pi_V$ and $\pi_W$.
\end{definition}

\begin{definition}
   A $D^n$ algebra is called finite dimensional, if every vector space $V(\partial=\mathcal{S})$, $|\mathcal{S}|=S^{n-1}$, is finite dimensional.
\end{definition}

\begin{definition}
For a given local shape set $LS_{\bullet}$, all $L$-labelled stratified manifolds $S^n(L)$ form a $D^n$ algebra, called the free $D^n$ algebra.    
\end{definition}

\subsection{Hyper Sphere Algebras}

\begin{definition}
For a $D^n$ algebra $V$ and an $S^n$ functional on $S^n(V)$, we call the pair $(V,Z)$ a hyper sphere algebra or an $S^n$ algebra. An $S^n$ algebra $(V,Z)$ is called non-degenerate, if $K_Z=0$. 
\end{definition}

\begin{proposition}\label{Prop: Kernel ideal}
For an $S^n$ functional on $S^n(L)$, the kernel $K_Z$ is an ideal of the free $D^n$ algebra.
The quotient spaces $\tilde{V}:=\{\tilde{V}_{\mathcal{S},-} : |\mathcal{S}|=S^{n-1}\}$ from a $D^n$ algebra, and $(\tilde{V},Z)$ is a non-degenerate $S^n$-algebra.
\end{proposition}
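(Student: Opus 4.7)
The plan is to separate the proposition into two pieces: first, that $K_Z$ is stable under the action of all $V$-labelled tangles (hence an ideal), and second, that the restriction $K_Z\cap V_{\mathcal{S},\pm}$ coincides with the local null space $K_{\mathcal{S},\pm}$. The first piece is essentially formal from the generating description of $K_Z$, while the second yields both the descent of $Z$ to the quotient and the non-degeneracy of the resulting $S^n$-algebra.

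The ideal property is almost tautological. A generator of $K_Z$ is a labelled stratified manifold containing a null vector $v$ in some local $n$-disc $D_i$, and applying any tangle $\mathcal{T}$ with the remaining inputs filled by arbitrary $V$-labels simply enlarges the ambient stratification around $D_i$ while leaving the null label $v$ untouched. Linearity together with associativity of the operad action then gives invariance of $K_Z$ under all $V$-labelled tangles, so $K_Z$ is an ideal of the free $D^n$-algebra. Consequently the tangle operations and the homeomorphism actions descend to the quotient $\tilde V = V/K_Z$, endowing it with a $D^n$-algebra structure.

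The crucial step is the identity $K_Z\cap V_{\mathcal{S},\pm} = K_{\mathcal{S},\pm}$. The inclusion $K_{\mathcal{S},\pm}\subseteq K_Z$ is immediate from the definition. For the reverse, take a generator $\mathcal{T}(v_1,\ldots,v_i,\ldots,v_m)\in V_{\mathcal{S}_0,\pm}$ whose $i^{\text{th}}$ input $v_i$ is null, and pair it against an arbitrary $w\in V_{\mathcal{S}_0,\mp}$. The resulting closed labelled stratified manifold on $S^n$ contains the input disc $D_i$, labelled by $v_i$, as a sub-$n$-disc. Using Prop.~\ref{Prop: Alexander} together with the triviality of $MCG(S^n)$, I can isotope $S^n$ so that $D_i$ becomes the standard half-disc $D^n_\mp$ and its complement becomes $D^n_\pm$, carrying $\mathcal{S}_i$ onto the equator. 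Under this homeomorphism the configuration reads as $v_i\cup u$ for some $u\in V_{\mathcal{S}_i,\pm}$ assembled from $\mathcal{T}$, the remaining $v_j$, and $w$. By (HI) the original pairing equals $Z(v_i\cup u)=0$, so $\mathcal{T}(\ldots,v_i,\ldots)\in K_{\mathcal{S}_0,\pm}$. With this identity in hand, $Z$ descends unambiguously to $S^n(\tilde V)$ and the induced pairing is non-degenerate, since $\tilde V_{\mathcal{S},\pm}=V_{\mathcal{S},\pm}/K_{\mathcal{S},\pm}$ by construction. The main obstacle is precisely this topological identification: without (HI) and the triviality of $MCG(S^n)$ one could not collapse the combinatorics of $\mathcal{T}$, $\{v_j\}_{j\neq i}$, and $w$ into a single hemispheric vector $u$ paired against $v_i$.
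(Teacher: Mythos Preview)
Your proof is correct and follows the same approach as the paper's very terse argument. You make explicit the isotopy step (carrying the null-labelled input disc to a standard hemisphere via homeomorphic invariance and the triviality of $MCG(S^n)$) that the paper leaves implicit in its one-line assertions ``the output is also labelled by a null vector'' and ``a null vector in $V_{\mathcal{S}}$ is in $K_Z$''; this is exactly what underlies both the descent of $Z$ to the quotient and the non-degeneracy claim.
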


\begin{proof}
If an input vector is a stratified manifold labelled by a null vector, then the output is also labelled by a null vector. So $K_Z$ is an ideal.   
So the quotient $\tilde{V}$ is a $D^n$-algebra. 
The $S^n$ functional $Z$ is zero on $K_Z$, so it is well-defined on $S^n(\tilde{V})$.
Moreover, a null vector in $V_{\mathcal{S}}$ is in $K_Z$. So $(\tilde{V},Z)$ is a non-degenerate
$S^n$-algebra.
\end{proof}

Recall that $\tilde{V}_{\mathcal{S},\pm}$ are spanned by $L$-labelled stratified manifolds modulo the kernel $K_Z$.
We can update the label space $L$ by the $D^n$ algebra $\tilde{V}=\{\tilde{V}_{\mathcal{S},-}: |\mathcal{S}|=S^{n-1}\}$.

\begin{definition}\label{Def: update label space to Dn algebra}
For a $D^n$ algebra $V$, a $V$-labelled stratified manifold $\mathcal{M}$ consists of
\begin{enumerate}
    \item a set of points $\PP$, called position points;
    \item a regular chart $(U_p, \phi_p, \mathcal{S}_p)$, for every $p\in \PP$; 
    \item and a vector $\otimes_{p\in \PP} v_p \in \otimes_{p\in \PP} V_{\mathcal{S}_p}$. 
\end{enumerate}
such that $M^0\subseteq \PP \subseteq  \mathring{\mathcal{M}}$; $\phi_p$ is orientation preserving; $\partial(\mathcal{M})$ and $U_p$, $p\in\PP$, are pairwise disjoint.
\end{definition}

\begin{proposition}\label{Prop: Update Label Space}
For the non-degenerate $S^n$ algebra $(\tilde{V},Z)$, $\tilde{V}$-labelled stratified manifold $\mathcal{M}$ with boundary $\mathcal{S}$ and support $D^n_{\pm}$ is a vector in $\tilde{V}_{\mathcal{S},\pm}$.    
\end{proposition}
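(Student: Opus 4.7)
The plan is to construct a well-defined linear map from the space of $\tilde{V}$-labelled stratified manifolds $\mathcal{M}$ with boundary $\mathcal{S}$ and support $D^n_{\pm}$ into $\tilde{V}_{\mathcal{S},\pm}$, by lifting each label from the quotient algebra $\tilde{V}$ back to $L$ and then passing to the quotient again. The content of the proposition is precisely that this construction is independent of the chosen lifts.

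First, given $\mathcal{M}$ with position points $\PP$, charts $\{(U_p,\phi_p,\mathcal{S}_p)\}_{p\in \PP}$, and labels $v_p\in \tilde{V}_{\mathcal{S}_p}$, I would choose, for each $p$, a representative $\hat{v}_p \in V_{\mathcal{S}_p}$ of the class $v_p$, i.e.\ an $L$-labelled local configuration whose image in the quotient equals $v_p$. Substituting these representatives at all position points produces an $L$-labelled stratified manifold $\hat{\mathcal{M}} \in M_{\mathcal{S}}(L) = V_{\mathcal{S},\pm}$. The candidate image of $\mathcal{M}$ is then the class $[\hat{\mathcal{M}}] \in \tilde{V}_{\mathcal{S},\pm}$.

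The main step is independence of the lifts. By multilinearity it suffices to replace one $\hat{v}_p$ by $\hat{v}_p + k_p$, where $k_p$ is a null vector, and show that the resulting change $\hat{\mathcal{M}}'$ in $\hat{\mathcal{M}}$ lies in $K_{\mathcal{S},\pm}$. By construction $\hat{\mathcal{M}}'$ is a stratified manifold labelled with the null vector $k_p$ in the local $n$-disc $U_p$, hence $\hat{\mathcal{M}}' \in K_Z$ by definition of the kernel, and moreover $\hat{\mathcal{M}}' \in V_{\mathcal{S},\pm}$. Thus the change lies in $K_Z \cap V_{\mathcal{S},\pm}$, and the task reduces to checking $K_Z \cap V_{\mathcal{S},\pm} \subseteq K_{\mathcal{S},\pm}$.

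To finish, for any $\eta \in K_Z \cap V_{\mathcal{S},\pm}$ and any $w \in V_{\mathcal{S},\mp}$, one forms $\eta \cup w \in S^n(L)$ and must show $Z(\eta \cup w)=0$. Here the ideal property of $K_Z$ from Prop.~\ref{Prop: Kernel ideal} makes $\eta \cup w$ a sum of configurations each carrying a local null vector in some interior $n$-disc $U_p$; using the triviality of $MCG(S^n)$ and condition (HI), one rescales $U_p$ to a hemisphere of $S^n$ so that the null vector property of $k_p$ directly applies to the pairing, giving $Z(\eta \cup w)=0$ termwise. Hence $\eta \in K_{\mathcal{S},\pm}$, and $[\hat{\mathcal{M}}]\in \tilde{V}_{\mathcal{S},\pm}$ is indeed well-defined. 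The main obstacle is the bookkeeping between local charts at position points and the hemispherical structure used to define $V_{\mathcal{S},\pm}$; once this is handled via the homeomorphism invariance of $Z$, the proof reduces to the ideal property of $K_Z$.
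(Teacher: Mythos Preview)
Your proposal is correct and follows essentially the same approach as the paper. The paper's proof is extremely terse: it simply lifts each $\tilde{V}$-label to an $L$-labelled representative and invokes Prop.~\ref{Prop: Kernel ideal} to conclude that a null label at any position point makes the whole configuration a null vector. Your argument unpacks exactly this, making explicit the inclusion $K_Z \cap V_{\mathcal{S},\pm} \subseteq K_{\mathcal{S},\pm}$ (via homeomorphism invariance moving the local disc to a hemisphere) that the paper leaves implicit in its appeal to the ideal property and non-degeneracy of $(\tilde{V},Z)$.
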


\begin{proof}
We can regard $\mathcal{M}$ as an $L$-labelled stratified manifold, by replacing every $v_p \in V_{\mathcal{S}_p}$ as an $L$-labelled stratified manifold.
By Prop.~\ref{Prop: Kernel ideal},  $\mathcal{M}$ is a null vector, if a label $v_p$ is a null vector. So a $\tilde{V}$-labelled stratified manifold $\mathcal{M}$ with boundary $\mathcal{S}$ and support $D^n_{\pm}$ is a vector in $\tilde{V}_{\mathcal{S},\pm}$.    
\end{proof}

\subsection{Reflection Positivity}

\begin{definition}
Suppose * is an order-two automorphism of the field $\mathbb{K}$. For vector spaces $V$ and $W$, a map $T:V\to W$ is *-linear if
$$T(ax+by)=a^*T(x)+b^*T(y),~\forall a,b\in \mathbb{K}, ~x,y\in V.$$
\end{definition}

\begin{definition}
We define a reflection $\theta$ on an $S^n$ algebra $(V,S)$ as order-two *-linear maps $\theta: V_{\mathcal{S},\pm} \to V_{\mathcal{S},\mp}$ for every boundary stratified manifold $\mathcal{S}$ with support $S^{n-1}$, such that and
$$\theta \mathcal{T}(\bigotimes_{i\in I} v_i)=\mathcal{T}(\bigotimes_{i\in I} \theta(v_i) ),$$
for any fully labelled tangle $\displaystyle \mathcal{T}(\bigotimes_{i\in I} v_i)$.
\end{definition}

\begin{remark}
For an $S^n$ functional $Z$, the reflection $\theta$ on the $S^n$ algebras $(V,S)$ (or $(\tilde{V},S)$ is determined by the action of the label space $L$. 
We may add the $180^\circ$ rotation $\rho$ of $\theta(L)$ to the label space $L$ for convenience.
\end{remark}

\begin{definition}\label{Def: Z Hermitian}
We call an $S^n$ functional $Z$ Hermitian w.r.t. a reflection on the $S^n$-algebra $(V,Z)$, if for any $v_{\pm} \in V_{\mathcal{S},\pm}$,
\begin{align}\label{Equ: Z Hermitian}
Z(v_-\cup v_+)^*&=Z(\theta(v_+)\cup \theta v_-).    
\end{align}

\end{definition}

\begin{proposition}
If an $S^n$ functional $Z$ Hermitian, then the reflection is well-defined on the quotient spaces $\tilde{V}_{\mathcal{S},\pm}$.   
\end{proposition}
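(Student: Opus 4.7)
The plan is to show that the reflection $\theta$ sends the null subspace $K_{\mathcal{S},\pm}$ into $K_{\mathcal{S},\mp}$, so it descends to a well-defined map on the quotient $\tilde{V}_{\mathcal{S},\pm} = V_{\mathcal{S},\pm}/K_{\mathcal{S},\pm}$. Unraveling: I need to show that if $v \in V_{\mathcal{S},+}$ satisfies $Z(v \cup w) = 0$ for all $w \in V_{\mathcal{S},-}$, then $\theta(v) \in V_{\mathcal{S},-}$ satisfies $Z(\theta(v) \cup u) = 0$ for all $u \in V_{\mathcal{S},+}$ (and the analogous statement with $+$ and $-$ swapped).

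The key step is a direct application of the Hermiticity identity \eqref{Equ: Z Hermitian} together with the fact that $\theta$ is an order-two map. Concretely, given $u \in V_{\mathcal{S},+}$, set $v_- = \theta(v)$ and $v_+ = u$ in \eqref{Equ: Z Hermitian}; using $\theta \circ \theta = \mathrm{id}$, this yields
\begin{equation*}
Z(\theta(v) \cup u)^{*} = Z(\theta(u) \cup \theta(\theta(v))) = Z(\theta(u) \cup v).
\end{equation*}
Since $\theta(u) \in V_{\mathcal{S},-}$ and $v \in K_{\mathcal{S},+}$, the right-hand side vanishes. Applying $*$ (an order-two automorphism of $\mathbb{K}$, hence injective) gives $Z(\theta(v) \cup u) = 0$, so $\theta(v) \in K_{\mathcal{S},-}$. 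The argument for $\theta(K_{\mathcal{S},-}) \subseteq K_{\mathcal{S},+}$ is symmetric, obtained by interchanging the roles of the two half-discs.

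Finally, because $\theta$ is $*$-linear, $\theta(K_{\mathcal{S},\pm})$ is a linear subspace, and the containment just established ensures the induced map
\begin{equation*}
\tilde{\theta} : \tilde{V}_{\mathcal{S},\pm} \longrightarrow \tilde{V}_{\mathcal{S},\mp}, \qquad v + K_{\mathcal{S},\pm} \mapsto \theta(v) + K_{\mathcal{S},\mp},
\end{equation*}
is well-defined and still order-two and $*$-linear. There is no real obstacle here; the only subtle point is keeping track of which half-disc a vector lives on and using $\theta^2 = \mathrm{id}$ to rewrite $\theta(\theta(v)) = v$ inside the Hermitian identity. The proof is essentially a one-line computation after the correct substitution.
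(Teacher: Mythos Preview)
Your proof is correct and follows essentially the same approach as the paper: both show that $\theta$ carries null vectors to null vectors by invoking the Hermiticity identity, so that $\theta$ descends to the quotient. The paper's proof is a one-line appeal to that identity, while you have spelled out the substitution $v_- = \theta(v)$, $v_+ = u$ and the use of $\theta^2 = \mathrm{id}$ explicitly; nothing differs in substance.
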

\begin{proof}
For any null vector $v_\pm \in V_{\mathcal{S},\pm}$, by Equ.~\ref{Equ: Z Hermitian}, we have $\theta(v_\pm)$ is a null vector. So the reflection $\theta$ is well-defined on the quotient spaces $\tilde{V}_{\mathcal{S},\pm}$.    
\end{proof}

\begin{definition}\label{Def: Z RP}
A Hermitian $S^n$ functional $Z$ is called reflection positive (RP) over a number field $\mathbb{K}$, if 
$$Z(v\cup \theta (v))\geq 0, ~\forall~ v \in V_{\mathcal{S},\pm}.$$
\end{definition}

Usually the involution $*$ on $\mathbb{C}$ is the complex conjugate. When $Z$ is reflection positive, the quotient spaces $\tilde{V}_{\mathcal{S},\pm}$ are (pre) Hilbert spaces, which are called the physical Hilbert space in quantum field theory.    
In particular, $v$ is a null vector iff
$Z(v \cup \theta(v))=0$, by Cauchy-Schwarz inequality.

In Def.~\ref{Def: update label space to Dn algebra}, we update the label space from $L$ to $\tilde{V}_{\mathcal{S},-}$ of the $D^n$ algebra $\tilde{V}$ for orientation preserving homeomorphisms in regular charts. 
When $Z$ is Hermitian, we can further extend the label space to $\tilde{V}_{\mathcal{S},\pm}$. A stratified manifold may have a label $\phi^{-1}(v)$ for orientation reversing homeomorphism $\phi$ in a regular chart $(U, \phi, \mathcal{S})$ and a vector $v\in \tilde{V}_{\mathcal{S},+}$.

\begin{definition}
For an $S^n$ algebra $(\tilde{V},Z)$, a $\tilde{V}$-labelled stratified manifold $\mathcal{M}$ consists of
\begin{enumerate}
    \item a set of points $\PP$, called position points;
    \item a regular chart $(U_p, \phi_p, \mathcal{S}_p)$, for every $p\in \PP$; 
    \item and a vector $\otimes_{p\in \PP} v_p \in \otimes_{p\in \PP} \tilde{V}_{\mathcal{S}_p,\pm}$. 
\end{enumerate}
such that $M^0\subseteq \PP \subseteq  \mathring{\mathcal{M}}$; the $\pm$ sign of $\tilde{V}_{\mathcal{S}_p,\pm}$ depends on the orientation of $\phi_p$; $\partial(\mathcal{M})$ and $U_p$, $p\in\PP$, are pairwise disjoint.
\end{definition}

We generalize Prop.~\ref{Prop:phi-inv} for orientation reversing \PL homeomorphisms as follows.
\begin{proposition}
Suppose $\phi$ is an orientation reversing \PL homeomorphism from $D^n_{\pm}$ to $D^n_{\mp}$, which is the identity on the boundary, then $\phi=\theta$ as an action on $\tilde{V}_{\mathcal{S},\pm}$.
\end{proposition}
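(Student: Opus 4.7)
The plan is to reduce the orientation-reversing case to the orientation-preserving case already handled in Proposition~\ref{Prop:phi-inv}, by composing $\phi$ with a canonical geometric reflection that realizes $\theta$.

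First, I would fix a specific orientation-reversing PL homeomorphism $\theta_0 \colon D^n_{\pm} \to D^n_{\mp}$, for concreteness the reflection in the equatorial hyperplane (negation of the last coordinate). This $\theta_0$ is pointwise the identity on the boundary equator $S^{n-1}\times\{0\}$. By the convention recorded in the remark following the definition of reflection --- that for an $S^n$ functional $Z$ the reflection $\theta$ on $(V,Z)$ is determined by its prescribed $*$-linear action on the label space $L$ --- the geometric move of applying $\theta_0$ to a labelled stratified manifold (namely replacing each regular chart $(U_p,\phi_p,\mathcal{S}_p)$ by $(\theta_0(U_p),\phi_p\theta_0^{-1},\mathcal{S}_p)$ and each label $v_p$ by its image under the $*$-involution on $L_{\mathcal{S}_p}$) precisely realizes the abstract reflection $\theta \colon \tilde{V}_{\mathcal{S},\pm} \to \tilde{V}_{\mathcal{S},\mp}$.

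Next, I would form the composition $\psi := \theta_0^{-1}\circ\phi \colon D^n_{\pm} \to D^n_{\pm}$. As a composite of two orientation-reversing PL homeomorphisms, $\psi$ is orientation preserving; and since both $\phi$ and $\theta_0$ restrict to the identity on the boundary, so does $\psi$. Proposition~\ref{Prop:phi-inv} (which itself rests on Alexander's trick, Proposition~\ref{Prop: Alexander}) then gives that $\psi$ acts as the identity on $\tilde{V}_{\mathcal{S},\pm}$. Consequently, $\phi = \theta_0\circ\psi$ acts on $\tilde{V}_{\mathcal{S},\pm}$ in the same way as $\theta_0$, which by the first step is $\theta$.

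The main obstacle in this argument is the first step: pinning down that the geometric reflection $\theta_0$ realizes the abstract $*$-linear $\theta$ at the level of labelled stratified manifolds. This is essentially a bookkeeping matter --- one must unpack the prescribed $*$-involution on $L$ and check that propagating it through a $\theta_0$-reflected labelled stratified manifold yields the same vector as applying the abstract $\theta$ --- but is the only conceptual subtlety, since the Hermitianity of $Z$ already ensures that $\theta$ descends to the quotients $\tilde{V}_{\mathcal{S},\pm}$. Once this compatibility is made explicit, the remainder is a short application of Propositions~\ref{Prop: Alexander} and~\ref{Prop:phi-inv}.
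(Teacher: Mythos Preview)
Your proposal is correct and takes essentially the same approach as the paper: compose $\phi$ with the reflection to obtain an orientation-preserving homeomorphism fixing the boundary, then invoke Proposition~\ref{Prop:phi-inv}. The paper's proof is simply terser---it writes directly that $\phi\theta$ and $\theta^2$ act as the identity on $\tilde{V}_{\mathcal{S},\pm}$ and concludes $\phi=\theta$---whereas you make explicit the intermediate geometric reflection $\theta_0$ and the bookkeeping that it realizes the abstract $\theta$.
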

\begin{proof}
By Prop.~\ref{Prop:phi-inv}, both $\phi\theta$ and $\theta^2$ are the identity on $\tilde{V}_{\mathcal{S},\pm}$. So $\phi=\theta$.
\end{proof}

\section{Idempotent Completion and Spherical n-category}\label{Sec: Idempotent Completion and Spherical n-category}

In this paper, we study the higher representation theory of the $D^n$-algebra and construct an $n$-category. We consider it such an $n$-category from an $S^n$ function as a spherical $n$-category. If the  $S^n$ function us reflection positive, then  consider it as a unitary $n$-category.  

\begin{definition}
From the $S^n$ functional $Z$, we obtain a non-degenerate $S^n$-algebra $(\tilde{V},Z)$.
We call it finite dimensional, if $\tilde{V}_{\mathcal{S},-}$ is finite dimensional for any $\mathcal{S}$.
\end{definition}
In this paper, we focus on finite dimensional $S^n$ algebras.
A finite dimensional semisimple algebra over a field $\mathbb{K}$ is a direct sum of matrix algebras, $\bigoplus_i M_{n_i}(\mathbb{K})$.
If it is commutative, then $n_i=1$.
It is well-known that a finite dimensional $C^*$ algebra is semisimple.

We show that for any $|\mathcal{D}^{n-1}|=D^{n-1}$, $\tilde{V}_{\partial(\mathcal{D}^{n-1} \times D^1)}$ forms an associative algebra under the composition along the last coordinate.
If $Z$ is reflection positive, then the algebra is a $C^*$-algebra. 
Over a general field $\mathbb{K}$, we assume that the algebra is semisimple. 
We consider a (minimal) idempotent of the algebra as an (indecomposible) representation.
We will study the idempotent completion of these semisimple algebras and study the $n$-category structure of these idempotents as a higher representation category.

\subsection{Idempotent Completion}
In the following, we assume that $D^n$ and $D^n_-$ have the same orientation. (We ignore the global sign $(-1)^{n+1}$ of the orientation for simplicity.) 
Suppose $V_{\mathcal{S}}$ is the vector space spanned by $L$-labelled stratified manifold with support $D^n$ and the same boundary orientation as $V_{\mathcal{S},-}$. 
For a vector $x\in V_{\mathcal{S}}$,
we construct the vector $x_{-}$ in $V_{\mathcal{S},-}$ as
\begin{align*}
x_{-}&:=x\times \{-1\} \cup \mathcal{S} \times [-1,0].
\end{align*}

We study the algebraic structure of $\tilde{V}_{\mathcal{S},-}$ by that of the $D^n$ algebra $V_{\mathcal{S}}$. We rescale two $D^n$ and then compose them to one $D^n$ along the last coordinate, and consider their composition as a multiplication. We first study its idempotent completion which encodes an $n$-category.

For an $L$-labelled stratified manifold $\mathcal{S}^{n}$ with support $S^{n}$, it is a scalar according to $Z$.
For an $L$-labelled stratified manifold $\mathcal{D}^{n}$ with support $D^{n}$, it is a vector.
For a stratified manifold $\mathcal{S}^{n-1}$ with support $S^{n-1}$, it corresponds to a vector space $V_{\mathcal{S}^{n-1}}$.
We will explain the categorical meaning of stratified manifolds with support $D^{k}$ and $S^{k-1}$ for all $k$.

Note that when $n=1$, the vector space $\tilde{V}_{S^0,-}$ forms an associative algebra with a trace from $Z$. For $x,y \in V_{S^0}$, we label $x$ at $[-1,0]$ and label $y$ at $[0,1]$.
More precisely, the position point $-1/2$ has a vector $x$ with a regular chart $(U,\phi,\mathcal{S})$, $U=[-1,0]$, $\phi:U\to D^1$, $\mathcal{S}=S^0$.

We consider the result as their multiplication, denoted by $xy$, which is a vector in $V_{S^0}$.
Moreover, $V_{S^0}$ is an associative algebra.
The associativity follows from the following \PL homeomorphism $\phi$ on $[-1,1]$:
\begin{center}
\begin{tikzpicture}
\draw (-1,-1) --  (-1,1);
\node at (-1.2,-.75) {$x$};
\node at (-1.2,-.25) {$y$};
\node at (-1.2,.5) {$z$};
\begin{scope}[shift={(2,0)}]
\draw (-1,-1) --  (-1,1);
\node at (-1.2,-.5) {$x$};
\node at (-1.2,.25) {$y$};
\node at (-1.2,.75) {$z$};
\end{scope}
\draw[dashed] (-1,-1) -- (1,-1);
\draw[dashed] (-1,-.5) -- (1,0);
\draw[dashed] (-1,0) -- (1,.5);
\draw[dashed] (-1,1) -- (1,1);
\end{tikzpicture}
\end{center}
Note that $D^1$ is the identity $I$ of $V_{S^0}$.
We define the trace of $x$ as
$$Tr(x):=Z(x_- \cup \rho(D^1_-)).$$
By \PL isotopy, we have 
$$Tr(xy)=Z(x_- \cup \rho(y_-))=Tr(yx).$$

\begin{center}
\begin{tikzpicture}
\draw (-1,-1) -- (-1,1) -- (0,2)-- (0,0)-- (-1,-1);
\node at (-1.2,-.5) {$x$};
\node at (-1.2,.5) {$y$};
\node at (0.5,.5) {$=$};
\node at (2.5,.5) {$=$};
\begin{scope}[shift={(4,0)}]
\draw (-1,-1) -- (-1,1) -- (0,2)-- (0,0)-- (-1,-1);
\node at (-1.2,-.5) {$y$};
\node at (-1.2,.5) {$x$};
\end{scope}
\begin{scope}[shift={(2,0)}]
\draw (-1,-1) -- (-1,1) -- (0,2)-- (0,0)-- (-1,-1);
\node at (-1.2,0) {$x$};
\node at (-.4,1) {$\rho(y)$};
\end{scope}
\end{tikzpicture}
\end{center}

In the rest, we draw the last second coordinate as the the vertical one, and consider it as the multiplication direction.

If $x_-$ is a null vector, then $Tr(x)=0$ and for any $a,a',y \in V_{S^0}$,
$$Z((axa')_-\cup \rho(y_-))=Z(x \cup \rho(a_-)\rho(y_-)\rho(a'_-))=0.$$
So $(axa')_-$ is a null vector. So the multiplication and the trace are well-defined on $\tilde{V}_{S^0,-}$.

\begin{center}
\begin{tikzpicture}
\draw (-1,-1) -- (-1,1) -- (0,2)-- (0,0)-- (-1,-1);
\node at (-1.2,-.5) {$a$};
\node at (-1.2,.5) {$a'$};
\node at (-1.2,0) {$x$};
\node at (-.4,1) {$\rho(y)$};
\node at (0.5,.5) {$=$};
\begin{scope}[shift={(2,0)}]
\draw (-1,-1) -- (-1,1) -- (0,2)-- (0,0)-- (-1,-1);
\node at (-.4,.5) {$\rho(a)$};
\node at (-.4,1.5) {$\rho(a')$};
\node at (-1.2,0) {$x$};
\node at (-.4,1) {$\rho(y)$};
\end{scope}
\end{tikzpicture}
\end{center}

Now let us study the algebra for a general $n$.
For a stratified manifold $\mathcal{D}^{n-1}$ with support $D^{n-1}$, we construct
$$\mathcal{S}=\partial(\mathcal{D}^{n-1} \times D^1)=\partial\mathcal{D}^{n-1} \times D^{1} \cup \mathcal{D}^{n-1} \times S^{0}.$$
Then $|\mathcal{S}|=S^{n-1}$. The vector space $V_{\mathcal{S},-}$ forms an algebra in the following sense.

\begin{definition}
For two vectors $x$, $y$ in $V_{\mathcal{S}}$, we label $x$ at $D^{n-1}\times [-1,0]$ and $y$ into $D^{n-1}\times [0,1]$ after scaling their last coordinates by $1/2$. We define the result as the their multiplication, denoted by $xy$. Note that the boundary of $xy$ is $\mathcal{S}$, so $xy$ is a vector in $V_{\mathcal{S}}$.    
\end{definition}

\begin{definition}
We denote $\rho$ to be the $180^{\circ}$ rotation of the last second coordinate.
Then $$\rho(xy)=\rho(y)\rho(x), ~\forall~x,y \in \tilde{V}_{\mathcal{S},\mp}.$$
\end{definition}

\begin{definition}
For any $x\in V_{\mathcal{S}}$ above, we define its trace as
$$Tr(x):=Z(x_-\cup \rho((\mathcal{D}^{n-1}\times D^1)_+)).$$
\end{definition}

\begin{proposition}\label{Prop: Trace}
Then for any $x,y \in V_{\mathcal{S}}$, we have
$$Tr(xy)=Z(x_-\cup \rho(y_-))=Tr(yx).$$
\end{proposition}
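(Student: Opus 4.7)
The plan is to derive both equalities from the homeomorphic invariance (HI) of $Z$ on $S^n$, by constructing orientation-preserving PL homeomorphisms that transport one labelled configuration to the other. In both cases the topological moves are elementary ambient isotopies of $S^n$, and the only content is matching the rotation induced by the homeomorphism with the label-level action of $\rho$.

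For the first equality $Tr(xy)=Z(x_-\cup\rho(y_-))$, I would begin by unpacking $Tr(xy)$ from the definition: $(xy)_-$ fills $D^n_-$ with $x$ sitting below $y$ along the penultimate coordinate (after the $1/2$-rescaling used in the definition of the product), while $\rho((\mathcal{D}^{n-1}\times D^1)_+)$ fills $D^n_+$ with the trivial stratification. I would then produce a PL isotopy $\phi_t$ of $S^n$ (fixing a neighbourhood of $x$) that slides the equator $S^{n-1}\times\{0\}$ past the $y$-region along the penultimate coordinate so that the sub-region containing $y$ migrates into $D^n_+$, while simultaneously executing the half-turn $\rho$ on that migrated sub-region in the plane of the last two coordinates. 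After $\phi_1$ acts, the equator cleanly separates $x$ in $D^n_-$ from the image of $y$ in $D^n_+$; comparing with the definition of $\rho$ on labels shows this image is precisely $\rho(y_-)$. Applying HI of $Z$ to $\phi_1$ then gives the first equality.

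For the second equality, apply what has just been established with the product $yx$ in place of $xy$ to deduce $Tr(yx)=Z(y_-\cup\rho(x_-))$. It therefore suffices to show $Z(x_-\cup\rho(y_-))=Z(y_-\cup\rho(x_-))$, and this follows from HI applied to the PL homeomorphism $R$ of $S^n$ implementing the $180^\circ$ rotation $\rho$ globally. Since $R$ exchanges $D^n_-$ and $D^n_+$ and acts on labels as $\rho$, a direct inspection gives $R(x_-\cup\rho(y_-))=\rho(x_-)\cup\rho^2(y_-)=y_-\cup\rho(x_-)$, whence HI closes the argument.

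The main obstacle is the bookkeeping of orientations and of the $\rho$ action under the PL homeomorphisms in the first step; the topological content (sliding the equator across a sub-region, and the half-turn in a coordinate plane) is routine ambient PL isotopy and fits within the transversality framework of \S\ref{Sec: Labelled Regular Stratified Manifolds}. Once it is verified on the basic slab $\mathcal{D}^{n-1}\times D^1$ that the composite homeomorphism produces exactly $\rho(y_-)$ rather than some other orientation-preserving rotate, the general case follows because both $x$ and $y$ are supported in rescaled copies of this slab and $\phi_t$ can be chosen to respect the product structure.
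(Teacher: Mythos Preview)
Your proposal is correct and follows essentially the same approach as the paper: both arguments rely on the homeomorphic invariance of $Z$ together with the PL isotopy that rotates along the $S^1$ formed by the last two coordinates, sliding $y$ from the $D^n_-$ side to the $D^n_+$ side where it appears as $\rho(y_-)$. The paper presents this as a single three-panel picture (``PL isotopy around $S^1$, similar to the case $n=1$'') and obtains $Tr(yx)$ by continuing the slide all the way around, whereas you instead derive $Tr(yx)=Z(y_-\cup\rho(x_-))$ by reapplying the first step and then invoke the global $180^\circ$ rotation $R$; this is a cosmetic difference, not a genuinely different route.
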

\begin{proof}

It follows from \PL isotopy around $S^1$, similar to the case $n=1$.
\begin{center}
\begin{tikzpicture}
\node at (2.5,0) {$=$};
\node at (7,0) {$=$};
\draw (-1,-1) -- (-1,1) -- (1,1) -- (1,-1) -- (-1,-1);
\draw[dashed] (-1,-1) --++(.8,.8);
\draw (1,-1) --++(.8,.8);
\draw (-1,1) --++(.8,.8);
\draw (1,1) --++(.8,.8);
\draw[dashed] (-.2,-.2) --++(0,2);
\draw[dashed] (-.2,-.2) --++(2,0);
\draw (1.8,1.8) --++(0,-2);
\draw (1.8,1.8) --++(-2,0);
\node at (0,.5) {$y$};
\node at (0,-.5) {$x$};
\node at (-.8,1.5) {$[-1,1]$};
\node at (-1.3,0) {$D^1$};
\node at (0,-1.3) {$\mathcal{D}^{n-1}$};
\begin{scope}[shift={(4.5,0)}]
\draw (-1,-1) -- (-1,1) -- (1,1) -- (1,-1) -- (-1,-1);
\draw[dashed] (-1,-1) --++(.8,.8);
\draw (1,-1) --++(.8,.8);
\draw (-1,1) --++(.8,.8);
\draw (1,1) --++(.8,.8);
\draw[dashed] (-.2,-.2) --++(0,2);
\draw[dashed] (-.2,-.2) --++(2,0);
\draw (1.8,1.8) --++(0,-2);
\draw (1.8,1.8) --++(-2,0);
\node at (.8,.8) {$\rho(y)$};
\node at (0,0) {$x$};
\node at (-.8,1.5) {$[-1,1]$};
\node at (-1.3,0) {$D^1$};
\node at (0,-1.3) {$\mathcal{D}^{n-1}$};
\end{scope}
\begin{scope}[shift={(9,0)}]
\draw (-1,-1) -- (-1,1) -- (1,1) -- (1,-1) -- (-1,-1);
\draw[dashed] (-1,-1) --++(.8,.8);
\draw (1,-1) --++(.8,.8);
\draw (-1,1) --++(.8,.8);
\draw (1,1) --++(.8,.8);
\draw[dashed] (-.2,-.2) --++(0,2);
\draw[dashed] (-.2,-.2) --++(2,0);
\draw (1.8,1.8) --++(0,-2);
\draw (1.8,1.8) --++(-2,0);
\node at (0,.5) {$x$};
\node at (0,-.5) {$y$};
\node at (-.8,1.5) {$[-1,1]$};
\node at (-1.3,0) {$D^1$};
\node at (0,-1.3) {$\mathcal{D}^{n-1}$};
\end{scope}
\end{tikzpicture}
\end{center}
\end{proof}

We keep the 3D pictorial convention for the $n+1$ coordinates of $D^{n+1}$.
The first $n-1$ coordinates are horizontal.
The last second coordinate is the vertical multiplication.

\begin{proposition}\label{Prop: D1 algebra}
For any $|\mathcal{D}^{n-1}|=D^{n-1}$, the above multiplication and trace are well-defined on $\tilde{V}_{\mathcal{S},-}$, $\mathcal{S}=\partial \mathcal{D}^{n-1}\times D^1$.
Moreover, $\tilde{V}_{\mathcal{S},-}$ forms an associative algebra, denoted by $A(\mathcal{D}^{n-1}\times D^1)$. Its identity is $\mathcal{D}^{n-1}\times D^1$.
\end{proposition}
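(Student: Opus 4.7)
The plan is to follow the pattern already laid out in the author's $n=1$ discussion, promoting each argument to the general $n$ setting via the local-shape structure near $\mathcal{D}^{n-1}\times D^1$. The three claims to verify are (a) the multiplication descends to the quotient $\tilde V_{\mathcal{S},-}$, (b) it is associative and unital with unit $\mathcal{D}^{n-1}\times D^1$, and (c) the trace descends and satisfies $Tr(xy)=Tr(yx)$.

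First, for well-definedness of multiplication, the approach is to test against $V_{\mathcal{S},+}$. Fix $x\in V_{\mathcal{S}}$ and suppose $x_-\in V_{\mathcal{S},-}$ is a null vector. For any $y\in V_{\mathcal{S}}$ and any $w\in V_{\mathcal{S},+}$, I would exhibit an orientation-preserving PL self-homeomorphism of $D^{n+1}$ fixing $S^n$ setwise but rescaling the $D^1$ direction, so that the gluing $(xy)_-\cup w$ can be PL-identified with $x_-\cup w'$, where $w'$ is the result of sliding $y$ into the test region on the $D^n_+$ side. Since $w'\in V_{\mathcal{S},+}$, the RHS vanishes by hypothesis. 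The analogous rescaling in the upper half handles the case where $y_-$ is null. A parallel argument shows the trace vanishes on null vectors, since $\rho((\mathcal{D}^{n-1}\times D^1)_+)\in V_{\mathcal{S},+}$ is itself a legitimate test vector. All of these rescaling homeomorphisms fix the boundary sphere pointwise, so Prop.~\ref{Prop:phi-inv} applies and the images are equal in $\tilde V$.

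Second, for associativity, I would consider both $(xy)z$ and $x(yz)$ as labelled stratified manifolds on $D^n$ where $x,y,z$ occupy three sub-intervals of $D^{n-1}\times[-1,1]$ (under different rescalings of the last-second coordinate). The two sub-interval partitions differ by a PL self-homeomorphism of $[-1,1]$ fixing $\{-1,1\}$, hence by a PL self-homeomorphism of $D^n$ fixing $\partial D^n$; by Prop.~\ref{Prop:phi-inv} their images in $\tilde V_{\mathcal{S},-}$ agree. The identical argument, applied to the partitions that either insert or omit an unlabelled $\mathcal{D}^{n-1}\times D^1$ slab, shows that $\mathcal{D}^{n-1}\times D^1$ is a two-sided unit: inserting it merely rescales the remaining factor, which is undone by a PL self-homeomorphism of $D^n$ that is the identity on $\partial D^n$.

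Third, the trace identity $Tr(xy)=Z(x_-\cup\rho(y_-))=Tr(yx)$ repeats the $n=1$ pictorial argument in the ambient $D^n\subset S^n$: the sphere $(xy)_-\cup\rho((\mathcal{D}^{n-1}\times D^1)_+)$ and $x_-\cup\rho(y_-)$ are related by a PL homeomorphism of $S^n$ that moves the equator, and by homeomorphic invariance (Def.~\ref{Def: Z HI}) their $Z$-values coincide; rotating further by $180^\circ$ along the last $S^1$ swaps $x$ and $y$ without changing $Z$, giving $Tr(xy)=Tr(yx)$.

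The main (modest) obstacle is carefully producing the ambient PL homeomorphisms realizing the needed rescalings in a way that respects the stratification and the labels at position points. The key technical input is that rescaling the last-second coordinate on $D^{n-1}\times[-1,1]$ is a PL self-map that is the identity on $\partial D^n$ after composition with a compensating collar map, so Prop.~\ref{Prop:phi-inv} and Cor.~\ref{Cor: homeomorphism on Dk} identify the results in the quotient; everything else is a direct translation of the $n=1$ picture given in the text.
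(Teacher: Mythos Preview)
Your proposal is correct and follows essentially the same approach as the paper's proof, which simply says ``similar to the case $n=1$'' and defers to the PL isotopy invariance of $Z$ for well-definedness and to the rescaling homeomorphism of the vertical $D^1$-coordinate for associativity and unitality. One minor slip: the ambient homeomorphism in your well-definedness step should be on $S^n$ (invoking Def.~\ref{Def: Z HI} directly) rather than on $D^{n+1}$, and the trace identity $Tr(xy)=Tr(yx)$ is already recorded separately as Prop.~\ref{Prop: Trace}, but neither point affects your argument.
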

\begin{proof}
Similar to the case $n=1$, if $x$ is a null vector in $\tilde{V}_{\mathcal{S},-}$, then $Tr(x_-)=0$ and $(axb)_-$ is a null vector, by the \PL isotopy invariance of $Z$. So the trace and the multiplication are well-defined on $\tilde{V}_{\mathcal{S},-}$.
The associativity is similar to the case $n=1$.
\end{proof}

\begin{definition}
We call the non-degenerate $S^n$ algebra $(\tilde{V},Z)$ or the $S^n$ functional $Z$ to be semisimple, if $\tilde{V}_{\mathcal{S},-}$ is semisimple for any $|\mathcal{D}^{n-1}|=D^{n-1}$.
\end{definition}

\begin{proposition}
For $2 \leq k \leq n$, suppose $\mathcal{D}^{n-k}$ is a stratified manifold with support $D^{n-k}$. We construct
$$\mathcal{S}=\partial(\mathcal{D}^{n-k}\times D^{k})=\partial\mathcal{D}^{n-k} \times D^{k}   \cup \mathcal{D}^{n-k} \times S^{k-1}.$$
Then $\tilde{V}_{\mathcal{S},-}$ is a commutative associate algebra, denoted by $A(\mathcal{D}^{n-k}\times D^k)$, with unit $\mathcal{D}^{n-k}\times D^k$.
\end{proposition}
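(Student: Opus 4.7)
The plan is to extend the construction in Prop.~\ref{Prop: D1 algebra} to $k\geq 2$, where the essential new feature is commutativity. Write $D^k=D^{k-1}\times[-1,1]$ and let $V_{\mathcal{S}}$ be the space of labelled stratified manifolds with support $\mathcal{D}^{n-k}\times D^k$ and boundary $\mathcal{S}$. For $x,y\in V_{\mathcal{S}}$, rescale two copies along the last coordinate into $\mathcal{D}^{n-k}\times D^{k-1}\times[-1,0]$ and $\mathcal{D}^{n-k}\times D^{k-1}\times[0,1]$ and glue along their common face; call this product $x\cdot_1 y$. Define the trace by closing with $\rho\bigl((\mathcal{D}^{n-k}\times D^k)_+\bigr)$ and applying $Z$. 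The arguments of Prop.~\ref{Prop: D1 algebra} transfer verbatim to show that $\cdot_1$ descends to $\tilde V_{\mathcal{S},-}$ (null vectors form a two-sided ideal because, for any $a,b$ and any test vector $w$, the pairing $Z((axb)_-\cup \rho(w_-))$ reduces by PL isotopy to a pairing already known to vanish), and that $\cdot_1$ is associative with unit the unlabelled manifold $\mathcal{D}^{n-k}\times D^k$.

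For commutativity, which is the crux of the statement and distinguishes $k\geq 2$ from $k=1$, I would invoke the Eckmann--Hilton argument. Writing $D^k=D^{k-2}\times[-1,1]_u\times[-1,1]_v$, define a second associative multiplication $\cdot_2$ along the $u$-direction by the same procedure; it has the same unit $\mathcal{D}^{n-k}\times D^k$ as $\cdot_1$. To verify the interchange law
\[
(a\cdot_1 b)\cdot_2 (c\cdot_1 d) \;=\; (a\cdot_2 c)\cdot_1 (b\cdot_2 d),
\]
observe that both sides coincide, after a PL rescaling of $[-1,1]_u\times[-1,1]_v$, with the single labelled configuration placing $a,b,c,d$ in the four quadrants of $\mathcal{D}^{n-k}\times D^{k-2}\times[-1,1]_u\times[-1,1]_v$, namely $a$ in $(u<0,v<0)$, $b$ in $(u<0,v>0)$, $c$ in $(u>0,v<0)$, $d$ in $(u>0,v>0)$. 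By Cor.~\ref{Cor: homeomorphism on Dk} (equivalently by Prop.~\ref{Prop:phi-inv}) the two nested rescalings agree in $\tilde V_{\mathcal{S},-}$. The classical Eckmann--Hilton argument then gives $\cdot_1=\cdot_2$ and $a\cdot_1 b=b\cdot_1 a$ for all $a,b$, establishing commutativity.

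The main obstacle is the geometric bookkeeping: one must check carefully that the two ways of nesting $\cdot_1$ and $\cdot_2$ really produce, after the prescribed rescalings, the same underlying labelled stratified manifold up to a PL homeomorphism of $D^k$ fixing the boundary $S^{k-1}$. Once this is verified, descent to $\tilde V_{\mathcal{S},-}$ is automatic from the earlier isotopy results, and the purely algebraic consequences (equality of the two products, commutativity) are formal. The finite-dimensionality and semisimplicity of these algebras are not needed for the present statement and can be imposed afterwards as in the subsequent development.
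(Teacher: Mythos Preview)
Your proposal is correct. The reduction of associativity and the unit to Prop.~\ref{Prop: D1 algebra} via $D^k=D^{k-1}\times D^1$ is exactly what the paper does. For commutativity, the paper takes the more direct route: since $k\geq 2$, one can swap two small $k$-discs inside $D^k$ by a PL isotopy fixing $\partial D^k$ (the little-discs argument, Fig.~\ref{fig: commutativity}), and Prop.~\ref{Prop:phi-inv} then gives $xy=yx$ in $\tilde V_{\mathcal S,-}$ in one step. Your Eckmann--Hilton argument is a valid algebraic repackaging of precisely this geometric fact---the interchange law you verify \emph{is} the four-quadrant picture underlying the disc swap---so the two proofs have the same content, but the paper's version avoids introducing the second product and the formal Eckmann--Hilton machinery.
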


By Corollary~\ref{Cor: homeomorphism on Dk}, it does not matter which direction in the second component $D^k$ is for the multiplication, as its boundary $\partial D^k$ has no stratification.  

\begin{proof}
We consider $\mathcal{D}^{n-k}\times D^{k-1}$ as $\mathcal{D}^{n-1}$, then by Prop.~\ref{Prop: D1 algebra}, $\tilde{V}_{\mathcal{S},-}$ is an associate algebra.
The commutativity for $k\geq 2$ follows from switching two smaller $k$-discs in $D^k$ by \PL isotopy in Fig.~\ref{fig: commutativity}.
\end{proof}

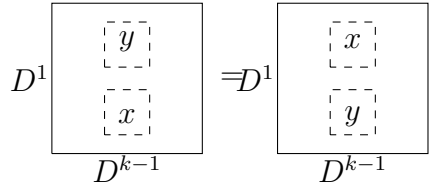
\begin{figure}
    \centering
    \begin{tikzpicture}
\draw (-1,-1) -- (-1,1) -- (1,1) -- (1,-1) -- (-1,-1);
\node at (0,-.5) {$x$};
\node at (0,.5) {$y$};
\node at (1.4,0) {$=$};
\node at (-1.3,0) {$D^1$};
\node at (0,-1.2) {$D^{k-1}$};
\begin{scope}[scale=.3,shift={(0,1.5)}]
\draw[dashed] (-1,-1) -- (-1,1) -- (1,1) -- (1,-1) -- (-1,-1);
\end{scope}
\begin{scope}[scale=.3,shift={(0,-1.5)}]
\draw[dashed] (-1,-1) -- (-1,1) -- (1,1) -- (1,-1) -- (-1,-1);
\end{scope}
\begin{scope}[shift={(3,0)}]
\draw (-1,-1) -- (-1,1) -- (1,1) -- (1,-1) -- (-1,-1);
\node at (0,-.5) {$y$};
\node at (0,.5) {$x$};
\node at (-1.3,0) {$D^1$};
\node at (0,-1.2) {$D^{k-1}$};
\begin{scope}[scale=.3,shift={(0,1.5)}]
\draw[dashed] (-1,-1) -- (-1,1) -- (1,1) -- (1,-1) -- (-1,-1);
\end{scope}
\begin{scope}[scale=.3,shift={(0,-1.5)}]
\draw[dashed] (-1,-1) -- (-1,1) -- (1,1) -- (1,-1) -- (-1,-1);
\end{scope}
\end{scope}
\end{tikzpicture}
    \caption{Fig: Commutativity}
    \label{fig: commutativity}
\end{figure}
This commutative multiplication is usually considered as a commutative Frobenius algebra in TQFT.

\begin{remark}
In general, if we fix the first component $\mathcal{D}^{n-k}$ and change the second component $D^k$ to all possible $\mathcal{D}^{k}$,
then the boundary is
$$\mathcal{S}=\partial(\mathcal{D}^{n-k}\times \mathcal{D}^{k})=\partial\mathcal{D}^{n-k} \times \mathcal{D}^{k}   \cup \mathcal{D}^{n-k} \times \partial \mathcal{D}^{k}.$$
These vector spaces form a $D^k$ algebra.
\end{remark}

\begin{definition}
For $0 \leq k \leq n-1$, $|\mathcal{D}^{k}|=D^{k}$, and
we call a minimal idempotent $\alpha$ of $A(\mathcal{D}^{k}\times D^{n-k})$
an indecomposible $k$-morphism of type $\mathcal{D}^k$.
\end{definition}

\begin{definition}
Suppose $\phi$ is an homeomorphism on $D^{k}$.
Then for any $k$-morphism of type $\mathcal{D}^k$, $\phi(\alpha)$ is a $k$-morphism of type $\phi(\mathcal{D}^k)$.
We call $\alpha$ and $\phi(\alpha)$ homeomorphic equivalent.
If $\phi$ fixes the boundary, then we call $\alpha$ and $\phi_{\alpha}$ interior homeomorphic equivalent.
\end{definition}

In principle, we are only interested in $k$-morphisms up to (interior) homeomorphic equivalence, as the $S^n$ functional is homeomorphic invariant.

\begin{definition}
For a $k$-morphism $\alpha$ of type $\mathcal{D}^k$, we define its quantum dimension as $Tr(\alpha)$.
\end{definition}

\begin{proposition}\label{Prop: non-zero trace}
If $A(\mathcal{D}^k\times D^{n-k})$ is semisimple, then for any indecomposible $k$-morphism $\alpha$ of type $\mathcal{D}^k$, we have $Tr(\alpha)\neq 0.$
\end{proposition}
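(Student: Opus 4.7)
The plan is to combine the non-degeneracy of the trace bilinear form on $A := A(\mathcal{D}^k \times D^{n-k})$ with the structure of a (split) semisimple algebra and argue by contradiction. By Prop.~\ref{Prop: Trace}, the form
\[
\langle x, y \rangle := Tr(xy) = Z(x_- \cup \rho(y_-))
\]
is symmetric on $A$. First I would observe that this form is \emph{non-degenerate}. Indeed, $\rho$ is an isomorphism $\tilde V_{\mathcal{S},-} \to \tilde V_{\mathcal{S},+}$, and by the very construction of the quotient $\tilde V_{\mathcal{S},\pm} = V_{\mathcal{S},\pm}/K_{\mathcal{S},\pm}$, the $Z$-pairing between $\tilde V_{\mathcal{S},-}$ and $\tilde V_{\mathcal{S},+}$ has trivial left and right radicals. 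Pulling back along $\rho$ gives non-degeneracy of $\langle \cdot, \cdot \rangle$ on $A$.

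Next, I would suppose for contradiction that $Tr(\alpha) = 0$ and show that $\alpha$ would have to be zero. Since the paper adopts the convention that a finite-dimensional semisimple $\mathbb{K}$-algebra is a direct sum $\bigoplus_i M_{n_i}(\mathbb{K})$, the minimal idempotent $\alpha$ lies in one matrix block as a rank-one idempotent and therefore satisfies
\[
\alpha A \alpha = \mathbb{K}\,\alpha.
\]
Hence for every $y \in A$ there is a scalar $c_y \in \mathbb{K}$ with $\alpha y \alpha = c_y\, \alpha$. Using cyclicity of $Tr$ and $\alpha^2 = \alpha$,
\[
Tr(\alpha y) = Tr(\alpha^2 y) = Tr(\alpha y \alpha) = c_y\, Tr(\alpha) = 0.
\]
Thus $\langle \alpha, y \rangle = 0$ for every $y \in A$, and non-degeneracy of the trace form forces $\alpha = 0$, contradicting the fact that $\alpha$ is a (minimal, hence nonzero) idempotent. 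Therefore $Tr(\alpha) \neq 0$.

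The conceptual content is concentrated in the non-degeneracy step, but this is not really an obstacle: it is precisely the property that was built into the passage from $V$ to $\tilde V$. Once non-degeneracy is in hand the remainder is a two-line standard semisimple-algebra computation.
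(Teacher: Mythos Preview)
Your proof is correct and follows essentially the same approach as the paper: argue by contradiction, use semisimplicity to obtain $\alpha A \alpha = \mathbb{K}\alpha$, deduce $Tr(\alpha y)=0$ for all $y$ via cyclicity, and then invoke non-degeneracy of the trace pairing (built into the quotient $\tilde V$) to reach a contradiction. The paper's version is terser---it simply cites Prop.~\ref{Prop: Trace} to say ``$\alpha$ is a null vector, which is a contradiction''---while you spell out the non-degeneracy step explicitly, but the logical skeleton is identical.
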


\begin{proof}
Assume that $Tr(\alpha)=0$. By the semisimplicity,
for any $x \in A(\mathcal{D}^k\times D^{n-k})$, $\alpha x\alpha=c\alpha$ for some $c\in \mathbb{K}$. So
$Tr(x\alpha)=Tr(\alpha x\alpha)=Tr(c\alpha)=0$,
By Prop.~\ref{Prop: Trace}, $\alpha$ is a null vector, which is a contradiction. So $Tr(\alpha)\neq 0.$
\end{proof}

\begin{definition}
Suppose $\theta$ is a reflection. Then $\rho\theta=\theta\rho$ on $V_{\mathcal{S}_\pm}$.
\end{definition}

\begin{proof}
Under the action of $\rho\theta$, the chart $(U_p, \phi_p, \mathcal{S}_p)$ and vector $v_p$ of a labelled stratified manifold becomes $(\rho\theta(U_p), \phi_p\theta\rho, \rho\theta\mathcal{S}_p)$ and $\theta(v_p)$.
Under the action of $\theta\rho$, they become $(\theta\rho(U_p), \phi_p\rho\theta, \theta\rho\mathcal{S}_p)$ and $\theta(v_p)$.
The identity $\rho\theta=\theta\rho$ is a basic geometric fact that the vertical reflection is the composition of the $180^{\circ}$ rotation and the horizontal reflection.
\end{proof}

\begin{definition}
Suppose $\theta$ is a reflection. For any $x \in \tilde{V}_{\mathcal{S}_-}$, we define its adjoint as
$$x^*:=\rho\theta(x).$$ 
\end{definition}

Then $Tr(x^*)=Tr(x)^*$. So the adjoint is well-defined from $\tilde{V}_{\mathcal{S},-}$ to $\tilde{V}_{\mathcal{S}^*,-}$, where $\mathcal{S}^*$ is the vertical reflection of $\mathcal{S}$.
Then $A(\mathcal{D}^{n-k}\times D^k)$ is a *-algebra with the adjoint *.

\begin{proposition}
If further $Z$ is reflection positive and $A(\mathcal{D}^{n-k}\times D^k)$ is a finite dimensional, then it is a $C^*$ algebra.
\end{proposition}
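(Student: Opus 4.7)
The plan is to prove the claim by exhibiting $A := A(\mathcal{D}^{n-k}\times D^k)$ as a norm-closed $*$-subalgebra of $B(H)$ for a finite-dimensional Hilbert space $H$, via a GNS-type construction from the trace. Since every norm-closed $*$-subalgebra of $B(H)$ is a $C^*$-algebra, this will suffice.

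First I would introduce the sesquilinear form on $A$ defined by $\langle x,y\rangle := Tr(y^*x)$. Using Prop.~\ref{Prop: Trace} to write $Tr(x^*x) = Z((x^*)_- \cup \rho(x_-))$ and the identity $x^* = \rho\theta(x)$, I would apply a homeomorphism of $S^n$ (valid by (HI)) to identify this closed configuration with $Z(x_- \cup \theta(x_-))$; reflection positivity (Def.~\ref{Def: Z RP}) then gives $\langle x,x\rangle \geq 0$. Conjugate symmetry $\langle x,y\rangle = \langle y,x\rangle^*$ follows from Hermiticity of $Z$ (Def.~\ref{Def: Z Hermitian}). For non-degeneracy, if $\langle x,x\rangle = 0$, Cauchy--Schwarz forces $Tr(y^*x) = 0$ for every $y\in A$; equivalently $Z(x_-\cup w) = 0$ for all $w$ of the form $\rho((y^*)_-)$. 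Since such $w$ span $\tilde V_{\mathcal{S},+}$, the vector $x_-$ is null and hence $x=0$ in $\tilde V_{\mathcal{S},-} = A$. Thus $(A,\langle\cdot,\cdot\rangle)$ is a finite-dimensional Hilbert space, which I denote $H$.

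Next I would use the left regular representation $\pi\colon A \to \mathrm{End}(H)$, $\pi(a)x := ax$. Faithfulness follows from the existence of a unit $1_A = \mathcal{D}^{n-k}\times D^k$ in $A$ (Prop.~\ref{Prop: D1 algebra}): if $\pi(a) = 0$ then $a = \pi(a)(1_A) = 0$. The $*$-property is a direct computation
$$\langle \pi(a) x, y \rangle \;=\; Tr(y^* a x) \;=\; Tr((a^* y)^* x) \;=\; \langle x, \pi(a^*) y \rangle,$$
so $\pi(a)^* = \pi(a^*)$ in $B(H)$. Since $A$ is finite-dimensional, $\pi(A)$ is automatically norm-closed in $B(H)$, hence a $C^*$-algebra, and $\pi$ is an isometric $*$-isomorphism of $A$ onto it.

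The main obstacle will be the positivity identification $Tr(x^*x) = Z(x_-\cup \theta(x_-))$. This is not a purely formal move: one must show that the closed stratified configuration obtained by stacking $x^* = \rho\theta(x)$ below $x$ and then capping with $\rho$ of the unlabelled identity is isotopic (in $S^n$) to the standard configuration pairing $x_-$ on $D^n_-$ with $\theta(x_-)$ on $D^n_+$. This reduces to a topological manipulation combining homeomorphic invariance (HI) of $Z$ with the fact that an orientation-reversing homeomorphism fixing the equator acts as $\theta$ on $\tilde V_{\mathcal{S},\pm}$, which is the proposition established immediately before the statement. Once this identification is available, the remaining steps are the standard finite-dimensional GNS argument.
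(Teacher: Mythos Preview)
Your proposal is correct and follows essentially the same approach as the paper: the key step is the identification $Tr(xx^*)=Z(x_-\cup\theta(x_-))\geq 0$ via reflection positivity, from which faithfulness of the trace follows by Cauchy--Schwarz. The paper stops there and invokes the standard fact that a finite-dimensional $*$-algebra with a faithful positive trace is a $C^*$-algebra, whereas you spell out the GNS/left-regular representation explicitly; this is a harmless elaboration rather than a different route.
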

\begin{proof}
If further $Z$ is reflection positive, then
$$Tr(xx^*)=Z(x_- \cup \rho(x^*_-))=Z(x_- \cup \theta (x_-))\geq 0.$$
So $Tr(xx^*)=0$ iff $x_-=0$. So $Tr$ is positive definite on $A(\mathcal{D}^{n-k}\times D^k)$.
If further $Z$ is finite dimensional, then $A(\mathcal{D}^{n-k}\times D^k)$ is a finite dimensional $C^*$ algebra. In particular, $A(\mathcal{D}^{n-k}\times D^k)$ is semisimple.    
\end{proof}

\subsection{Bimodules and Morita equivalence}

\begin{definition}
Suppose  $\alpha_\pm$ are $k$-morphisms of type $\mathcal{D}_\pm$ respectively, and they have the same boundary $\mathcal{S}$. Suppose $\mathcal{D}$ has support $D^{k+1}$ with boundary $\mathcal{S}^k=(\mathcal{D}_- \times \{-1\}\cup \mathcal{D}_+ \times \{1\} \cup \mathcal{S}\times D^1)$. Suppose $\beta$ is an idempotent of $A(\mathcal{D}\times D^{n-k-1})$.
We call $\beta$ an $\alpha-\alpha$ bimodule, if $P(\alpha_-) \beta Q(\alpha_+)=\beta$,
where $P(\alpha_-)$ is $\mathcal{D}\times D^{n-k-1}$ labelled by $\alpha_-$ at $\mathcal{D}_-\times [-1,-1+\varepsilon] \times D^{n-k-1}$ and  $Q(\alpha_+)$ is $\mathcal{D}\times D^{n-k-1}$ labelled by $\alpha_+$ at $\mathcal{D}_+\times [1-\varepsilon, 1] \times D^{n-k-1}$.
When $\beta$ is a minimal idempotent, we call it an indecomposible $\alpha-\alpha$ bimodule, and a $(k+1)$ morphism from $\alpha_-$ to $\alpha_+$.
\end{definition}
By isotopy, the three idempotents $P(\alpha_-)$, $\beta$, $Q(\alpha_+)$ commute. 
The equality $P(\alpha_-) \beta Q(\alpha_+)=\beta$ is equivalent to that $\beta$ is a sub idempotent of $P(\alpha_-)$ and $Q(\alpha_+)$.

\begin{definition}
Suppose $\alpha$ is a $k$-morphism of type $\mathcal{D}^k$. We can consider it as an idempotent in  $A(\mathcal{D}^k\times D^1)\times D^{n-k-1})$. It is a $\alpha-\alpha$ bimodule, called the trivial bimodule.
\end{definition}

\begin{definition}
Applying the $180^\circ$ rotation to the $k+1$ and $k+2$ coordinates, we obtain a $\alpha_+-\alpha_-$ bimodule, called the dual bimodule of $\beta$, denoted by $\overline{\beta}$. 
\end{definition}

\begin{definition}\label{Def: composition of morphisms}
Suppose $\alpha_i$ are $k$-morphisms $i=0,1,2$ and $\beta_i$ are $\alpha_i-\alpha_{i+1}$ bimodules,
We label $\beta_0$ at $D^{k+1}_-\times D^{n-k+1}$ and  $\beta_1$ at $D^{k+1}_+\times D^{n-k+1}$, and the result is a $\alpha_0-\alpha_2$ bimodule, called the fusion of $\beta_0$ and $\beta_1$ at $\alpha_1$, denoted by $\beta_0\otimes_{\alpha_1} \beta_1$.  
\end{definition}
The fusion could be considered as a composition of $(k+1)$-morphisms at the boundary $\alpha_1$.
The fusion is well defined modulo interior homeomorphic equivalence.

\begin{definition}
Two indecomposible $k$-morphisms are called Morita equivalent in the $D^n$ algebra, if they have a bimodule in the $D^n$ algebra.    
\end{definition} 

\begin{proposition}\label{Prop: Morita Equivalence}
The Morita equivalence is an equivalence relation.  
\end{proposition}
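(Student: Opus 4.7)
The plan is to verify the three defining properties of an equivalence relation in turn, relying on the structural data already in place: the trivial bimodule for reflexivity, the dualization $\beta \mapsto \overline{\beta}$ for symmetry, and the fusion operation $\otimes_{\alpha_1}$ for transitivity.

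For reflexivity, I would take an indecomposable $k$-morphism $\alpha$ of type $\mathcal{D}^k$ and exhibit the trivial bimodule constructed immediately above. Concretely, viewing $\alpha$ as a minimal idempotent of $A(\mathcal{D}^k\times D^{n-k})$, one realizes it inside $A(\mathcal{D}^k\times D^1\times D^{n-k-1})$ by the product with the identity on an extra $D^1$ factor, and the two conditions $P(\alpha)\beta=\beta$ and $\beta Q(\alpha)=\beta$ follow from the idempotent identity $\alpha\alpha=\alpha$ under the action of the obvious annular tangle; indecomposability is inherited because $\alpha$ is minimal. For symmetry, given an indecomposable $\alpha_- - \alpha_+$ bimodule $\beta$, I would apply the $180^\circ$ rotation $\rho$ in the last two coordinates and check directly from the defining equation $P(\alpha_-)\beta Q(\alpha_+)=\beta$ that $\overline{\beta}:=\rho(\beta)$ satisfies $P(\alpha_+)\overline{\beta}\,Q(\alpha_-)=\overline{\beta}$. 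Minimality of $\overline{\beta}$ follows because $\rho$ implements an algebra isomorphism of the relevant $A$-algebras, so minimal idempotents go to minimal idempotents.

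Transitivity is the main obstacle. Suppose $\beta_0$ is an indecomposable $\alpha_0-\alpha_1$ bimodule and $\beta_1$ is an indecomposable $\alpha_1-\alpha_2$ bimodule. I form the fusion $\beta_0\otimes_{\alpha_1}\beta_1$ as in Definition~\ref{Def: composition of morphisms}. This is an idempotent in the appropriate algebra $A(\mathcal{D}\times D^{n-k-1})$ built from the stacked bimodule region, and the boundary decorations by $\alpha_0$ and $\alpha_2$ are intact by construction, so $P(\alpha_0)(\beta_0\otimes_{\alpha_1}\beta_1)Q(\alpha_2)=\beta_0\otimes_{\alpha_1}\beta_1$. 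The delicate point is that this fused idempotent need not itself be minimal, nor is it \emph{a priori} nonzero. For nonvanishing I would use the non-degeneracy of the trace $Z$ established in Proposition~\ref{Prop: non-zero trace}: semisimplicity of $A(\mathcal{D}^k\times D^{n-k})$ together with $Tr(\alpha_1)\neq 0$ implies that stacking two idempotents both of whose images contain $\alpha_1$ on a common boundary cannot produce a null vector, since otherwise tracing against $\alpha_1$ would contradict minimality. Once nonvanishing is secured, semisimplicity of $A(\mathcal{D}\times D^{n-k-1})$ lets me decompose $\beta_0\otimes_{\alpha_1}\beta_1$ into a sum of minimal idempotents; each summand automatically satisfies $P(\alpha_0)\cdot - \cdot Q(\alpha_2)=-$ because those bounding idempotents commute with every idempotent supported in their common algebra. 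Any such minimal summand is an indecomposable $\alpha_0-\alpha_2$ bimodule, witnessing Morita equivalence.

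The one step that really uses structure beyond formal manipulation is the nonvanishing of the fused idempotent, and that in turn rests on semisimplicity together with Proposition~\ref{Prop: non-zero trace}. Everything else is diagrammatic: isotopy invariance lets me freely commute $P(\alpha)$, $Q(\alpha)$, and the middle idempotent, while the definitions of trivial bimodule, dual bimodule, and fusion carry the other two axioms.
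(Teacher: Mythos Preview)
Your approach is essentially the same as the paper's: the paper's proof is a single sentence pointing back to the three constructions immediately preceding the proposition (trivial bimodule for reflexivity, dual $\overline{\beta}$ for symmetry, fusion $\beta_0\otimes_{\alpha_1}\beta_1$ for transitivity), and you invoke exactly these.

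One simplification: your worry about minimality in the transitivity step is unnecessary. The paper's definition of Morita equivalence only asks for \emph{a} bimodule, and a bimodule is defined as any idempotent $\beta$ with $P(\alpha_-)\beta Q(\alpha_+)=\beta$, not a minimal one. So once the fusion is a nonzero idempotent with the correct boundary behaviour, you are done; there is no need to decompose into minimal summands. Your nonvanishing concern is legitimate (zero is formally an idempotent, so one must exclude it), but note that the paper's own proof does not address this either---it simply asserts in the definition of fusion that the result ``is an $\alpha_0$--$\alpha_2$ bimodule.'' Your sketch for nonvanishing via $Tr(\alpha_1)\neq 0$ is the right instinct but would need sharpening to be a complete argument; as written it is at the same level of rigor as the paper.
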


\begin{proof}
The Morita equivalence has the three properties, Reflexivity, Symmetry and Transitivity as discussed above. So it is an equivalence relation.  
\end{proof}

\begin{definition}
The global dimension of an indecomposible $(n-1)$-morphism is defined to be 1 (assuming semisimplicity).
For an indecomposible $k$-morphism $\alpha$ of type $\mathcal{D}^k$, take $\mathcal{S}^k= (\mathcal{D}^k_-\cup D^k_+)$.
The global dimension of $\alpha$ is defined inductively for $k=n-1,n-2, \cdots,0$ as
\begin{align}\label{Equ: global dimension}
\mu(\alpha)=\sum_{\beta} \frac{Tr(\beta)^2}{\mu(\beta)},    
\end{align}
summing over indecomposible $\alpha-\alpha$ bimodules $\beta$, $\beta$ is a representative in its annular equivalence class.
\end{definition}
To ensure it is well-defined, we need to assume $\mu(\beta)\neq 0$ inductively.

\begin{definition}
For a semisimple $S^n$ functional $Z$, we call it strong semisimple, if the global dimension of any $k$-morphism is nonzero, for any $0\leq k\leq n-1$.
\end{definition}

\begin{remark}
The non-zero global dimension condition is necessary when we construct $n+1$ TQFT later. If the condition only holds partially, then we can only construct a TQFT partially.    
\end{remark}

\begin{remark}
Etingof, Nikshych and Ostrik proved that the global dimension of a fusion category is positive over $\mathbb{C}$ in Theorem 2.3 in \cite{ENO05}.
That means the global dimension of the 0-morphism is non-zero. It will be interesting to see whether the strong semisimple condition can be derived from weaker conditions in higher dimensions.  
\end{remark}

\begin{definition}
We call $\frac{Tr(\beta)^2}{\mu(\beta)}$ the intrinsic dimension of the indecomposible $k$-morphism $\beta$.   
\end{definition}

Neither the quantum dimension $Tr(\beta)$ nor the global dimension $\mu(\beta)$ is invariant under Morita equivalence. 
We will prove that the intrinsic dimension $\frac{Tr(\beta)^2}{\mu(\beta)}$ is invariant under Morita equivalence in Theorem \ref{Thm: invariance of the intrinsic dimension}.

\subsection{Annular Equivalence}

\begin{definition}
Suppose $|\mathcal{S}|=S^{k-1}$, $|\mathcal{D}_i|=D^k$ and $\partial(\mathcal{D}_i)=\mathcal{S}$, for $i=0,1$.
We define $V(\mathcal{D}_{1},\mathcal{D}_{0})$ as a vector space spanned by $L$-labelled stratified manifolds with support $D^k\times (D^{n-k} \setminus \frac{1}{2}D^{n-k})$and boundary
$$\mathcal{S}\times (D^{n-k} \setminus \frac{1}{2} D^{n-k}) \cup  \mathcal{D}_0 \times S^{n-k-1} \cup  \mathcal{D}_1 \times \frac{1}{2} S^{n-k-1}.$$
\end{definition}
Here the scalar $\frac{1}{2}$ in front of $D^{k}$ and $S^{n-k-1}$ means scaling their radius by $\frac{1}{2}$, and
$D^{n-k} \setminus \frac{1}{2}D^{n-k}$ is removing the interior of $\frac{1}{2} D^{n-k}$ from $D^{n-k}$.

\begin{definition}
Let us define the multiplication
$$V(\mathcal{D}_{1},\mathcal{D}_{0}) \times  V_{\partial(\mathcal{D}_1\times D^{n-k})} \to V_{\partial(\mathcal{D}_0\times D^{n-k})}.$$
For a vector $T \in V(\mathcal{D}_{1},\mathcal{D}_{0})$ and a vector $x \in V_{\partial(\mathcal{D}_1\times D^{n-k})}$,
we scale $x$ by $1/2$ and then put it inside $T$, and denoted the result as their multiplication $Tx$. Then $Tx \in V_{\partial(\mathcal{D}_0\times D^{n-k})}$.
\end{definition}

\begin{center}
\begin{tikzpicture}
\draw (-1,-1) -- (-1,1) -- (1,1) -- (1,-1) -- (-1,-1);
\node at (0,0) {$x$};
\node at (0,-.75) {$T$};
\begin{scope}[scale=.5]
\draw[dashed] (-1,-1) -- (-1,1) -- (1,1) -- (1,-1) -- (-1,-1);
\end{scope}
\end{tikzpicture}
\end{center}

If $x$ is a null vector, then $Ax$ is a null vector. So the multiplication is well defined for
$$V(\mathcal{D}_{1},\mathcal{D}_{0}) \times  \tilde{V}_{\partial(\mathcal{D}_1\times D^{n-k}),-} \to \tilde{V}_{\partial(\mathcal{D}_0\times D^{n-k}),-}.$$
If $T$ contains a null vector in its local region, then $Ax=0$.
For any interior \PL homeomorphism $\phi$ on $D^k\times (D^{n-k} \setminus \frac{1}{2}D^{n-k})$), $Tx=\phi(T)x$ in $\tilde{V}_{\partial(\mathcal{D}_0\times D^{n-k}),-}$.

\begin{definition}
We define $A(\mathcal{D}_{1},\mathcal{D}_{0})$ as $V(\mathcal{D}_{1},\mathcal{D}_{0})$ modulo local null vectors.
Then the multiplication is well defined for
$$A(\mathcal{D}_{1},\mathcal{D}_{0}) \times \tilde{V}_{\partial(\mathcal{D}_1\times D^{n-k}),-} \to \tilde{V}_{\partial(\mathcal{D}_0\times D^{n-k}),-}.$$
\end{definition}

\begin{definition}
Suppose $|\mathcal{S}|=S^{k-1}$, $|\mathcal{D}_i|=D^k$ and $\partial(\mathcal{D}_i)=\mathcal{S}$, for $i=0,1,2$.
For $T_1\in A(\mathcal{D}_{1},\mathcal{D}_{0})$ and $T_2 \in A(\mathcal{D}_{2},\mathcal{D}_{1})$, we define their multiplication $T_2T_1$ by composing them along the radius of $D^{n-k}$.
Then the multiplication is associate and
$A(\mathcal{D}_{\bullet},\mathcal{D}_{\bullet})$ forms an algebroid, which we call the $S^{n-k-1}$ annular algebroid with boundary $\mathcal{S}^{k-1}$.
\end{definition}

\begin{definition}
For indecomposible $k$-morphisms $\alpha_i$ of type $\mathcal{D}_{i}$, $i=0,1$ and common boundary $\mathcal{S}^{k-1}$, we call them annular equivalent, if there are $T_1\in A(\mathcal{D}_{1},\mathcal{D}_{0})$ and $T_2 \in A(\mathcal{D}_{0},\mathcal{D}_{1})$, such that
$$T_2\alpha_0=\alpha_1, \quad T_1\alpha_1=\alpha_0.$$
\end{definition}

\begin{proposition}\label{Prop: Morita and annular}
For $0\leq k \leq n-1$,
two indecomposible $k$-morphisms $\alpha_0$ and $\alpha_1$ are annular equivalent iff they are Morita equivalent.   
\end{proposition}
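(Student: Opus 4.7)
The plan is to prove both directions of the equivalence by explicit PL constructions on labelled stratified manifolds, using a canonical \emph{meridian cut-and-glue} that converts between the annular region and the bimodule region.

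\emph{Morita $\Rightarrow$ annular.} Given an indecomposible $\alpha_0$-$\alpha_1$ bimodule $\beta$ in $A(\mathcal{D}\times D^{n-k-1})$ with $P(\alpha_0)\beta Q(\alpha_1)=\beta$, I first select a meridian disk $D^k\times D^{n-k-1}$ inside the annular region $D^k\times (D^{n-k}\setminus \tfrac{1}{2}D^{n-k})$ connecting the inner sphere $D^k\times \tfrac{1}{2}S^{n-k-1}$ to the outer sphere $D^k\times S^{n-k-1}$. Cutting along this meridian yields an $n$-ball PL homeomorphic to the bimodule support $\mathcal{D}\times D^{n-k-1}$, with the $\mathcal{D}_0$- and $\mathcal{D}_1$-faces of the bimodule aligned respectively with the inner and outer spheres of the annular region, and the $\mathcal{S}\times D^1\times D^{n-k-1}$ side boundary of the bimodule matching the $S^{k-1}$-side of the annulus plus the two copies of the cut. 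I label this cut-open region by $\beta$ and reglue along the meridian; the compatibility of $\beta$'s labels on the two sides of the meridian is automatic from $\partial \beta$. This produces $T_2 \in A(\mathcal{D}_0,\mathcal{D}_1)$. The identity $T_2\alpha_0 = \alpha_1$ then follows from the bimodule compatibility $P(\alpha_0)\beta Q(\alpha_1)=\beta$ after filling the inner hole with $\alpha_0$: sliding $P(\alpha_0)$ through the identification matches $\alpha_0$ on the inner sphere, and the outer face becomes $\alpha_1$-labelled by $Q(\alpha_1)$. A symmetric construction using the dual bimodule $\overline\beta$ produces $T_1 \in A(\mathcal{D}_1,\mathcal{D}_0)$ with $T_1\alpha_1=\alpha_0$.

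\emph{Annular $\Rightarrow$ Morita.} Given $T_1, T_2$ satisfying the given equations, I reverse the construction: cut the annular region along the same meridian and interpret the restriction of $T_2$ as an element $\widetilde T_2 \in V_{\partial(\mathcal{D}\times D^{n-k-1})}$. The annular boundary conditions on $T_2$ ensure that the $\mathcal{D}_0$- and $\mathcal{D}_1$-faces of $\widetilde T_2$ agree with the bimodule boundary shape. Working in the algebra $A(\mathcal{D}\times D^{n-k-1})$, consider the two-sided cutoff $P(\alpha_0)\widetilde T_2 Q(\alpha_1)$; by semisimplicity and Prop.~\ref{Prop: D1 algebra} it decomposes into a sum of minimal idempotents, each of which is automatically an $\alpha_0$-$\alpha_1$ bimodule. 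I take $\beta$ to be any such minimal idempotent sub-summand. Non-triviality $\beta\neq 0$ comes from the assumption $T_2\alpha_0 = \alpha_1$ together with the fact that $\alpha_1$ is indecomposible and has nonzero trace by Prop.~\ref{Prop: non-zero trace}, so $P(\alpha_0)\widetilde T_2 Q(\alpha_1) \neq 0$.

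The main obstacle is the topological bookkeeping in both directions: the annular region $D^k\times (D^{n-k}\setminus \tfrac{1}{2}D^{n-k})$ and the bimodule region $\mathcal{D}\times D^{n-k-1}$ have \emph{different} global topology in general (for $n-k\geq 2$ the annular region has nontrivial $S^{n-k-1}$-factor), so the identification only works after the meridian cut. I will use Prop.~\ref{Prop: Hom of PL manifolds} and Theorem~\ref{Thm: transversality} to produce a PL homeomorphism between the cut annular region and the bimodule region that respects all four boundary strata ($\mathcal{D}_0$, $\mathcal{D}_1$, $\mathcal{S}\times D^1\times D^{n-k-1}$, and $\mathcal{D}\times S^{n-k-2}$). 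The remaining algebraic verifications reduce to local PL isotopy in labelled stratified manifolds, which is well-defined on the quotients modulo $K_Z$ by Prop.~\ref{Prop:phi-inv}.
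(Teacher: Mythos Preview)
Your overall strategy—converting between the annular picture and the bimodule picture via a radial slice—is the same as the paper's. However, there is a genuine gap in your Morita $\Rightarrow$ annular direction, and a smaller issue in the other direction.

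\textbf{Morita $\Rightarrow$ annular.} After you wrap $\beta$ into the annular tangle $T_2$ and fill the inner hole with $\alpha_0$, you only know that $T_2\alpha_0$ lies in the one-dimensional corner $\alpha_1 A(\mathcal{D}_1\times D^{n-k})\alpha_1=\mathbb{K}\alpha_1$; your ``sliding $P(\alpha_0)$'' argument gives at best $T_2\alpha_0 = c\,\alpha_1$ for some scalar $c$. You never show $c\neq 0$, and this is exactly where the work lies. The paper handles this by an explicit trace computation: labeling $\beta$ at a single point of the annular region and evaluating yields $T_\beta(\alpha_1)=\frac{Tr(\beta)}{Tr(\alpha_0)}\alpha_0$, and then Prop.~\ref{Prop: non-zero trace} gives $Tr(\beta)\neq 0$. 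Without this (or an equivalent non-vanishing argument), your $T_2$ could be the zero annular action.

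\textbf{Annular $\Rightarrow$ Morita.} Your element $P(\alpha_0)\widetilde T_2 Q(\alpha_1)$ is just an element of the algebra, not an idempotent, so ``decomposes into a sum of minimal idempotents'' is not correct as stated. What you actually need is that $P(\alpha_0)Q(\alpha_1)\neq 0$ (which does follow from $P(\alpha_0)\widetilde T_2 Q(\alpha_1)\neq 0$ since these two idempotents commute), and then take any minimal sub-idempotent of $P(\alpha_0)Q(\alpha_1)$ as $\beta$. The paper sidesteps this by directly taking the normal microbundle of $T$ along a single radial segment $D^k\times[-1,-\tfrac12]\times\varepsilon D^{n-k-1}$ as the bimodule, which is both simpler and avoids your global cut. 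Your meridian cut is also more delicate than you indicate when $n-k\geq 3$: cutting $S^{n-k-1}$ at a point does not produce two faces to reglue, so the ``cut-and-reglue'' picture really only matches the paper's local microbundle construction, not a global unfolding.
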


\begin{proof}
If $T(\alpha_1)=\alpha_0$ for an annular action $T$, then the normal microbundle of $T$ at $D^k\times [-1,-\frac{1}{2}]\times \varepsilon D^{n-k-1}$ is a $\alpha_0-\alpha_1$ bimodule. So they are Morita equivalent.

Conversely, suppose $\beta$ is a $\alpha_0-\alpha_1$ bimodule of type $\mathcal{D}^{k+1}$. 
Suppose $\phi$ is a homeomorphism from $\mathcal{D}^{k} \times D^{n-k}\setminus \frac{1}{2}D^{n-k}$ to the normal microbundle $U$ of $\mathcal{D}^{k} \times S^{n-k-1}$. We label $U$ by $\beta$ at the microbundle of $\mathcal{D}^{k}\times \varepsilon D^1 \times O_{n-k-1}$, and denoted the result by $T$. Take $T_{\beta}=\phi^{-1}(T)$,
then 
\begin{align}\label{Equ: annular action}
T_{\beta}(\alpha_1)&=\frac{Tr(\beta)}{Tr(\alpha_0)} \alpha_0.    
\end{align}
as illustrated in Fig.~\ref{fig: annular action}.
By Prop.~\ref{Prop: non-zero trace}, $Tr(\beta)\neq 0$, so $\phi^{-1}(T)$ is non-zero. So $\alpha_0$ and $\alpha_1$ are annular equivalent.

\begin{figure}
    \centering
    \begin{tikzpicture}
    \begin{scope}[scale=.75]
\draw (-2,-2) rectangle (2,2);
\draw[blue] (-1,-1) rectangle (1,1);
\node at (-1.8,0) {$\alpha_0$};
\node at (-.6,0) {$\alpha_1$};
\node at (-1.2,0) {$\beta$};
    \end{scope}
\end{tikzpicture}
    \caption{The annular action induced by the bimodule.}
    \label{fig: annular action}
\end{figure}
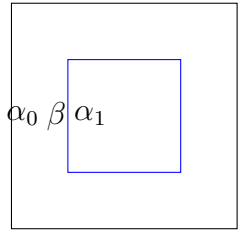

\end{proof}

\begin{corollary}
The annular equivalence is an equivalence relation.    
\end{corollary}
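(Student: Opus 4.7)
The plan is to reduce the statement to \Cref{Prop: Morita Equivalence} via the identification established in \Cref{Prop: Morita and annular}. Since that proposition shows that, for $0\leq k\leq n-1$, two indecomposible $k$-morphisms are annular equivalent if and only if they are Morita equivalent, the three defining axioms of an equivalence relation on the set of indecomposible $k$-morphisms carry over directly from one relation to the other. In particular, reflexivity, symmetry, and transitivity of annular equivalence become formal consequences of the corresponding properties for Morita equivalence, which were verified in \Cref{Prop: Morita Equivalence}.

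Concretely, I would first note that an indecomposible $k$-morphism $\alpha$ is trivially Morita equivalent to itself via the trivial bimodule, hence annular equivalent to itself by \Cref{Prop: Morita and annular}; this gives reflexivity. Symmetry follows by taking the dual bimodule $\overline{\beta}$ of a bimodule $\beta$ realizing one direction of a Morita equivalence, translated back through \Cref{Prop: Morita and annular} to produce the inverse annular action. Transitivity follows from the fusion $\beta_0 \otimes_{\alpha_1} \beta_1$ of bimodules as in \Cref{Def: composition of morphisms}, again transferred to annular equivalence via the same proposition.

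There is essentially no obstacle in this proof; the substantive content has already been placed in \Cref{Prop: Morita and annular}, where the annular action $T_\beta$ was constructed from a bimodule $\beta$ using the normal microbundle and Equ.~\ref{Equ: annular action} showed that $T_\beta(\alpha_1)$ is a nonzero scalar multiple of $\alpha_0$ (with the scalar nonzero by \Cref{Prop: non-zero trace}). The only point to handle carefully in writing the corollary is the normalization: when chaining two annular equivalences one produces nonzero scalar multiples, but since we only require the existence of $T_1, T_2$ with $T_2\alpha_0=\alpha_1$ and $T_1\alpha_1=\alpha_0$, these scalars can always be absorbed into $T_1$ or $T_2$.
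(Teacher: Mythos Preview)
Your proposal is correct and follows exactly the same approach as the paper: the paper's proof simply says ``It follows from Prop.~\ref{Prop: Morita and annular} and \ref{Prop: Morita Equivalence},'' which is precisely the reduction you carry out. Your additional remarks on reflexivity, symmetry, transitivity, and scalar normalization are accurate elaborations of what the paper leaves implicit.
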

\begin{proof}
It follows from Prop.~\ref{Prop: Morita and annular} and \ref{Prop: Morita Equivalence}. 
\end{proof}

Suppose $\phi:\mathcal{D}_1 \to \mathcal{D}_0$ is a \PL homeomorphism fixing the boundary. Then they are isotopic by Alexander's trick, through $\phi_t$, $t\in [0,1]$, in Equ.~\ref{Equ: Alexander}.
We construct $T \in A(\mathcal{D}_1,\mathcal{D}_0)$, such that
$T|_{D^k \times t S^{n-k}}$, $t\in [\frac{1}{2},1]$ is given by $\phi_{2t-1}(\mathcal{D}_1) \times S^{n-k}$. Then
$T(xy)=T(x)T(y)$. So any $k$-morphism $\alpha$ is of type $\mathcal{D}_1$ is annular equivalent to the $k$-morphism $T(\alpha)$ of type $\phi(\mathcal{D}_1)$.
Thus the equivalence class it well defined up to isotopy $\phi:\mathcal{D}_1 \to \mathcal{D}_0$.

Recall from Def.~\ref{Def: condensation},
for a stratified manifold $\mathcal{S}$, $|\mathcal{S}|=S^{k-1}$, $\widetilde{D^k\times S^{n-k}}_{\mathcal{S}^{k-1} \times S^{n-k}}(L)$ is the vector space of $L$-labelled stratified manifold with support $D^k\times S^{n-k}$ and boundary $\mathcal{S}^{k-1} \times S^{n-k}$ modulo null vectors.

\begin{notation}
We define $\mathcal{D}^k(\beta)\times S^{n-k}$ to be the stratified manifold $\mathcal{D}^k \times S^{n-k}$ labelled by a $k$-morphism $\beta$ at $\mathcal{D}^k\times D^{n-k}\times \{-1\}$.
\end{notation}

\begin{proposition}\label{Prop: k-morphism class}
When $A(\mathcal{D}^k\times D^{n-k})$ is semisimple for any $\mathcal{D}^k$ with support $D^k$ and boundary $\mathcal{S}$. 
The vector space $\widetilde{D^k\times S^{n-k}}_{\mathcal{S}^{k-1} \times S^{n-k}}(L)$ is spanned by $\mathcal{D}^k(\beta)\times S^{n-k}$, for indecomposible $k$-morphisms $\beta$. Moreover, 
if $k$-morphism $\beta_0$ and $\beta_1$ are annular equivalent, then 
$\mathcal{D}^k(\beta_0)\times S^{n-k}=\lambda \mathcal{D}^k(\beta_1)\times S^{n-k}$, 
in $\widetilde{D^k\times S^{n-k}}_{\mathcal{S}^{k-1} \times S^{n-k}}(L)$, for some $\lambda\in \mathbb{K}$.
\end{proposition}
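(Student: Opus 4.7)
The plan is to prove the spanning and the annular-equivalence assertions separately, using Theorem~\ref{Thm: transversality} and the semisimplicity hypothesis.

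For the spanning statement, let $\mathcal{M}$ be an $L$-labelled stratified manifold with support $D^k\times S^{n-k}$ and boundary $\mathcal{S}^{k-1}\times S^{n-k}$. I pick a point $p\in S^{n-k}$ and apply Theorem~\ref{Thm: transversality} together with interior PL isotopy (which preserves the class modulo $K_Z$) to arrange $\mathcal{M}$ transversal to $D^k\times\{p\}$. Regularity on the slice yields a local product structure $\mathcal{D}^k\times U$ on a neighbourhood of $D^k\times\{p\}$, where $U\cong D^{n-k}$ and $\mathcal{D}^k$ extends $\mathcal{S}^{k-1}$. A further interior isotopy pushes all remaining non-product stratification and labels of $\mathcal{M}$ into the complementary $n$-disc $D^k\times D^{n-k}\subset D^k\times S^{n-k}$ near the antipode of $p$; the restriction to this disc represents an element $x\in A(\mathcal{D}^k\times D^{n-k})$. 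By semisimplicity, $x=\sum_i c_i\beta_i$ with $\beta_i$ minimal idempotents, giving $\mathcal{M}=\sum_i c_i(\mathcal{D}^k(\beta_i)\times S^{n-k})$ in the quotient.

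For the annular-equivalence statement, take $T_1\in V(\mathcal{D}_1,\mathcal{D}_0)$ with $T_1\beta_1=\beta_0$ in $A(\mathcal{D}^k_0\times D^{n-k})$. Substituting $\beta_0$ by $T_1\beta_1$ inside the local $n$-disc where it is labelled in $\mathcal{D}^k_0(\beta_0)\times S^{n-k}$ preserves the class in $\widetilde{D^k\times S^{n-k}}_{\mathcal{S}^{k-1}\times S^{n-k}}(L)$, since the substitution uses a null-vector identification inside a local disc. The resulting labelled stratified manifold features an inner $\mathcal{D}^k_1$-disc containing $\beta_1$, a $T_1$-annulus around it, and ambient $\mathcal{D}^k_0$-stratification on the rest. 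Using interior PL isotopy I transport the $T_1$-annulus across $S^{n-k}$, expanding the $\mathcal{D}^k_1$-region to occupy almost all of $D^k\times S^{n-k}$ with $\beta_1$ at a standard site, while the $\mathcal{D}^k_0$-region contracts into a small bubble enclosed by $T_1$ placed away from $\beta_1$. Applying the spanning statement to this bubble identifies it with an element $q\in A(\mathcal{D}^k_1\times D^{n-k})$; further PL isotopy along $S^{n-k}$ merges $q$ with $\beta_1$, yielding the label $q\beta_1$ at the standard site. By Proposition~\ref{Prop: Trace} (trace cyclicity) combined with Corollary~\ref{Cor: homeomorphism on Dk}, the quotient class $\mathcal{D}^k_1(y)\times S^{n-k}$ depends on $y\in A(\mathcal{D}^k_1\times D^{n-k})$ only through its image in the trace quotient, so the minimality identity $\beta_1 q\beta_1=\lambda\beta_1$ gives $\mathcal{D}^k_1(q\beta_1)\times S^{n-k}=\lambda(\mathcal{D}^k_1(\beta_1)\times S^{n-k})$, hence the desired proportionality.

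The main obstacle is verifying the trace-quotient reduction: that a commutator label $y=ab-ba$ placed in a local disc gives a null vector in the quotient space. This should follow from an interior PL isotopy along $S^{n-k}$ that cyclically wraps the stack $ab$ into $ba$, mirroring the topological origin of trace cyclicity in Proposition~\ref{Prop: Trace}; combined with Proposition~\ref{Prop: non-zero trace} ensuring that the scalar $\lambda$ computed from $\beta_1 q\beta_1$ is nonzero whenever the original annular equivalence is realised by a nontrivial bimodule, this completes the argument.
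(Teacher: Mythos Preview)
Your argument is correct, and it differs from the paper's in a notable way. For the spanning claim you push the entire stratification into a single complementary disc and then decompose the resulting element $x\in A(\mathcal{D}^k\times D^{n-k})$ as a linear combination of minimal idempotents; this relies on the (true but not entirely obvious) fact that a finite-dimensional semisimple algebra $\bigoplus_i M_{n_i}(\mathbb{K})$ is linearly spanned by its minimal idempotents even in the non-commutative case $n-k=1$ (each matrix unit $E_{ij}$ is the difference of the rank-one idempotents $E_{ii}+E_{ij}$ and $E_{ii}$). The paper instead inserts and decomposes the \emph{identity} $1=\sum_j\beta_j$ at the transversal slice, and then proves separately that any vector carrying a fixed minimal idempotent $\beta$ at one disc is automatically a scalar multiple of $\mathcal{D}^k(\beta)\times S^{n-k}$, using the minimality $\beta y\beta\in\mathbb{K}\beta$ on the complementary piece $y$. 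That single ``vector with $\beta$ at a disc is a multiple'' step then dispatches the second part in one line: substituting $\beta_0=T_1\beta_1$ places $\beta_1$ at a sub-disc and one is done. Your route through transporting the annulus around $S^{n-k}$, extracting $q$, and invoking a trace-quotient reduction for $q\beta_1\equiv\beta_1 q\beta_1$ is valid but considerably more elaborate than necessary.

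Two minor remarks on your second part. The citations for the trace-quotient factoring are off-target: Proposition~\ref{Prop: Trace} concerns the trace on $S^n$, and Corollary~\ref{Cor: homeomorphism on Dk} concerns homeomorphisms of the $D^k$ factor; neither directly yields $\mathcal{D}^k(ab)\times S^{n-k}=\mathcal{D}^k(ba)\times S^{n-k}$, which is instead a direct interior isotopy along $S^{n-k}$ (trivial for $n-k\ge 2$ since the algebra is commutative, and the cyclic wrap for $n-k=1$). And Proposition~\ref{Prop: non-zero trace} is not needed: the statement only requires proportionality, not $\lambda\neq 0$.
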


\begin{proof}
Suppose $\ell$ is an $L$-labelled stratified manifold in $\widetilde{D^k\times S^{n-k}}_{\mathcal{S}^{k-1} \times S^{n-k}}(L)$. By isotopy, we may assume that $\ell$ intersect with $D^k\times O_{n-k}\times \{-1\}$ transversely and the intersection is a stratified manifold $\mathcal{D}^k$.
By isotopy, we assume that $\ell|_{D^k\times D^{n-k}\times \{-1\}}$ is $\mathcal{D}^k\times O_{n-k}\times \{-1\}$. We decompose the identity $\mathcal{D}^k\times O_{n-k}$ of $A(\mathcal{D}^k\times D^{n-k})$ as a sum of minimal idempotents, i.e., $k$-morphisms $\beta$.
By the semisimplicity, a vector in $\widetilde{D^k\times S^{n-k}}_{\mathcal{S}^{k-1} \times S^{n-k}}(L)$ with a label $\beta$ at $D^k\times D^{n-k}\times \{-1\}$ is a multiple of $\mathcal{D}^k(\beta)\times S^{n-k}$.
So the vector space $\widetilde{D^k\times S^{n-k}}_{\mathcal{S}^{k-1} \times S^{n-k}}(L)$ is spanned by these $\mathcal{D}^k(\beta)\times S^{n-k}$.

If $k$-morphism $\beta_0$ and $\beta_1$ are annular equivalent, by isotopy, we can change $\mathcal{D}^k(\beta_0)\times S^{n-k}$ as a vector labelled by $\beta_1$ at $D^k\times D^{n-k}\times \{-1\}$. By the semisimplicity, the vector is a multiple of $\mathcal{D}^k(\beta_1)\times S^{n-k}$.
\end{proof}

\begin{definition}\label{Def: Z CF}
For the $S^n$ functional $Z$, we call $Z$ $n$-finite, if for any $\mathcal{S}$, $|\mathcal{S}|=S^{n-1}$, the vector space $V_{\mathcal{S},-}$ is finite dimensional.
For $0\leq k\leq n-1$, we call $Z$ $k$-finite, if for any $\mathcal{S}$, $|\mathcal{S}|=S^{k-1}$, there are finitely many annular equivalent classes of indecomposible $k$-morphisms with boundary $\mathcal{S}$.
We call $Z$ complete finite, if it is $k$-finite for all $0\leq k \leq n$.
\end{definition}

\textcolor{black}{For any $|\mathcal{S}|=S^{k-1}$, it is more conceptual to consider an equivalence classes of $k$-morphisms as a vector with support $D^k \times S^{n-k}$ and boundary $\mathcal{S}^{k-1} \times S^{n-k}$.}

\subsection{Simplicial morphisms}

In this section, we will label indecomposible $k$-morphisms on the $k$-simplices of an $n$-simplex, as a preparation for computing the partition function of $(n+1)$-manifolds as a state sum based on a triangulation. To simplify the state sum formula, we expect to choose the representatives of $k$-morphisms as simple as possible. 

The methods also work for polytopes, which could be used for CW-complex decomposition of manifolds. This will be clear after we construct the TQFT and derive the skein theory for general $k$-morphisms. 

Just like in algebraic topology, usually it is more convenient to prove theoretical results using simplicial decomposition, as there are less shapes to discuss.
It is more convenient to compute the invariant using CW-complex decomposition in practice, as it has less cells in the CW-complex decomposition than in the triangulation of a manifold.

Now let us show how to label the simplices by indecomposible $k$-morphisms.

For $k=0$, a 0-morphism is a minimal idempotent of the algebra in $A(D^0\times D^n)$. It has the boundary $\mathcal{S}=\partial (D^0\times D^n=S^{n-1})$.

For $1\leq k \leq n-1$, suppose $\mathcal{D}^k$ is a stratified manifold $D^k$, and $\alpha$ is a $k$-morphism of type $\mathcal{D}^k$.
Suppose $\Delta^{k}$ is a linear $k$-simplex and $\sigma: \Delta^k \to D^k$ is a \PL homeomorphism, such that the skeletons of $\sigma(\Delta^k)$ intersects with $ \mathcal{D}^k$ transversely.
We consider $\sigma$ as a $\Delta^k$ decomposition of $\partial \mathcal{D}^{k}$.
For every 0-face $F$ of $\Delta^k$, we consider $\mathcal{F}:=\mathcal{D}^k|_{\sigma(F)}$ as a 0-face of $\mathcal{D}^k$.
Take a \PL homeomorphism $\phi_F: F \to D^{k-1}$.
It extends to a homeomorphism $\phi'_F: U(F) \to D^{k-1}\times D^{n-k+1}$, where $U(F)$ is a normal microbundle of $F$ in $\mathcal{D}^{k}\times D^{n-k}$.

Suppose $\beta$ is a $(k-1)$-morphism of type $\phi_F(\mathcal{F})$.
We denote $P_{\beta}$ as the stratified manifold $\mathcal{D}^k\times D^{n-k}$ labelled by $(\phi'_F)^{-1}(\beta)$.
Then $P_{\beta}$ is an idempotent of $A(\mathcal{D}^k\times D^{n-k})$. 
And the identity decomposes as $\mathcal{D}^k\times D^{n-k}=\sum_{\beta} P(\beta)$ summing over all $k$-morphisms of type $\phi_F(\mathcal{F})$. 
By \PL isotopy, $P(\beta)$ commutes with $\alpha$. So there is a unique $k$-morphism $\beta$ containing the minimal idempotent $\alpha$.

\begin{definition}
For a $k$-morphism $\alpha$ with support $\mathcal{D}^k$, a $\Delta^k$ decomposition $\sigma$ and a 0-face $F$, we call the above unique $P_{\beta}$ containing $\alpha$ the boundary $(k-1)$-morphism of $\alpha$ at $\sigma(F)$, denoted by $\alpha_F$.
We say $\alpha_F$ is homeomorphic equivalent to $\beta$.
\end{definition}

\begin{definition}
For a $k$-morphism $\alpha$ with a simplicial decomposition $\sigma$, we define its simplicial boundary as
$$\partial_{\sigma}(\alpha)= \bigoplus_{F} \alpha_{\sigma(F)},$$
summing over all 0-faces $F$.
\end{definition}

\begin{definition}
Suppose $C_k$, $0 \leq k \leq n$, is a set $k$-morphisms. 
We call $C_0$ to be complete if any $0$-morphism is annular equivalent to an element in $C_0$.

For $1\leq k \leq n-1$, we call $C_{k}$ to be $C_{k-1}$-complete, if for any $k$-morphism $\alpha$, there is a homeomorphism $\phi$ on $D^k$, such that $(\phi\times I_{D^{n-k}})(\alpha)$ is annular equivalent to an element in $C_k$, whenever $\alpha$ has a simplicial decomposition with boundary $(k-1)$-morphisms homeomorphic equivalent to $(k-1)$-morphisms in $C_{k-1}$.

We call $C_{n}$ to be $C_{n-1}$-complete, if for any $n$-morphism $\alpha$, i.e. a vector in $\tilde{V}_{\mathcal{S},-}$, there is a homeomorphism $\phi$ on $D^k$, such that $\phi(\alpha)$ is a linear sum of vectors in $C_n$, whenever the boundary face of $|\mathcal{S}|=S^{n-1}$ are labelled by homeomorphisms of $n-1$ morphisms in $C_{n-1}$.
\end{definition}

\begin{definition}
Suppose $C_k$ is a set of $k$-morphisms, $0 \leq k \leq n$. If $C_{k}$ is $C_{k-1}$-complete, for all $k$, then we call $C_{\bullet}=\cup_{k=0}^n C_k$ a simplicial representative set.
If $C_{\bullet}$ has no simplicial representative subset, then we call it minimal.
\end{definition}

For a minimal simplicial representative set, $k$-morphisms in $C_k$, $0 \leq k \leq n-1$, are pairwise annular inequivalent and homeomorphic inequivalent, and vectors in $C_n$ are linearly independent.

\begin{definition}
We call the $S^n$ functional $Z$ simplicial finite, if it has a simplicial representative set $C_{\bullet}=\cup_{k=0}^n C_k$ and every $C_k$ is a finite set. 
\end{definition}

\begin{question}
Whether the simplicial finiteness of $Z$ implies the complete finiteness? 
\end{question}

\subsection{Spherical n-category}

As defined in Def.~\ref{Def: composition of morphisms},
we can compose two $k$-morphisms into one $k$-morphism by gluing the two $k$-discs into one $k$-disc along the boundary. We consider a $k+1$-morphism as a morphism or a bimodule between two $k$-morphisms.
In this way, all $k$-morphisms, $0\leq k \leq n$, form an $n$-category.
The higher pivotal structure of the $n$-category, such as the saddle surface of a 3-category in Fig.~\ref{fig:saddle}, is encoded by the regular stratified PL manifold.

\begin{figure}[H]
    \centering
    \begin{tikzpicture}
        \begin{scope}
            \fill[opacity=.15](0,0)--(1,-.4)--(1,-.4-.8)--(0,-.8);
        \end{scope}
        \begin{scope}[shift={(.8,.2)}]
            \fill[opacity=.15](0,0)--(1,-.4)--(1,-.4-.8)--(0,-.8);
        \end{scope}
        \fill[opacity=.15](0,-.8)--(.8,.2-.8)--(1.8,.2-.4-.8)--(1,-.8-.4);
        \draw[gray!50,dashed](0,-.8)--(.8,.2-.8)--(1.8,.2-.4-.8);
        \begin{scope}
            \fill[opacity=.15](0,.2-.8)--(.8,.2-.8);
        \end{scope}
        \draw[gray!50,line width=.3pt](1.8,.2-.4-.8)--(1,-.8-.4)--(0,-.8);

        \fill[opacity=.15](0,-.8)--(.8,.2-.8)--(.8,-1.5)--(0,-1.7)--(0,-.8);
        \begin{scope}[shift={(1,-.4)}]
            \fill[opacity=.15](0,-.8)--(.8,.2-.8)--(.8,-1.5)--(0,-1.7)--(0,-.8);
        \end{scope}
        
    \end{tikzpicture}
    \caption{Saddle surface}
    \label{fig:saddle}
\end{figure}
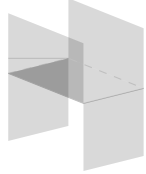

\begin{definition}
We call the $n$-category arisen from a non-degenerate, complete finite and strong semisimple $S^n$ functional $Z$ a spherical $n$-category. If $Z$ is Hermitian or reflection positive, then the spherical $n$-category is semisimple, Hermitian or unitary respectively.     
\end{definition}
We consider the spherical $n$-category as a generalization of the spherical multi-fusion category with non-zero global dimension for $n=2$. 
The complete finite and strong semisimple conditions are required to construct the TQFT in the next section.
One may release the two conditions to study more general spherical $n$-categories.

One can regard an $n$-category, as the label space $L$, together with the $S^n$ functional $Z$ as a spherical $n$-category. When $Z$ is reflection positive, it can be considered as a unitary $n$-category. The higher pivotal structures are captured by regular stratified manifolds.  

A monoidal $n$-category $\mathscr{C}$ has $k$-morphisms with compositions in the direction of the $k^{th}$ coordinate, and $k+1$-morphism between $k$-morphisms in the direction of the $(k+1)^{th}$ coordinate. Every $k$-morphism has the unit $(k+1)$-morphism.
We consider $\mathscr{C}$ as the label space $L$.

\section{Alterfold TQFT}\label{Sec: Alterfold TQFT}
In this section, we assume that the $S^n$ functional $Z$ is complete finite and reflection positivity. We will introduce the alterfold TQFT and construct a unitary alterfold $(n+1)$-TQFT from $Z$. For a general field $\mathbb{K}$, we replace reflection positivity by strong semisimplicity.

\begin{proposition}
If $Z$ is complete finite and reflection positive, then $Z$ is strong semisimple. 
\end{proposition}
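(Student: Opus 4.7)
The plan is to verify the two requirements of strong semisimplicity separately: (a) $\tilde V_{\mathcal{S},-}$ is semisimple for every $\mathcal{D}^{n-1}$ with $|\mathcal{D}^{n-1}|=D^{n-1}$; and (b) the global dimension $\mu(\alpha)\neq 0$ for every indecomposible $k$-morphism $\alpha$ with $0\leq k\leq n-1$. Part (a) has essentially already been observed: the $n$-finiteness clause of complete finiteness forces $\dim V_{\mathcal{S},-}<\infty$ for $|\mathcal{S}|=S^{n-1}$, hence $A(\mathcal{D}^{n-1}\times D^1)=\tilde V_{\mathcal{S},-}$ is finite dimensional, and reflection positivity together with Prop.~\ref{Prop: Trace} makes the trace $Tr(xx^*)=Z(x_-\cup \theta(x_-))$ positive definite, promoting $A(\mathcal{D}^{n-1}\times D^1)$ to a finite dimensional $C^*$-algebra, which is automatically semisimple.

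For part (b) I would induct downward on $k$ from $n-1$ to $0$. The base $k=n-1$ is immediate from the convention $\mu\equiv 1$. For the inductive step, fix an indecomposible $k$-morphism $\alpha$ of type $\mathcal{D}^k$ and consider the defining sum
\begin{equation*}
\mu(\alpha)=\sum_{\beta} \frac{Tr(\beta)^2}{\mu(\beta)}
\end{equation*}
over annular classes of indecomposible $\alpha-\alpha$ bimodules. Each such $\beta$ is a minimal idempotent in an algebra $A(\mathcal{D}^{k+1}\times D^{n-k-1})$ whose support is an $n$-disc, so by the same finite dimensional $C^*$-argument used for (a) this algebra is semisimple with positive definite trace. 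We may therefore represent $\beta$ by a self-adjoint projection, and then Prop.~\ref{Prop: non-zero trace} combined with $Tr(\beta)=Tr(\beta\beta^*)\geq 0$ forces $Tr(\beta)>0$, whence $Tr(\beta)^2>0$. The inductive hypothesis gives $\mu(\beta)>0$, so every term in the sum is strictly positive. The $(k+1)$-finiteness clause of complete finiteness makes the sum finite, and the trivial $\alpha-\alpha$ bimodule decomposes into at least one nonzero indecomposible summand, so the sum is nonempty. Therefore $\mu(\alpha)>0$, completing the induction.

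The main obstacle to watch is ensuring the bimodule algebras $A(\mathcal{D}^{k+1}\times D^{n-k-1})$ are genuinely finite dimensional $C^*$-algebras at every level of the induction; this is uniform since each such algebra is of the form $\tilde V_{\mathcal{S},-}$ for some $|\mathcal{S}|=S^{n-1}$, and therefore inherits finite dimensionality from $n$-finiteness and the $C^*$-structure from reflection positivity. Once this is checked, the induction reduces to the observation that a nonempty finite sum of strictly positive real numbers is strictly positive, yielding the required non-vanishing of every global dimension and hence strong semisimplicity.
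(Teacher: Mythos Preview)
Your proposal is correct and follows essentially the same approach as the paper. The paper's proof is extremely terse: it notes that reflection positivity makes the trace positive definite so each $A(\mathcal{D}^k\times D^{n-k})$ is a finite dimensional $C^*$-algebra (hence semisimple), and then simply asserts that by the defining formula \eqref{Equ: global dimension} every global dimension is positive. Your write-up supplies the downward induction on $k$, the positivity $Tr(\beta)>0$ for self-adjoint minimal idempotents, the non-emptiness of the sum via the trivial bimodule, and the finiteness of the sum from $(k+1)$-finiteness --- details the paper leaves implicit but which constitute exactly the intended argument.
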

\begin{proof}
When $Z$ is complete finite and reflection positive, the trace of the algebra $A(\mathcal{D}^k\times D^{n-k})$ is positive definite. So the algebra is a finite dimensional $C^*$-algebra, which is semisimple.
By Equ.~\ref{Equ: global dimension}, the global dimension of every $k$-morphism is positive. So $Z$ is strong semisimple.
\end{proof}

\subsection{Quantum Invariant}

\begin{definition}
Suppose $B$ is a compact, oriented $(n+1)$-manifold and $\mathcal{M}$ is an $L$-labelled stratified $n$-manifold with support $\partial B$, we call the pair $\mathcal{B}=(B,\mathcal{M})$ a bulk vector.
We call $B$ the support of $\mathcal{B}$, and $\mathcal{M}$ the space boundary of $\mathcal{B}$.
\end{definition}

We prove the following main theorem.

\begin{theorem}\label{Thm: RP invariant}
When the $S^n$ functional $Z$ is complete finite and strongly semisimple, $Z$ can be extended to a partition function $Z$ on $\mathcal{B}$, which is homeomorphic invariant.
If $Z$ is reflection positive, then its extension is reflection positive. 
\end{theorem}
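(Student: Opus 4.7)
The plan is a state-sum construction of the extended partition function. Given a bulk vector $\mathcal{B}=(B,\mathcal{M})$ with $B$ a compact oriented PL $(n+1)$-manifold and $\mathcal{M}$ an $L$-labelled stratified $n$-manifold supported on $\partial B$, I first choose a combinatorial triangulation $\Delta$ of $B$; by Theorem~\ref{Thm: triangulation} such a triangulation exists, and by Theorem~\ref{Thm: transversality} we may ambient isotope so that $\Delta$ is transversal to $\mathcal{M}$ on $\partial B$. The alterfold convention lets us view each top-dimensional cell as an $A$-colored $(n+1)$-ball and a small neighborhood of the codimension-one skeleton as $B$-colored collars bearing the lattice data of $\mathcal{M}$; consequently a small sphere around each vertex sees a labelled stratified $S^n$ on which the given $Z$ can be evaluated.

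The value $Z(\mathcal{B})$ will then be a finite state sum indexed by choices of indecomposible $k$-morphisms, drawn from a fixed simplicial representative set $C_k$, attached to the interior $k$-simplices of $\Delta$ and compatible with the boundary labels on $\partial B$. Each such labelling contributes the product of the local $S^n$-values of $Z$ at each vertex link, weighted by reciprocals of intrinsic dimensions attached to the $k$-simplices for $k<n+1$. Complete finiteness (CF) makes the sum finite, strong semisimplicity makes all required global dimensions invertible, and the resolution-of-identity formulas in $A(\mathcal{D}^k\times D^{n-k})$ from Section~\ref{Sec: Idempotent Completion and Spherical n-category} ensure that the local summands are well-posed. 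Independence of the simplicial representatives within an annular/Morita class is built into the weights via the invariance of the intrinsic dimension, and independence of $C_\bullet$ itself follows from the annular-equivalence arguments of Proposition~\ref{Prop: Morita and annular}.

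The main work is to show that this definition depends on neither the triangulation nor the choice of $C_\bullet$, hence is a genuine topological invariant. Independence of the triangulation is reduced to invariance under the $(n+2)$ bistellar Pachner moves, which by Pachner's theorem generate all changes of combinatorial triangulation. For each such move, the corresponding identity is an instance of the null principle applied to a labelled stratified configuration inside a small bulk ball, so it reduces to an equality between two evaluations of $Z$ on homeomorphic labelled $n$-spheres, which holds by (HI). The hard part will be this Pachner-invariance step: one must verify a large family of higher-categorical identities (the higher $j$-symbol relations alluded to in the introduction) and in particular check that the weights and the local $Z$-values conspire correctly under each bistellar flip; this is the engine of the entire construction and is where strong semisimplicity (nonvanishing global dimensions) is used in full force.

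Finally, reflection positivity of the extension will follow formally once invariance is established. If $B=B_+\cup_{\mathcal{N}} B_-$ is a decomposition along a stratified $n$-manifold $\mathcal{N}$, the state sum factors through a sum over indecomposible morphisms supported on $\mathcal{N}$, giving $Z(B_+\cup B_-)=\langle v_+,v_-\rangle$ for vectors $v_\pm$ in an explicit finite-dimensional space indexed by these morphisms; taking $B_-=\theta(B_+)$ with $\mathcal{N}$ in a reflection hyperplane then reduces bulk reflection positivity to the sphere-level positivity, which is assumed.
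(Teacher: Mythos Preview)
Your overall architecture---define the extended $Z$ as a state sum over a transversal triangulation, then prove triangulation independence---is the paper's as well. But your Pachner-invariance step contains a real gap, and the paper takes a different route around it.

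You claim each Pachner identity ``reduces to an equality between two evaluations of $Z$ on homeomorphic labelled $n$-spheres, which holds by (HI).'' This is not so: a Pachner $(k,n+3-k)$ move equates two \emph{state sums}, each a sum over all admissible interior labellings weighted by $Tr(\alpha)/\mu(\alpha)$, not a single pair of $S^n$-evaluations. Already for $n=2$ this is the pentagon equation, which does not follow from (HI) alone but requires the full resolution-of-identity structure. You acknowledge this is ``the hard part,'' but you have not supplied a mechanism for proving these summation identities, and (HI) by itself is not one.

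The paper avoids Pachner moves entirely. It instead proves invariance under \emph{subdivisions} (Lemma~\ref{Lem: subdivision}) and then invokes the common-subdivision theorem (Theorem~\ref{Thm: triangulation}). The engine is the Adjacent Move theorem (Theorem~\ref{Thm: Adjacent Move}): applying the $\mathcal{S}^{k-1}$-relation and then the $\mathcal{S}^{k}$-relation to a labelled $S^n$ returns the original $Z$-value, proved by downward induction on $k$, with the base case $k=n$ being nothing more than the dual-basis resolution of the nondegenerate pairing. A $p$-subdivision of an $n$-simplex is rewritten as a sequence of such adjacent moves, and global invariance (Theorem~\ref{Thm: Semisimple invariant}) follows. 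Strong semisimplicity enters precisely here, through the recursive definition of $\mu(\alpha)$ in the $\mathcal{S}^{k-1}$-relation coefficients. This inductive handle-by-handle argument is what replaces your missing Pachner verification.

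For reflection positivity your sketch is directionally right, but the paper's proof (Theorem~\ref{Thm: finite dimension}) is more concrete: it builds an explicit isometric embedding by taking the ``square root'' $\iota_{\mathcal{S}^{k-1},-}=\sum_{\alpha_k}\sqrt{Tr(\alpha_k)/\mu(\alpha_k)}\,\iota(\alpha_k,-)$ of each $\mathcal{S}^{k-1}$-relation, and iterates over the simplices of a triangulation of the time boundary to embed $\tilde V_{\mathcal{F},-}$ into a finite direct sum of one-dimensional Hilbert spaces. Positivity of the weights under (RP) is exactly what makes these square roots available.
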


\begin{definition}\label{Def: zeta}
Fix a non-zero $\zeta \in \mathbb{K}$.
Suppose $\mathcal{B}=(B,\mathcal{M})$ is a bulk vector and $\phi: \mathcal{B} \to D^{n+1}$ is a \PL homeomorphism,
we define    
$$Z(\mathcal{B}):=\zeta Z(\phi(\mathcal{M})).$$
\end{definition}

As $Z$ is homeomorphic invariant on $S^n$, so $Z(\phi(B))$ is independent of the choice of $\phi$.

\begin{definition}\label{Def: multiplicative}
When $\mathcal{B}$ is a union of $\mathcal{B}_i$, $|B_i|\sim D^{n+1}$,
we define
$$Z(\mathcal{B})=\prod_i Z(\mathcal{B}_i).$$
\end{definition}

\subsection{Surgery Moves}
For $D^{n+1}$, $0<\varepsilon'<\varepsilon<1$.
we remove $D^{k-1}\times \varepsilon D^{n-k+1}$ and
$D^{k-1} \times [\varepsilon,1] \times \varepsilon' D^{n-k}$,
then the result $D_+$ is \PL homeomorphic to $D^{n+1}$.
The shape of $D_+$ on the $k^{th}$ and $(k+1)^{th}$ coordinates is
\begin{center}
\begin{tikzpicture}
\begin{scope}[scale=1]
\fill[fill opacity=.2] (-1,-1) rectangle (1,1);
\fill[white] (-.5,-.5) rectangle (.5,.5);
\fill[white] (.5,-.25) rectangle (1,.25);
\draw[dashed] (.5,-.25)--(.5,.25);
\end{scope}
\end{tikzpicture}
\end{center}
For $D^{n+1}$,
we remove $D^{k-1}\times \varepsilon D^{n-k+1}$ and
$D^{k-1} \times [-1,-\varepsilon] \times \varepsilon' D^{n-k}$,
then the result result $D_-$ is \PL homeomorphic to $D^{n+1}$.
The shape of $D_-$ on the $k^{th}$ and $(k+1)^{th}$ coordinates is
\begin{center}
\begin{tikzpicture}
\begin{scope}[scale=1]
\fill[fill opacity=.2] (-1,-1) rectangle (1,1);
\fill[white] (-.5,-.5) rectangle (.5,.5);
\fill[white] (-.5,-.25) rectangle (-1,.25);
\draw[dashed] (-.5,-.25)--(-.5,.25);
\end{scope}
\end{tikzpicture}
\end{center}

When we remove $D^{k-1}\times \varepsilon D^{n-k+1}$ from $D^{n+1}$, we change
the $S^{k-1} \times \varepsilon D^{n+1-k}$ part of the boundary $\partial D^{n+1}$ to
$D^k \times \varepsilon S{n-k}$.
For a linear sum $L$-labelled stratified manifold $\mathcal{S}$ on $\partial D^{n+1}$, such that the part $S^{k-1} \times \varepsilon D^{n+1-k}$ is $\mathcal{S} \times \varepsilon D^{n+1-k}$,
we will introduce a local $\mathcal{S}^{k-1}$-relation which changes $S^{k-1} \times \varepsilon D^{n+1-k}$ to a linear sum $L$-labelled stratified manifold with support on $D^k \times \varepsilon S^{n-k}$.
Topologically, we may consider the move as removing a $k$-handle with support  $D^k \times \varepsilon D^{n-k+1}$. Algebraically, we consider $S^{k-1} \times \varepsilon D^{n+1-k}$ as the identity of a $D^{n+1-k}$ algebra and the relation as a decomposition of the identity into minimal idempotents. The minimal idempotent is given by the annular equivalence class of $k$-morphisms with boundary $\mathcal{S}^{k-1}$.

After removing $D^k \times \varepsilon D^{n+1-k}$ from $D^{n+1}$, the orientation of its boundary is opposite to the orientation of the boundary $\mathcal{D}^k \times S^{n-k}$ of $D^k \times \varepsilon D^{n+1-k}$.

\begin{definition}
For a $k$-morphism $\alpha_k$ of type $\mathcal{D}^k$,
we define $\mathcal{D}^k(\alpha_k)\times \varepsilon S^{n-k}$ by
replacing the $\mathcal{D}^k  \times \varepsilon D^{n-k} $ part of $\mathcal{D}^k \times S^{n-k}$ by $\alpha_k$.
\end{definition}

\begin{definition}\label{Def: S-relation}
Suppose $Z$ is complete finite and $\mathcal{S}^{k-1}$ is a stratified manifold $\mathcal{S}^{k-1}$ with support $S^{k-1}$. We define an $\mathcal{S}^{k-1}$-relation for the bulk vector $(D^{n+1}, \mathcal{S}^{k-1}\times D^{n-k+1})$.

When $k=0$, 
\begin{align}\label{Equ: move-0}
\Phi(D^{n+1},\emptyset)&= \sum_{\alpha_0}\frac{Tr(\alpha_0)}{\mu(\alpha_0)} (D^{n+1}\setminus \varepsilon D^{n+1}, \mathcal{M}(\alpha_0)),
\end{align}
summing over representatives of $0$-morphisms $\alpha_0$, and
$\mathcal{M}(\alpha_0)$ is $\varepsilon S^{n}$ labelled by $\mathcal{D}^0(\alpha_0) \times \varepsilon S^{n}$.

When $1\leq  k \leq n-1$, 
\begin{align}\label{Equ: move-k}
\Phi(D^{n+1},\mathcal{S}^{k-1} \times D^{n+1-k})&= \sum_{\alpha_k}\frac{Tr(\alpha_k)}{\mu(\alpha_k)} (D^{n+1}\setminus D^{k}\times \varepsilon D^{n-k+1}, \mathcal{M}(\alpha_k)),
\end{align}
summing over representatives of $k$-morphisms $\alpha_k$ with link boundary $\mathcal{S}^{k-1}$, and 
$\mathcal{M}(\alpha_k)$ is $\mathcal{S}^{k-1} \times  (D^{n+1-k} \setminus \varepsilon D^{n+1-k}) \cup \mathcal{D}^k\times \varepsilon S^{n-k}$ labelled by $\mathcal{D}^k(\alpha_k) \times \varepsilon S^{n-k}$.
(Here $\alpha_k$ is the type of $\mathcal{D}^k$.)

When $k=n$, 
\begin{align}\label{Equ: move-n}
\Phi(D^{n+1},\mathcal{S}^{n-1} \times D^{1})&= \sum_{\alpha_n} (D^{n+1}\setminus D^{n}\times \varepsilon D^{1}, \mathcal{M}(\alpha_n)),
\end{align}
summing over $\alpha_n$ in a basis of $V_{\mathcal{S}^{n-1}}$, with the dual vector $\alpha_n'$ in the dual basis, and
$\mathcal{M}(\alpha_n)$ is $\mathcal{S}^{n-1} \times  (D^{1} \setminus \varepsilon D^{1}) \cup \Lambda \mathcal{S}^{n-1} \times \varepsilon S^{0}$ labelled by $\alpha_n \times \{\varepsilon\}$ and  $\alpha_n' \times \{-\varepsilon\}$, 

When $k=n+1$, suppose $\mathcal{S}^n$ is an $L$-labelled stratified manifold,
\begin{align}\label{Equ: move-n+1}
\Phi(D^{n+1},\mathcal{S}^n)&=\zeta Z(\mathcal{S}^n) (\emptyset,\emptyset).   
\end{align}

\end{definition}

We will prove that the $\mathcal{S}^{k-1}$-relation is independent of the choice of the representatives of the annular equivalence class in Theorem \ref{Thm:commtative k-morphism algebra}. 
The relations come from the {\it null principle}, which we will explain in Theorem \ref{Thm: unique extension} and Table~\ref{Table: null principle}.

\begin{definition}
Suppose $\mathcal{S}^n$ is an $L$-labelled stratified with a local part $\mathcal{S}^{k-1} \times \varepsilon D^{n+1-k}$ and a local part  $\mathcal{S}^{k} \times \varepsilon D^{n-k}$.
We define the $(k-1,k)$ moves as follows.
We first apply the $\mathcal{S}^{k-1}$-relation to the local part $\mathcal{S}^{k-1} \times \varepsilon D^{n+1-k}$. 
Then we obtain stratified manifolds $\mathcal{S}^{k}_{\pm}$ as
\begin{align*}
\mathcal{S}^{k}_+&=\mathcal{D}^k(\alpha_k) \times \{\varepsilon\} \cup \mathcal{S}^{k}|_{x_k \geq \varepsilon}     \\
\mathcal{S}^{k}_-&=\rho_{(k,k+1)}(\mathcal{D}^k(\alpha_k) \times \{\varepsilon\}) \cup \mathcal{S}^{k}|_{x_k \leq -\varepsilon}, \\
\end{align*}
where $\rho_{(k,k+1)}$ is the $180^{\circ}$ rotation of the $k^{th}$ and $(k+1)^{th}$ coordinates.
Then we apply the $\mathcal{S}^{k}_{\pm}$-relation. The support of the result $L$-labelled stratified manifolds are homeomorphic to $S^n$ as illustrated in Fig.~\ref{fig: Zpm}. We denoted their $Z$ value by $Z_{\pm}$ respectively.
\end{definition}

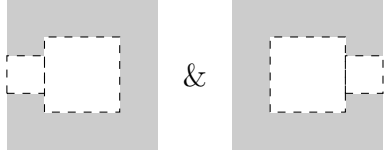
\begin{figure}[h]
    \centering
   \begin{tikzpicture}
\node at (1.5,0) {$\&$};
\begin{scope}[scale=1]
\fill[fill opacity=.2] (-1,-1) rectangle (1,1);
\fill[white] (-.5,-.5) rectangle (.5,.5);
\fill[white] (-.5,-.25) rectangle (-1,.25);
\draw[dashed] (-.5,-.5) rectangle (.5,.5);
\draw[dashed] (-1,-.25) rectangle (-.5,.25);
\end{scope}
\begin{scope}[scale=1,shift={(3,0)}]
\fill[fill opacity=.2] (-1,-1) rectangle (1,1);
\fill[white] (-.5,-.5) rectangle (.5,.5);
\fill[white] (.5,-.25) rectangle (1,.25);
\draw[dashed] (-.5,-.5) rectangle (.5,.5);
\draw[dashed] (.5,-.25) rectangle (1,.25);
\end{scope}
\end{tikzpicture}
    \caption{The $(k-1,k)$-move on the $k^{th}$ and $(k+1)^{th}$ coordinates.}
    \label{fig: Zpm}
\end{figure}

We prove the following key result about adjacent moves.

\begin{theorem}[Adjacent Move]\label{Thm: Adjacent Move}
When $Z$ is complete finite and strong semisimple, the $S^n$ functional $Z$ is invariant under the $(k-1,k)$-move, i.e., $$Z(\mathcal{S}^n)=Z_+=Z_-.$$
Consequently, $Z_{\pm}$ are independent of the choices of the representatives of in the $\mathcal{S}^{k-1}$-relation.
\end{theorem}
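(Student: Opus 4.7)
The plan is to deduce the theorem from a single main invariance lemma applied to each individual $\mathcal{S}^{j-1}$-relation and then compose two applications (at levels $j=k$ and $j=k+1$). The independence of $Z_\pm$ from the choice of representatives will be an immediate corollary of the equality $Z(\mathcal{S}^n)=Z_\pm$, since $Z(\mathcal{S}^n)$ is defined without reference to any such choice.

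The main lemma I would prove is the following: for each $0 \le j \le n+1$ and any labelled stratified configuration $\mathcal{T}$ on $S^n$ containing a local region of the form $\mathcal{S}^{j-1} \times \varepsilon D^{n+1-j}$, replacing this local region by the right-hand side of the $\mathcal{S}^{j-1}$-relation of Def.~\ref{Def: S-relation} preserves the value of $Z(\mathcal{T})$. To prove this in the generic range $1 \le j \le n-1$, I slice $S^n$ along the tube $\mathcal{D}^j \times \varepsilon S^{n-j}$ with $\partial \mathcal{D}^j = \mathcal{S}^{j-1}$, separating an inside piece from an outside piece. Because the outside piece is essentially arbitrary, the null principle reduces the lemma to an identity in the finite-dimensional quotient space $\tilde{V}$ of configurations on $D^j \times S^{n-j}$ with boundary $\mathcal{S}^{j-1} \times S^{n-j}$. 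By Prop.~\ref{Prop: k-morphism class} this quotient is spanned by $\{\mathcal{D}^j(\alpha_j) \times S^{n-j}\}$ as $\alpha_j$ ranges over representatives of annular equivalence classes of indecomposable $j$-morphisms with boundary $\mathcal{S}^{j-1}$, so the identity configuration admits an expansion
$$\mathcal{D}^j \times S^{n-j} \;=\; \sum_{\alpha_j} c_{\alpha_j}\, \mathcal{D}^j(\alpha_j) \times S^{n-j}.$$
I would pin down the $c_{\alpha_j}$ by pairing both sides against the dual vectors $\mathcal{D}^j(\beta) \times S^{n-j}$ glued via an outside piece built from $\mathcal{D}^j \times D^{n-j}$: the pairing against the left-hand side recovers $Tr(\beta)$ via Prop.~\ref{Prop: Trace}, while on the right-hand side it produces $c_\beta$ times a self-pairing that the inductive formula \ref{Equ: global dimension} identifies with $\mu(\beta)$. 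Strong semisimplicity guarantees $\mu(\beta)\neq 0$ (and indeed $Tr(\beta)\neq 0$ by Prop.~\ref{Prop: non-zero trace}), forcing $c_{\alpha_j} = Tr(\alpha_j)/\mu(\alpha_j)$ exactly as in Def.~\ref{Def: S-relation}.

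The boundary cases $j \in \{0, n, n+1\}$ are minor variants: $j = n+1$ is tautological from Def.~\ref{Def: zeta} and Def.~\ref{Def: multiplicative}; $j = n$ is the spectral decomposition of the identity for the non-degenerate bilinear form induced by $Z$ on $V_{\mathcal{S}^{n-1}}$, matching Equ.~\ref{Equ: move-n} with its basis/dual-basis sum; and $j = 0$ is the generic argument applied with the empty boundary $\mathcal{S}^{-1} = \emptyset$, producing Equ.~\ref{Equ: move-0}. Granting the main lemma for $j=k$ and $j=k+1$, I apply it to $\mathcal{S}^n$ first at level $k$ — producing a linear combination of intermediate configurations each carrying a local region $\mathcal{S}^k_\pm \times \varepsilon D^{n-k}$ — and then at level $k+1$ to each intermediate. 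Neither step changes $Z$, so the final sum still equals $Z(\mathcal{S}^n)$; but by construction this final sum is precisely $Z_\pm$.

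The main obstacle I anticipate is the precise matching of the coefficient $Tr(\alpha_j)/\mu(\alpha_j)$ in the identity expansion. This requires careful bookkeeping of orientations (through the $180^\circ$ rotation $\rho$ and the reflection $\theta$) together with an induction on $n-j$ to align the self-pairing calculation with the recursive definition of $\mu$. A secondary subtlety is verifying that, after applying the $\mathcal{S}^{k-1}$-relation, the emerging local shape is indeed of the form $\mathcal{S}^k_\pm \times \varepsilon D^{n-k}$ demanded by Def.~\ref{Def: S-relation}, so that the second relation is applicable; this reduces to a direct geometric check of how the handle $D^k \times \varepsilon D^{n-k+1}$ has been excised from $D^{n+1}$ on the two sides corresponding to $Z_+$ and $Z_-$.
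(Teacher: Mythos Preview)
Your proposal contains a genuine gap rooted in a misreading of what the $\mathcal{S}^{j-1}$-relation does. As defined in Def.~\ref{Def: S-relation}, the relation is between \emph{bulk vectors} with different supports: it sends $(D^{n+1},\mathcal{S}^{j-1}\times D^{n+1-j})$ to a linear combination of bulk vectors with support $D^{n+1}\setminus (D^j\times\varepsilon D^{n-j+1})$. At the level of space boundaries this replaces the local piece $S^{j-1}\times \varepsilon D^{n+1-j}$ by $D^j\times \varepsilon S^{n-j}$, so after one application the stratified manifold lives on something PL-homeomorphic to $S^j\times S^{n-j}$, \emph{not} on $S^n$. Since at this point in the paper the functional $Z$ is only defined on stratified manifolds with support $S^n$, your ``main lemma'' asserting that a single $\mathcal{S}^{j-1}$-relation preserves $Z(\mathcal{T})$ is not a well-formed statement. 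The same problem wrecks your coefficient computation: pairing $\mathcal{D}^j(\alpha_j)\times S^{n-j}$ against a test vector $\mathcal{D}^j(\beta)\times S^{n-j}$ (both supported on $D^j\times S^{n-j}$) produces, after gluing along $S^{j-1}\times S^{n-j}$, a configuration on $S^j\times S^{n-j}$, on which $Z$ has no meaning yet. The identification of that self-pairing with $\mu(\beta)$ (your appeal to Equ.~\ref{Equ: global dimension}) is exactly Equ.~\ref{Equ: inner product} of the paper, but that equation is established only \emph{after} Theorem~\ref{Thm: Adjacent Move} and Theorem~\ref{Thm: Semisimple invariant}, so invoking it here would be circular.

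The paper's proof is organised precisely to avoid this trap: it never considers a single relation in isolation but always the composite $(k-1,k)$-move, so that the support returns to $S^n$ and $Z$ is defined throughout. The downward induction on $k$ then uses the $(k,k+1)$-move (already established) to show $Z_+=Z'=Z_-$ via the intermediate in Fig.~\ref{fig: Z_+=Z'=Z_-}, and checks $Z=Z_\pm$ by reducing the outside piece to the spanning set $\mathcal{D}^k(\beta)\times S^{n-k}$ of Prop.~\ref{Prop: k-morphism class}, where only the $\alpha_k=\beta$ summand survives and the definition of $\mu(\alpha_k)$ closes the loop. Your instinct to use Prop.~\ref{Prop: k-morphism class} and to pin down coefficients by a trace/pairing argument is correct, but it must be embedded in this two-step, downward-inductive framework; the single-step lemma you proposed cannot be stated, let alone proved, at this stage.
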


\begin{proof}
This is proved by induction on $k$.
When $k=n$, the $\mathcal{S}^{n-1}$-relation cut $D^{n+1}$ into two pieces and the $\mathcal{S}^{n}$-relation evaluate one and then multiply the value of the other.
This follows from non-degeneracy of the inner product and the fact that
$$\langle  \beta_-,\beta_+ \rangle= \sum_{\alpha_{n} } \langle \beta_1, \alpha_n \rangle \langle \alpha'_n ,\beta_2 \rangle, ~\forall~ \beta_\pm\in \tilde{V}_{D^n,\mathcal{S}^{n-1},\pm}.$$ 
summing over $\alpha_n$ in a basis of $\tilde{V}_{D^n,\mathcal{S}^{n-1},-}$ with the dual vector $\alpha_n'$ in the dual basis.

As $Z$ is invariant under \PL isotopy on $S^n$.
By Theorem~\ref{Thm: transversality}, we assume that the transversal intersection of $\mathcal{S}^n$ and $S^{k-1}\times O_{n+1-k}$ is $\mathcal{S}^{k-1}$; and the transversal intersection of $\mathcal{S}^n$ and $S^{k}\times O_{n-k}$ is $\mathcal{S}^{k}$. 

Applying the $\mathcal{S}^{k-1}$-relation, we
obtain a diagram denoted by $\mathcal{S}_1$. Its support is $$S^n\setminus (S^{k-1}\times \varepsilon D^{n-k+1}) \cup D^k \times \varepsilon S^{n-k},$$ which is \PL homeomorphic to $S^k\times S^{n-k}$.
Applying the $\mathcal{S}^{k}_{\pm}$-relation to $\mathcal{S}_1$, we obtain $\mathcal{S}_{\pm}$, whose support are homeomorphic to $S^n$, and their values are $Z_\pm$ respectively.
Applying the negative $(k,k+1)$-move to $\mathcal{S}_{\pm}$, we obtain the same diagram with value $Z'$, as illustrated in Fig.~\ref{fig: Z_+=Z'=Z_-}.
By induction, we have that $Z_+=Z'=Z_-$.

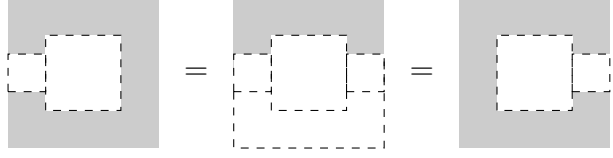
\begin{figure}
    \centering
    \begin{tikzpicture}
\node at (1.5,0) {$=$};
\node at (4.5,0) {$=$};
\begin{scope}[scale=1]
\fill[fill opacity=.2] (-1,-1) rectangle (1,1);
\fill[white] (-.5,-.5) rectangle (.5,.5);
\fill[white] (-.5,-.25) rectangle (-1,.25);
\draw[dashed] (-.5,-.5) rectangle (.5,.5);
\draw[dashed] (-1,-.25) rectangle (-.5,.25);
\end{scope}
\begin{scope}[scale=1,shift={(3,0)}]
\fill[fill opacity=.2] (-1,-1) rectangle (1,1);
\fill[white] (-1,-1) rectangle (1,0);
\fill[white] (-.5,-.5) rectangle (.5,.5);
\fill[white] (-.5,-.25) rectangle (-1,.25);
\fill[white] (.5,-.25) rectangle (1,.25);
\draw[dashed] (-.5,-.5) rectangle (.5,.5);
\draw[dashed] (-1,-.25) rectangle (-.5,.25);
\draw[dashed] (.5,-.25) rectangle (1,.25);
\draw[dashed] (-1,-.25)--(-1,-1)--(1,-1)--(1,.25);
\end{scope}
\begin{scope}[scale=1,shift={(6,0)}]
\fill[fill opacity=.2] (-1,-1) rectangle (1,1);
\fill[white] (-.5,-.5) rectangle (.5,.5);
\fill[white] (.5,-.25) rectangle (1,.25);
\draw[dashed] (-.5,-.5) rectangle (.5,.5);
\draw[dashed] (.5,-.25) rectangle (1,.25);
\end{scope}
\end{tikzpicture}
    \caption{The equality $Z_+=Z'=Z_-$.}
    \label{fig: Z_+=Z'=Z_-}
\end{figure}

If we apply \PL isotopy to the
$S^n\setminus S^{k-1}\times \varepsilon D^{n-k+1}$ part of $\mathcal{S}_1$,
and then apply the $\mathcal{S}^{k}_{\pm}$-relation, the result $Z_\pm$ do not change, as the local \PL isotopy will not affect at least one of the $\mathcal{S}^{k}_{\pm}$ relation.
So the value $Z_{\pm}$ is invariant under local isotopy.

Note that the $S^n\setminus S^{k-1}\times \varepsilon D^{n-k+1}$ part of $\mathcal{S}_1$ has boundary $\mathcal{S}^{k-1} \times \varepsilon S^{n-k}$.
Let $\phi: D^k\times S^{n-k} \to S^n\setminus S^{k-1}\times \varepsilon D^{n-k+1}$ be a homeomorphism.
By Prop.\ref{Prop: k-morphism class}, it is enough to check $Z=Z_\pm$, when the $S^n\setminus S^{k-1}\times \varepsilon D^{n-k+1}$ part is  $\phi (\mathcal{D}^k(\beta)\times S^{n-k})$.
Then in the $\mathcal{S}^{k-1}$-relation, only the term with $\beta$ contributes to a non-zero scalar.
The identity $Z=Z_\pm$ follows from the definition of $\mu(\alpha_k)$.

Both $Z_{\pm}$ are independent of the choices of the representatives in the $\mathcal{S}^{k-1}$-relation, as they equal to $Z(\mathcal{S}^n)$.
\end{proof}

The $\mathcal{S}^{k-1}$-relation eliminates a normal microbundle $\mathcal{D}^k\times \varepsilon D^{n-k-1}$.
For a bulk vector $\mathcal{B}=(B,\mathcal{M})$ and a triangulation of $B$, we may eliminate the microbundle of its $k$-simplices for $k=0,1,\cdots,n+1$, and eventually get a scalar. A main result of this section is proving that the scalar is invariant under homeomorphisms on $n+1$ manifolds and the $\mathcal{S}^{k-1}$-relation. So all the $\mathcal{S}^{k-1}$-relations are consistent.

\begin{definition}[Evaluation Algorithm]
Suppose $Z$ is complete finite. 
Suppose $\mathcal{B}=(B,\mathcal{M})$ is a bulk vector.
Take a combinatorial triangulation $\Delta$ of $B$. Take an ambient isotopy of $\mathcal{M}$, so that it is transversal to the triangulation. For every $k$-simplex $\Delta_{k,i}$, we fix a homeomorphism $\phi_{k,i}: \Delta_{k,i} \to D^k$.

For every $k$-simplex $\Delta_{k,i}$ not on the boundary of $B$, take $\mathcal{S}^{k-1}=\phi_{k,i}(\partial \Delta_{k,i} \cap \mathcal{M})$.
We change its normal microbundle by the $\mathcal{S}^{k-1}$-relation in Def.~\ref{Def: S-relation}, for $k=0,1,\cdots, n+1$.
Eventually we obtain a scalar denoted by $Z(\mathcal{B})$, called the partition function of $\mathcal{B}$, as a state sum. 
\end{definition}

As $Z$ is homeomorphic invariant on $S^n$, the partition function is independent of the choice of the homeomorphisms $\phi_{k,i}$.
Now let us prove that the partition function $Z(\mathcal{B})$ is independent of the choice of the triangulation, the ambient isotopy and the choice of the $k$-morphisms.
We first prove it is invariant under subdivisions.

\begin{lemma}\label{Lem: subdivision}
The $Z$ value of a labelled linear $n$-simplex is invariant under the subdivision, for any $1\leq k \leq n$, any point $p$ in the interior of its $k$-sub simplex and any choice of $k$-morphisms.
\end{lemma}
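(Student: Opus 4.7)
The statement is local: the $p$-subdivision modifies $\Delta^n$ only in a neighbourhood of $\Delta^k=[e_0,\ldots,e_k]$, introducing the star of the new vertex $p$, while everything away from that neighbourhood is left untouched. My plan is therefore to reduce to a comparison of the state-sum contributions of this local region before and after subdivision, and to show that they agree by iterated use of the adjacent-move theorem.

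First I would isotope the labelled stratified manifold $\mathcal{M}$ via Theorem~\ref{Thm: transversality}, fixing $\partial\Delta^n$, so that it is transversal to both the original triangulation and to the $p$-subdivision simultaneously; by Prop.~\ref{Prop:phi-inv} this leaves the $Z$ value unchanged. I then fix compatible regular charts $\phi_{j,i}$ on the simplices that the two triangulations have in common, so that all discrepancy between the two evaluations is concentrated in the star of $p$.

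Next I would carry out the evaluation algorithm, applying the $\mathcal{S}^{j-1}$-relations of Def.~\ref{Def: S-relation} in decreasing order $j=n+1,n,\ldots,0$. On the original side only one $\mathcal{S}^{k-1}$-relation at $\Delta^k$ contributes $\sum_{\alpha_k}\frac{Tr(\alpha_k)}{\mu(\alpha_k)}$ against a fixed labelling. On the subdivided side the star of $p$ furnishes $k+1$ new $k$-simplices in the fan through $p$, together with new simplices of every higher dimension up to $n$, and each such new simplex contributes its own $\mathcal{S}^{j-1}$-relation. I would proceed by induction on $n-k$, using Theorem~\ref{Thm: Adjacent Move} to interchange each relation of dimension $j>k$ in the star of $p$ with an adjacent one of dimension $j-1$; the relations of dimension $>k$ then telescope, and the remaining nested $k$-morphism sums collapse via the identity $\mu(\alpha_k)=\sum_\beta Tr(\beta)^2/\mu(\beta)$ of Equ.~\ref{Equ: global dimension}, producing exactly the sum present on the original side. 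The base case $k=n$ is a single application of the $\mathcal{S}^{n-1}$-relation combined with the non-degeneracy of the bilinear form on $\tilde V_{\mathcal{S}^{n-1},\pm}$ already used in the proof of Theorem~\ref{Thm: Adjacent Move}.

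The main obstacle is the bookkeeping in the inductive step: organising the simplices of the star of $p$ so that the pairing of nested $k$-morphism sums with the higher-dimensional contributions is manifestly telescopic. I expect that Prop.~\ref{Prop: non-zero trace}, which guarantees $Tr(\alpha_k)\neq 0$, will be invoked repeatedly so that one can divide by quantum dimensions at intermediate steps. Independence from the choice of representatives of the $k$-morphism annular-equivalence classes is not a separate issue, as it is already built into Theorem~\ref{Thm: Adjacent Move}; and multiplicativity of $Z$ on bulk vectors (Def.~\ref{Def: multiplicative}) ensures that the global factor $\zeta$ is handled consistently on both sides.
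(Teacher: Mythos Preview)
Your approach can probably be made to work, but it is considerably more involved than what the paper actually does, and the very ``bookkeeping obstacle'' you identify is one the paper sidesteps entirely by a single combinatorial observation.

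The paper's proof does not compare two state sums at all. It simply notes that the $p$-subdivision of $[e_0,\ldots,e_n]$ can be written as a sequence of adjacent moves in the sense of Theorem~\ref{Thm: Adjacent Move}: set $E=\{p\}+_c[e_{k+1},\ldots,e_n]$, apply the adjacent move to the pair $(E,\{e_k\}+_c E)$, and then, for each $j$-subsimplex $J$ of $[e_0,\ldots,e_{k-1}]$ with $j=0,1,\ldots,k-1$, apply the adjacent move to $(J+_c E,\ J+_c\{e_k\}+_c E)$. Since each such move preserves $Z$ by Theorem~\ref{Thm: Adjacent Move} (and homeomorphic invariance on $S^n$), the $Z$ value is unchanged and is automatically independent of the choice of $k$-morphism representatives --- the latter is already the second clause of Theorem~\ref{Thm: Adjacent Move}.

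By contrast, you propose to run the full evaluation algorithm on each side and then argue that the extra summations coming from the star of $p$ telescope via Equ.~\ref{Equ: global dimension}. Two concerns: first, you apply the $\mathcal{S}^{j-1}$-relations in decreasing order $j=n+1,\ldots,0$, whereas the evaluation algorithm is defined in increasing order; at this point in the paper independence of order has not been established, so you would have to justify this separately. Second, and more substantively, the identity $\mu(\alpha_k)=\sum_\beta Tr(\beta)^2/\mu(\beta)$ controls a single step from dimension $k$ to $k+1$; to make your ``nested $k$-morphism sums collapse'' across all the new simplices of the star of $p$ simultaneously you would effectively be reproving the content of Theorem~\ref{Thm: Adjacent Move} inside this lemma. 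The paper's observation that the subdivision \emph{is} a composition of adjacent moves lets you invoke that theorem off the shelf instead, making your induction on $n-k$, the appeal to Prop.~\ref{Prop: non-zero trace} for dividing by quantum dimensions, and the separate base case all unnecessary.
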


\begin{proof}
Suppose $p$ is an interior of $[e_0,e_1,\cdots,e_k]$ 
The $p$-subdivision of the linear $n$-simplex $[e_0,e_1,\cdots,e_n]$ is a sequence of adjacent moves.
More precisely, take $E=\{p\}+[e_{k+1},e_1,\cdots,e_n]$. 
We apply the adjacent move to $(E, \{e_k\}+E)$;
For $j=0,1,\cdots,k-1$ and every $j$ sub simplex $J$ of $[e_0,\cdots,e_{k-1}]$,
we apply the adjacent moves to $(J+E, J+\{e_k\}+E)$.
By Theorem~\ref{Thm: Adjacent Move} and the homeomorphic invariance of $Z$ on $S^n$, the $Z$ value of a labelled linear $n$-simplex is invariant under the subdivision and independent of the choice of $k$-morphisms.
\end{proof}

\begin{theorem}\label{Thm: Semisimple invariant}
If $Z$ is complete finite and strong semisimple, then the partition function $Z(\mathcal{B})$ is independent of the choice of the triangulation $\Delta$ and the choice of the $k$-morphisms. Therefore, it is invariant under homeomorphisms of $n+1$ manifolds.
\end{theorem}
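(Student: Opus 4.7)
The plan is to eliminate the two sources of choice in the evaluation algorithm---the combinatorial triangulation $\Delta$ and the local $k$-morphism representatives used in each $\mathcal{S}^{k-1}$-relation---separately, and then deduce homeomorphism invariance as an immediate corollary.

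First I would fix $\Delta$ together with a transversal position of $\mathcal{M}$ (Theorem \ref{Thm: transversality}) and show that the scalar produced by the state sum is independent of the choice of representative $k$-morphisms and of the order in which simplices are processed. The key tool is Theorem \ref{Thm: Adjacent Move}: it already asserts that the numbers $Z_{\pm}$ do not depend on the annular representatives entering a single pair of adjacent moves, and the equality $Z_{+}=Z_{-}$ exchanges the two possible orderings of those moves. I would process simplices in increasing order of dimension, so that when the normal microbundle of a $k$-simplex is collapsed all lower-dimensional microbundles have already been eliminated; at every such step two adjacent microbundles share exactly one boundary face, and the adjacent move supplies the required commutation. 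An induction on the skeleton then propagates this from a single interface to any global ordering and any global choice of representatives.

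Second I would reduce invariance under change of triangulation to invariance under subdivision using Theorem \ref{Thm: triangulation}, which provides a common subdivision of any two combinatorial triangulations. The atomic move is the $p$-subdivision of a single top-dimensional simplex treated in Lemma \ref{Lem: subdivision}: there it is decomposed into an explicit sequence of adjacent moves whose net effect on $Z$ is the identity. The same sequence, performed locally inside the state sum on the link of the subdivided face, realizes the corresponding subdivision of the bulk triangulation. Hence passing from $\Delta$ to any common subdivision leaves $Z(\mathcal{B})$ unchanged, and therefore $Z(\mathcal{B})$ is independent of $\Delta$.

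Third, homeomorphism invariance follows for free. An orientation preserving PL homeomorphism $\phi\colon B\to B'$ carries the combinatorial triangulation $\Delta$ and the transversal stratified manifold $\mathcal{M}$ to a combinatorial triangulation $\phi(\Delta)$ of $B'$ transversal to $\phi(\mathcal{M})$; the two state sums are identified term by term by the homeomorphism invariance of $Z$ on the link $S^n$ of each simplex, and hence evaluate to the same scalar. The same argument applied to an ambient isotopy constructed from Theorem \ref{Thm: transversality}, localized to normal microbundles, eliminates the dependence on the particular transversalization chosen. The main obstacle I anticipate is the first step: one must check that the adjacent move theorem suffices to commute \emph{arbitrary} reorderings of the elimination procedure, not only pairwise swaps. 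A clean bookkeeping device is to fix once and for all a representative in each annular equivalence class for every combinatorial type of boundary data $\mathcal{S}^{k-1}$; any reordering then differs from the canonical ``increasing dimension'' order by a finite sequence of local interchanges, each absorbed by one invocation of Theorem \ref{Thm: Adjacent Move}.
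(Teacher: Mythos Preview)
Your proposal uses the same three ingredients as the paper---Theorem \ref{Thm: Adjacent Move}, Lemma \ref{Lem: subdivision}, and the common-subdivision Theorem \ref{Thm: triangulation}---and the overall architecture (first kill the representative dependence, then the triangulation dependence, then read off homeomorphism invariance) matches. The difference is in how you execute the first step, and there your argument has a real gap.

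Theorem \ref{Thm: Adjacent Move} is a statement about the $S^n$ functional $Z$, not about the bulk state sum on a general $(n+1)$-manifold. To invoke it for independence of the $k$-morphism representative on a simplex $\Delta_{k,i}$ you must first explain why the portion of the state sum affected by that choice can be packaged as a single $S^n$ evaluation. The paper does this by localizing to the star of $\Delta_{k,i}$: since the triangulation is combinatorial, the link is a combinatorial $(n-k)$-sphere and the star a combinatorial ball, so the state sum over the star collapses (via the adjacent moves) to the $Z$-value of its spherical boundary, and Lemma \ref{Lem: subdivision} then gives independence of the representative on $\Delta_{k,i}$. Your ``commute adjacent moves globally'' formulation skips this localization; without it you are applying an $S^n$ theorem to a computation that does not yet live on $S^n$.

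Two further points the paper handles that you pass over. First, invariance under ambient isotopy of $\mathcal{M}$ is argued separately, again via the star of a boundary point: the state sum over $st(p)$ equals the $Z$-value on $\partial st(p)\sim S^n$, which is isotopy invariant. Second, the paper distinguishes the boundary triangulation from the bulk triangulation: it first fixes the boundary triangulation, proves bulk independence, and only then varies the boundary via a common subdivision to which $\mathcal{M}$ has been re-transversalized. Folding both of these into your final paragraph without argument leaves the dependence on the transversalization and on the boundary triangulation unaddressed.
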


\begin{proof}
We first fix the combinatorial triangulation on the boundary. By Theorem~\ref{Thm: transversality}, we ambient isotope the stratified manifold $\mathcal{M}$ on the boundary $\partial B$, so that it is transversal to the triangulation. 
Take a combinatorial triangulation $\Delta$ of the bulk with given boundary triangulation. Take a choice of $k$-morphisms in the $\mathcal{S}^{k-1}$-relation.

Firstly, we prove that the partition function $Z$ is independent of the choice of the $k$-morphisms. For any $k$-simplex $\Delta_{k,i}$, its link is a combinatorial sphere. By Lemma \ref{Lem: subdivision}, we can change choice of $k$-morphism label of $\Delta_{k,i}$ without changing the value $Z$.
This will only change morphisms labelled at the star of $\Delta_{k,i}$, i.e., the $(k+k')$-simplices containing $\Delta_{k,i}$. We can iterate it and change the choice of $k$-morphisms for $k=0,1,\cdots,n$ to any other choice without changing the value $Z$.

Secondly, we prove that $Z$ is independent of the triangulation in the bulk.
By Theorem \ref{Thm: Adjacent Move}, the value $Z$ is invariant under subdivision. 
By Theorem \ref{Thm: triangulation}, two bulk triangulations of $B$ have the same subdivision.
So $Z$ is independent of the bulk triangulation.

Thirdly, we prove that $Z$ is invariant under ambient isotopy near any point $p$ on $\mathcal{M}$.
The star $st(p)$ of $p$ in $B$ consists of $n+1$ simplices of $\Delta$ containing $p$.
As the triangulation is combinatorial, $|\partial st(p)|\sim S^n$.
By Theorem \ref{Thm: Adjacent Move}, the state sum over simplices inside $st(p)$ equals to the value of the stratified manifold with support $\partial st(p)$, which is invariant under ambient isotopy near $p$. So the partition function $Z$ is invariant under ambient isotopy of $p$, for any $p$.

Next, we prove that $Z$ is independent of the triangulation on the boundary. For two combinatorial triangulations on the boundary, they share a subdivision.
By Theorem~\ref{Thm: transversality}, we ambient isotope $\mathcal{M}$, so that it is transversal to the common subdivision. 
By Theorem~\ref{Thm: Adjacent Move}, the value $Z$ is invariant under subdivision. By Theorem~\ref{Thm: triangulation}, $Z$ is independent of the triangulation on the boundary.
So $Z$ is independent of the choice of the triangulation $\Delta$. 

For any homeomorphism $\phi$ of the $n+1$ manifolds, we have 
$Z(\mathcal{B},\Delta,C_{\bullet})=Z(\phi(\mathcal{B}),\phi(\Delta),C_{\bullet})$, because $Z(\Delta_{n+1,j})=Z(\phi(\Delta_{n+1,j}))$ for any $(n+1)$-simplex of $\Delta$.
Therefore, $Z$ is homeomorphic invariant.
\end{proof}

\begin{corollary}\label{Cor: null vector}
Suppose $\mathcal{B}=(B,\mathcal{M})$ is a bulk vector and $\mathcal{M}$ is labelled by a null vector, then 
$Z(\mathcal{B}, \Delta, C_\bullet)=0$.
\end{corollary}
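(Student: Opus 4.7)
The plan is to use Theorem~\ref{Thm: Semisimple invariant} to pick a convenient triangulation that isolates the null vector inside a single boundary $n$-simplex, and then argue that this forces a vanishing factor in the state sum over that simplex. By hypothesis, $\mathcal{M}$ is labelled by a null vector, which means that some regular chart on $\mathcal{M}$ (a local $n$-disc $D_0$) carries a vector $v$ that lies in the null subspace $K_{\mathcal{S},\pm}$ for the induced equatorial stratification $\mathcal{S}$; equivalently, $Z(v \cup w) = 0$ for every $w$ on the complementary half-disc.

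First, I would invoke Theorems~\ref{Thm: triangulation} and~\ref{Thm: transversality} to take a combinatorial triangulation $\Delta$ of $B$, fine enough (after subdivision) that the null-labelled disc $D_0$ sits inside the interior of a single boundary $n$-simplex $\tau$, and such that $\mathcal{M}$ is transversal to $\Delta$ on $\partial B$. Since Theorem~\ref{Thm: Semisimple invariant} says $Z(\mathcal{B},\Delta,C_\bullet)$ is independent of $\Delta$ and $C_\bullet$, this change of triangulation is cost-free.

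Next, let $\sigma$ be the unique $(n+1)$-simplex of $\Delta$ having $\tau$ as a boundary face. The evaluation algorithm applies an $\mathcal{S}^{k-1}$-relation only to $k$-simplices not on $\partial B$; the normal microbundles of such simplices are disjoint from $\tau$, so the null-labelled region $D_0$ is untouched by all moves for $k\leq n$. After these moves, the partition function is a sum (over choices of interior $k$-morphisms, weighted by the $Tr/\mu$ coefficients from Def.~\ref{Def: S-relation}) of products of factors of the form $\zeta Z(\partial \sigma')$, one for each $(n+1)$-simplex $\sigma'$.

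The decisive step is that the factor associated with $\sigma$ vanishes identically. Indeed, $\partial\sigma \sim S^n$, and using a PL homeomorphism together with the homeomorphism-invariance of $Z$ (Def.~\ref{Def: Z HI}), one can realize $\partial\sigma = D^n_- \cup D^n_+$ along an equator on which the induced stratification is some $\mathcal{S}$, with $D^n_-\subset \tau$ chosen to contain $D_0$. The label on $D^n_-$ is then a vector in $V_{\mathcal{S},-}$ containing $v$ as a local factor, hence a null vector, while the label on $D^n_+$ is some $w\in V_{\mathcal{S},+}$ depending on the interior $k$-morphism choices. By the defining property of null vectors, $Z(\partial \sigma) = Z(v'\cup w) = 0$ for every such $w$. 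Therefore every term in the state sum contains a zero factor, yielding $Z(\mathcal{B},\Delta,C_\bullet) = 0$. The only mildly delicate point is arranging the triangulation so that the null-labelled disc lies inside one boundary simplex, but this follows by a standard transversality and subdivision argument using the results already established, so I do not expect any real obstacle.
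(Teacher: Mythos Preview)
Your proposal is correct and follows essentially the same approach as the paper's proof: use Theorem~\ref{Thm: Semisimple invariant} to arrange that the null-labelled disc lies on a single boundary face of one $(n+1)$-simplex, and then observe that the corresponding factor $Z(\partial\sigma)$ in the state sum vanishes for every choice of interior labels. The paper's proof is a terse three-sentence version of exactly this argument; your write-up simply unpacks the reasoning (the ideal property of $K_Z$, the fact that boundary simplices are untouched by the interior moves, and the splitting of $\partial\sigma$ into two half-discs) that the paper leaves implicit.
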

\begin{proof}
By Theorem~\ref{Thm: Semisimple invariant},
we assume that the bulk vector is on the face of an $n+1$ simplex $\Delta^{n+1,j}$ of $\Delta$. Then the factor $Z(\Delta^{n+1,j})$ is always zero in the state sum of $Z(\mathcal{B}, \Delta, C_\bullet)$. So $Z((\mathcal{B},\mathcal{M}),C_\bullet)=0$.   
\end{proof}

\begin{theorem}
The partition function $Z$ is invariant under the $\mathcal{S}^{k-1}$-relations in Equ.~\ref{Equ: move-0},\ref{Equ: move-k},\ref{Equ: move-n} and \ref{Equ: move-n+1}. 
\end{theorem}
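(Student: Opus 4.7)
The plan is to realize each $\mathcal{S}^{k-1}$-relation as a single step of the state-sum evaluation algorithm for a carefully chosen triangulation, and then invoke the triangulation- and choice-independence established in Theorem~\ref{Thm: Semisimple invariant}. Suppose we wish to apply the $\mathcal{S}^{k-1}$-relation inside a larger bulk vector $\mathcal{B}=(B,\mathcal{M})$ at a local region $R\subset B$ homeomorphic to $D^{n+1}$ whose boundary stratification near the $(k-1)$-stratum has the form $\mathcal{S}^{k-1}\times D^{n-k+1}$. By Theorem~\ref{Thm: transversality} we can combinatorially triangulate $B$ so that the restriction to $R$ contains an interior $k$-simplex $\Delta_{k,i}$ whose normal microbundle is exactly the handle $D^{k}\times\varepsilon D^{n-k+1}$ slated for removal, with link stratification $\mathcal{S}^{k-1}$. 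For $k=0$ this is a vertex inside a top-cell; for $1\leq k\leq n$ it is a transverse $k$-simplex; for $k=n+1$ it is the whole cell itself.

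For $0\leq k\leq n-1$ the step-$k$ iteration of the evaluation algorithm applied at $\Delta_{k,i}$ literally replaces the microbundle by the linear combination $\sum_{\alpha_k}\frac{Tr(\alpha_k)}{\mu(\alpha_k)}\,\mathcal{D}^{k}(\alpha_k)\times\varepsilon S^{n-k}$ appearing in Equ.~\ref{Equ: move-k}; the subsequent steps then evaluate each summand to $Z(\mathcal{B}_{\alpha_k})$ where $\mathcal{B}_{\alpha_k}$ is the bulk vector obtained from $\mathcal{B}$ by the handle removal and relabelling. Since the algorithm's total output is $Z(\mathcal{B})$ and is independent of both the triangulation and the chosen representatives of indecomposable $k$-morphisms by Theorem~\ref{Thm: Semisimple invariant}, we conclude
\begin{align*}
Z(\mathcal{B})=\sum_{\alpha_k}\frac{Tr(\alpha_k)}{\mu(\alpha_k)}\,Z(\mathcal{B}_{\alpha_k}),
\end{align*}
which is precisely the invariance under the $\mathcal{S}^{k-1}$-relation. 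The case $k=0$ is identical except the sum ranges over $0$-morphisms and there is no link boundary.

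The case $k=n$ uses the analogous argument with the sum now running over a basis $\{\alpha_n\}$ of $\tilde{V}_{\mathcal{S}^{n-1},-}$ against its dual basis $\{\alpha_n'\}$: the identity $v=\sum_{\alpha_n}\alpha_n\langle\alpha_n',v\rangle$ holds in the quotient by non-degeneracy of the bilinear form induced by $Z$, and Corollary~\ref{Cor: null vector} ensures the relation descends past the kernel. For $k=n+1$ the statement reduces to the definition of $Z$ on a bulk vector supported on $D^{n+1}$ in Def.~\ref{Def: zeta} combined with the multiplicativity in Def.~\ref{Def: multiplicative}.

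The main obstacle is verifying that the geometric handle-removal in the relation matches the normal microbundle structure used in the evaluation algorithm, once orientations and the $\pm(-1)^n$ convention on $\partial D^{k}\times\varepsilon S^{n-k}$ are tracked, and that the ambient isotopy needed to make $\mathcal{M}$ transversal to the triangulation does not alter $Z$. Both issues are dispatched by the transversal theorem and by the isotopy invariance step inside the proof of Theorem~\ref{Thm: Semisimple invariant}, so the sole remaining content is a careful identification of the local pieces, which proceeds directly from Def.~\ref{Def: S-relation}.
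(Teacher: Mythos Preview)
Your argument has a genuine gap in the cases $0\leq k\leq n-1$. You claim that after the evaluation algorithm applies the $\mathcal{S}^{k-1}$-relation at the chosen $k$-simplex $\Delta_{k,i}$, ``the subsequent steps then evaluate each summand to $Z(\mathcal{B}_{\alpha_k})$.'' But the subsequent steps are prescribed by the remaining simplices of the triangulation $\Delta$ of $B$, and once the handle $D^{k}\times\varepsilon D^{n-k+1}$ is excised the simplices in the star of $\Delta_{k,i}$ no longer sit inside the support of $\mathcal{B}_{\alpha_k}$ at all. Theorem~\ref{Thm: Semisimple invariant} tells you only that the \emph{total} state sum over a triangulation of a fixed bulk vector is well-defined and triangulation-independent; it says nothing about a partial evaluation followed by the remaining steps of an algorithm run on a triangulation no longer adapted to the underlying manifold. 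Notice also that your argument never invokes the specific coefficients $Tr(\alpha_k)/\mu(\alpha_k)$ or the recursive definition of the global dimension $\mu$; the same reasoning would ``prove'' invariance under any linear combination placed on the new boundary, which is a strong sign that the actual content of the relation has not been verified.

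The paper closes this by descending induction on $k$. One fixes a triangulation of the \emph{complement} $B\setminus\mathring{U}$ and extends it separately to triangulations $\Delta$ of $B$ and $\Delta'$ of $B'$; eliminating the common complement reduces both $Z(\mathcal{B})$ and $Z(\mathcal{B}')$ to a local comparison on $D^{k}\times S^{n-k}$ carrying some labelled stratification $\mathcal{N}$. By Corollary~\ref{Cor: null vector} and Proposition~\ref{Prop: k-morphism class} one may take $\mathcal{N}=\mathcal{D}^{k}(\beta)\times S^{n-k}$ for a single indecomposable $k$-morphism $\beta$, whereupon the left side is $\zeta\,Tr(\beta)$ and the right side, computed via the already-established $\mathcal{S}^{k}$-relation (the inductive hypothesis), becomes $\frac{Tr(\beta)}{\mu(\beta)}\sum_{\gamma}\frac{Tr(\gamma)^{2}}{\mu(\gamma)}$ summed over indecomposable $\beta$--$\beta$ bimodules $\gamma$. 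That this equals $\zeta\,Tr(\beta)$ is precisely the defining recursion~\eqref{Equ: global dimension} for $\mu(\beta)$, and it is here that strong semisimplicity is used. This explicit local check is the missing ingredient in your proposal.
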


\begin{proof}
We prove the statement by induction for $k=n+1,n,\cdots,0$.
By the Def.~\ref{Def: multiplicative} and Theorem~\ref{Thm: Semisimple invariant}, $Z$ is invariant under the $\mathcal{S}^{n}$-relation. Assume that $Z$ is invariant under any $\mathcal{S}^{k}$-relation.

Suppose $\mathcal{B}=(B,\mathcal{M})$ is a bulk vector $\phi: U \times D^{n+1}$ is a homeomorphism in a regular chart, so that $\phi(\mathcal{M}|_U)=\mathcal{S}^{k-1}\times D^{n+1-k}$.
Applying the $\mathcal{S}^{k-1}$-relation, we obtain a bulk vector $\mathcal{B'}=(B',\mathcal{M}')$ which is identical to $B$ outside $U$, and $\mathcal{M}|_U$ is changed to 
$$\phi^{-1}(\sum_{\alpha_k}\frac{Tr(\alpha_k)}{\mu(\alpha_k)} (D^{n+1}\setminus D^{k}\times \varepsilon D^{n-k+1}, \mathcal{M}(\alpha_k))).$$

Now let us prove $Z(\mathcal{B})=Z(\mathcal{B}').$
We choose a combinatorial triangulation on $B\setminus \mathring{U}$ and extend it to a triangulation $\Delta$ on $B$ and a triangulation $\Delta'$ on $B'$.
We eliminate $B\setminus \mathring{U}$ according to the common triangulation on $B\setminus \mathring{U}$ by the relations. Then we obtain an $L$-labelled stratified manifold $\mathcal{N}$ with support $\phi^{-1}(D^{k} \times S^{n-k})$.
By Corollary \ref{Cor: null vector} and Prop.~\ref{Prop: k-morphism class}, we replace $\phi(\mathcal{N})$ as a linear sum of $\mathcal{D}^k(\beta)\times S^{n-k}$.
It is enough to check the identity 
\begin{align*}
&Z(D^{k}\times D^{n-k+1}, \partial\mathcal{D}^k \times D^{n-k+1} \cup \mathcal{D}^k(\beta)\times S^{n-k})\\
=&\sum_{\alpha_k}\frac{Tr(\alpha_k)}{\mu(\alpha_k)} 
Z(D^{k}\times (D^{n-k+1} \setminus \varepsilon D^{n+k-1}), \partial\mathcal{D}^k \times (D^{n-k+1} \setminus \varepsilon D^{n+k-1}) \cup \mathcal{M}(\alpha_k) \cup \mathcal{D}^k(\beta)\times S^{n-k}).
\end{align*}
The left hand side is $\zeta Tr(\beta)$.
Applying the $\mathcal{S}^{k}$-relation to the right side,
it is non-zero only when $\alpha_k$ and $\beta$ are have a bimodule. In this case, they are annular equivalent by Prop.~\ref{Prop: Morita and annular}. By Prop.~\ref{Prop: k-morphism class}, we assume that $\alpha_k=\beta$.
Then 
\begin{align}\label{Equ: inner product}
&Z(D^{k}\times (D^{n-k+1} \setminus \varepsilon D^{n+k-1}), \partial\mathcal{D}^k \times (D^{n-k+1} \setminus \varepsilon D^{n+k-1}) \cup \mathcal{M}(\beta) \cup \mathcal{D}^k(\beta)\times S^{n-k}) \nonumber \\
=&\sum_{\gamma} \frac{Tr(\gamma)}{\mu(\gamma)} \times Tr(\gamma)=\mu(\beta), 
\end{align}
summing over representatives of indecomposible $\beta-\beta$ bimodules $\gamma$.
So the right side becomes
$\frac{Tr(\beta)}{\mu(\beta)}\mu(\beta)$, which is equal to the left side $Tr(\beta)$. So $Z(\mathcal{B})=Z(\mathcal{B}').$
We complete the induction.
\end{proof}

Now we are free to use $\mathcal{S}^{k-1}$-relations to evaluated the partition function $Z$. These relations allow us to evaluate $Z$ not only by triangulations, but also by surgery theory. In practice, we would like to choose the $k$-morphisms in $\mathcal{S}^{k-1}$-relation as simple as possible, so that there are fewer terms in the state sum. Moreover, we may fix the choice of the $k$-morphisms as elements in a minimal simplicial representative set $C_{\bullet}$ up to annular equivalence and homeomorphic equivalence.

\begin{definition}[State Sum Evaluation Algorithm for Triangulation]
Suppose $Z$ is complete finite. We fix a minimal simplicial representative set $C_{\bullet}$.Suppose $\mathcal{B}=(B,\mathcal{M})$ is a bulk vector.Take a combinatorial triangulation $\Delta$ of $B$, which is transversal to $\mathcal{M}$ on the boundary $\partial B$. For every $k$-simplex $\Delta_{k,i}$, we fix a homeomorphism $\phi_{k,i}: \Delta{k,i} \to D^k$.

For every $k$-simplex $\Delta_{k,i}$ on the boundary of $B$, we consider the normal microbundle of $\Delta_{k,i} \cap \mathcal{M}$ as a label of the identity of type $\phi_{k,i}(\Delta_{k,i} \cap \mathcal{M})$, and decompose it as a sum of $k$-morphisms $\alpha_k$ and then change them to $k$-morphisms homeomorphic equivalent to elements in $C^k$ by annular actions.  

Then every $k$-simplex $\Delta_{k,i}$ not on the boundary of $B$, $k=0,1,\cdots, n+1$, take $\mathcal{S}^{k-1}=\phi_{k,i}(\partial \Delta_{k,i} \cap \mathcal{M})$.We change its normal microbundle by the $\mathcal{S}^{k-1}$-relation in Def.~\ref{Def: S-relation} and choose the $k$-morphisms homeomorphic equivalent to the ones in $C_k$ for $0\leq k \leq n$. Finally we obtain a scalar denoted by $Z(\mathcal{B},\Delta,C_\bullet)$, called the partition function of $\mathcal{B}$, as a state sum. 
\end{definition}

When $B$ is a closed $n+1$ manifold without boundary and $\Delta$ is a triangulation with $k$-simplices $\Delta_{k,i}$, $1\leq i \leq n_k$.
A $C_{\bullet}$-color triangulation $\alpha_{\bullet}$ is assignment of every $k$-simplex $\sigma_{k,i}$ of the triangulation a label $\phi_{k,i}^{-1}(\alpha_{k,i})$ for $\alpha_{k,i} \in C_k$ and a homeomorphism $\phi_{k,i} : \Delta_{k,i} \to D^k$, 
such that the link boundary of $\phi_{k,i}^{-1} (\alpha_{k,i})$ are given by $\phi_{k-1,i'}^{-1}\alpha_{k-1,i'}$ for all faces. (Otherwise the value is zero.)
Then for every $\Delta_{n+1,j}$, its faces are labelled by vectors in $C_k$ or their dual according to the orientation, which form an $L$ labelled stratified manifold with support $\partial \Delta_{n+1,j}$. We denote its value by $F(\Delta_{n+1,j})(\alpha_{\bullet})$.
Then the partition function $Z(\mathcal{B}=Z(\mathcal{B},\Delta,C_\bullet)$ is the state sum over $C_{\bullet}$-color triangulations according to the decomposition of the $\mathcal{S}^{k-1}$-relation in Def.~\ref{Def: S-relation}. It has the following form:

\begin{align}
Z(\mathcal{B},\Delta,C_\bullet)&=\sum_{\alpha_{\bullet}} \prod_{k=0}^{n}\prod_{i=1}^{n_k}  \frac{Tr(\alpha_{k,i})}{\mu(\alpha_{k,i})} \prod_{j=1}^{n+1_{k}}F(\Delta_{n+1,j})(\alpha_{\bullet}),
\end{align}
where $Tr(\Delta_{n,i})=\mu(\Delta_{n,i})=1$.

\subsection{TQFT with space-time boundary}
Now let us extend the partition function in Theorem~\ref{Thm: Semisimple invariant} to a TQFT with space-times boundary.
We consider the previous $(B,\mathcal{M})$ as a bulk vector without time boundary.
We study bulk vectors with time boundary, similar to Section~\ref{Sec: Hyper-Sphere Functions}.

\begin{definition}
Suppose $F$ is an oriented compact $n$-manifold and $\mathcal{S}$ is a stratified $(n-1)$-manifold with support $\partial F$, we call the pair $\mathcal{F}:=(F,\mathcal{S})$ a time boundary.
\end{definition}

\begin{definition}
A bulk vector with time boundary $\mathcal{F}=(F,\mathcal{S})$ is a pair $\mathcal{B}:=(B,\mathcal{M})$, such that 
\begin{enumerate}
    \item $B$ is an oriented compact $n+1$ manifold;
    \item $\mathcal{M}$ is an oriented compact $L$-labelled stratified $n$-manifold;
    \item $|\mathcal{M}| \cup F=\partial B$;
    \item $|\mathcal{M}| \cap F= \partial |\mathcal{M}|=-\partial F$;
    \item $\mathcal{M}$ and $F$ intersect transversely. 
\end{enumerate}
\end{definition}

\begin{definition}
Suppose $\mathcal{F}=(F,\mathcal{S})$ is a time boundary.
We define $V_{\mathcal{F},\pm}$ to be the vector space spanned by bulk vectors with time boundary $(-1)^{n+1}\pm\mathcal{F}$.
\end{definition}

Suppose bulk vectors $\mathcal{B}_{\pm}=(B_{\pm},\mathcal{M}_{\pm})$ have a common boundary $\mathcal{F}$ with opposite orientations.
Then we can glue the boundary and obtain a bulk vector $\mathcal{B}=\mathcal{B}_+ \cup \mathcal{B}_-$ without boundary.

\begin{definition}
The partition function $Z$ defines a bi-linear form on $V_{\mathcal{F},-}\times V_{\mathcal{F},+}$,
$$Z(\mathcal{B}_- \times \mathcal{B}_+):=Z(\mathcal{B}_-\cup \mathcal{B}_+), ~\forall~ \mathcal{B}_{\pm} \in V_{\mathcal{F},\pm}.$$
\end{definition}

\begin{definition}
A vector $v \in  V_{\mathcal{F},\pm}$ is called a null vector, if $$Z(v \cup w)=0, \forall w \in V_{\mathcal{F},\mp}.$$
The subspace of all null vectors are denoted by $K_{\mathcal{F},\pm}$.
Their quotient spaces are denoted by $\tilde{V}_{\mathcal{F},\pm}:=V_{\mathcal{F},\pm}/K_{\mathcal{F},\pm}$.
\end{definition}

Suppose $\mathcal{B}=(B,\mathcal{M})$ is a bulk vector with time boundary $\mathcal{F}$.
By Corollary~\ref{Cor: null vector}, if $\mathcal{M}$ is labelled by a null vector then $\mathcal{B}$ is a null vector.
So for vectors in $\tilde{V}_{\mathcal{F},\pm}$ we can apply all relations in the kernel $K_Z$ to the local $n$-disc of $\mathcal{M}$.

By Theorem~\ref{Thm: Semisimple invariant}, any homeomorphism of $(n+1)$-manifolds fixing the boundary $\mathcal{F}$ induces the identity map on $\tilde{V}_{\mathcal{F},\pm}$.

\begin{theorem}\label{Thm: finite dimension}
When the $S^n$ functional $Z$ is complete finite and strong semisimple,
the vector space $\tilde{V}_{\mathcal{F},\pm}$ is finite dimensional.
If $Z$ is reflection positive, then for any $\zeta>0$, $Z$ induces a positive definite inner product on $\tilde{V}_{\mathcal{F},\pm}$, so the vector space is a Hilbert space. 
\end{theorem}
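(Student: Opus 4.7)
The plan is to show the bilinear form $Z$ on $V_{\mathcal{F},-}\times V_{\mathcal{F},+}$ has finite rank, which immediately gives $\dim\tilde{V}_{\mathcal{F},\pm}<\infty$, and then to upgrade the non-degenerate pairing to a positive definite inner product under reflection positivity.

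First I would fix a combinatorial triangulation $\Delta_F$ of $F$ that is transversal to the stratification $\mathcal{S}$ on $\partial F$, which exists by Theorems~\ref{Thm: triangulation} and~\ref{Thm: transversality}. Given two bulk vectors $\mathcal{B}_{\pm}=(B_{\pm},\mathcal{M}_{\pm})\in V_{\mathcal{F},\pm}$, I extend $\Delta_F$ to combinatorial triangulations of $B_{\pm}$ and make $\mathcal{M}_{\pm}$ transversal to them by an ambient isotopy fixing the boundary. After gluing, $\mathcal{B}_-\cup\mathcal{B}_+$ is a closed bulk vector in which $\Delta_F$ is an interior $n$-subcomplex, so Theorem~\ref{Thm: Semisimple invariant} applies and the partition function is independent of the extensions and of the choices made in the $\mathcal{S}^{k-1}$-relations.

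The crucial step is to run the state-sum evaluation so that interior $k$-simplices lying strictly inside $B_+$ or $B_-$ are processed first, leaving labellings on $\Delta_F$ as the only remaining summation variables. Because the $\mathcal{S}^{k-1}$-relation of Def.~\ref{Def: S-relation} expands along a fixed set of representative $k$-morphisms, complete finiteness supplies only finitely many label types at each $k$-simplex of $\Delta_F$ for $k<n$, while $n$-finiteness makes the label space on each top-dimensional simplex finite-dimensional. With finitely many simplices in $\Delta_F$ and finitely many labels each, the state sum factorises as
\[
Z(\mathcal{B}_-\cup\mathcal{B}_+)\;=\;\sum_{\alpha}R_\alpha(\mathcal{B}_-)\,R_\alpha(\mathcal{B}_+),
\]
with $\alpha$ running over a finite set of admissible boundary labellings and $R_\alpha(\mathcal{B}_{\pm})$ the half-state-sum collected on the corresponding side. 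This exhibits $Z$ on $V_{\mathcal{F},-}\times V_{\mathcal{F},+}$ as a sum of finitely many rank-one forms, so the rank is finite and $\tilde{V}_{\mathcal{F},\pm}$ is finite-dimensional.

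For the Hilbert-space assertion under reflection positivity with $\zeta>0$, I would extend the reflection $\theta$ to a conjugate-linear involution $V_{\mathcal{F},\pm}\to V_{\mathcal{F},\mp}$ that time-reverses the bulk vector, and show the Hermitian form $\langle\mathcal{B},\mathcal{B}'\rangle:=Z(\theta(\mathcal{B})\cup\mathcal{B}')$ is positive semi-definite with radical exactly $K_{\mathcal{F},\pm}$. Choosing the representatives in $\Delta_F$ so the labelling set is stable under $\theta$, the diagonal specialisation collapses to $Z(\theta(\mathcal{B})\cup\mathcal{B})=\sum_\alpha c_\alpha|R_\alpha(\mathcal{B})|^2$ with $c_\alpha>0$ products of quantum-to-global-dimension ratios, and the finite-dimensional quotient therefore inherits a genuine inner product. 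The main obstacle is the bookkeeping in this last step: one must verify that the product of weight factors $Tr(\alpha_k)/\mu(\alpha_k)$ around a $\theta$-symmetric labelling is real and positive, and that the reflection intertwines the two half-state-sums $R_\alpha(\mathcal{B}_\pm)$ into complex conjugates. Both facts should reduce, via the invariance statement of Theorem~\ref{Thm: Semisimple invariant} and the adjacent-move identities of Theorem~\ref{Thm: Adjacent Move}, to the trace positivity on $A(\mathcal{D}^k\times D^{n-k})$ already established from reflection positivity.
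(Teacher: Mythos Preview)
Your finite-dimensionality argument is correct in spirit and close to the paper's, but the mechanism is organised differently. You glue $\mathcal{B}_-\cup\mathcal{B}_+$ first and then apply the full $\mathcal{S}^{k-1}$-relations to the simplices of $\Delta_F$ sitting inside the closed bulk, obtaining a finite-rank factorisation of the pairing. The paper instead works on $\tilde{V}_{\mathcal{F},-}$ alone: for each $k$-simplex of $\Delta_F$ it takes the \emph{square root} of the $\mathcal{S}^{k-1}$-relation, namely the restriction $\iota_{\mathcal{S}^{k-1},-}$ of $(D^{n+1}\setminus D^k\times\varepsilon D^{n-k+1},\mathcal{M}(\alpha_k))$ to $D^{n+1}_-$, weighted by $\sqrt{Tr(\alpha_k)/\mu(\alpha_k)}$. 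Since $\iota_{\mathcal{S}^{k-1},+}\iota_{\mathcal{S}^{k-1},-}$ equals the identity on the boundary, each $\iota_{\mathcal{S}^{k-1},-}$ is an isometric embedding; composing over all simplices of $\Delta_F$ embeds $\tilde{V}_{\mathcal{F},-}$ isometrically into a finite direct sum of tensor products of one-dimensional spaces. (For simplices on $\partial F$ the paper also inserts a preliminary isometry switching each label to a representative in $C_k$ via a bimodule.)

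The practical difference shows up exactly where you flag the ``main obstacle''. In the paper's route, positivity is automatic: an isometric embedding into a Hilbert space makes the source a Hilbert space, with no need to match $R_\alpha(\theta(\mathcal{B}))$ to $\overline{R_\alpha(\mathcal{B})}$ or to check that the dual-basis labels in the $\mathcal{S}^{n-1}$-relation pair correctly under $\theta$. Your diagonal-collapse argument can be made to work (choose orthonormal bases for the $n$-simplex labels and self-dual representatives elsewhere), but it is genuinely more bookkeeping than the isometry argument, and the reduction you sketch to trace positivity on $A(\mathcal{D}^k\times D^{n-k})$ via Theorems~\ref{Thm: Semisimple invariant} and~\ref{Thm: Adjacent Move} is not yet a proof. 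Over a general field where square roots are unavailable, the paper notes one can still run the same embedding argument by splitting each weight as $1$ times itself, recovering finite dimensionality without the Hilbert structure; your factorisation gives the same conclusion there.
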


\begin{proof}
Suppose the time boundary is $\mathcal{F}=(F,\mathcal{S})$.
Take a triangulation $\Delta$ of $F$ transversal to $\mathcal{S}$.

When $Z$ is reflection positive, both $Tr(\alpha)$ and $\mu(\alpha)$ are positive for any $k$-morphism $\alpha$.
We replace the normal microbundle of 0-simplices of $\Delta|_\mathcal{S}$ as a linear sum of 0-morphisms in $C_0$. For $k=1,\cdots,n$, for every $k$-simplex $\sigma$ of $\Delta|\mathcal{S}$, we decompose normal microbundle its as a linear sum of $k$-morphisms $\alpha$ whose link boundary are labelled by $(k-1)$-morphisms in $C_{k-1}$.
By the completeness of $C_k$, $C_k$ has a $k$-morphism $\alpha'$ in $C_k$ annular equivalent to $\alpha$. Take a $\alpha-\alpha'$ bimodule $\beta$.
By Equ.~\ref{Equ: annular action}, we obtain annular action $T_{\beta}(\alpha)=\frac{Tr(\beta)}{Tr(\alpha')}\alpha'$.
Define $T_{\pm}$ to be the restriction of $T_{\beta}(\alpha)$ on $D^{n-1}\times D^1_{\pm}$. Then $T_-=(T_+)^*$ and $(T_-)^*T_-=\frac{Tr(\beta)}{Tr(\alpha')}\alpha'$. So $\sqrt{\frac{Tr(\alpha')}{Tr(\beta)}} T_-$ is an embedding map. Applying this embedding map, we can substitute the label $\alpha$ of the normal microbundle of $k$-simplex $\sigma$ to the label $\alpha'$. Iterate the substitution for all $k$-simplices for $k=1,2,\cdots, n-1$. It is enough to prove the finiteness and positivity when all $k$-simplex of $\sigma|_{\mathcal{S}}$ are labelled by $k$-morphisms in $C_k$.

\begin{figure}
    \centering
    \begin{tikzpicture}
\draw (-2,-2) rectangle (2,2);
\draw[blue] (-1,-1) rectangle (1,1);
\node at (-1.8,.2) {$\alpha$};
\node at (-.6,.2) {$\alpha'$};
\node at (-1.2,.2) {$\beta$};
\draw[dashed] (-2.2,0) -- (2.2,0);
\node at (-.2,-.2) {$\sigma$};
\end{tikzpicture}
    \caption{Switch $k$-morphism near the $k$-simplex of $\Delta|\mathcal{S}$.}
    \label{fig:switch k-morphisms on the boundary}
\end{figure}
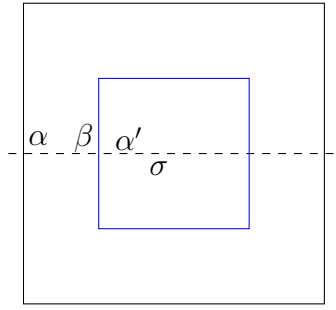

Now we deal with the simplices not on the boundary of $F$. The idea is applying the ``square root'' of the $\mathcal{S}^{k-1}$-relation to every $k$-simplex for $k=0,1,\cdots n$.

By Equ.~\ref{Equ: move-0},
\begin{align}
(D^{n+1},\emptyset)&= \sum_{\alpha_0}\frac{Tr(\alpha_0)}{\mu(\alpha_0)} (D^{n+1}\setminus \varepsilon D^{n+1}, \mathcal{M}(\alpha_0)).
\end{align}
We define the bulk vector $\iota(\alpha_0,\pm)$ as the restriction of 
$(D^{n+1}\setminus \varepsilon D^{n+1}, \mathcal{M}(\alpha_0))$ on $D^{n+1}_{\pm}$.
Take 
$$\iota_\pm= \sum_{\alpha_0}\sqrt{\frac{Tr(\alpha_0)}{\mu(\alpha_0)}} \iota(\alpha_0,\pm).$$
Then $\iota_+=\iota_-^*$ and the multiplication $\iota_+\iota_-=(D^{n+1},\emptyset)$ is the identity on the boundary $(D^{n},\emptyset)$.

Suppose $\sigma_0$ is a 0-simplex of $\Delta$ in $F$, which is not on the boundary of $F$. Suppose $(U,\phi,\emptyset)$ is a regular chart of $\sigma_0$. 
Applying the embedding map $\iota_-$ to the normal microbundle $U$, we can substitute the normal microbundle $U$ by $\phi^{-1}(D^n\setminus\varepsilon D^n)$, with a label $\alpha_0$ on $\partial \varepsilon D^n$, summing over all $\alpha_0 \in C_0$. 
The vector space $\tilde{V}_{\mathcal{F},-}$ is embedded in a direct sum of vector spaces with times boundary $(F', \mathcal{S}')$, where $F'$ is removing the normal microbundles $\phi^{-1}(\varepsilon D^n)$ of these 0-simplices from $F$; and $\mathcal{S}'$ is adding $\phi^{-1}(\partial \varepsilon D^n)$ of these 0-simplices to $\mathcal{S}$. Moreover, every $\phi^{-1}(\partial \varepsilon D^n)$ is labelled by a $0$-morphism in $C_0$ up to homeomorphic equivalence. 

For $k=1,2,\cdots, n$, by the $\mathcal{S}^{k-1}$-relations in Equ.~\ref{Equ: move-k} and \ref{Equ: move-n},
\begin{align}
(D^{n+1},\mathcal{S}^{k-1} \times D^{n+1-k})&= \sum_{\alpha_k}\frac{Tr(\alpha_k)}{\mu(\alpha_k)} (D^{n+1}\setminus (D^{k-1}\times \varepsilon D^{n-k+1}, \mathcal{M}(\alpha_k)).   
\end{align}
We define the bulk vector $\iota(\alpha_k,\pm)$ as the restriction of 
$(D^{n+1}\setminus D^{k}\times \varepsilon D^{n-k+1}, \mathcal{M}(\alpha_k))$ on $D^{n+1}_{\pm}$.
Take 
$$\iota_{\mathcal{S}^{k-1},\pm}= \sum_{\alpha_k}\sqrt{\frac{Tr(\alpha_k)}{\mu(\alpha_k)}} \iota(\alpha_k,\pm).$$
Then $\iota_{\mathcal{S}^{k-1},+}=\iota_{\mathcal{S}^{k-1},-}^*$ and the multiplication $\iota_{\mathcal{S}^{k-1},+}\iota_{\mathcal{S}^{k-1},-}=(D^{n+1},\mathcal{S}^{k-1} \times D^{n+1-k})$ is the identity on the boundary $(D^{n},\mathcal{S}^{k-1} \times D^{n-k})$.

For $k=1,2,\cdots, n$,
suppose $\sigma_k$ is a $k$-simplex of $\Delta$ in $F$, which is not on the boundary of $F$, the $k-1$-simplicies of $\partial \sigma_k$ are labelled by $(k-1)$-morphisms in $C_{k-1}$ up to homeomorphic equivalence.
Suppose $\phi_k$ is a homeomorphism from the normal microbundle $U_k$ of the $k$-simplex in $F$ to $D^k\times D^{n-k}$, such that $\phi_k(U_k|_{\partial \sigma_k})=\mathcal{S}^{k-1}\times O_{n-k}$.
Applying the embedding map $\iota_{\mathcal{S}^{k-1},-}$ to the normal microbundle $U_k$, we can substitute $U_k$ by $\phi_k^{-1}(D^n \setminus D^{k-1} \times \varepsilon D^{n-k})$, and substitute $\phi_k^{-1}(\mathcal{S}^{k-1} \times \varepsilon D^{n-k})$ by $\phi_k^{-1}(\mathcal{D}^{k} \times \varepsilon S^{n-k-1})$, labelled by $\mathcal{D}^{k}(\alpha_k) \times \varepsilon S^{n-k-1}$.

Note that each embedding map $\iota_{\mathcal{S}^{k-1},-}$ eliminates a normal microbundle of the $k$-simplex in $F$. After applying the embedding maps to all simplices of $\Delta$, $F$ becomes the empty set.
Therefore, the vector space $\tilde{V}_{\mathcal{F},-}$ is embedded in a direct sum of vector spaces of the tensor product of the 1-dimensional vector spaces spanned by vectors in $C_n$. Every $C_k$ is a finite set and the triangulation has finitely many simplices, so the vector space $\tilde{V}_{\mathcal{F},-}$ is finite dimensional. 
Every 1-dimensional vector space is a Hilbert space, so a direct sum of their tensor products is still a Hilbert space. Moreover, $\tilde{V}_{\mathcal{F},-}$ is embedded as a sub Hilbert space, so it is a Hilbert space.

For a general field $\mathbb{K}$, we cannot decompose the scalar as the product of its square roots, as the square root may not be in the field. Instead, we decompose the scalar as the product of 1 and itself. The rest part of the proof of the finite dimensional condition is similar.
\end{proof}

Atiyah's TQFT \cite{Ati88} is a symmetric monoidal functor from the cobordism category to the category of vector spaces.
Every closed $n$-manifold $F$ is assigned a finite dimensional vector space $V_F$ on the field $\mathbb{K}$, and $V_{\emptyset}\cong \mathbb{K}$.
In particular, the map from closed $n+1$-manifolds to  $\mathbb{K}$ is called the partition function, which is homeomorphic invariant.
Every $n+1$-cobordism is assigned to a linear transformation on the vector spaces.  
The disjoint union and the gluing map correspond to the tensor and contraction respectively. A general operation is a composition of the two elements operations.
The TQFT is called unitary, if the partition function is reflection positive, namely, it induces a positive definite inner product on the vector space. 

Now we consider the cobordism as an $n+1$-manifold with time boundary. 
Let us introduce the TQFT with space-time boundary, so that the cobordism is generalized to an $(n+1)$-manifold with space-time boundary, and the space boundary is a stratified $n$-manifolds.  
We keep in mind that we are working on the category of PL manifolds and regular stratified manifolds of a given local shape.

\begin{definition}
A space-time cobordism with time boundary $\mathcal{F}=(F,\mathcal{S})$ is a pair $\mathcal{B}:=(B,\mathcal{M})$, such that 
\begin{enumerate}
    \item $B$ is an oriented compact $n+1$ manifold;
    \item $\mathcal{M}$ is an oriented compact stratified $n$-manifold;
    \item $|\mathcal{M}| \cup F=\partial B$;
    \item $|\mathcal{M}| \cap F= \partial |\mathcal{M}|=-\partial F$;
    \item $\mathcal{M}$ and $F$ intersect transversely. 
\end{enumerate}
\end{definition}

\begin{definition}
Given a local shape set $LS_{\bullet}$, an $(n+1)$-TQFT with space-time boundary is a symmetric monoidal functor from the category of space-time cobordisms to the category of vector spaces over a field $\mathbb{K}$. 
More precisely, every time boundary $\mathcal{F}=(F,\mathcal{S})$ is assigned a finite dimensional vector space $V_\mathcal{F}$, and $V_\emptyset \cong \mathbb{K}$.
Every space-time $n+1$-cobordism is assigned to a linear transformation on the vector spaces.  The disjoint union and the gluing map correspond to the tensor and contraction respectively.
The map $Z:V_\emptyset \to \mathbb{K}$ is called the partition function. 
The TQFT is called unitary, if $Z$ is reflection positive.
\end{definition}

When the space boundary is the empty set, we return to Atiyah's TQFT \cite{Ati88}.

\begin{theorem}\label{Thm: space-time TQFT}
When the $S^n$ functional $Z$ is complete finite and strong semisimple, we obtain an $n+1$ TQFT with space-time boundary. If $Z$ is reflection positive and $\zeta>0$, then the TQFT is unitary.
\end{theorem}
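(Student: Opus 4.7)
The plan is to assemble the TQFT functor directly from the partition function and the finite-dimensional quotient spaces already constructed. Set $V_{\mathcal{F}} := \tilde{V}_{\mathcal{F},-}$ for every time boundary $\mathcal{F}=(F,\mathcal{S})$; Theorem~\ref{Thm: finite dimension} yields finite dimensionality, and the case $F=\emptyset$ gives $V_{\emptyset}\cong \mathbb{K}$. For a space-time cobordism $\mathcal{B}=(B,\mathcal{M})$ whose time boundary decomposes as $(-\mathcal{F}_-)\sqcup \mathcal{F}_+$, define
\[
Z(\mathcal{B}): V_{\mathcal{F}_-}\longrightarrow V_{\mathcal{F}_+},\qquad [\mathcal{B}_v]\longmapsto [\mathcal{B}_v\cup_{\mathcal{F}_-}\mathcal{B}].
\]
Well-definedness on equivalence classes follows by pairing with an arbitrary $w\in V_{\mathcal{F}_+,+}$: if $\mathcal{B}_v$ is null then $Z(\mathcal{B}_v\cup \mathcal{B}\cup \mathcal{B}_w)=0$, since $\mathcal{B}\cup \mathcal{B}_w\in V_{\mathcal{F}_-,+}$. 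Linearity is immediate from the linearity of the gluing operation.

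Functoriality reduces to geometric identities for bulk gluing. Composition $Z(\mathcal{B}_2)\,Z(\mathcal{B}_1)=Z(\mathcal{B}_2\cup_{\mathcal{F}_1}\mathcal{B}_1)$ is the associativity of concatenation applied at the level of representatives. For the identity axiom, the cylinder $\mathcal{F}\times[0,1]$ glued along one face to any $\mathcal{B}_v$ is PL-homeomorphic rel the other face to $\mathcal{B}_v$, so by the homeomorphism invariance established in Theorem~\ref{Thm: Semisimple invariant} the two classes agree in $V_{\mathcal{F}}$. The monoidal structure $V_{\mathcal{F}_1\sqcup \mathcal{F}_2}\cong V_{\mathcal{F}_1}\otimes V_{\mathcal{F}_2}$ arises from disjoint union of bulk vectors: by multiplicativity (Definition~\ref{Def: multiplicative}) the bilinear pairing on $V_{\mathcal{F}_1\sqcup\mathcal{F}_2,-}\times V_{\mathcal{F}_1\sqcup\mathcal{F}_2,+}$ factors as the product of the pairings on each factor, and non-degeneracy of the individual pairings together with finite dimensionality forces the induced map to be an isomorphism; the symmetry isomorphism is implemented by the swap cobordism, and disjoint union of cobordisms induces the tensor product of linear maps.

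For the unitary statement, Theorem~\ref{Thm: finite dimension} already provides that $V_{\mathcal{F}}$ is a Hilbert space when $Z$ is reflection positive and $\zeta>0$, with inner product $\langle [\mathcal{B}_v],[\mathcal{B}_w]\rangle := Z(\mathcal{B}_v\cup \theta(\mathcal{B}_w))$. The adjoint identity $Z(\mathcal{B}^{*})=Z(\mathcal{B})^{*}$, where $\mathcal{B}^{*}$ is the orientation-reversed time reflection of $\mathcal{B}$, then follows from Hermiticity (Definition~\ref{Def: Z Hermitian}) applied to the sandwich $\mathcal{B}_v\cup \mathcal{B}\cup \theta(\mathcal{B}_w)$. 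The step I expect to require the most care is the composition compatibility: one must justify that the state sum across a gluing surface really implements a resolution of identity in $V_{\mathcal{F}}$. This ultimately rests on the square-root embeddings $\iota_{\pm}$ built in the proof of Theorem~\ref{Thm: finite dimension}, whose product $\iota_+\iota_-$ is the cylinder identity; inserting the same decomposition across the middle surface unfolds $Z(\mathcal{B}_2\cup\mathcal{B}_1)$ as a sum over a basis of $V_{\mathcal{F}_1}$ dual to itself under the pairing, and thereby identifies it with $Z(\mathcal{B}_2)Z(\mathcal{B}_1)$.
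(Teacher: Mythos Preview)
Your proposal is correct and follows the same route as the paper's proof, which is much terser: it only records that $\tilde{V}_{\mathcal{F},-}$ is finite dimensional (Theorem~\ref{Thm: finite dimension}), that gluing a cobordism sends null vectors to null vectors and is therefore well-defined on the quotient, and that reflection positivity plus $\zeta>0$ yields unitarity via Theorem~\ref{Thm: finite dimension}. Your more explicit checks of functoriality and the monoidal structure are welcome elaborations of what the paper leaves implicit.

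One point deserves correction, however. In your final paragraph you flag composition compatibility as the delicate step and propose to verify it by inserting the square-root embeddings $\iota_{\pm}$ as a resolution of the identity across the gluing surface. This is unnecessary and reflects a conflation with state-sum TQFTs in which the vector space is \emph{defined} combinatorially (as a span of colorings of a fixed triangulation), so that gluing must be reconciled with a sum over boundary states. In the present construction the vector space is the universal quotient $\tilde{V}_{\mathcal{F},-}$ of all bulk vectors with that time boundary, and the cobordism acts literally by gluing. Your earlier sentence already contains the full argument: $Z(\mathcal{B}_2)Z(\mathcal{B}_1)[\mathcal{B}_v]=[(\mathcal{B}_v\cup\mathcal{B}_1)\cup\mathcal{B}_2]=[\mathcal{B}_v\cup(\mathcal{B}_1\cup\mathcal{B}_2)]=Z(\mathcal{B}_2\cup\mathcal{B}_1)[\mathcal{B}_v]$ by associativity of gluing at the level of representatives, and nothing more is needed. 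The $\iota_{\pm}$ maps are used in the paper only to bound the dimension and establish positive-definiteness, not to implement composition.
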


\begin{proof}
By Theorem~\ref{Thm: Semisimple invariant} and \ref{Thm: finite dimension},  
for every time boundary $\mathcal{F}=(F,\mathcal{S})$, the vector space $\tilde{V}_{\mathcal{F},-}$ is finite dimensional.
A space-time cobordism acts on a vector by gluing the boundary. It sends a null vector to a null vector. So it is well-defined on $\tilde{V}_{\mathcal{F},-}$.
Thus we obtain an $n+1$ TQFT with space-time boundary. If $Z$ is reflection positive, then the TQFT is unitary.
If $Z$ is reflection positive, then by Theorem~\ref{Thm: finite dimension}, the TQFT is unitary.
\end{proof}

\begin{definition}\label{Def: boundary vector}
For a time boundary $\mathcal{F}=(F,\mathcal{S})$, we call $\mathcal{B}=(B,\mathcal{M}) \in \tilde{V}_{\mathcal{F},-}$ a boundary vector on $\mathcal{F}$, if $B=F\times[-\varepsilon,0]$ and $|\mathcal{M}|=F\times \{-\varepsilon\} \cup \partial F \times [-\varepsilon,0].$ 
Similarly, we define boundary vectors in $\tilde{V}_{\mathcal{F},+}$.
\end{definition}
We identify $F$ with $F\times \{0\}$. Up to isotopy, we assume that $\mathcal{M}|_{\partial F \times [-\varepsilon,0]}=\mathcal{S}\times [-\varepsilon,0]$.

\begin{notation}
Suppose $\ell$ is an $L$-labelled stratified manifold in the condensation space $F_{\mathcal{S}}(L)$ in \ref{Def: condensation}). We construct a boundary vector $\ell_-$ in $\tilde{V}_{\mathcal{F},-}$ with a label $\ell$ on the space boundary $F\times \{-\varepsilon\}$.    
\end{notation}

\begin{remark}
To compare with the notions in topological orders, one may regard the vector space $\tilde{V}_{\mathcal{F},-}$ as the space of ground states of a Hamiltonian. The boundary vector $\ell_-$ is the projection of $\ell$ onto the space of ground states.  
\end{remark}

\begin{theorem}\label{Thm: boundary vector}
For any time boundary $\mathcal{F}=(F,\mathcal{S})$, the vector space $\tilde{V}_{\mathcal{F},\pm}$ is spanned by boundary vectors.
\end{theorem}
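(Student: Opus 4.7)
The plan is to reduce any bulk vector $(B,\mathcal{M})\in V_{\mathcal{F},-}$ modulo null vectors to a linear combination of boundary vectors by running the state-sum construction of Theorem~\ref{Thm: Semisimple invariant} as a partial evaluation on $B$ that stops at the time boundary $F$, leaving only labeled data on $F$.

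I would first take a combinatorial triangulation $\Delta$ of $B$ transversal to $\mathcal{M}$ with $F$ as a subcomplex, and fix a minimal simplicial representative set $C_\bullet$. For each $k$-simplex of $\Delta$ lying in $B\setminus F$ I would apply the $\mathcal{S}^{k-1}$-relation of Def.~\ref{Def: S-relation} to its normal microbundle (with $\mathcal{S}^{k-1}$ the transversal intersection with $\mathcal{M}$) for $k=0,1,\ldots,n$, and then the $k=n+1$ relation to evaluate each $(n+1)$-simplex of $B$ to a scalar. The bulk of $B$ away from $F$ collapses into a sum, indexed by $C_\bullet$-colorings $\beta_\bullet$ of the simplices of $\Delta|_F$, of configurations whose bulk is only a thin collar $F\times[-\varepsilon,0]$, whose inner face $F\times\{-\varepsilon\}$ carries the labeled stratified manifold $\ell(\beta_\bullet)$ dictated by the coloring, and whose side face $\partial F\times[-\varepsilon,0]$ carries the product stratification $\mathcal{S}\times[-\varepsilon,0]$; by Def.~\ref{Def: boundary vector}, each such summand is a boundary vector.

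Letting $c(\beta_\bullet)\in\mathbb{K}$ denote the scalar weight accumulated from the $\mathrm{Tr}/\mu$ factors on simplices of $\Delta$ in $B\setminus F$ together with the $F(\Delta_{n+1,j})$ evaluations of the $(n+1)$-simplices, I would set $\ell_- := \sum_{\beta_\bullet} c(\beta_\bullet)\,\ell(\beta_\bullet)_-$ and verify $(B,\mathcal{M})\equiv \ell_-$ in $\tilde V_{\mathcal{F},-}$ by checking $Z((B,\mathcal{M})\cup\mathcal{C}) = Z(\ell_-\cup\mathcal{C})$ for every $\mathcal{C}=(B',\mathcal{M}')\in V_{\mathcal{F},+}$: one extends $\Delta$ to a triangulation of the closed manifold $B\cup_F B'$ and recognizes the computation of $Z(\ell_-\cup\mathcal{C})$ as the same partial state sum on the $B$-side, with the same coloring of the shared $F$-simplices appearing on both sides. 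Nondegeneracy of $Z$ on $\tilde V_{\mathcal{F},-}\times\tilde V_{\mathcal{F},+}$ then forces $(B,\mathcal{M})=\ell_-$ in $\tilde V_{\mathcal{F},-}$; the case of $\tilde V_{\mathcal{F},+}$ is symmetric.

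The main obstacle is arranging the partial state sum so that the residual bulk is exactly the collar $F\times[-\varepsilon,0]$: the relations of Def.~\ref{Def: S-relation} apply to interior microbundles, whereas the simplices of $\Delta$ on $|\mathcal{M}|\subset\partial B$ must be processed through the same local moves. A convenient workaround is to thicken $\mathcal{M}$ into a small cylinder $|\mathcal{M}|\times[-\delta,0]\subset B$ before triangulating, which does not change the class in $\tilde V_{\mathcal{F},\pm}$ by isotopy invariance and makes every simplex outside the $F$-collar into a genuinely interior one, so that all the skein moves apply directly. The remaining delicacy is bookkeeping the $\mathrm{Tr}/\mu$ weights on the shared $F$-simplices and matching them with the normalizations arising from the collar on the boundary-vector side, which is handled exactly as in the consistency proof of Theorem~\ref{Thm: Semisimple invariant}.
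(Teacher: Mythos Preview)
Your proposal is correct and follows essentially the same strategy as the paper: use the $\mathcal{S}^{k-1}$-relations of Def.~\ref{Def: S-relation} to eat away the bulk of $B$ until only a collar of $F$ remains, and observe that each such relation preserves the class in $\tilde V_{\mathcal{F},-}$ because it preserves $Z$ after gluing with anything in $V_{\mathcal{F},+}$.

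The only noteworthy difference is how the collar is handled. The paper first uses homeomorphism invariance (fixing $\mathcal{F}$) to write $B=F\times[-\varepsilon,0]\cup B_0$ with $B_0\cap(F\times[-\varepsilon,0])=F\times\{-\varepsilon\}$, and then triangulates and evaluates only $B_0$; this sidesteps from the outset the issue you flag as your ``main obstacle,'' and makes the boundary simplices of $F$ irrelevant. Your workaround of thickening $|\mathcal{M}|$ instead achieves the same end but is slightly more roundabout. Likewise, your explicit verification via pairings with arbitrary $\mathcal{C}\in V_{\mathcal{F},+}$ is exactly what the paper's one-line appeal to Theorem~\ref{Thm: Semisimple invariant} encodes, so there is no genuine extra content there.
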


\begin{proof}
As the vector is invariant under homeomorphisms fixing the boundary $\mathcal{F}$, we assume that $B=F\times [-\varepsilon,0] \cup B_0$, $B_0$ is a closed sub $(n+1)$-manifold of $B$ and $B_0\cap F\times [-\varepsilon,0]=F\times \{-\varepsilon\}$.
Take a triangulation $\Delta$ of the $B_0$.
Apply the $\mathcal{S}^{k-1}$-relation to every $k$-simplex of $\Delta$ summing over $k$-morphisms in $C_k$, for $0\leq k\leq n+1$. Then the bulk vector $\mathcal{B}=(B,\mathcal{M})$ becomes a boundary vector $\mathcal{B}'=(F\times [-\varepsilon,0],\mathcal{M}')$, such that $|\mathcal{M}'|=F\times \{-\varepsilon\} \cup \partial F \times [-\varepsilon,0].$
By Theorem~\ref{Thm: Semisimple invariant}, $\mathcal{B}=\mathcal{B}'$ in $\tilde{V}_{\mathcal{F},-}$.
The proof for $\tilde{V}_{\mathcal{F},+}$ is similar.
\end{proof}

By Prop.~\ref{Prop: k-morphism class},
the vector space $\widetilde{D^k\times S^{n-k}}_{\mathcal{S}^{k-1} \times S^{n-k}}(L)$ is spanned by the boundary vectors $\mathcal{D}^k(\beta)\times S^{n-k}$, for $k$-morphisms $\beta$. Moreover, if $k$-morphism $\beta_0$ and $\beta_1$ are annular equivalent, then $\mathcal{D}^k(\beta_0)\times S^{n-k}=\lambda \mathcal{D}^k(\beta_1)\times S^{n-k}$, in $\widetilde{D^k\times S^{n-k}}_{\mathcal{S}^{k-1} \times S^{n-k}}(L)$, for some $\lambda\in \mathbb{K}$.
Now let us upgrade $\mathcal{D}^k(\beta)\times S^{n-k}$ to a bulk vector. 

\begin{notation}
Take $F=D^k\times S^{n-k}$ and $\mathcal{S}=\mathcal{S}^{k-1}\times S^{n-k}.$ 
We denote $\mathcal{D}^k(\beta)_-\times S^{n-k}$ to be the bulk vector in $\tilde{V}_{\mathcal{F},-}$ with time boundary 
$\mathcal{F}=(F,\mathcal{S})$, which has support $D^k\times [-1,0] \times S^{n-k} $ and $D^k\times \{-1\} \times S^{n-k}$ is the labelled stratified manifold $\mathcal{D}^k(\beta)\times S^{n-k}$.
We denote $\mathcal{D}^k(\beta)_+\times S^{n-k}$ to be the bulk vector in $\tilde{V}_{\mathcal{F},+}$ with time boundary 
$\mathcal{F}=(F,\mathcal{S})$, which has support $D^k\times [0,1] \times S^{n-k} $ and $D^k\times \{1\} \times S^{n-k}$ is the labelled stratified manifold $\rho_{k,k+1}(\mathcal{D}^k(\beta)\times S^{n-k})$.
\end{notation}

\begin{theorem}\label{Thm: k-morphism basis}
Take the time boundary $\mathcal{F}=(F,\mathcal{S})$, $F=D^k\times S^{n-k}$, $\mathcal{S}=\mathcal{S}^{k-1}\times S^{n-k}.$ Take a set of representatives $\{\beta_i: i\in I\}$ of $k$-morphisms with link boundary $\mathcal{S}$ of annular equivalence classes.
Then the boundary vectors $\{\mathcal{D}^k_\pm(\beta_i)\times S^{n-k}:i\in I\}$ form a basis of $\tilde{V}_{\mathcal{F},\pm}$ respectively. 
Moreover, their inner product is
\begin{align}
Z(\mathcal{D}^k_-(\beta_i)\times S^{n-k}) \cup \mathcal{D}^k_+(\beta_j)\times S^{n-k}) )&=\delta_{i,j} \mu(\beta_i).   
\end{align}
\end{theorem}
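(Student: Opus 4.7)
The plan has three parts, mirroring the three claims: spanning, non-degenerate pairing, and linear independence.

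First, I will establish spanning. By Theorem \ref{Thm: boundary vector}, every class in $\tilde V_{\mathcal F,-}$ has a representative of the form $(F\times[-\varepsilon,0],\mathcal M')$ whose only non-cylindrical data is an $L$-labelled stratified manifold $\ell$ sitting on the slice $F\times\{-\varepsilon\}$, with the side face carrying the cylinder $\mathcal S\times[-\varepsilon,0]$. By Corollary \ref{Cor: null vector}, $\ell$ only matters modulo null vectors, so it lives in $\widetilde{D^k\times S^{n-k}}_{\mathcal S^{k-1}\times S^{n-k}}(L)$. Proposition \ref{Prop: k-morphism class} says exactly that this quotient is spanned by the vectors $\mathcal D^k(\beta_i)\times S^{n-k}$, with annular-equivalent choices yielding proportional vectors, so a set of representatives of annular equivalence classes suffices. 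Pushing such a decomposition of $\ell$ through the cylinder gives the spanning statement by the boundary vectors $\mathcal D^k_-(\beta_i)\times S^{n-k}$, and the $+$ side is symmetric.

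Second, I will compute the bilinear pairing. Gluing $\mathcal D^k_-(\beta_i)\times S^{n-k}$ with $\mathcal D^k_+(\beta_j)\times S^{n-k}$ along $F$ produces a bulk supported on $D^k\times[-1,1]\times S^{n-k}\cong D^{k+1}\times S^{n-k}$, with $\beta_i$ labelling the southern disc $D^k\times\{-1\}\times S^{n-k}$ and (the rotation of) $\beta_j$ labelling the northern disc $D^k\times\{1\}\times S^{n-k}$. To evaluate $Z$ I expand the equatorial region $D^k\times\{0\}\times S^{n-k}$ by the $\mathcal S^k$-relation of Definition \ref{Def: S-relation}: the sum runs over representatives of $(k+1)$-morphisms whose link boundary matches the equatorial stratification, and by definition such $(k+1)$-morphisms are exactly indecomposable $\beta_i$--$\beta_j$ bimodules. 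When $i\ne j$ the representatives $\beta_i,\beta_j$ lie in distinct annular equivalence classes, hence by Proposition \ref{Prop: Morita and annular} they admit no bimodule, the sum is empty, and the pairing vanishes. When $i=j$ the expansion reduces to the computation already performed in Equation \ref{Equ: inner product}: each indecomposable $\beta_i$--$\beta_i$ bimodule $\gamma$ contributes $\tfrac{Tr(\gamma)}{\mu(\gamma)}\cdot Tr(\gamma)$, which sums to $\mu(\beta_i)$ by the defining Equation \ref{Equ: global dimension}.

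Third, linear independence is then automatic. The Gram matrix of $\{\mathcal D^k_-(\beta_i)\times S^{n-k}\}_{i\in I}$ against $\{\mathcal D^k_+(\beta_j)\times S^{n-k}\}_{j\in I}$ is diagonal with entries $\mu(\beta_i)$, which are non-zero by strong semisimplicity. Since the pairing between $\tilde V_{\mathcal F,-}$ and $\tilde V_{\mathcal F,+}$ is non-degenerate (both spaces are defined by quotienting out null vectors), a diagonal non-degenerate Gram matrix on two spanning families forces each to be a basis.

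The main obstacle is the off-diagonal part of Step 2: identifying the $\mathcal S^k$-expansion of the glued bulk with a sum indexed by bimodule classes between the representatives $\beta_i$ and $\beta_j$, so that the vanishing for non-equivalent classes follows from Proposition \ref{Prop: Morita and annular}. Once this identification is set up, the diagonal value is just a re-reading of Equation \ref{Equ: inner product} with $\beta_i$ in place of $\beta$, and independence plus the basis conclusion follow formally.
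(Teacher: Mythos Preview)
Your proposal is correct and follows essentially the same approach as the paper: spanning via Theorem~\ref{Thm: boundary vector} and Proposition~\ref{Prop: k-morphism class}, the inner product computation via the $\mathcal S^k$-relation and Equation~\ref{Equ: inner product} (with the off-diagonal vanishing coming from the absence of bimodules between annularly inequivalent representatives, Proposition~\ref{Prop: Morita and annular}), and linear independence from the resulting diagonal Gram matrix with nonzero entries by strong semisimplicity. The paper's own proof is a terse two-line citation of exactly these ingredients; your write-up simply unpacks them.
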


\begin{proof}
By Theorem~\ref{Thm: boundary vector} and Prop.~\ref{Prop: k-morphism class}, the bulk vectors $\{\mathcal{D}^k_\pm(\beta_i)\times S^{n-k}:i\in I\}$ form a basis of $\tilde{V}_{\mathcal{F},\pm}$ respectively. 
Moreover, their inner product is $\delta_{i,j} \mu(\beta_i)$ as computed in Equ.~\ref{Equ: inner product}. 
\end{proof}

\begin{theorem}\label{Thm:commtative k-morphism algebra}
Take the time boundary $\mathcal{F}=(F,\mathcal{S})$, $F=D^k\times S^{n-k}$, $\mathcal{S}=\mathcal{S}^{k-1}\times S^{n-k}.$ The vector space $\tilde{V}_{\mathcal{F},-}$ forms a commutative semisimple algebra with the identity $(D^{k}\times D^{n-k+1}, \mathcal{S}^{k-1},D^{n-k+1})$. 
The $\mathcal{S}^{k-1}$-relation in Equ.~\ref{Equ: move-0}, \ref{Equ: move-k}, \ref{Equ: move-n} is a resolution of the identity, for $0\leq k\leq n$. (The multiplication direction is the last coordinate.)
\end{theorem}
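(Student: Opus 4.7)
The plan is to first set up the algebra structure by gluing bulk vectors, then diagonalize the multiplication using the basis supplied by Theorem~\ref{Thm: k-morphism basis}, and finally identify the $\mathcal{S}^{k-1}$-relation as the explicit decomposition of the unit into minimal idempotents.

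First I would define the multiplication on $\tilde{V}_{\mathcal{F},-}$ by stacking two bulk vectors along the last coordinate of the bulk $D^k\times D^{n-k+1}$, in direct analogy with Prop.~\ref{Prop: D1 algebra}. Well-definedness on the quotient follows from Corollary~\ref{Cor: null vector}, and associativity from PL isotopy together with Theorem~\ref{Thm: Semisimple invariant}. The identity $(D^k\times D^{n-k+1},\mathcal{S}^{k-1}\times D^{n-k+1})$ acts trivially because gluing it to any bulk vector is ambient isotopic to the original. For commutativity, since $n-k+1\geq 1$, the bulk carries a transverse $D^{n-k+1}$ factor offering enough room to rotate two stacked sub-bulks past each other by a PL isotopy; this is the $(n+1)$-dimensional analog of the argument in Fig.~\ref{fig: commutativity}, and homeomorphism invariance (Theorem~\ref{Thm: Semisimple invariant}) upgrades the isotopy to equality in $\tilde{V}_{\mathcal{F},-}$.

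Next I would compute the product of two basis vectors from Theorem~\ref{Thm: k-morphism basis}, namely $\mathcal{D}^k_-(\beta_i)\times S^{n-k}$ and $\mathcal{D}^k_-(\beta_j)\times S^{n-k}$. Stacking them produces a bulk vector containing two parallel labelled copies of $\mathcal{D}^k\times S^{n-k}$; after collapsing the intermediate region using the $\mathcal{S}^{k-1}$-relation, the result is controlled by the fusion of $\beta_i$ and $\beta_j$ through indecomposable bimodules. By Prop.~\ref{Prop: Morita and annular}, bimodules exist only within an annular equivalence class, so the product vanishes when $i\neq j$. When $i=j$, the product reduces, via the inner product computation of Equ.~\ref{Equ: inner product} and the non-vanishing trace Prop.~\ref{Prop: non-zero trace}, to a nonzero scalar multiple of $\mathcal{D}^k_-(\beta_i)\times S^{n-k}$, so that a suitable rescaling of each basis vector is a minimal idempotent. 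Together with orthogonality, this exhibits $\tilde{V}_{\mathcal{F},-}$ as a direct sum of one-dimensional subalgebras indexed by $I$, hence commutative and semisimple.

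To establish the last claim, I would apply the $\mathcal{S}^{k-1}$-relation of Def.~\ref{Def: S-relation} directly to the central core of the identity bulk vector $(D^k\times D^{n-k+1},\mathcal{S}^{k-1}\times D^{n-k+1})$; this writes it as $\sum_{\alpha_k}\tfrac{Tr(\alpha_k)}{\mu(\alpha_k)}\mathcal{D}^k_-(\alpha_k)\times S^{n-k}$, and by Theorem~\ref{Thm: k-morphism basis} each summand is precisely the minimal idempotent attached to $\alpha_k$. I expect the main obstacle to be the orthogonality step: showing that stacking basis elements from inequivalent annular classes vanishes must be done without circularly assuming the semisimple decomposition one is trying to establish, so one must carefully reduce the stacked configuration to a fusion of bimodules and invoke Prop.~\ref{Prop: Morita and annular} intrinsically, rather than via the idempotent calculus.
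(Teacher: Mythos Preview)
Your proposal is correct and follows essentially the same route as the paper's proof: commutativity via the isotopy of Fig.~\ref{fig: commutativity}, the basis of $\tilde{V}_{\mathcal{F},-}$ from Theorem~\ref{Thm: k-morphism basis}, orthogonality of distinct $\mathcal{D}^k_-(\beta_i)\times S^{n-k}$ because inequivalent annular classes admit no bimodule (Prop.~\ref{Prop: Morita and annular}), and the identification of the $\mathcal{S}^{k-1}$-relation as the decomposition of the identity into minimal idempotents. The paper's argument is terser---it simply asserts the product of two distinct basis vectors vanishes ``because there is no bimodules between them''---while you spell out the intermediate reduction to a bimodule fusion; your worry about circularity is unnecessary, since the absence of bimodules between representatives of distinct annular classes is a direct consequence of Prop.~\ref{Prop: Morita and annular} and does not presuppose the idempotent decomposition.
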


\begin{proof}
The multiplication of $\tilde{V}_{\mathcal{F},-}$ is commutative as shown in Fig.~\ref{fig: commutativity}.
By isotopy, $(D^{n+1}, \mathcal{S}^{k-1}\times D^{n-k+1})$ is the identity. 
It has a basis $\{\mathcal{D}^k_\pm(\beta_i)\times S^{n-k}:i\in I\}$.
The multiplication of two vectors for two different representatives $\beta_i$ and $\beta_j$ is zero, because there is no bimodules between them.
So $\mathcal{S}^{k-1}$-relation is a decomposition of the identity as a sum of minimal idempotents $0\leq k\leq n$.
\end{proof}

\begin{theorem}\label{Thm: invariance of the intrinsic dimension}
The intrinsic dimension $\displaystyle \frac{Tr(\alpha)^2}{\mu(\alpha)}$ is invariant under annular equivalence.   
\end{theorem}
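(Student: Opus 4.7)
The plan is to identify $\frac{Tr(\alpha)^2}{\mu(\alpha)}$ as the $Z$-pairing of a canonical pair of minimal idempotents in the commutative semisimple algebras $\tilde{V}_{\mathcal{F},\pm}$ of Theorem~\ref{Thm:commtative k-morphism algebra}, and then deduce annular (hence Morita) invariance from the fact that these idempotents are intrinsic to the class of $\alpha$. Fix an indecomposible $k$-morphism $\alpha$ with link boundary $\mathcal{S}^{k-1}$, set $\mathcal{F} = (D^k\times S^{n-k},\,\mathcal{S}^{k-1}\times S^{n-k})$, and write $A := \tilde{V}_{\mathcal{F},-}$. By Theorem~\ref{Thm: k-morphism basis}, the boundary vectors $v_{\beta,\pm} := \mathcal{D}^k_\pm(\beta)\times S^{n-k}$, as $\beta$ ranges over a chosen set of representatives of annular equivalence classes of $k$-morphisms with link boundary $\mathcal{S}^{k-1}$, form orthogonal bases of $\tilde{V}_{\mathcal{F},\pm}$ with $Z(v_{\beta,-}\cup v_{\beta',+}) = \delta_{[\beta],[\beta']}\,\mu(\beta)$.

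By Theorem~\ref{Thm:commtative k-morphism algebra}, the $\mathcal{S}^{k-1}$-relation of Equ.~\ref{Equ: move-k} decomposes the identity of $A$ into minimal idempotents,
\[
I_A \;=\; \sum_{[\beta]}\, p_{\beta,-},\qquad p_{\beta,-} \;:=\; \frac{Tr(\beta)}{\mu(\beta)}\,v_{\beta,-},
\]
with non-zero coefficients by strong semisimplicity and Prop.~\ref{Prop: non-zero trace}. Running the same argument in $\tilde{V}_{\mathcal{F},+}$ produces the analogous minimal idempotents $p_{\beta,+} := \frac{Tr(\beta)}{\mu(\beta)}\,v_{\beta,+}$.

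Now suppose $\alpha$ and $\alpha'$ are annular equivalent, equivalently Morita equivalent by Prop.~\ref{Prop: Morita and annular}. By Prop.~\ref{Prop: k-morphism class} the labelled stratified manifolds $\mathcal{D}^k(\alpha)\times S^{n-k}$ and $\mathcal{D}^k(\alpha')\times S^{n-k}$ are proportional, so $v_{\alpha,\pm}$ and $v_{\alpha',\pm}$ span the same one-dimensional subspace of $\tilde{V}_{\mathcal{F},\pm}$. Since the idempotent inside a fixed minimal (one-dimensional) ideal of a commutative semisimple algebra is unique, $p_{\alpha,\pm}=p_{\alpha',\pm}$; that is, $p_{\alpha,\pm}$ is an intrinsic invariant of the annular class. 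Pairing by $Z$ and using the orthogonality from Theorem~\ref{Thm: k-morphism basis} then yields
\[
Z\bigl(p_{\alpha,-}\cup p_{\alpha,+}\bigr)
\;=\; \left(\frac{Tr(\alpha)}{\mu(\alpha)}\right)^{\!2} Z\bigl(v_{\alpha,-}\cup v_{\alpha,+}\bigr)
\;=\; \frac{Tr(\alpha)^2}{\mu(\alpha)},
\]
so the intrinsic dimension depends only on $[\alpha]$, which is the desired invariance.

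The principal subtlety, already settled inside Theorem~\ref{Thm:commtative k-morphism algebra}, is the identification of $\{p_{\beta,\pm}\}$ with the primitive idempotent decomposition, which boils down to the geometric vanishing $v_{\beta,\pm}v_{\beta',\pm}=0$ for Morita inequivalent representatives (reflecting the absence of bimodules). Granted that, the invariance of $\frac{Tr(\alpha)^2}{\mu(\alpha)}$ is a one-line consequence of the uniqueness of the idempotent in a minimal ideal together with Prop.~\ref{Prop: k-morphism class}.
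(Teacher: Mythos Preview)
Your proof is correct and follows essentially the same route as the paper: both arguments observe that the minimal idempotent $p_\alpha=\frac{Tr(\alpha)}{\mu(\alpha)}\,v_\alpha$ in the commutative semisimple algebra $\tilde{V}_{\mathcal{F},-}$ depends only on the annular class of $\alpha$, and then extract the intrinsic dimension as a scalar attached to that idempotent. The only cosmetic difference is that the paper reads off $\frac{Tr(\alpha)^2}{\mu(\alpha)}$ as the \emph{trace} of $p_{\alpha,-}$ (pairing with the identity on the $+$ side), whereas you compute $Z(p_{\alpha,-}\cup p_{\alpha,+})$; since $p_{\alpha,+}$ is a sub-idempotent of the identity and the other summands pair to zero, these two computations coincide.
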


\begin{proof}
By Theorem \ref{Thm:commtative k-morphism algebra},
$\frac{Tr(\alpha_k)}{\mu(\alpha_k)} (D^{n+1}\setminus D^{k}\times \varepsilon D^{n-k+1}, \mathcal{M}(\alpha_k))$ is a minimal idempotent in the commutative semisimple algebra $\tilde{V}_{\mathcal{F},-}$. It is independent of the choice of the representative $\alpha_k$ in the annular equivalence class. Its trace is intrinsic dimension $\frac{Tr(\alpha)^2}{\mu(\alpha)}$. 
\end{proof}

\begin{theorem}\label{Thm: unique extension}
Suppose the $S^n$ functional $Z$ is complete finite and strong semisimple. Then its has a unique extension to the $(n+1)$-TQFT with space-time boundary, such that $Z(\mathcal{D}^{n+1},\mathcal{S})=\zeta Z(\mathcal{S})$ and $Z$ is multiplicative.
\end{theorem}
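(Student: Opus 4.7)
The plan is to verify existence using the state-sum construction, then prove uniqueness by triangulating arbitrary bulk vectors into $D^{n+1}$-pieces. Existence follows immediately from Theorem~\ref{Thm: space-time TQFT}: the state-sum extension satisfies $Z(\mathcal{D}^{n+1},\mathcal{S}) = \zeta Z(\mathcal{S})$ by Def.~\ref{Def: zeta}, and is multiplicative on $D^{n+1}$-decompositions by Def.~\ref{Def: multiplicative}, since each $(n+1)$-simplex of a triangulation lies in a unique piece of the decomposition.

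For uniqueness, let $Z'$ be another extension satisfying the two conditions. I first show $Z'(B,\mathcal{M}) = Z(B,\mathcal{M})$ for every closed bulk vector $(B,\mathcal{M})$. Choose a combinatorial triangulation $\Delta$ of $B$ (Theorem~\ref{Thm: triangulation}) transversal to $\mathcal{M}$ on the boundary (Theorem~\ref{Thm: transversality}). Each $(n+1)$-simplex $\Delta_{n+1,j}$ has support PL homeomorphic to $D^{n+1}$, and their union recovers $B$. I extend $\mathcal{M}$ through the interior of $\Delta$ by inductively applying the $\mathcal{S}^{k-1}$-relations of Def.~\ref{Def: S-relation} for $k = 0, 1, \ldots, n$: each interior $k$-simplex carries a normal microbundle whose identity is resolved as a weighted sum over representatives of $k$-morphisms, with weights $Tr(\alpha_k)/\mu(\alpha_k)$. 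Applying the multiplicativity axiom to the resulting $D^{n+1}$-decomposition and substituting $Z'(\Delta_{n+1,j}, \cdot) = \zeta Z(\cdot)$ for each piece reproduces precisely the state-sum formula of Theorem~\ref{Thm: Semisimple invariant}, hence $Z'(B,\mathcal{M}) = Z(B,\mathcal{M})$.

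Once the partition functions agree on closed bulks, the rest of the TQFT is forced. By the null principle, the vector space $\tilde{V}_{\mathcal{F},\pm}$ attached to a time boundary $\mathcal{F}$ is the quotient of bulk vectors with time boundary $\pm\mathcal{F}$ modulo those pairing to zero, under the partition function, with every bulk of opposite time boundary. The linear map associated to a space-time cobordism is determined by composition with bulks of opposite time boundary, hence by the partition function. Thus $Z'$ coincides with $Z$ as an $(n+1)$-TQFT with space-time boundary, and it is unitary when $Z$ is reflection positive and $\zeta > 0$ by Theorem~\ref{Thm: finite dimension}.

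The main obstacle is the first step of the uniqueness argument. Multiplicativity, as stated in Def.~\ref{Def: multiplicative}, involves no summation, whereas the state sum contains the local sums produced by the $\mathcal{S}^{k-1}$-relations. One must justify introducing these sums from the two axioms alone. This relies on the commutative semisimple algebra structure of Theorem~\ref{Thm:commtative k-morphism algebra}: the $\mathcal{S}^{k-1}$-relation is the unique resolution of the identity in $\tilde{V}_{\mathcal{F},-}$ for $\mathcal{F} = (D^k\times S^{n-k},\, \mathcal{S}^{k-1}\times S^{n-k})$, so any consistent extension of $\mathcal{M}$ to the interior $k$-simplices decomposes along these minimal idempotents, and each term collapses under the two axioms to the corresponding state-sum summand.
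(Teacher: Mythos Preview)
Your overall architecture---existence via Theorem~\ref{Thm: space-time TQFT}, uniqueness by reducing an arbitrary closed bulk to a sum of products of $D^{n+1}$-evaluations, then passing to vector spaces and cobordisms by the null principle---matches the paper. The gap is exactly where you flag it, and your proposed resolution does not close it.

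You invoke Theorem~\ref{Thm:commtative k-morphism algebra} to say the $\mathcal{S}^{k-1}$-relation is the unique resolution of the identity in $\tilde{V}_{\mathcal{F},-}$. But that quotient and its algebra structure are defined using the \emph{already-constructed} partition function $Z$; the statement that the identity and the right-hand side of the $\mathcal{S}^{k-1}$-relation agree in $\tilde{V}_{\mathcal{F},-}$ is a statement about $Z$, not about an arbitrary extension $Z'$. To feed the relation into your triangulation argument you need $Z'(\text{LHS}\cup w)=Z'(\text{RHS}\cup w)$ for every $w$, and you have not shown that $Z'$ descends to this quotient. Saying ``any consistent extension of $\mathcal{M}$ decomposes along these minimal idempotents'' presupposes exactly this.

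The paper closes the gap by a downward induction on $k=n,n-1,\ldots,0$ (the ``null principle'', Table~\ref{Table: null principle}). At stage $k$ one already knows $Z'$ on bulks with support $D^{k+1}\times S^{n-k}$, hence the $Z'$-pairing between \emph{boundary vectors} in $V_{\mathcal{F},\pm}$ for $\mathcal{F}=(D^k\times S^{n-k},\mathcal{S}^{k-1}\times S^{n-k})$ is determined. Gluing the identity $(D^{n+1},\mathcal{S}^{k-1}\times D^{n-k+1})$ with a boundary vector in $V_{\mathcal{F},+}$ produces a $D^{n+1}$-bulk, so that pairing is also fixed by the axiom $Z'(D^{n+1},\mathcal{S})=\zeta Z(\mathcal{S})$. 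Since the Gram matrix of the boundary-vector basis is nondegenerate (Theorem~\ref{Thm: k-morphism basis}), the expansion of the identity in boundary vectors is forced, and this \emph{is} the $\mathcal{S}^{k-1}$-relation---now established for $Z'$, not merely for $Z$. Only after this induction can one legitimately run the state-sum evaluation for $Z'$ and conclude $Z'=Z$ via Theorem~\ref{Thm: Semisimple invariant}. Your argument jumps to the last step without performing the induction that licenses it.
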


\begin{proof}
As $Z$ is multiplicative, the partition function of bulk vectors with support $D^{n+1}\times S^0$ and empty time boundary is the product of partition function of the two components.

By induction on $k=n,n-1,\cdots,0,-1$, suppose the partition function on vectors with support $D^{k+1}\times S^{n-k}$ and empty time boundary are determined. 
Then for any time boundary $\mathcal{F}=(F,\mathcal{S})$, $F=D^k\times S^{n-k}$, $\mathcal{S}=\mathcal{S}^{k-1}\times S^{n-k},$ the inner product between boundary vectors in $\tilde{V}_{\mathcal{F},-}$ and $\tilde{V}_{\mathcal{F},+}$ are determined. 
The union of $\mathcal{S}^{k-1}\times D^{n-k+1}$ in $V_{\mathcal{F},-}$ and a boundary vector in $V_{\mathcal{F},+}$ has support homeomorphic to $D^{n+1}$ and space boundary homeomorphic to $S^n$.
So their inner product is determined. 
So the expression of  $\mathcal{S}^{k-1}\times D^{n-k+1}$ as a linear sum of boundary vectors in  $V_{\mathcal{F},-}$ is unique, which is the $\mathcal{S}^{k-1}$-relation in Def.\ref{Def: S-relation}.
By Theorem \ref{Thm: Semisimple invariant}, the partition function is determined by $\mathcal{S}^{k-1}$-relations. So it is determined by the $S^n$ functional $Z$. 
So the TQFT is also determined by $Z$.
We illustrate the induction process in Table.~\ref{Table: null principle}.
\end{proof}

We summarize the $\mathcal{S}^{k-1}$-relations in Table.~\ref{Table: null principle}, which shows how to derive all relations from the $S^n$ functional $Z$.
In general, for any time boundary $\mathcal{F}=(F,\mathcal{S})$, $F=D^k\times S^{n-k}$, $\mathcal{S}=\mathcal{S}^{k-1}\times S^{n-k},$ the $\mathcal{S}^{k-1}$-relation could also be considered as a bistellar move of the space boundary, which changes the space boundary from $D^{k} \times \partial D^{n-k+1}$ to $\partial D^k \times D^{n-k+1}$ and change the bulk from $D^{n+1}$ to $\emptyset$.

For any vector $v$ with support $D^{n+1}$ time boundary $\mathcal{F}$, such as the identity $\mathcal{S}^{k-1}\times D^{n-k+1}$ in $\tilde{V}_{\mathcal{F},-}$,
its union with a boundary vector in $\tilde{V}_{\mathcal{F},+}$ has support homeomorphic to $D^{n+1}$ and space boundary homeomorphic to $S^n$. So we can regard the $v$ as a linear functional on $\tilde{V}_{\mathcal{F},+}$ according to the value of $Z$ on $S^n$.

When the inner product between $\tilde{V}_{\mathcal{F},-}$ and $\tilde{V}_{\mathcal{F},+}$ is non-degenerate, we can regard $\tilde{V}_{\mathcal{F},-}$ as the dual space of $\tilde{V}_{\mathcal{F},+}$. 
Therefore we can express $v$ as a boundary vector $b$ in  $\tilde{V}_{\mathcal{F},-}$ induced from the inner product with boundary vectors $b_+$ in the dual space $\tilde{V}_{\mathcal{F},+}$. We call this method of producing the skein relations the null principle.
\begin{align*}
\langle \Phi(v),b_+ \rangle = \langle b, b_+\rangle ~\forall~ b_+ \in \tilde{V}_{\mathcal{F},+}.
\end{align*}

\begin{table}[H]
\centering
\setlength{\tabcolsep}{1pt}
\renewcommand{\arraystretch}{0.83} 
\begin{tabular}{c| c | c | c}
 \hline
 bulk $B$ & \makecell[c]{Space boundary\\
 $\varepsilon \to 0$}
 & Time boundary & Partition function \& bistellar moves\\ [0.5ex]
 \hline\hline
 $D^{n+1}$ & $\mathcal{S}^n$ & $\emptyset$ & $Z(\mathcal{D}^{n+1}):=\zeta Z(\mathcal{S}^n)$ \\
 \hline\hline
 $D^{n+1}\times S^0$ & $\mathcal{S}^{n}\times \mathcal{S}^0$ & $\emptyset$ & $Z(\mathcal{D}^{n+1}\times \mathcal{S}^0):= Z(\mathcal{D}^{n+1})Z(\mathcal{D}^{n+1})$ \\
 \hline
 $D^{n}_-\times S^0$ & $\mathcal{D}^{n}\times \mathcal{S}^0$ & $D^{n}\times \mathcal{S}^0$ & non-degenerate inner product\\
 \hline
 $D^{n+1}$ & $\mathcal{S}^{n-1}\times D^1 $ & $D^{n}\times \mathcal{S}^0$ & \makecell[c]{ $\Phi(D^{n+1},\mathcal{S}^{n-1}\times D^1)$\\ $:=\sum (D^{n}_-\times S^0, \mathcal{D}^{n}\times \mathcal{S}^0)$} \\
 \hline
 \hline
 $D^{k+1}\times S^{n-k}$ & $\mathcal{S}^{k}\times \mathcal{S}^{n-k}$ & $\emptyset$ & \makecell[c]{$Z(\mathcal{D}^{k+1}\times \mathcal{S}^{n-k}):=$\\ $\langle \mathcal{D}^{k+1}\times D^{n-k},\mathcal{D}^{k+1}\times \mathcal{D}^{n-k}\rangle $} \\
 \hline
 $D^{k}_-\times S^{n-k}$ & $\mathcal{D}^{k}\times \mathcal{S}^{n-k}$ & $D^{k}\times \mathcal{S}^{n-k}$ & non-degenerate inner product\\
 \hline
 $D^{n+1}$ & $\mathcal{S}^{k-1}\times D^{n-k+1} $ & $D^{k}\times \mathcal{S}^{n-k}$ &\makecell[c]{ $\Phi(D^{n+1},\mathcal{S}^{k-1}\times D^{n-k+1})$\\$:=\sum (D^{k}_-\times S^{n-k},\mathcal{D}^{k}\times \mathcal{S}^{n-k}) $} \\
 \hline
\end{tabular}
\label{Table: null principle}
\caption{null principle and Bistellar Moves}
\end{table}

We have the freedom to choose $\zeta\neq 0$ in the definition of the partition function. To ensure reflection positivity, we need $\zeta>0$.
\begin{proposition}
The choice of $\zeta$ changes the partition function $Z(B,\mathcal{M})$ by a global factor $\zeta^{(-1)^{n+1}E(B)}$, where $E(B)$ is the Euler number of $B$.    
\end{proposition}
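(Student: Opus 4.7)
The plan is to unfold the state-sum expression for $Z(B,\mathcal{M})$, track every appearance of $\zeta$, and package the total into the Euler characteristic of $B$.

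Starting from a combinatorial triangulation $\Delta$ of $B$ transverse to $\mathcal{M}$, with $n_k$ the number of $k$-simplices, the state-sum formula recorded just before this proposition reads
\[
Z(\mathcal{B},\Delta,C_\bullet)=\sum_{\alpha_{\bullet}}\Bigl(\prod_{k=0}^{n}\prod_{i=1}^{n_k}\frac{Tr(\alpha_{k,i})}{\mu(\alpha_{k,i})}\Bigr)\prod_{j=1}^{n_{n+1}}F(\Delta_{n+1,j})(\alpha_{\bullet}).
\]
By Def.~\ref{Def: zeta} and the $\mathcal{S}^n$-relation Equ.~\ref{Equ: move-n+1}, each top-dimensional factor $F(\Delta_{n+1,j})$ equals $\zeta\,Z(\partial\Delta_{n+1,j})$, with the sphere evaluation being $\zeta$-independent. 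Thus the rescaling $\zeta\mapsto\lambda\zeta$ produces a factor $\lambda^{n_{n+1}}$ from the top-simplex terms alone.

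Next I would track the implicit $\zeta$-dependence hidden in the ratios $Tr(\alpha_{k,i})/\mu(\alpha_{k,i})$ for $0\le k\le n-1$. Although the top-level trace of Prop.~\ref{Prop: Trace} uses only the original $S^n$-functional and is therefore $\zeta$-independent, the recursive definition $\mu(\alpha)=\sum_\beta Tr(\beta)^2/\mu(\beta)$ in Equ.~\ref{Equ: global dimension} at lower levels is forced, via Theorem~\ref{Thm: k-morphism basis} and Equ.~\ref{Equ: inner product}, to evaluate $Z$ on closed $(n+1)$-manifold pairings of the form $\mathcal{D}^k(\alpha)_-\times S^{n-k}$ with $\mathcal{D}^k(\alpha)_+\times S^{n-k}$. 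Such closed pairings pick up further powers of $\zeta$ through Def.~\ref{Def: zeta}. A descending induction on $k$ from $n-1$ down to $0$ should show that, after the numerator and denominator of the recursion are balanced, each $k$-simplex contributes a net scaling of $\lambda^{(-1)^{n+1-k}}$.

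Summing across all simplex dimensions yields the total exponent of $\lambda$:
\[
\sum_{k=0}^{n+1}(-1)^{n+1-k}n_k=(-1)^{n+1}\sum_{k=0}^{n+1}(-1)^{k}n_k=(-1)^{n+1}E(B),
\]
which matches the statement. Consistency with the triangulation-independence of $Z$ established in Theorem~\ref{Thm: Semisimple invariant} is automatic, since only the Euler characteristic appears in the exponent.

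The main obstacle is the careful induction for $Tr/\mu$ at each recursion level, where one must isolate precisely which factors of $\zeta$ enter through the auxiliary closed-manifold evaluations and verify the alternating sign pattern. A cleaner alternative is to argue universally: let $c(B,\mathcal{M})$ denote the exponent of the $\zeta$-scaling, observe that $c$ is additive under disjoint union and under gluing along time boundaries, then compute $c$ on the building blocks $D^{k+1}\times D^{n-k}$ of a handle decomposition. Since $E(B)=\sum_k(-1)^k h_k$ for the handle counts $h_k$, matching $c$ with the Euler characteristic on each handle type produces the exponent $(-1)^{n+1}E(B)$ directly, bypassing the delicate bookkeeping.
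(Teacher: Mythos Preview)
Your overall plan—track the power of $\zeta$ contributed by each $k$-simplex, show it is $\zeta^{(-1)^{n+1-k}}$, and sum to the Euler number—is exactly the paper's strategy. But you have misidentified \emph{where} the $\zeta$-dependence enters.

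The quantities $Tr(\alpha_k)$ and $\mu(\alpha_k)$ are defined entirely in terms of the original $S^n$-functional (see Prop.~\ref{Prop: Trace} and Equ.~\eqref{Equ: global dimension}), which does not involve $\zeta$ at all; $\zeta$ first appears only in Def.~\ref{Def: zeta}. So the ratio $Tr(\alpha_{k,i})/\mu(\alpha_{k,i})$ carries no hidden $\zeta$-factor, and your proposed induction through the recursion for $\mu$ cannot produce the alternating exponent. The fact that the closed-manifold pairing in Theorem~\ref{Thm: k-morphism basis} \emph{happens to equal} $\mu(\beta)$ is a consequence, not a definition; it does not retroactively make $\mu$ depend on $\zeta$.

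The paper instead reads the $\zeta$-factor directly off the inductive null-principle derivation of the $\mathcal{S}^{k-1}$-relations themselves (Table~\ref{Table: null principle} and Theorem~\ref{Thm: unique extension}). The $\mathcal{S}^n$-relation carries $\zeta^1$ by Def.~\ref{Def: zeta}. At each lower level the relation is determined by pairing the bulk vector $(D^{n+1},\mathcal{S}^{k-1}\times D^{n-k+1})$ against boundary vectors, and this pairing glues to a $D^{n+1}$ (one factor of $\zeta$) while the inner product of boundary vectors is computed on $D^{k+1}\times S^{n-k}$ using the already-derived higher relations; the inversion in this step flips the sign of the exponent. Thus the $\mathcal{S}^{k-1}$-relation, not the coefficient $Tr/\mu$, carries the factor $\zeta^{(-1)^{n+1-k}}$, and the product over all simplices gives $\zeta^{(-1)^{n+1}E(B)}$ in one line.

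Your alternative via handle decomposition would also work and is a legitimate independent route, but the induction you sketch for $Tr/\mu$ does not.
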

\begin{proof}
By the null principle in Table~\ref{Table: null principle}, the $\mathcal{S}^{k-1}$ has a global factor $\zeta^{(-1)^{n+1-k}}$.
Thus for the triangulation $\Delta$ of the $n+1$ manifold $B$, 
the global factor is $\prod_{\Delta_{k,i}}\zeta^{(-1)^{n+1-k}}=\zeta^{(-1)^{n+1}E(B)}.$
\end{proof}

We assumed the complete finite and strong semisimple conditions of $Z$ in our main theorems.
The complete finite condition is to ensure that the state sum of the partition function is a finite sum. 
The strong semisimple is necessary to construct the $(n+1)$-TQFT, when $Z$ is semisimple.
Otherwise if a $k$-morphism $\beta$ with link boundary $\mathcal{S}^{k-1}$ has zero global dimension, then as shown in Theorem \ref{Thm: k-morphism basis}, $\mathcal{D}^k_+(\beta_j)\times S^{n-k}$ is a null vector in $\tilde{V}_{\mathcal{F},+}$. However, its inner product with $\mathcal{S}^{k-1}\times D^{n-k+1}$ is the quantum dimension $Tr(\beta)$, which is non-zero. It is a contradiction.

\subsection{Alterfold TQFT}
In this section, we introduce the $n+1$ alterfold TQFT which extend the notions in $n+1$ TQFT with space-time boundary.

\begin{definition}
Suppose $M^{n+1}$ is an oriented, compact $(n+1)$-manifold and $M^n$ is an oriented, closed sub $n$-manifold, which separates $M^{n+1}$ as two closed sub $(n+1)$-manifolds $A^{n+1}$ and $B^{n+1}$, so that $A^{n+1}\cap B^{n+1}=M^n$, $M^n$ and $\partial B^{n+1}$ have the same orientation. 
We call the triple $(M^{n+1},B^{n+1},M^{n})$ an $(n+1)$-alterfold. 
Its boundary is an $n$-alterfold $(\BB,\TB,\STB)=(\partial M^{n+1}, \partial B^{n+1} \cap \partial M^{n+1}, \partial M^{n} \cap \partial M^{n+1})$. 
\end{definition}

\begin{definition}
We consider $A^{n+1}$ and $B^{n+1}$ as $A/B$-colored $(n+1)$ manifolds respectively. The $A$-color is called the trivial color.
The $B$-color is called the bulk color.
\end{definition}

\begin{definition}
For $(n+1)$-alterfold $(M^{n+1},B^{n+1},M^{n})$, if $\mathcal{M}$ is an $L$-labelled stratified manifold with support $M^n$, we call $(M^{n+1},B^{n+1},\mathcal{M}^{n})$ an $L$-labelled alterfold.
\end{definition}
We can update the label space $L$ by the vector spaces in the $S^n$-algebra $(\tilde{V},Z)$ and define the $\tilde{V}$-labelled alterfolds.

\begin{definition}\label{Def: Alterfold partition function}
We extend the partition function $Z$ on the bulk vectors $(B^{n+1},\mathcal{M}^{n})$ without time boundary to the partition function on $L$-labelled alterfolds without boundary as
$$Z(M^{n+1},B^{n+1},\mathcal{M}^{n}):=Z(B^{n+1},\mathcal{M}^{n}).$$ 
\end{definition}
Note that $A^{n+1}=\overline{M^{n+1}\setminus B^{n+1}}$. The extension of $Z$ is irrelevant to $A^{n+1}$. In particular the value of a $A$-colored manifold $M^{n+1}$ without boundary is constant 1.
$$Z(M^{n+1},\emptyset,\emptyset)=Z(\emptyset,\emptyset)=1.$$

We can compute the partition function $Z$ using surgery theory in topology based on the free change of $A$-color manifolds and the $\mathcal{S}^{k-1}$-relations in Def.~\ref{Def: S-relation}. We refer the readers to \cite{LMWW23a,LMWW23b} for the discussions on surgery theory in 2+1 alterfold TQFT.

\begin{definition}
Suppose $(E,F,S)$ is an $n$-alterfold $\partial E=\emptyset$. Suppose $\mathcal{S}$ is a stratified $(n-1)$-manifold with support $S$. We call the triple $\mathcal{E}:=(E,F,\mathcal{S})$ an alterfold boundary.
\end{definition}

\begin{definition}
An alterfold cobordism with an alterfold boundary $\mathcal{E}=(E, F,\mathcal{S})$ is a triple $\mathcal{M}^{n+1}:=(M^{n+1},B^{n+1},\mathcal{M}^n)$, $\mathcal{M}^{n+1}:=(M^{n+1},B^{n+1},|\mathcal{M}^n|)$ is an alterfold with boundary  $(E, F,|\mathcal{S}|)$ and $(B^{n+1},\mathcal{M}^n)$ is a space-time cobordism with time boundary $(F,\mathcal{S})$.
\end{definition}

\begin{definition}\label{Def: alterfold TQFT}
Given a local shape set $LS_{\bullet}$, an alterfold $(n+1)$-TQFT is a symmetric monoidal functor from the category of alterfold cobordisms to the category of vector spaces over a field $\mathbb{K}$.
More precisely, every alterfold boundary $\mathcal{E}=(E,F,\mathcal{S})$ is assigned a finite dimensional vector space $V_\mathcal{E}$, and $V_{(E,\emptyset,\emptyset)} \cong \mathbb{K}$.
Every alterfold $n+1$-cobordism is assigned to a linear transformation on the vector spaces.  The disjoint union and the gluing map correspond to the tensor and contraction respectively.
The map $Z:V_\emptyset \to \mathbb{K}$ is called the partition function. 
The alterfold TQFT is called unitary, if $Z$ is reflection positive.
\end{definition}

When the $E=F$, equivalently $\mathcal{S}=\emptyset$, we return to Atiyah's TQFT \cite{Ati88}.

\begin{theorem}\label{Thm: Alterfold TQFT}
Suppose $Z$ is a linear functional on labelled stratified manifolds with support $S^n$ over the field $\mathbb{C}$, satisfying the three conditions
\begin{enumerate}
    \item (RP) reflection positivity; (Def.~\ref{Def: Z RP})
    \item (HI) homeomorphic invariance; (Def.~\ref{Def: Z HI})
    \item (CF) complete finiteness. (Def.~\ref{Def: Z CF})
\end{enumerate}
Then we obtain an $n+1$ unitary alterfold TQFT for any $\zeta>0$ in Def.~\ref{Def: zeta}.

For a general field $\mathbb{K}$, we replace RP by strong semisimplicity, and then we obtain an $n+1$ alterfold TQFT.
\end{theorem}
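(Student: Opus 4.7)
The plan is to assemble this as a corollary of the machinery developed in Sections \ref{Sec: Hyper-Sphere Functions}--\ref{Sec: Alterfold TQFT}, reducing the alterfold statement to the space-time TQFT of Theorem \ref{Thm: space-time TQFT} via the extension recipe in Def.~\ref{Def: Alterfold partition function}. First I would verify that the three hypotheses already imply the running assumptions of the preceding theorems. Condition (HI) is exactly Def.~\ref{Def: Z HI}, and (CF) is Def.~\ref{Def: Z CF}. When $\mathbb{K}=\mathbb{C}$, (RP) plus (CF) force each algebra $A(\mathcal{D}^k\times D^{n-k})$ to be a finite dimensional $C^*$-algebra, hence semisimple, and by the Proposition immediately following \S\ref{Sec: Alterfold TQFT} they are strongly semisimple because the global dimension is built from strictly positive traces in Equ.~\ref{Equ: global dimension}. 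For a general field $\mathbb{K}$, strong semisimplicity is assumed directly, so the hypotheses of Theorem \ref{Thm: space-time TQFT} are met in both settings.

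Next I would invoke Theorem \ref{Thm: space-time TQFT} to obtain an $(n+1)$-TQFT with space-time boundary, namely finite dimensional spaces $\tilde V_{\mathcal{F},\pm}$ (finite by Theorem \ref{Thm: finite dimension}) and a partition function $Z(B,\mathcal{M})$ on bulk vectors that is homeomorphic invariant by Theorem \ref{Thm: Semisimple invariant}. To upgrade this to an alterfold TQFT, I would use Def.~\ref{Def: Alterfold partition function} to declare
\[
Z(M^{n+1},B^{n+1},\mathcal{M}^n):=Z(B^{n+1},\mathcal{M}^n),
\]
and, for an alterfold boundary $\mathcal{E}=(E,F,\mathcal{S})$, assign the vector space $V_{\mathcal{E}}:=\tilde V_{(F,\mathcal{S}),-}$, which ignores the $A$-colored complement $\overline{E\setminus F}$. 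The crucial point is that this assignment is well defined: $A$-colored pieces of the bulk never enter the state sum, so two alterfold cobordisms with the same $B$-colored part produce the same linear map, and $Z(M^{n+1},\emptyset,\emptyset)=1$ by Def.~\ref{Def: multiplicative}.

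Then I would check the structural axioms. Symmetric monoidality follows from Def.~\ref{Def: multiplicative}: the partition function is multiplicative on disjoint unions of $(n+1)$-discs, hence (via the evaluation algorithm, which treats distinct connected components independently) on arbitrary disjoint unions, so $V_{\mathcal{E}_1\sqcup \mathcal{E}_2}\cong V_{\mathcal{E}_1}\otimes V_{\mathcal{E}_2}$ and braiding corresponds to swapping factors. Gluing along a common alterfold boundary corresponds to contracting via the pairing $Z(\mathcal{B}_-\cup\mathcal{B}_+)$ on $\tilde V_{\mathcal{F},-}\times \tilde V_{\mathcal{F},+}$, which is functorial because homeomorphic invariance (Theorem \ref{Thm: Semisimple invariant}) makes the gluing independent of collar choices, and the identity cobordism $F\times[0,1]$ acts as the identity because of Prop.~\ref{Prop: Alexander} applied to the normal microbundle. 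Finally, unitarity under (RP) with $\zeta>0$ comes from Theorem \ref{Thm: finite dimension}: the bilinear form $Z$ is reflection positive on $V_{\mathcal{F},\pm}$ and descends to a positive definite Hermitian inner product on the quotient $\tilde V_{\mathcal{F},\pm}$ (positive definiteness is exactly the condition that null vectors are zero).

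The main obstacle, and the place where I would spend most of my care, is the verification of functoriality under gluing — specifically that $Z$ of a glued alterfold cobordism equals the composition of the induced linear maps. The technical heart is that when one glues $\mathcal{M}_1^{n+1}$ and $\mathcal{M}_2^{n+1}$ along a shared alterfold boundary $\mathcal{E}$, the resulting partition function can be computed by first inserting a resolution of the identity on $\tilde V_{\mathcal{F},-}\otimes \tilde V_{\mathcal{F},+}$ supplied by Theorem \ref{Thm: k-morphism basis}, and then using the $\mathcal{S}^{k-1}$-relations of Def.~\ref{Def: S-relation} across the gluing collar. This amounts to choosing a triangulation compatible with the gluing, verifying that the state sums on the two sides and on the collar combine into the state sum for the glued manifold, and that this combination matches the bilinear pairing $Z$ — all of which follows from Theorem \ref{Thm: Semisimple invariant} and Theorem \ref{Thm: boundary vector}, but must be traced through carefully to confirm compatibility with the $A$/$B$-color decomposition.
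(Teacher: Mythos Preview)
Your proposal is correct and follows essentially the same route as the paper: reduce the alterfold TQFT to the space-time TQFT of Theorem~\ref{Thm: space-time TQFT} by declaring $Z(M^{n+1},B^{n+1},\mathcal{M}^n):=Z(B^{n+1},\mathcal{M}^n)$ and $V_{\mathcal{E}}:=\tilde V_{(F,\mathcal{S}),-}$, noting that the $A$-colored part is irrelevant to both the partition function and the induced inner product. Your treatment of the gluing/functoriality axiom is in fact more careful than the paper's, which simply observes that $V_{\mathcal{E}}\cong V_{\mathcal{F}}$ and cites Theorems~\ref{Thm: Semisimple invariant} and~\ref{Thm: finite dimension} to conclude.
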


\begin{proof}
Recall that the homeomorphic invariance of $Z(\mathcal{B})=Z(B^{n+1},\mathcal{M}^{n})$ is proved in Theorem \ref{Thm: Semisimple invariant}.
By Def.~\ref{Def: Alterfold partition function}, 
$$Z(M^{n+1},B^{n+1},\mathcal{M}^{n}):=Z(B^{n+1},\mathcal{M}^{n}),$$ 
the partition function $Z$ of the alterfold TQFT is irrelevant to the $A$-color part.
So it is also homeomorphic invariant.

Moreover, for any alterfold boundary $\mathcal{E}=(E,F,\mathcal{S})$, $\mathcal{F}=(F,\mathcal{S})$ is a time boundary of the space-time TQFT.
As the inner product induced by $Z$ is irrelevant to the $A$-color part, so the vector space $V_{\mathcal{E}}$ is isomorphic to $V_{\mathcal{F}}$.
By Theorem \ref{Thm: finite dimension}, the vector space is finite dimensional.
So we obtain an $n+1$ alterfold TQFT.
Furthermore if $Z$ is reflection positive and $\zeta>0$, by Theorem \ref{Thm: finite dimension}, the vector space is a finite dimensional Hilbert space.
So we obtain a unitary $n+1$ alterfold TQFT.
This extends the construction of the space-time TQFT in Theorem~\ref{Thm: space-time TQFT}.    
\end{proof}

A time boundary $(F,\mathcal{S})$ in the space-time TQFT could be extended to different alterfold boundary $(E,F,\mathcal{S})$ from different $A$-color manifolds $\overline{E\setminus F}$. 
They may correspond to different interpretations in different situations. 

\begin{remark}
One can study the idempotent completion of the local $D^{n+1}$-algebra of the $n+1$ alterfold TQFT and obtain an $n+1$-category with two 0-morphisms corresponding to the two colors $A$ and $B$. The $0$-morphisms of the $n$-category from $S^n$-algebra $(\tilde{V},Z)$ becomes $1$-morphisms from $A$ to $B$. The $k$-morphisms of the $n$-category from $(\tilde{V},Z)$ becomes $(k+1)$-morphisms.   
\end{remark}

\section{Examples}\label{Sec: Examples}
TQFT is an extremely fruitful theory which has been studied from various perspectives in mathematics and physics. We did not attempt to review the extensive literature. 
In this section, let us discuss some concepts in the lower dimensional alterfold TQFT, so that one can have a better understanding of the higher analogue.
Recall that we obtain an $n+1$ alterfold TQFT from an $S^n$ functional $Z$ is complete finite and strong semisimple or reflection positive,

When $n=0$, a $D^0$ algebra is a finite dimensional vector space $V$.
An $S^0$ functional is a (non-degenerate) inner product. If $Z$ is reflection positive, then $V$ is a Hilbert space. 
The invariant of an oriented compact 1-manifold is $dim(V)^{\#S^1}$.
The vector space of the $0+1$ TQFT with the time boundary consisting of $m$ points is the $m^{th}$ tensor power of $V$. The union for all $m$ is the Fock space of $V$. 

When $n=1$, a semisimple $D^1$ algebra is an associate algebra $\bigoplus_i M_{n_i}(\mathbb{K}).$ 
An $S^1$ functional is a trace. If $Z$ is reflection positive, then $V$ is a $C^*$-algebra. Let $d_j$ be the trace of the minimal idempotent $p_j$ of $M_{n_i}(\mathbb{K})$. Then we have the following $\mathcal{S}^{k-1}$-relations, $0\leq k\leq 2$, in the 1+1 alterfold TQFT:
\begin{center}
\begin{tikzpicture}
\draw [dashed] (-1,-1) rectangle (1,1);
\node at (0,0) {$B$};
\draw (-.5,-.5) rectangle (.5,.5);
\node at (.75,0) {$A$};
\node at (-.75,0) {$p_j$}; 
\node at (2,0) {$=d_j$}; 
\begin{scope}[shift={(4,0)}]
\node at (0,0) {$A$};
\draw [dashed] (-1,-1) rectangle (1,1);
\end{scope}
\end{tikzpicture}
\end{center}

\begin{center}
\begin{tikzpicture}
\node at (-.75,.8) {$A$};
\node at (0,.8) {$B$};
\node at (.75,.8) {$A$};
\draw (-.5,-1)--(-.5,1);
\draw (.5,-1)--(.5,1);
\node at  (-.75,0) {$p_j$};
\node at  (.25,0) {$\rho(p_j)$};
\draw [dashed] (-1,-1) rectangle (1,1);
\node at (2,0) {$=d_j^{-1}$};
\begin{scope}[shift={(4,0)}]
\node at (.75,.8) {$A$};
\node at (0,.8) {$B$};
\draw (-.5,-1)--++(0,.5)--++(1,0)--++(0,-.5);
\draw (-.5,1)--++(0,-.5)--++(1,0)--++(0,.5);
\node at  (-.75,.8) {$p_j$};
\node at  (-.75,-.8) {$p_j$};
\draw [dashed] (-1,-1) rectangle (1,1);
\end{scope}
\end{tikzpicture}
\end{center}

\begin{center}
\begin{tikzpicture}
\node at (0,.8) {$B$};
\draw [dashed] (-1,-1) rectangle (1,1);
\node at (2,0) {$=\displaystyle\sum_j d_j$};
\begin{scope}[shift={(4,0)}]
\node at (0,.8) {$B$};
\draw [dashed] (-1,-1) rectangle (1,1);
\draw (-.5,-.5) rectangle (.5,.5);
\node at (.25,0) {$p_j$}; 
\end{scope}
\end{tikzpicture}
\end{center}
Applying the $\mathcal{S}^{k-1}$-relations to the normal microbundle of a $k$-simplex of a triangulation of an oriented surface $S$, we obtain the invariant 
$$Z(S)=\sum_j d_j^{n_2-n_1+n_0}=\sum_j d_j^E(S).$$
where $n_k$ is the number of $k$-simplices and $E(S)$ is the Euler number of $S$.
The complete finiteness condition ensure the invariant to be a finite sum.
The strong semisimple condition ensure $d_j\neq 0$.
We can release the complete finiteness condition if the sum convergences.

In particular, if we take the semisimple associate algebra to be the group algebra $\mathcal{L}G$ for a finite group $G$ with the trace from the regular representation $L^2(G)$. Then the trace of the group element $g\in G$ is $\delta_{g,1}|G|$, diagrammatically,
\begin{center}
\begin{tikzpicture}
\draw [dashed] (-1,-1) rectangle (1,1);
\node at (0,0) {$B$};
\draw (-.5,-.5) rectangle (.5,.5);
\node at (.75,0) {$A$};
\node at (-.75,0) {$g$}; 
\node at (2,0) {$=\delta_{g,1}|G|$}; 
\begin{scope}[shift={(4,0)}]
\node at (0,0) {$A$};
\draw [dashed] (-1,-1) rectangle (1,1);
\end{scope}
\end{tikzpicture}
\end{center}
Moreover, the group elements forms an orthonormal basis, we have that
\begin{center}
\begin{tikzpicture}
\node at (-.75,.8) {$A$};
\node at (0,.8) {$B$};
\node at (.75,.8) {$A$};
\draw (-.5,-1)--(-.5,1);
\draw (.5,-1)--(.5,1);
\draw [dashed] (-1,-1) rectangle (1,1);
\node at (2,0) {$=\sum_{g\in G}$};
\begin{scope}[shift={(4,0)}]
\node at (.75,.8) {$A$};
\node at (0,.8) {$B$};
\draw (-.5,-1)--++(0,.5)--++(1,0)--++(0,-.5);
\draw (-.5,1)--++(0,-.5)--++(1,0)--++(0,.5);
\node at  (-.75,.8) {$g$};
\node at  (-.75,-.8) {$g^{-1}$};
\draw [dashed] (-1,-1) rectangle (1,1);
\end{scope}
\end{tikzpicture}
\end{center}
The minimal projection $p_j$ of $\mathcal{L}G$ corresponds to an irreducible representation $V_j$ of dimensional $d_j$. By Peter-Weyl Theorem, the resolution of the identity of $\mathcal{L}G$ has $d_j$ minimal projections equivalent to $p_j$. So

\begin{center}
\begin{tikzpicture}
\node at (0,.8) {$B$};
\draw [dashed] (-1,-1) rectangle (1,1);
\node at (2,0) {$=\displaystyle\sum_j d_j$};
\begin{scope}[shift={(4,0)}]
\node at (0,.8) {$B$};
\draw [dashed] (-1,-1) rectangle (1,1);
\draw (-.5,-.5) rectangle (.5,.5);
\node at (.25,0) {$p_j$}; 
\end{scope}
\node at (6,0) {$=$};
\begin{scope}[shift={(8,0)}]
\node at (0,.8) {$B$};
\draw [dashed] (-1,-1) rectangle (1,1);
\draw (-.5,-.5) rectangle (.5,.5);
\end{scope}
\end{tikzpicture}
\end{center}
We can evaluate the partition function $Z(S)$ by these three relations in terms of group elements, and then derive that
$$Z(S)=\#\hom(\pi_1(S), G) |G|^{E(S)-1},$$
where $\pi_1(S)$ is the fundamental group of $S$.
Then we obtain the Mednykh's formula \cite{Med78}:
$$\#\hom(\pi_1(S), G)) |G|^{E(S)-1} = \sum_j \dim V_j^{E(S)}.$$
summing over irreducible representations $V_j$ of $G$.
We refer the readers to \cite{FreFra93,MulYu05,TurTur06,LauPfe09,Sny17} for further discussions for the Mednykh's formula and related results on 1+1 TQFT.

When $n=2$, there have been extensive studies on the 2+1 TQFT, since the fundamental work of the Witten-Reshetikhin-Turaev TQFT and Turaev-Viro TQFT in \cite{Wit88,ResTur91,TurVir92}.
The $D^2$ algebras were studied as planar algebras by Jones \cite{Jon21}. The $S^2$ algebras were studied as spherical planar algebras. If $Z$ is reflection positive and complete finite, then one obtains subfactor planar algebras of finite depth. 
(A subfactor planar algebra with infinite depth has an $S^2$ function which is 2-finite, but not 1-finite. Its idempotent completion has infinitely many 1-morphisms in the 2-category.)
Its idempotent completion is a spherical 2-category $\mathscr{C}$ or a multi-fusion category.
The invariant of a $B$-colored 3-manifold equals to its Turaev-Viro invariant \cite{TurVir92}.

The 2+1 alterfold TQFT has been studied in \cite{LMWW23a,LMWW23b}.
In this 2+1 alterfold TQFT, the braided tensor category of 2-morphisms with link boundary $S^1$ (without stratification) is the Drinfeld center \cite{Dri86} of $\mathscr{C}$. It is considered as the category of point excitations in physics.
Moreover, both the Turaev-Viro TQFT of $\mathscr{C}$ and the  Reshetikhin-Turaev TQFT \cite{ResTur91} of the Drinfeld center of $\mathscr{C}$ can be embedded into the 2+1 alterfold TQFT by the blow up procedures. 
The skeleton of the triangulation in the Turaev-Viro TQFT blows up to $B$-color normal microbundles. (The 2D boundary surface has contractible $B$-color 2-discs.)
The link in the Reshetikhin-Turaev TQFT of blows up to $A$-color normal microbundles. (The 2D boundary surface have contractible $A$-color 2-discs.) 
We can reverse the blow up procedures by shrinking the $A/B$ color bulk in the 2+1 alterfold TQFT.
Their tensor product has the shape of two closed strings which can merge into one closed string by an isometry.

Kitaev's toric code is a square lattice model on the torus \cite{Kit03}. 
It can be generalized to lattices of a general shape on a surface. 
Every edge has a qubit, a vector in the two-dimensional Hilbert space. The configuration space of a lattice is the tensor product of the qubit spaces for all edges. (We omit the label at the vertex or consider the label as the scalar 1 of the one dimensional Hilbert space $\mathbb{C}$.)
Every vertex $v$ has a vertex operator $A_v$ given by the tensor product of Pauli $Z$ on the nearest qubits. Every plaquette $p$ has a plaquette operator $B_p$ given by the tensor product of Pauli $X$ on nearest the qubits. The local vertex operators and plaquette operators commute. 
The Hamiltonian $H$ is
$$H=-\sum_{v}A_v-\sum_{p}B_p,$$
summing over all vertices and plaquettes.
The ground state of $H$ is unique for any lattice on $S^2$. 
The corresponding vector $\Omega$ is the sum of all loops on the lattice, where a loop has labels $\ket{0}$ or $\ket{1}$ on the edges, such that the union of all edges labelled by $\ket{1}$ have no boundary. 
As the configuration space is a Hilbert space, we consider the vector $\Omega$ as a linear functional $Z$ on the configuration space.

The linear functional $Z$ takes value $1$ on a loop and value $0$ for other labels.
It satisfies the three conditions (RP), (HI) and (CF). We obtain a spherical fusion category $Rep(\mathbb{Z}_2)$ and the 2+1 alterfold TQFT from $Z$ by Theorem \ref{Thm: Alterfold TQFT}, where $Rep(\mathbb{Z}_2)$ is the representation category of the group $\mathbb{Z}_2$ with two invertible objects $\mathbbm{1}$ and $g$.

We can refine the lattice by blowing up a vertex to a face and changing the position point of every edge to a vertex, see an example in Fig.~\ref{Fig: Refined Lattices} for the refinement of a square lattice on the torus. (The refined lattice is considered as a quantized graph to construct quantum error correcting codes in Page 5 in \cite{Liu19qecc} based on the Quon language \cite{JLW17}.)
Then the vertex operator $A_v$ becomes a plaquette operator on the refined lattice.
So the group of homeomorphisms of a lattice is enlarged and an $A_v$ operator can be identified with an $B_p$ operator under a homeomorphism.

We consider the normal microbundle of a vertex in the refined lattice as a parameterized disc with four points on the boundary circle.
We consider the basis of the qubit label space as the two diagrams of non-intersecting strings with four boundary points.
Then a fully labelled refined lattice is the underline surface with a stratification of non-intersecting closed strings. 
We define its $Z$ value to be $\sqrt{2}^m$, where $m$ is the number of closed strings. It coincides with the linear functional $Z$ on the original lattice model.
By Theorem \ref{Thm: Alterfold TQFT}, we obtain the Ising category from $Z$, beyond the $Rep(\mathbb{Z}_2)$. (We will apply this idea to construct the $S^3$ functional in Def.~\ref{Def: Ising Z} and derive a unitary spherical 3-category of Ising type. )

Moreover, we can recover the configuration space of a lattice, i.e., a stratified 2-manifold $\mathcal{M}^2$, from the vector space of the $2+1$ alterfold TQFT with alterfold boundary $\mathcal{E}=(E,F,\mathcal{S})$, where $E=M^2$; $F$ is a $\varepsilon$-neighbourhood of $M^1$; and $\mathcal{S}$ is the boundary of $F$ without stratification. Each connected components of the $A$-color region of $\mathcal{E}$ is a plaquette $P$ and we recover the local plaquette operator as $P\times [0,1]$ with labelled by $g$-colored $\partial P\times \{1/2\}$ acting on the boundary plaquette $P$.
The ground state space of a lattice $\mathcal{M}^2$ is isomorphic to the vector space of $B$-color boundary $M^2$ without stratification in the $2+1$ alterfold TQFT, which is the vector space with boundary $M^2$ in Atiyah's TQFT. The ground state space is independent of the choice of the lattice up to isomorphism.

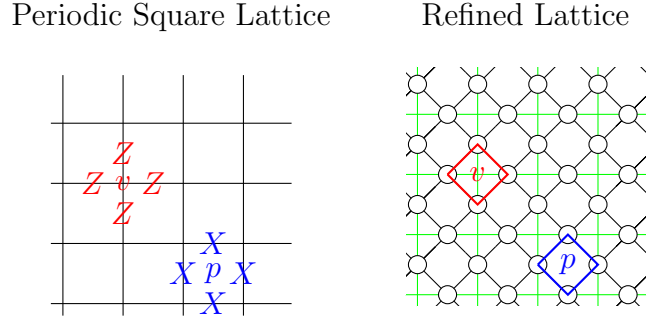
\begin{figure}[h] \begin{center}
\begin{tabular}{cc}
Periodic Square Lattice & Refined Lattice\\
\begin{tikzpicture}
\begin{scope}[scale=.8]
\foreach \x in {0,1,2,3}{
\draw (\x,-.2)--++(0,4);
}
\foreach \y in {0,1,2,3}{
\draw (-.2,\y)--++(4,0);
}
\node at (1,2) {$\textcolor{red}{v}$};
\node at (1-.5,2) {$\textcolor{red}{Z}$};
\node at (1+.5,2) {$\textcolor{red}{Z}$};
\node at (1,2-.5) {$\textcolor{red}{Z}$};
\node at (1,2+.5) {$\textcolor{red}{Z}$};
\node at (2.5,.5) {$\textcolor{blue}{p}$};
\node at (2.5-.5,.5) {$\textcolor{blue}{X}$};
\node at (2.5+.5,.5) {$\textcolor{blue}{X}$};
\node at (2.5,.5-.5) {$\textcolor{blue}{X}$};
\node at (2.5,.5+.5) {$\textcolor{blue}{X}$};
\end{scope}
\end{tikzpicture}
&
\begin{tikzpicture}
\begin{scope}[scale=.8]
\foreach \x in {0,1,2,3}{
\draw[green] (\x,-.2)--++(0,4);
}
\foreach \y in {0,1,2,3}{
\draw[green] (-.2,\y)--++(4,0);
}
\foreach \x in {0,1,2}{
\foreach \y in {0,1,2,3}{
\draw (\x+.3,\y-.2)--++(1,1);
\draw (\x-.2,\y+.7)--++(1,-1);
\draw (3+.3,\y-.2)--++(.5,.5);
\draw (0-.2,\y+.3)--++(.5,.5);
}}
\foreach \y in {1,2,3}{
\draw (3-.2,\y+.7)--++(1,-1);
}
\draw (3-.2,0+.7)--++(.9,-.9);
\foreach \x in {0,1,2,3}{
\draw (\x+.7,3.8)--++(.1,-.1);
}
\foreach \x in {0,1,2,3}{
\foreach \y in {0,1,2,3}{
\fill[white] (\x+.5,\y) circle (.15);
\draw (\x+.5,\y) circle (.15);
\fill[white] (\x,\y+.5) circle (.15);
\draw (\x,\y+.5) circle (.15);
}}
\fill[white] (0,-.5) rectangle (3.8,-.2);
\fill[white] (0,3.8) rectangle (3.8,4.3);
\fill[white] (-1,0) rectangle (-.2,3.8);
\fill[white] (3.8,0) rectangle (4.6,3.8);
\node at (1,2) {$\textcolor{red}{v}$};
\draw[red,thick](1-.5,2)--++(.5,-.5) --++(.5,.5)--++(-.5,.5)--++(-.5,-.5);
\node at (2.5,.5) {$\textcolor{blue}{p}$};
\draw[blue,thick](2.5-.5,.5)--++(.5,-.5) --++(.5,.5)--++(-.5,.5)--++(-.5,-.5);
\end{scope}
\end{tikzpicture}
\end{tabular}
\caption{Refined Lattices}\label{Fig: Refined Lattices}
\end{center} \end{figure}

If we define the $S^2$ functional $Z$ on $m$ non-intersecting circles in $S^2$ as $\delta^{m}$, then the Jones index Theorem \cite{Jon83,Jon21} implies that 
$Z$ is reflection positive, iff $\delta=2\cos \frac{\pi}{2+\ell}$, $\ell\in \mathbb{N}_+$ or $\delta \geq 2$. In addition, $Z$ is complete finite, iff $\delta=2\cos \frac{\pi}{2+\ell}$. We obtain the unitary quotient of the representation category of quantum $SU(2)$ at level $\ell$. The Ising category is for $\ell=2$.

In general, for the Levin-Wen model \cite{LevWen05} from a spherical multi-fusion category $\mathscr{C}$, we can define the $S^2$ functional $Z$ of a string-net on $S^n$ as its evaluation in the spherical category $\mathscr{C}$. From the functional $Z$ of string nets, we can recover the spherical category $\mathscr{C}$ and construct a 2+1 alterfold TQFT by Theorem \ref{Thm: Alterfold TQFT}. Moreover, the plaquette operator, the Hamiltonian and the ground states etc of the Levin-Wen model can be recovered from the 2+1 alterfold TQFT, see Remark 4.25 in \cite{LMWW23b}.
In particular, the reflection positivity condition of the Hamiltonian of the Levin-Wen model is proved in Theorem 3.2 \cite{JafLiu20}, generalizing the geometric proof of RP in Section 7 in \cite{JafLiu17}. We will study the Hamiltonian of the lattice model, the ground states and relevant properties, such as reflection positivity, from the functional integral point of view in any dimension in the coming paper \cite{Liu24-preparation}.

When $n=3$, Witten constructed a 3+1 TQFT \cite{Wit88} which captures Donaldson's invariant of smooth 4-manifolds \cite{Don83,Don83b,Don90}. The Turaev-Viro state sum construction has been generalized to construct 3+1 TQFT using a braided fusion category in \cite{CraYet93,CKY97,Cui19}. Douglas and Reutter proposed a commonly accepted definition of fusion 2-categories and constructed a 3+1 TQFT as a state sum in \cite{DogReu18}, see further discussions and references therein.
Similar to the 2+1 theory, for a spherical fusion 2-category $\mathscr{C}$, one can define an $S^3$ functional by the evaluating the string-net diagrams on $S^3$ in $\mathscr{C}$. Then one recovers the category $\mathscr{C}$ and the 3+1 TQFT from Theorem \ref{Thm: Alterfold TQFT}.

The Dijkgraaf-Witten TQFT \cite{DijWit90} is an $n$ dimensional TQFT coming from a cocycle in the group cohomology $H^n(G,\mathbb{K}^{\times})$ of a finite group $G$. One can define the $S^{n-1}$ function of an $n$-simplex, whose 1-simplices are labelled by group elements, as the value of the cocycle. Then one can recover the TQFT from Theorem \ref{Thm: Alterfold TQFT}.

\section{Higher Braid Statistics}\label{Sec: Higher Braid Statistics}
In dimension 2+1, the braid statistics of the point excitations of a the two-dimensional lattice model, such as the Levin-Wen model \cite{LevWen05}, is captured by the Drinfeld center of the spherical 2-category. 
In dimension 3+1 or higher, the category of point excitations is usually studied as a symmetric fusion category. The symmetric fusion category is a representation of a group or a super group proved by Deligne in \cite{Del02}. The particles are classified as bosons or fermions according to the braid statistics.
In that sense, the lower dimension topology is more interesting due to emergence of anyons with general braid statistics.
The braid statistics seems less interesting in higher dimensions.
However, the topology in dimension 4 are much more complicated. It is a paradox.

We give a conceptual explanation to resolve the paradox. 
The point excitation of the $n$-dimensional lattice model is an $n$-morphism with the $B$-color link boundary $S^{n-1}$ in the $n+1$ alterfold TQFT. The type of the $n$-morphism is an alterfold $(D^n,B^n,M^{n-1})$, rather than a point. We use the $A$-color part $A^n$ to denoted its type for short. 
When $n=3$ and $A^3$ is a solid torus, there are infinitely many ways to fuse the two point excitations of such type, because the two solid torus can be braided in the $D^3$, such as the solid Hopf link. The worldsheet is even more complicated, as one solid torus can move along the other. 
This leads to extremely rich higher braid statistics of membranes.
Based on this observation, we propose the following definition of the local center to study the higher braid statistics in the future.

\begin{definition}
For a spherical $n$-category $\mathscr{C}$ (from a complete finite and strong semisimple $S^n$ functional $Z$), we define its local center as the category of point excitations of the $n$-dimensional lattice model.
Its objects are $n$-morphisms with $B$-color link boundary $S^{n-1}$ and its morphisms are $(n+1)$-morphisms in the $n+1$ alterfold TQFT. For two objects of type $A^n_1$ and $A^n_2$, we obtain a fusion for any non-intersecting embedding of $A^n_1$ and $A^n_2$ in $D^n$.       
\end{definition}

\section{A 3+1 Alterfold TQFT of Ising type}\label{Sec: 3+1 Ising}

In this section, we illustrate our theory on the functional integral construction of TQFT in a concrete example. We construct an $S^3$ functional $Z$, which is reflection positive and complete finite. We obtain a unitary spherical 3-category of Ising type and a non-invertible 3+1 unitary alterfold TQFT from Theorem \ref{Thm: Alterfold TQFT}. 
Moreover, we compute the 20j-symbols and verify the $(3,3)$, $(2,4)$ and $(1,5)$ Pachner moves \cite{Pan91}. It is a one-dimensional higher analogue of the $6j$ symbols and pentagon equations. The number 20 is given by the number of 1-simplices and 2-simplices in the 4-simplex $\Delta^4$.
The 20j-symbols could be used to compute a scalar invariant of 2-knots in PL 4-manifolds explicitly. 

Recall that the Ising 2-category can be derived from an $S^2$ functional $Z$ which takes value $\sqrt{2}^m$ for $m$ non-intersecting closed circles in $S^2$.
It seems natural to generalize it to an $S^3$ functional $Z$ on non-intersecting surfaces in $S^3$, which is $\sqrt{2}$ for when the surface is a sphere.
After checking certain compatible conditions for the fusion rule and quantum dimensions, we obtain the value of a surface with Euler number $e$ to be $2^{1-\frac{e_i}{4}}.$ This $S^3$ functional $Z$ verifies reflection positivity, but it has infinitely many 1-morphisms up to equivalence. 
To achieve the complete finiteness condition, we allow the surfaces to intersect. 
Finally we obtain the following formula for the $S^3$ functional $Z$.

\begin{definition}\label{Def: Ising Z}
We define the $S^3$ functional $Z$ on stratified 3-manifold $(S^3, \cup_i S_i)$ where $\{S_i\}_{i\in I}$ are PL surfaces in $S^3$ which may intersect transversely and be unoriented. It has the stratification:
\begin{enumerate}
    \item $M^3$ is $S^3$; 
    \item $M^2$ is the union of PL surfaces;
    \item $M^1$ is the union of intersection curves of every two surfaces;
    \item $M^0$ is the union of intersection points of every three surfaces.
\end{enumerate}
The intersection of every four surfaces is empty.
Let $e_i$ be the Euler number of $S_i$. We define 
\begin{equation}
\label{partfun}
Z(S^3, \cup_i {S_i})=\prod_{i=1}^n 2^{1-\frac{e_i}{4}}.
\end{equation}
\end{definition}
The Euler number is homeomorphic invariant, so $Z$ is homeomorphic invariant. 
Moreover, the Euler number does not change under reflection, so $Z$ is Hermitian.  
Now let us prove the reflection positivity and complete finiteness; construct its simplical $k$-morphisms; and compute its 20j-symbols.

For any point $p\in M^{k}\setminus M^{k-1}$, it has a regular chart as shown in Fig.~\ref{fig: 3D Ising regular chart}, for $k=3,2,1,0$. Each of the stratified manifold on the boundary $S^2$ corresponds to the only local shape in $LS_k$ as in Def.~\ref{Def:LS}-\ref{Def:LL}:

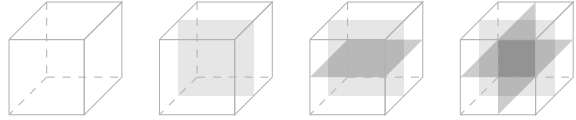
\begin{figure}[H]
    \centering
    \begin{tikzpicture}
     \begin{scope}[shift={(0,0)}]
        
            \draw[gray!60] (1,0)--(1.5,.5)--(1.5,-.5)--(1,-1)--(1,0);
            \draw[gray!60] (0,0)--(1,0)--(1.5,.5)--(.5,.5)--(0,0);
            \draw[gray!60] (0,0)--(1,0)--(1,-1)--(0,-1)--(0,0);
            \draw[gray!60,dashed] (.5,.5)--(.5,-.5);
            \draw[gray!60,dashed] (.5,-.5)--(1.5,-.5);
            \draw[gray!60,dashed] (.5,-.5)--(0,-1);
        
        \end{scope}

         \begin{scope}[shift={(2,0)}]

            \begin{scope}[shift={(.25,.25)}]
                
            \fill[gray!20] (0,0)--(1,0)--(1,-1)--(0,-1);
            \end{scope}
            \draw[gray!60] (1,0)--(1.5,.5)--(1.5,-.5)--(1,-1)--(1,0);
            \draw[gray!60] (0,0)--(1,0)--(1.5,.5)--(.5,.5)--(0,0);
            \draw[gray!60] (0,0)--(1,0)--(1,-1)--(0,-1)--(0,0);
            \draw[gray!60,dashed] (.5,.5)--(.5,-.5);
            \draw[gray!60,dashed] (.5,-.5)--(1.5,-.5);
            \draw[gray!60,dashed] (.5,-.5)--(0,-1);
        
        \end{scope}

        \begin{scope}[shift={(4,0)}]

            \begin{scope}[shift={(.25,.25)}]
                
            \fill[gray!20] (0,0)--(1,0)--(1,-1)--(0,-1);
            \end{scope}
            \begin{scope}[shift={(0,-.5)}]
            \fill[opacity=.20] (0,0)--(1,0)--(1.5,.5)--(.5,.5);
            \end{scope}
            \draw[gray!60] (1,0)--(1.5,.5)--(1.5,-.5)--(1,-1)--(1,0);
            \draw[gray!60] (0,0)--(1,0)--(1.5,.5)--(.5,.5)--(0,0);
            \draw[gray!60] (0,0)--(1,0)--(1,-1)--(0,-1)--(0,0);
            \draw[gray!60,dashed] (.5,.5)--(.5,-.5);
            \draw[gray!60,dashed] (.5,-.5)--(1.5,-.5);
            \draw[gray!60,dashed] (.5,-.5)--(0,-1);
        
        \end{scope}

        \begin{scope}[shift={(6,0)}]

            \begin{scope}[shift={(.25,.25)}]
                
            \fill[gray!20] (0,0)--(1,0)--(1,-1)--(0,-1);
            \end{scope}
            \begin{scope}[shift={(0,-.5)}]
            \fill[opacity=.20] (0,0)--(1,0)--(1.5,.5)--(.5,.5);
            \end{scope}

            \begin{scope}[shift={(-.5,0)}]
                \fill[opacity=.20] (1,0)--(1.5,.5)--(1.5,-.5)--(1,-1);
            \end{scope}
            \draw[gray!60] (1,0)--(1.5,.5)--(1.5,-.5)--(1,-1)--(1,0);
            \draw[gray!60] (0,0)--(1,0)--(1.5,.5)--(.5,.5)--(0,0);
            \draw[gray!60] (0,0)--(1,0)--(1,-1)--(0,-1)--(0,0);
            \draw[gray!60,dashed] (.5,.5)--(.5,-.5);
            \draw[gray!60,dashed] (.5,-.5)--(1.5,-.5);
            \draw[gray!60,dashed] (.5,-.5)--(0,-1);
        
        \end{scope}
\end{tikzpicture}
    
    \caption{The regular chart of a point in $S^3$.}
    \label{fig: 3D Ising regular chart}
\end{figure}


When we cut $S^3$ into two parts $D^{2}_{\pm}$ by the equator $S^2$, the restriction of $M^3$ on the boundary $S^2$ contains PL curves $C=\{C_j\ : 1\leq j \leq m\}$, $|C_j|\sim S^1$.
The vector space $\tilde{V}(S^2,C)$ is spanned by stratified manifolds $\mathcal{M}$ with support $D^2$ and boundary $C$ and $M^2$ consists of surfaces with boundary $C$.

We have the following relations in $K_Z$. Any ambient isotopy of the surfaces will not change $Z$ and we can split surfaces away from each other. In particular, we can move one plane across another plane, or across a line or a point which is intersection of planes, as shown in Fig.~\ref{fig:Intersection}. The last equation is a one-dimensional higher analogue of the Yang-Baxter equation.
Moreover, for each connected surface, we can change its shape by multiplying the scalar $2^{-\frac{e}{4}}$ according to the change of the Euler number $e$.

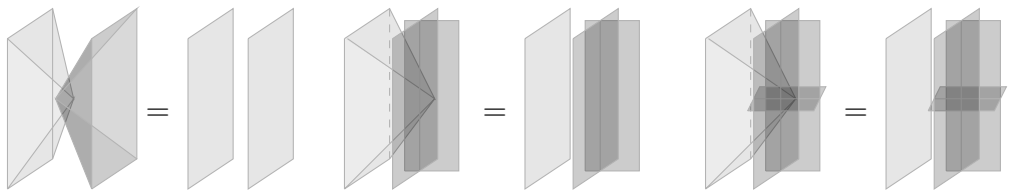
\begin{figure}[H]
    \centering
    \begin{tikzpicture}
    \begin{scope}[scale=.8]
    
    \begin{scope}

         \fill[gray!15](0,0)--(0,-2.5)--(1.1,-1);
         \fill[gray!15](.75,.5)--(.75,.5-2.5)--(1.1,-1);
         \fill[gray!20](0,-2.5)--(.75,.5-2.5)--(1.1,-1);
         \fill[gray!20](0,0)--(.75,.5)--(1.1,-1);

         \draw[gray!60](0,0)--(.75,.5)--(1.1,-1)--(0,0);
         \draw[gray!60](0,-2.5)--(.75,.5-2.5)--(1.1,-1)--(0,-2.5);
         \draw[gray!60](0,0)--(0,-2.5);
         \draw[gray!60](.75,.5)--(.75,.5-2.5);

    \end{scope}

    \begin{scope}[shift={(1.4,0)}]

         \fill[opacity=.15](0,0)--(0,-2.5)--(-.6,-1);
         \fill[opacity=.15](.75,.5)--(.75,.5-2.5)--(-.6,-1);
         \fill[opacity=.20](0,-2.5)--(.75,.5-2.5)--(-.6,-1);
         \fill[opacity=.20](0,0)--(.75,.5)--(-.6,-1);

         \draw[gray!60](0,0)--(.75,.5)--(-.6,-1)--(0,0);
         \draw[gray!60](0,-2.5)--(.75,.5-2.5)--(-.6,-1)--(0,-2.5);
         \draw[gray!60](0,0)--(0,-2.5);
         \draw[gray!60](.75,.5)--(.75,.5-2.5);

    \end{scope}

    \begin{scope}[shift={(3,0)}]

         \node at (-.5,-1.25){$=$};
         
         \fill[gray!20](0,0)--(0,-2.5)--(.75,.5-2.5)--(.75,.5)--(0,0);
         \draw[gray!60](0,0)--(0,-2.5)--(.75,.5-2.5)--(.75,.5)--(0,0);

         \begin{scope}[shift={(1,0)}]
             \fill[gray!20](0,0)--(0,-2.5)--(.75,.5-2.5)--(.75,.5)--(0,0);
         \draw[gray!60](0,0)--(0,-2.5)--(.75,.5-2.5)--(.75,.5)--(0,0);
         \end{scope}

    \end{scope}

    \end{scope}

    \begin{scope}[scale=.8, shift={(6,0)}]
    
    \begin{scope}
           
         \fill[gray!15](0-.4,0)--(0-.4,-2.5)--(1.1,-1);
         \fill[gray!15](.75-.4,.5)--(.75-.4,.5-2.5)--(1.1,-1);
         \fill[gray!20](0-.4,-2.5)--(.75-.4,.5-2.5)--(1.1,-1);
         \fill[gray!20](0-.4,0)--(.75-.4,.5)--(1.1,-1);

         \draw[gray!60](0-.4,0)--(.75-.4,.5)--(1.1,-1)--(0-.35,0);
         \draw[gray!60](0-.4,-2.5)--(.75-.4,.5-2.5)--(1.1,-1)--(0-.4,-2.5);
         \draw[gray!60](0-.4,0)--(0-.4,-2.5);
         \draw[gray!60,dashed](.75-.4,.5)--(.75-.4,.5-2.5);

         \fill[opacity=.20](0.6,-2.2)--(1.5,-2.2)--(1.5,.3)--(0.6,.3);

         \draw[gray!60](0.6,-2.2)--(1.5,-2.2)--(1.5,.3)--(0.6,.3);

         \draw[gray!60](0.6,.3)--(0.6,-2.2);

         \draw[gray!60](0.85,-2.2)--(0.85,.3);
          \begin{scope}[shift={(.4,0)}]
             \fill[opacity=.20](0,0)--(0,-2.5)--(.75,.5-2.5)--(.75,.5)--(0,0);
         \draw[gray!60](0,0)--(0,-2.5)--(.75,.5-2.5)--(.75,.5)--(0,0);
         \end{scope}
         
    \end{scope}

    \begin{scope}[shift={(3,0)}]

         \node at (-.9,-1.25){$=$};

         \begin{scope}[shift={(-.4,0)}]
        
         \fill[gray!20](0,0)--(0,-2.5)--(.75,.5-2.5)--(.75,.5)--(0,0);
         \draw[gray!60](0,0)--(0,-2.5)--(.75,.5-2.5)--(.75,.5)--(0,0);
     
         \end{scope}

            \fill[opacity=.20](0.6,-2.2)--(1.5,-2.2)--(1.5,.3)--(0.6,.3);

         \draw[gray!60](0.6,-2.2)--(1.5,-2.2)--(1.5,.3)--(0.6,.3)--(0.6,-2.2);
          \draw[gray!60](0.85,-2.2)--(0.85,.3);
         
          \begin{scope}[shift={(.4,0)}]
             \fill[opacity=.20](0,0)--(0,-2.5)--(.75,.5-2.5)--(.75,.5)--(0,0);
         \draw[gray!60](0,0)--(0,-2.5)--(.75,.5-2.5)--(.75,.5)--(0,0);
         
         \end{scope}
         
    \end{scope}

    \end{scope}

     \begin{scope}[scale=.8, shift={(12,0)}]
    
    \begin{scope}
           
          \fill[gray!15](0-.4,0)--(0-.4,-2.5)--(1.1,-1);
         \fill[gray!15](.75-.4,.5)--(.75-.4,.5-2.5)--(1.1,-1);
         \fill[gray!20](0-.4,-2.5)--(.75-.4,.5-2.5)--(1.1,-1);
         \fill[gray!20](0-.4,0)--(.75-.4,.5)--(1.1,-1);

         \draw[gray!60](0-.4,0)--(.75-.4,.5)--(1.1,-1)--(0-.35,0);
         \draw[gray!60](0-.4,-2.5)--(.75-.4,.5-2.5)--(1.1,-1)--(0-.4,-2.5);
         \draw[gray!60](0-.4,0)--(0-.4,-2.5);
         \draw[gray!60,dashed](.75-.4,.5)--(.75-.4,.5-2.5);

         \fill[opacity=.20](0.6,-2.2)--(1.5,-2.2)--(1.5,.3)--(0.6,.3);
         
        \fill[opacity=.20](0.3,-1.2)--(1.4,-1.2)--(1.6,-0.8)--(0.5,-.8);
        \draw[gray!60](0.3,-1.2)--(1.4,-1.2)--(1.6,-0.8)--(0.5,-.8)--(0.3,-1.2);
        \draw[gray!60](.4,-1)--(1.5,-1);
         
         \draw[gray!60](0.6,-2.2)--(1.5,-2.2)--(1.5,.3)--(0.6,.3);

         \draw[gray!60](0.6,.3)--(0.6,-2.2);

         \draw[gray!60](0.85,-2.2)--(0.85,.3);
         
          \begin{scope}[shift={(.4,0)}]
             \fill[opacity=.20](0,0)--(0,-2.5)--(.75,.5-2.5)--(.75,.5)--(0,0);
         \draw[gray!60](0,0)--(0,-2.5)--(.75,.5-2.5)--(.75,.5)--(0,0);

         \end{scope}

    \end{scope}

    \begin{scope}[shift={(3,0)}]

         \node at (-.9,-1.25){$=$};
         
         \begin{scope}[shift={(-.4,0)}]
        
         \fill[gray!20](0,0)--(0,-2.5)--(.75,.5-2.5)--(.75,.5)--(0,0);
         \draw[gray!60](0,0)--(0,-2.5)--(.75,.5-2.5)--(.75,.5)--(0,0);
     
         \end{scope}

            \fill[opacity=.20](0.6,-2.2)--(1.5,-2.2)--(1.5,.3)--(0.6,.3);

         \draw[gray!60](0.6,-2.2)--(1.5,-2.2)--(1.5,.3)--(0.6,.3)--(0.6,-2.2);

         \fill[opacity=.20](0.3,-1.2)--(1.4,-1.2)--(1.6,-0.8)--(0.5,-.8);
        \draw[gray!60](0.3,-1.2)--(1.4,-1.2)--(1.6,-0.8)--(0.5,-.8)--(0.3,-1.2);
         \draw[gray!60](.4,-1)--(1.5,-1);

        \draw[gray!60](0.85,-2.2)--(0.85,.3);
         
          \begin{scope}[shift={(.4,0)}]
             \fill[opacity=.20](0,0)--(0,-2.5)--(.75,.5-2.5)--(.75,.5)--(0,0);
         \draw[gray!60](0,0)--(0,-2.5)--(.75,.5-2.5)--(.75,.5)--(0,0);
         
         \end{scope}

    \end{scope}

    \end{scope}
    
    \end{tikzpicture}
    \caption{Intersection of surfaces}
    \label{fig:Intersection}
\end{figure}

\begin{notation}
For surfaces $\{S_i\}_{i \in I}$ in $D^3$ with boundary $C=\{C_j\}_{j\in J}$. Let $|C|$ be the number closed curves. 
We define its connected type as $T=\{T_i: i\in I\}$, where $T_i=\{j \in J : C_j \subseteq \partial S_i\}$, which is a partition of the set $J$. 
\end{notation}

\begin{proposition}
\label{prop:vectorfinite}
The $S^3$ functional $Z$ is 3-finite, namely the vector space $\tilde{V}(S^2,C)$ is finite dimensional.
\end{proposition}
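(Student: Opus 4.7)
The plan is to exhibit a finite spanning set of $\tilde V(S^2,C)$ indexed by set partitions of the finite index set $J$ of boundary curves. The three relations available in $K_Z$ are: \PL isotopy, the intersection moves from Figure \ref{fig:Intersection} (moving one surface through another, through an intersection curve, or through an intersection point), and the Euler-characteristic scaling, which says that changing the topology of a connected surface by $\Delta e$ multiplies the class in $\tilde V$ by $2^{-\Delta e/4}$.

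First I would use the intersection and isotopy relations to argue that every element of $V(S^2,C)$ is equivalent modulo $K_Z$ to a configuration in which the surfaces $\{S_i\}$ are pairwise disjoint, each contained in a small regular neighborhood determined by its boundary in $C$. The mechanism is induction on the number of 1-stratum components: given two intersecting surfaces, one may push one across the other using the first relation of Figure \ref{fig:Intersection}; when intersection curves themselves cross another surface, the second and third relations (the higher analogue of Yang--Baxter) reduce the picture. After finitely many such moves plus ambient isotopy, the whole configuration becomes a disjoint union of embedded surfaces with prescribed boundary in $C$ and possibly some closed-surface components floating in $D^3$.

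Second, I would simplify the disentangled configuration component by component. Each closed component $\Sigma$ (bounding no curve of $C$) is equivalent via Euler-characteristic scaling to a round sphere and contributes only a nonzero scalar $2^{1-e(\Sigma)/4}$, which can be pulled out. For the remaining components, each connected surface with a prescribed nonempty boundary in $C$ is determined up to the $2^{-\Delta e/4}$ factor by its topological type, and any two topological types with the same boundary are equivalent in $\tilde V$ up to a nonzero scalar via the Euler-characteristic relation (handle attachments and, if relevant, crosscaps change $e$ by integers). Thus, for each partition $T$ of $J$ expressing which boundary curves cobound a connected component, there is a distinguished vector $v_T\in\tilde V(S^2,C)$ represented by any such canonical surface configuration, and the reduction shows that $\{v_T : T\text{ a partition of }J\}$ spans $\tilde V(S^2,C)$.

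Finally, since $J$ is finite, the number of partitions is the Bell number $B_{|J|}<\infty$, so $\dim\tilde V(S^2,C)\leq B_{|J|}$ and we are done. The main obstacle is the first step: rigorously arguing that the intersection moves of Figure \ref{fig:Intersection}, together with \PL isotopy, suffice to separate an arbitrary transverse configuration of surfaces. I would handle this by induction on the lexicographic pair (number of 0-strata, number of 1-strata), checking that the listed local moves strictly decrease this pair whenever they can be applied and that applicability is guaranteed whenever the configuration is not already disjoint; the analogous argument in \S \ref{Sec: Labelled Regular Stratified Manifolds} for transverse triangulations (Theorem \ref{Thm: transversality}) provides the model.
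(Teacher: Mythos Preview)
Your approach is essentially the same as the paper's: reduce to disjoint surfaces via the intersection moves, normalize each connected component via the Euler-number scaling, and conclude that the connected type (a partition of $J$) determines the vector up to scalar, giving the Bell number bound. The only difference is packaging: the paper states the separation relations (``we can split surfaces away from each other'') in the paragraph preceding the proposition and then gives a three-line proof invoking only the Euler scaling, whereas you spell out the separation step inside the proof and propose an inductive scheme to justify it.
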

\begin{proof}
For each connected surface in $D^3$, we can change its shape by multiplying the scalar $2^{-\frac{e}{4}}$ according to the change of the Euler number $e$.
Thus surfaces with the same connected shape, they are scalar multiples of each other. 
So the dimension vector space $\tilde{V}(S^2,C)$ is bounded by the number of connected types, which is the number of partitions of $J$. So it is finite dimensional.
\end{proof}

When $|C|=0,1,2,3$, the number of partitions is $0,1,2,5$. An isotopy of the boundary $C$ induces an isometry of the vector spaces. We only need to consider the boundary $C$ with non-intersecting curves.

\begin{proposition}\label{Prop: 0-curve}
When $C$ has no curve, $\tilde{V}(S^2,C)$ is one-dimensional. Consequently, there is only one 0-morphism.
\end{proposition}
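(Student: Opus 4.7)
The plan is to show that every vector in $V(D^3_-, \emptyset)$ is equivalent, modulo the null kernel $K_Z$, to an explicit scalar multiple of the empty stratification, and then verify that the empty stratification is itself nonzero.

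First I would exploit the multiplicativity of the $S^3$ functional on disjoint unions. Let $v \in V(D^3_-, \emptyset)$ be represented by a stratified manifold with closed surfaces $S_1,\ldots,S_m$ contained in the interior of $D^3_-$, and let $w \in V(D^3_+, \emptyset)$ be represented by closed surfaces $T_1,\ldots,T_\ell$ in the interior of $D^3_+$. Since $C = \emptyset$, none of the $S_i$ or $T_j$ meets the equator $S^{n-1} = S^2$, so in the glued $S^3$ the surfaces from $v$ are disjoint from those of $w$. By Def.~\ref{Def: Ising Z},
\begin{align*}
Z(v \cup w) \;=\; \prod_{i=1}^{m} 2^{\,1-e(S_i)/4} \cdot \prod_{j=1}^{\ell} 2^{\,1-e(T_j)/4} \;=\; c_v \cdot c_w,
\end{align*}
where $c_v := \prod_i 2^{1 - e(S_i)/4}$ depends only on $v$ (and similarly $c_w$ only on $w$).

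Next I would use this factorization to identify $v$ with $c_v [\emptyset]$ in the quotient. For every $w \in V(D^3_+, \emptyset)$, the computation above gives
\begin{align*}
Z\!\left( (v - c_v \cdot \emptyset) \cup w \right) \;=\; c_v c_w - c_v \cdot 1 \cdot c_w \;=\; 0,
\end{align*}
so $v - c_v \cdot \emptyset \in K_{(S^2,\emptyset),-}$. Hence $[v] = c_v\,[\emptyset]$ in $\tilde{V}(S^2, \emptyset)$, and consequently $\tilde{V}(S^2, \emptyset)$ is spanned by $[\emptyset]$. To see that $[\emptyset] \neq 0$, pair it with $\emptyset \in V(D^3_+, \emptyset)$: the gluing yields $(S^3, \emptyset)$, and the empty product in \eqref{partfun} gives $Z(S^3, \emptyset) = 1 \neq 0$. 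Therefore $\dim \tilde{V}(S^2, \emptyset) = 1$.

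Finally, for the 0-morphism statement, observe that $A(D^0 \times D^3) = \tilde{V}_{\partial(D^0 \times D^3),-} = \tilde{V}(S^2, \emptyset)$ is a one-dimensional unital commutative algebra with identity $[\emptyset]$ (the unit of the $D^3$-algebra, cf.~Prop.~\ref{Prop: D1 algebra}). Its only nonzero idempotent is its identity, so there is a unique indecomposable $0$-morphism. The main (and essentially only) point is the multiplicative factorization of $Z$ across the equator when $C = \emptyset$; no use of the intersection relations of Figure~\ref{fig:Intersection} or of isotopy tricks is needed, because disjointness between $v$ and $w$ is automatic once the boundary on $S^2$ is empty.
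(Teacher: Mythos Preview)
Your proof is correct and follows essentially the same approach as the paper. The paper's one-line proof (``Any closed surface with Euler number $e$ reduces to a scalar $2^{1-e/4}$'') implicitly uses the same multiplicativity of $Z$ that you spell out explicitly via the bilinear form; your version is simply a more detailed unpacking of why the closed-surface-to-scalar reduction is a genuine relation in $K_Z$, together with a clean check that $[\emptyset]\neq 0$ and an explicit identification of the $0$-morphism algebra.
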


\begin{proof}
Any closed surface with Euler number $e$ reduces to a scalar $2^{1-\frac{e}{4}}$.
\end{proof}

\begin{proposition}\label{Prop: 1-curve}
When $|C|=1$, $\tilde{V}(S^2,C)$ is one-dimensional.
\end{proposition}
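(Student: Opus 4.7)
The plan is to sandwich $\dim \tilde V(S^2,C)$ between $1$ and $1$. For the upper bound I would invoke Proposition~\ref{prop:vectorfinite}: its proof shows that the dimension is bounded above by the number of partitions of the boundary index set $J$, and when $|C|=1$ the unique partition of a singleton is $\{\{1\}\}$, giving $\dim\tilde V(S^2,C)\le 1$ immediately.

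For the lower bound, I would exhibit an explicit non-null vector. Let $v\in V_{\mathcal{S},-}$ be represented by a flat disc $D\subset D^3_-$ with $\partial D = C$, and let $v'\in V_{\mathcal{S},+}$ be represented by the analogous disc $D'\subset D^3_+$ in the opposite hemisphere. Their union $D\cup D'$ is a PL $2$-sphere embedded in $S^3$, whose Euler number equals $2$, so by the defining formula~\eqref{partfun},
$$Z(v\cup v')=2^{1-2/4}=\sqrt{2}\neq 0.$$
Hence $v$ is not a null vector and $\tilde V(S^2,C)\neq 0$, matching the upper bound.

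The only subtle point is to confirm that the bound from Proposition~\ref{prop:vectorfinite} genuinely applies in the presence of closed surface components or transverse intersections, since the proof sketched there proceeds by reducing to a canonical shape. Any closed surface component not meeting $C$ can be isotoped away from everything else and absorbed as a scalar factor $2^{1-e/4}$ via the Euler-number shape change; any transverse intersection between surfaces can be resolved by the local isotopy relations illustrated in Figure~\ref{fig:Intersection}. After these reductions one is left with a single connected surface bounding $C$, and any two such differ only by a scalar $2^{(e'-e)/4}$. Thus every generator of $V(S^2,C)$ descends to a scalar multiple of the class $[v]$ in $\tilde V(S^2,C)$, confirming the one-dimensionality. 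The main work is simply tracking that each reduction preserves the null-class, which is immediate from the null principle and the fact that the relations used lie in $K_Z$.
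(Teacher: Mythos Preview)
Your proof is correct and follows essentially the same approach as the paper: both reduce any surface with boundary $C$ to a scalar multiple of the disc via the Euler-number shape change. You are in fact more thorough than the paper, which omits the explicit verification that the disc class is nonzero; your computation $Z(v\cup v')=\sqrt{2}\neq 0$ supplies that lower bound cleanly.
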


\begin{proof}
Suppose the curve is the boundary of a surface $S$. We can change $S$ to the disc by multiplying the scalar $2^{-\frac{e}{4}}$ according to the change of the Euler number $e$. 
\end{proof}

\begin{proposition}\label{Prop: 2-curve}
When $|C|=2$, $\tilde{V}(S^2,C)$ is two-dimensional.
\end{proposition}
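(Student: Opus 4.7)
The plan is to prove this by pinching the dimension from above and below. By Proposition \ref{prop:vectorfinite}, the dimension of $\tilde{V}(S^2,C)$ is bounded by the number of connected types, i.e.\ the number of partitions of $J=\{1,2\}$. There are exactly two such partitions, $\{\{1\},\{2\}\}$ and $\{\{1,2\}\}$, giving the upper bound $\dim \tilde{V}(S^2,C) \leq 2$.

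For the lower bound I will exhibit two concrete vectors and prove they are linearly independent. Let $v_1 \in \tilde{V}(S^2,C)$ be represented by two disjoint discs in $D^3$, each bounded by one component of $C$ (type $\{\{1\},\{2\}\}$). Let $v_2$ be represented by a single cylinder (annulus) in $D^3$ with $\partial = C$ (type $\{\{1,2\}\}$). These exist since $C$ has two non-intersecting curves. By the argument in Prop.~\ref{prop:vectorfinite}, every element of $\tilde{V}(S^2,C)$ is a linear combination of scalar multiples of these two types, so $\{v_1,v_2\}$ spans the space; it remains to show they are linearly independent.

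The key step is to pair $v_1,v_2$ with test vectors on the opposite side of $S^2$ and compute $Z$ via the formula in Def.~\ref{Def: Ising Z}. Let $w_1$ (resp. $w_2$) be the corresponding ``two discs'' (resp. ``cylinder'') vector on the $+$ side, with boundary matching $C$ up to orientation. Gluing gives closed surfaces in $S^3$ whose Euler characteristics are immediate: $v_1\cup w_1$ is two spheres $(e=2$ each), $v_1\cup w_2 = v_2\cup w_1$ is a single sphere $(e=2)$, and $v_2\cup w_2$ is a torus $(e=0)$. Applying $Z(S^3,\bigcup S_i)=\prod 2^{1-e_i/4}$ yields the pairing matrix
\begin{equation*}
M \;=\; \begin{pmatrix} Z(v_1\cup w_1) & Z(v_1\cup w_2) \\ Z(v_2\cup w_1) & Z(v_2\cup w_2) \end{pmatrix} \;=\; \begin{pmatrix} 2 & \sqrt{2} \\ \sqrt{2} & 2 \end{pmatrix},
\end{equation*}
whose determinant is $2\neq 0$. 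Since the bilinear form defined by $Z$ distinguishes $v_1$ from $v_2$, neither of $v_1,v_2$ is a null vector nor a nontrivial linear combination giving a null vector; hence $v_1$ and $v_2$ represent linearly independent classes in $\tilde{V}(S^2,C)$.

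I do not expect a genuine obstacle here: the main thing to check carefully is that the Euler number bookkeeping and the gluing of $D_\pm^3$ into $S^3$ really produce the closed surfaces claimed, and that the factor $\zeta$ and any global normalization from Def.~\ref{Def: zeta} cancel consistently (since we only need non-degeneracy of $M$, not its exact entries, any uniform global factor is harmless). Combining the upper and lower bounds yields $\dim \tilde{V}(S^2,C)=2$.
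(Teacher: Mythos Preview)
Your proof is correct and follows the same approach as the paper: identify the two connected types (two discs vs.\ one cylinder) and argue they form a basis. The paper's own proof is a one-liner that simply asserts ``they form a basis'' without computation, whereas you actually verify linear independence by computing the $2\times 2$ pairing matrix and checking its determinant is nonzero. Your added detail is exactly what is needed to make the paper's assertion rigorous; the only minor remark is that the factor $\zeta$ does not enter here at all, since the $S^n$ functional $Z$ on $S^3$ is defined directly by Def.~\ref{Def: Ising Z} and $\zeta$ only appears later when extending to bulk $(n{+}1)$-manifolds.
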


\begin{proof}
There are two connected types depending on whether the two curves are connected by a surface or not. They form a basis. 
\end{proof}

The following result is the key to prove the complete finiteness of $Z$.

\begin{proposition}\label{Prop: key relation}
    When $|C|=3$, $\tilde{V}(S^2,C)$ is four-dimensional.
    It has a relation in Fig \ref{fig:SkeinRelation}.
\begin{figure}[H]
    \centering
    \begin{tikzpicture}
        
        \begin{scope}[shift={(0,-3)}]

        \begin{scope}[shift={(3,0)}]
        \node at (-.5,-1){$+$};

         \fill[gray!20](0,0)--(.75,.75)--(.75,.75-2.5)--(0,-2.5);
         \draw[gray!60](0,0)--(.75,.75)--(.75,.75-2.5)--(0,-2.5)--(0,0);
        
        \begin{scope}[shift={(1,0.5)}]

        \fill[gray!20](0,0)--(.25,.25)--(.25,.25-2)--(0,-2);
        \draw[gray!60](0,0)--(.25,.25)--(.25,.25-2);


        \fill[gray!15](0.25,0.25)--(.75,.25)--(.75,-1.25)--(0.25,-1.25);
        \draw[gray!60](0.25,0.25)--(.75,.25)--(.75,-1.25);

         \fill[gray!25](0,0)--(.5,0)--(.5,-2.5)--(0,-2.5);

         \draw[gray!60](0,0)--(.5,0)--(.5,-2.5)--(0,-2.5)--(0,0);
          \fill[gray!25](0.5,0)--(.75,.25)--(.75,.25-2.5)--(0.5,-2.5);
          \draw[gray!60](0.5,0)--(.75,.25)--(.75,.25-2.5)--(0.5,-2.5)--(0.5,0);
           \end{scope}

        \end{scope}

        \begin{scope}[shift={(0,0)}]
        
        \node at (-.7,-1){$-\sqrt{2}$};

         \fill[gray!20](0,0)--(.75,.75)--(.75,.75-2.5)--(0,-2.5);
         \draw[gray!60](0,0)--(.75,.75)--(.75,.75-2.5)--(0,-2.5)--(0,0);
        \begin{scope}[shift={(1,.5)}]

        \fill[gray!20](0,0)--(.25,.25)--(.25,.25-1)--(0,-1);
        \draw[gray!60](0,0)--(.25,.25)--(.25,.25-1);


        \fill[gray!15](0.25,0.25)--(.75,.25)--(.75,-.75)--(0.25,-.75);
        \draw[gray!60](0.25,0.25)--(.75,.25)--(.75,-.75);

        \fill[gray!25](0.25,0.25)--(.5+.25,0.25)--(.75,-.75)--(-.25,-1+.25)--(-.25,-.25)--(0.25,-.25)--(0.25,0.25);
        \draw[gray!60](0.25,0.25)--(.5+.25,0.25)--(.75,-.75)--(-.25,-1+.25)--(-.25,-.25)--(0.25,-.25)--(0.25,0.25);

        \fill[gray!20](-.5,-.5)--(0,-.5)--(0.25,-.25)--(-.25,-.25);
         \draw[gray!60](-.5,-.5)--(0,-.5)--(0.25,-.25)--(-.25,-.25)--(-.5,-.5);

         \fill[gray!25](0,0)--(.5,0)--(.5,-1)--(-.5,-1)--(-.5,-.5)--(0,-.5)--(0,0);
         \draw[gray!60](0,0)--(.5,0)--(.5,-1)--(-.5,-1)--(-.5,-.5)--(0,-.5)--(0,0);

        \fill[gray!25](0.5,0)--(.75,.25)--(.75,.25-1)--(0.5,-1);
          \draw[gray!60](0.5,0)--(.75,.25)--(.75,.25-1)--(0.5,-1)--(0.5,0);

        \end{scope}

        \begin{scope}[shift={(1,-.5)}]

        \fill[gray!20](-.5,-.5)--(0,-.5)--(0.25,-.25)--(-.25,-.25);
         \draw[gray!60](-.5,-.5)--(0,-.5)--(0.25,-.25)--(-.25,-.25)--(-.5,-.5);

        \fill[gray!25](0.25,-.25)--(.75,-.25)--(.75,-1.25)--(0.25,-1.25)--(0.25,-.75)--(-.25,-.75)--(-.25,-.25)--(0.25,-.25);
        \draw[gray!60](0.25,-.25)--(.75,-.25)--(.75,-1.25)--(0.25,-1.25)--(0.25,-.75)--(-.25,-.75)--(-.25,-.25)--(0.25,-.25);

         \fill[gray!25](0,-.5)--(.5,-.5)--(.5,-1.5)--(0,-1.5)--(0,-1)--(-.5,-1)--(-.5,-.5)--(0,-.5);
         \draw[gray!60](0,-.5)--(.5,-.5)--(.5,-1.5)--(0,-1.5)--(0,-1)--(-.5,-1)--(-.5,-.5)--(0,-.5);

         \fill[gray!25](-.5,-.5)--(.5,-.5)--(.75,-.25)--(-.25,-.25);
         \draw[gray!60](-.5,-.5)--(.5,-.5)--(.75,-.25)--(-.25,-.25)--(-.5,-.5);

         \fill[gray!25](.5,-.5)--(.75,-.25)--(.75,-1.25)--(.5,-1.5);
         \draw[gray!60](.5,-.5)--(.75,-.25)--(.75,-1.25)--(.5,-1.5)--(.5,-.5);

           \end{scope}

        \end{scope}

        \begin{scope}[shift={(6,0)}]
        
        \node at (-.5,-1){$+$};

         \fill[gray!20](0,0)--(.75,.75)--(.75,.75-2.5)--(0,-2.5);
         \draw[gray!60](0,0)--(.75,.75)--(.75,.75-2.5)--(0,-2.5)--(0,0);
        
        \begin{scope}[shift={(1,.5)}]

        \fill[gray!20](0,0)--(.25,.25)--(.25,.25-1)--(0,-1);
        \draw[gray!60](0,0)--(.25,.25)--(.25,.25-1);


        \fill[gray!15](0.25,0.25)--(.75,.25)--(.75,-.75)--(0.25,-.75);
        \draw[gray!60](0.25,0.25)--(.75,.25)--(.75,-.75);

        \fill[gray!25](0.25,0.25)--(.5+.25,0.25)--(.75,-.75)--(-.25,-1+.25)--(-.25,-.25)--(0.25,-.25)--(0.25,0.25);
        \draw[gray!60](0.25,0.25)--(.5+.25,0.25)--(.75,-.75)--(-.25,-1+.25)--(-.25,-.25)--(0.25,-.25)--(0.25,0.25);

        \fill[gray!20](-.5,-.5)--(0,-.5)--(0.25,-.25)--(-.25,-.25);
         \draw[gray!60](-.5,-.5)--(0,-.5)--(0.25,-.25)--(-.25,-.25)--(-.5,-.5);

         \fill[gray!25](0,0)--(.5,0)--(.5,-1)--(-.5,-1)--(-.5,-.5)--(0,-.5)--(0,0);
         \draw[gray!60](0,0)--(.5,0)--(.5,-1)--(-.5,-1)--(-.5,-.5)--(0,-.5)--(0,0);

          \fill[gray!25](0.5,0)--(.75,.25)--(.75,.25-1)--(0.5,-1);
          \draw[gray!60](0.5,0)--(.75,.25)--(.75,.25-1)--(0.5,-1)--(0.5,0);

        \end{scope}

        \begin{scope}[shift={(1,-.5)}]


        \fill[gray!25](0.25,-.25)--(.75,-.25)--(.75,-1.25)--(0.25,-1.25)--(0.25,-.75)--(0.25,-.25);
        \draw[gray!60](0.25,-.25)--(.75,-.25)--(.75,-1.25)--(0.25,-1.25)--(0.25,-.75)--(0.25,-.25);

         \fill[gray!25](0,-.5)--(.5,-.5)--(.5,-1.5)--(0,-1.5)--(0,-1)--(0,-.5);
         \draw[gray!60](0,-.5)--(.5,-.5)--(.5,-1.5)--(0,-1.5)--(0,-1)--(0,-.5);

         \fill[gray!25](0,-.5)--(.5,-.5)--(.75,-.25)--(.25,-.25);
         \draw[gray!60](0,-.5)--(.5,-.5)--(.75,-.25)--(.25,-.25)--(0,-.5);

         \fill[gray!25](.5,-.5)--(.75,-.25)--(.75,-1.25)--(.5,-1.5);
         \draw[gray!60](.5,-.5)--(.75,-.25)--(.75,-1.25)--(.5,-1.5)--(.5,-.5);

        \end{scope}

        \end{scope}

         \begin{scope}[shift={(9,0)}]
        
        \node at (-.5,-1){$+$};

         \fill[gray!20](0,0)--(.75,.75)--(.75,.75-2.5)--(0,-2.5);
         \draw[gray!60](0,0)--(.75,.75)--(.75,.75-2.5)--(0,-2.5)--(0,0);
        
        \begin{scope}[shift={(1,.5)}]

        \fill[gray!20](0,0)--(.25,.25)--(.25,.25-1)--(0,-1);
        \draw[gray!60](0,0)--(.25,.25)--(.25,.25-1);

        \fill[gray!15](0.25,0.25)--(.75,.25)--(.75,-.75)--(0.25,-.75);
        \draw[gray!60](0.25,0.25)--(.75,.25)--(.75,-.75);

        \fill[gray!25](0.25,0.25)--(.5+.25,0.25)--(.75,-.75)--(.25,-1+.25)--(0.25,0.25);
        \draw[gray!60](0.25,0.25)--(.5+.25,0.25)--(.75,-.75)--(.25,-1+.25)--(0.25,0.25);

         \fill[gray!25](0,0)--(.5,0)--(.5,-1)--(0,-1)--(0,0);
         \draw[gray!60](0,0)--(.5,0)--(.5,-1)--(0,-1)--(0,0);

          \fill[gray!25](0.5,0)--(.75,.25)--(.75,.25-1)--(0.5,-1);
          \draw[gray!60](0.5,0)--(.75,.25)--(.75,.25-1)--(0.5,-1)--(0.5,0);

           \end{scope}

        \begin{scope}[shift={(1,-.5)}]
         
        \fill[gray!20](-.5,-.5)--(0,-.5)--(0.25,-.25)--(-.25,-.25);
         \draw[gray!60](-.5,-.5)--(0,-.5)--(0.25,-.25)--(-.25,-.25)--(-.5,-.5);

        \fill[gray!25](0.25,-.25)--(.75,-.25)--(.75,-1.25)--(0.25,-1.25)--(0.25,-.75)--(-.25,-.75)--(-.25,-.25)--(0.25,-.25);
        \draw[gray!60](0.25,-.25)--(.75,-.25)--(.75,-1.25)--(0.25,-1.25)--(0.25,-.75)--(-.25,-.75)--(-.25,-.25)--(0.25,-.25);

         \fill[gray!25](0,-.5)--(.5,-.5)--(.5,-1.5)--(0,-1.5)--(0,-1)--(-.5,-1)--(-.5,-.5)--(0,-.5);
         \draw[gray!60](0,-.5)--(.5,-.5)--(.5,-1.5)--(0,-1.5)--(0,-1)--(-.5,-1)--(-.5,-.5)--(0,-.5);

         \fill[gray!25](-.5,-.5)--(.5,-.5)--(.75,-.25)--(-.25,-.25);
         \draw[gray!60](-.5,-.5)--(.5,-.5)--(.75,-.25)--(-.25,-.25)--(-.5,-.5);

         \fill[gray!25](.5,-.5)--(.75,-.25)--(.75,-1.25)--(.5,-1.5);
         \draw[gray!60](.5,-.5)--(.75,-.25)--(.75,-1.25)--(.5,-1.5)--(.5,-.5);

           \end{scope}

        \end{scope}

        \begin{scope}[shift={(12,0)}]
        
        \node at (-.5,-1){$-$};

         \fill[gray!20](0,0)--(.75,.75)--(.75,.75-2.5)--(0,-2.5);
         \draw[gray!60](0,0)--(.75,.75)--(.75,.75-2.5)--(0,-2.5)--(0,0);
        
        \begin{scope}[shift={(1,.5)}]

        \fill[gray!20](0,0)--(.25,.25)--(.25,.25-1)--(0,-1);
        \draw[gray!60](0,0)--(.25,.25)--(.25,.25-1);

        \fill[gray!15](0.25,0.25)--(.75,.25)--(.75,-.75)--(0.25,-.75);
        \draw[gray!60](0.25,0.25)--(.75,.25)--(.75,-.75);

        \fill[gray!25](0.25,0.25)--(.5+.25,0.25)--(.75,-.75)--(.25,-1+.25)--(0.25,0.25);
        \draw[gray!60](0.25,0.25)--(.5+.25,0.25)--(.75,-.75)--(.25,-1+.25)--(0.25,0.25);

         \fill[gray!25](0,0)--(.5,0)--(.5,-1)--(0,-1)--(0,0);
         \draw[gray!60](0,0)--(.5,0)--(.5,-1)--(0,-1)--(0,0);

          \fill[gray!25](0.5,0)--(.75,.25)--(.75,.25-1)--(0.5,-1);
          \draw[gray!60](0.5,0)--(.75,.25)--(.75,.25-1)--(0.5,-1)--(0.5,0);

           \end{scope}

        \begin{scope}[shift={(1,-.5)}]

        \fill[gray!25](0.25,-.25)--(.75,-.25)--(.75,-1.25)--(0.25,-1.25)--(0.25,-.75)--(0.25,-.25);
        \draw[gray!60](0.25,-.25)--(.75,-.25)--(.75,-1.25)--(0.25,-1.25)--(0.25,-.75)--(0.25,-.25);

         \fill[gray!25](0,-.5)--(.5,-.5)--(.5,-1.5)--(0,-1.5)--(0,-1)--(0,-.5);
         \draw[gray!60](0,-.5)--(.5,-.5)--(.5,-1.5)--(0,-1.5)--(0,-1)--(0,-.5);

         \fill[gray!25](0,-.5)--(.5,-.5)--(.75,-.25)--(.25,-.25);
         \draw[gray!60](0,-.5)--(.5,-.5)--(.75,-.25)--(.25,-.25)--(0,-.5);

         \fill[gray!25](.5,-.5)--(.75,-.25)--(.75,-1.25)--(.5,-1.5);
         \draw[gray!60](.5,-.5)--(.75,-.25)--(.75,-1.25)--(.5,-1.5)--(.5,-.5);

        \end{scope}

        \end{scope}

        \node at (15,-1){$=0$};

        \end{scope}
    \end{tikzpicture}
    \caption{The key relation with three $S^1$ on the boundary.}
    \label{fig:SkeinRelation}
\end{figure}
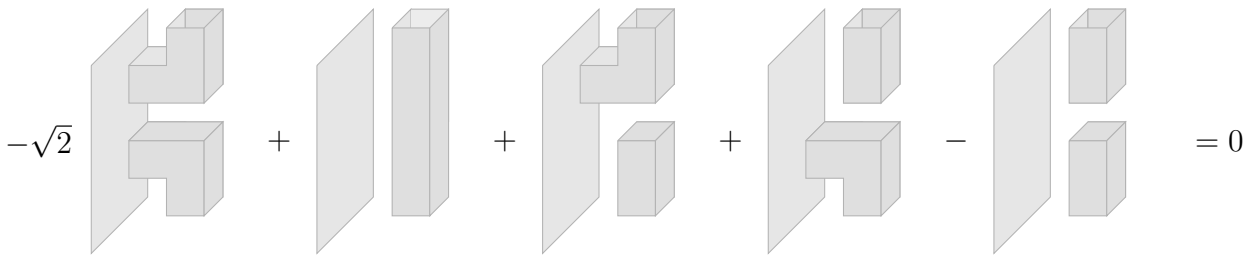
\end{proposition}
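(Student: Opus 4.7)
The proof has three steps: bound $\dim \tilde V(S^2,C) \leq 5$ by counting partitions, derive the explicit relation of Figure~\ref{fig:SkeinRelation} to cut the dimension to at most $4$, and establish $\dim \geq 4$ by computing a Gram matrix.

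The upper bound follows the argument of Proposition~\ref{prop:vectorfinite}: any stratified surface in $D^3$ with boundary $C$ can be simplified via the intersection moves of Figure~\ref{fig:Intersection} together with Euler-number rescalings to a scalar multiple of a non-intersecting surface configuration, and such a configuration is determined by the partition of the three boundary curves into connected components. Since the Bell number $B_3 = 5$, this gives $\dim \tilde V(S^2,C) \leq 5$.

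To establish the skein relation I plan to verify it by pairing against a spanning set of the opposing hemisphere. It suffices to test the alleged null vector (the left-hand side of Figure~\ref{fig:SkeinRelation}) against each of the five partition-type vectors on the opposite half-disc and show that $Z$ evaluates to zero. Every such pairing yields a closed stratified surface in $S^3$, possibly with intersections, whose $Z$-value is computed by resolving intersections via Figure~\ref{fig:Intersection} and then applying the Euler-number formula~\eqref{partfun} of Definition~\ref{Def: Ising Z}. The coefficient $\sqrt{2}$ in the relation is precisely the $Z$-value of a $2$-sphere ($e=2$), and it enters because one of the five diagrams contains an additional sphere-like cap produced by the intersection moves. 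Verifying that all five pairings vanish proves the relation holds in $\tilde V(S^2,C)$, pushing the dimension down to at most four.

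For the lower bound I compute the $4 \times 4$ Gram matrix $G_{ij} = Z(v_i \cup \theta(v_j))$ for four chosen partition-type vectors, for example dropping the all-separated type. Each entry is $\prod_k 2^{1-e_k/4}$ over the connected components of the closed surface obtained by gluing $v_i$ to $v_j$ along $C$; for instance the pants-to-pants gluing yields a genus-$2$ surface ($e=-2$) contributing $2^{3/2}$, while the cylinder-to-cylinder pairings produce tori contributing $2$. A direct enumeration of the ten independent entries and evaluation of $\det G$ shows $\det G \neq 0$, giving $\dim \tilde V(S^2,C) \geq 4$. The principal obstacle is the bookkeeping in the skein derivation: one must track how iterated applications of the intersection moves of Figure~\ref{fig:Intersection} combine with Euler-number rescalings to yield exactly the coefficients $-\sqrt{2}, +1, +1, +1, -1$ and not some other linear combination; once the relation is established, both the upper and lower bounds collapse to $\dim \tilde V(S^2,C) = 4$.
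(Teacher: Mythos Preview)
Your proposal is correct and follows essentially the same approach as the paper. The paper's proof simply lists the five connected-type vectors $(123),(12,3),(13,2),(23,1),(1,2,3)$, states that by direct computation their $5\times 5$ inner-product matrix has rank $4$, and reads off the displayed relation as the null vector; you are doing the same Gram-matrix computation but organized into two stages (first verify the specific null vector by five pairings, then check a $4\times 4$ minor is nonsingular), which is equivalent but slightly more laborious than computing the full $5\times 5$ matrix at once.
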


\begin{proof}
We have five vectors according to connected types:
$$(123),(12,3),(13,2),(23,1),(1,2,3)$$ corresponding to the five terms above. By direct computation, their inner product matrix has rank 4, and we obtain a relation from $K_Z$, as shown in \ref{fig:SkeinRelation}: $$-\sqrt{2}(123)+(12,3)+(13,2)+(23,1)-(1,2,3)=0$$ 
\end{proof}

Let us construct the indecomposible $2$-morphisms of type $\mathcal{D}^2$ with link boundary $S^1$ without stratification.
Up to isotopy, we only consider the case that $\mathcal{D}^2$ contains non-intersecting contractible closed curves which are the boundary of small 2-discs.
When there is no small 2-disc, we obtain an indecomposible $2$-morphisms $t_0$ as $D^2\times D^1$ without stratification.
When there is one small 2-disc $D$, we obtain the identity illustrated as the second figure in Fig.\ref{fig: red tube}, which has a stratification $\partial D\times D^1$; and an indecomposible $2$-morphism $t_0'$ equivalent to $t_0$ illustrated as the third figure in Fig.\ref{fig: red tube}. Their difference $t_r$ is an indecomposible $2$-morphism inequivalent to $t_0$, illustrated as the first figure, a red tube, in Fig.\ref{fig: red tube}.

\begin{figure}[H]
    \centering
    \begin{tikzpicture}
        \begin{scope}[shift={(0,.5)}]

        \node at (1,-.25)[red]{$r$};

        \fill[red!20](0,0)--(.25,.25)--(.25,.25-2.5)--(0,-2.5);
        \draw[red!60](0,0)--(.25,.25)--(.25,.25-2.5);


        \fill[red!15](0.25,0.25)--(.75,.25)--(.75,-1.25)--(0.25,-1.25);
        \draw[red!60](0.25,0.25)--(.75,.25)--(.75,-1.25);

         \fill[red!25](0,0)--(.5,0)--(.5,-2.5)--(0,-2.5);

         \draw[red!60](0,0)--(.5,0)--(.5,-2.5)--(0,-2.5)--(0,0);
          \fill[red!25](0.5,0)--(.75,.25)--(.75,.25-2.5)--(0.5,-2.5);
          \draw[red!60](0.5,0)--(.75,.25)--(.75,.25-2.5)--(0.5,-2.5)--(0.5,0);
           \end{scope}

        \begin{scope}[shift={(2,.5)}]

         \node at (-.5,-1){$=$};

        \fill[gray!20](0,0)--(.25,.25)--(.25,.25-2.5)--(0,-2.5);
        \draw[gray!60](0,0)--(.25,.25)--(.25,.25-2.5);


        \fill[gray!15](0.25,0.25)--(.75,.25)--(.75,-1.25)--(0.25,-1.25);
        \draw[gray!60](0.25,0.25)--(.75,.25)--(.75,-1.25);

         \fill[gray!25](0,0)--(.5,0)--(.5,-2.5)--(0,-2.5);

         \draw[gray!60](0,0)--(.5,0)--(.5,-2.5)--(0,-2.5)--(0,0);
          \fill[gray!25](0.5,0)--(.75,.25)--(.75,.25-2.5)--(0.5,-2.5);
          \draw[gray!60](0.5,0)--(.75,.25)--(.75,.25-2.5)--(0.5,-2.5)--(0.5,0);
           \end{scope}

        \begin{scope}[shift={(3,0)}]
        
        \node at (0.5,-.5){$-\frac{1}{\sqrt{2}}$};
        
        \begin{scope}[shift={(1,.5)}]

        \fill[gray!20](0,0)--(.25,.25)--(.25,.25-.5)--(0,-.5);
        \draw[gray!60](0,0)--(.25,.25)--(.25,.25-.5);


        \fill[gray!15](0.25,0.25)--(.75,.25)--(.75,-.25)--(0.25,-.25);
        \draw[gray!60](0.25,0.25)--(.75,.25)--(.75,-.25);

        \fill[gray!25](0.25,0.25)--(.5+.25,0.25)--(.75,-.25)--(.25,-.25)--(0.25,0.25);
        \draw[gray!60](0.25,0.25)--(.5+.25,0.25)--(.75,-.25)--(.25,-.25)--(0.25,0.25);

         \fill[gray!25](0,0)--(.5,0)--(.5,-.5)--(0,-.5)--(0,0);
         \draw[gray!60](0,0)--(.5,0)--(.5,-.5)--(0,-.5)--(0,0);

          \fill[gray!25](0.5,0)--(.75,.25)--(.75,.25-.5)--(0.5,-.5);
          \draw[gray!60](0.5,0)--(.75,.25)--(.75,.25-.5)--(0.5,-.5)--(0.5,0);

           \end{scope}

        \begin{scope}[shift={(1,-.5)}]


         \fill[gray!25](0,-1)--(.5,-1)--(.5,-1.5)--(0,-1.5)--(0,-1);
         \draw[gray!60](0,-1)--(.5,-1)--(.5,-1.5)--(0,-1.5)--(0,-1);

         \fill[gray!25](0,-1)--(.5,-1)--(.75,-.75)--(.25,-.75);
         \draw[gray!60](0,-1)--(.5,-1)--(.75,-.75)--(.25,-.75)--(0,-1);

         \fill[gray!25](.5,-1)--(.75,-.75)--(.75,-1.25)--(.5,-1.5);
         \draw[gray!60](.5,-1)--(.75,-.75)--(.75,-1.25)--(.5,-1.5)--(.5,-1);

           \end{scope}

        \end{scope}

    \end{tikzpicture}
    \caption{Definition of a red tube}
    \label{fig: red tube}
\end{figure}
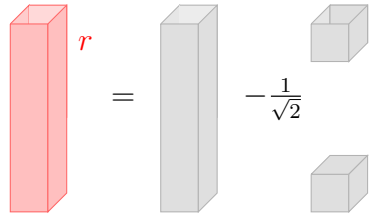

\begin{proposition}\label{Prop: merge red tube}
We have the following relation to fuse the red tube to with a plane:
\begin{figure}[H]
    \centering
    \begin{tikzpicture}

        \begin{scope}
            
         \fill[gray!20](0,0)--(.75,.75)--(.75,.75-2.5)--(0,-2.5);
         \draw[gray!60](0,0)--(.75,.75)--(.75,.75-2.5)--(0,-2.5)--(0,0);
        
        \begin{scope}[shift={(1,0)}]
         
        \node at (1,-.25)[red]{$r$};

        \fill[red!20](0,0)--(.25,.25)--(.25,.25-2)--(0,-2);
        \draw[red!60](0,0)--(.25,.25)--(.25,.25-2);


        \fill[red!15](0.25,0.25)--(.75,.25)--(.75,-1.25)--(0.25,-1.25);
        \draw[red!60](0.25,0.25)--(.75,.25)--(.75,-1.25);

         \fill[red!25](0,0)--(.5,0)--(.5,-2)--(0,-2);

         \draw[red!60](0,0)--(.5,0)--(.5,-2)--(0,-2)--(0,0);
          \fill[red!25](0.5,0)--(.75,.25)--(.75,.25-2)--(0.5,-2);
          \draw[red!60](0.5,0)--(.75,.25)--(.75,.25-2)--(0.5,-2)--(0.5,0);
           \end{scope}

        \end{scope}

        \begin{scope}[shift={(4,0)}]
        
        \node at (-1,-1){$=\sqrt{2}$};

         \fill[gray!20](0,0)--(.75,.75)--(.75,.75-2.5)--(0,-2.5);
         \draw[gray!60](0,0)--(.75,.75)--(.75,.75-2.5)--(0,-2.5)--(0,0);
        \begin{scope}[shift={(1,.5)}]
        
        \node at (1,-.25)[red]{$r$};

        \fill[red!20](0,0)--(.25,.25)--(.25,.25-1)--(0,-1);
        \draw[red!60](0,0)--(.25,.25)--(.25,.25-1);


        \fill[red!15](0.25,0.25)--(.75,.25)--(.75,-.75)--(0.25,-.75);
        \draw[red!60](0.25,0.25)--(.75,.25)--(.75,-.75);

        \fill[red!25](0.25,0.25)--(.5+.25,0.25)--(.75,-.75)--(-.25,-1+.25)--(-.25,-.25)--(0.25,-.25)--(0.25,0.25);
        \draw[red!60](0.25,0.25)--(.5+.25,0.25)--(.75,-.75)--(-.25,-1+.25)--(-.25,-.25)--(0.25,-.25)--(0.25,0.25);

        \fill[red!20](-.5,-.5)--(0,-.5)--(0.25,-.25)--(-.25,-.25);
         \draw[red!60](-.5,-.5)--(0,-.5)--(0.25,-.25)--(-.25,-.25)--(-.5,-.5);

         \fill[red!25](0,0)--(.5,0)--(.5,-1)--(-.5,-1)--(-.5,-.5)--(0,-.5)--(0,0);
         \draw[red!60](0,0)--(.5,0)--(.5,-1)--(-.5,-1)--(-.5,-.5)--(0,-.5)--(0,0);

          \fill[red!25](0.5,0)--(.75,.25)--(.75,.25-1)--(0.5,-1);
          \draw[red!60](0.5,0)--(.75,.25)--(.75,.25-1)--(0.5,-1)--(0.5,0);

           \end{scope}

        \begin{scope}[shift={(1,-.5)}]

        \node at (1,-.5)[red]{$r$};

        \fill[red!20](-.5,-.5)--(0,-.5)--(0.25,-.25)--(-.25,-.25);
         \draw[red!60](-.5,-.5)--(0,-.5)--(0.25,-.25)--(-.25,-.25)--(-.5,-.5);

        \fill[red!25](0.25,-.25)--(.75,-.25)--(.75,-1.25)--(0.25,-1.25)--(0.25,-.75)--(-.25,-.75)--(-.25,-.25)--(0.25,-.25);
        \draw[red!60](0.25,-.25)--(.75,-.25)--(.75,-1.25)--(0.25,-1.25)--(0.25,-.75)--(-.25,-.75)--(-.25,-.25)--(0.25,-.25);

         \fill[red!25](0,-.5)--(.5,-.5)--(.5,-1.5)--(0,-1.5)--(0,-1)--(-.5,-1)--(-.5,-.5)--(0,-.5);
         \draw[red!60](0,-.5)--(.5,-.5)--(.5,-1.5)--(0,-1.5)--(0,-1)--(-.5,-1)--(-.5,-.5)--(0,-.5);

         \fill[red!25](-.5,-.5)--(.5,-.5)--(.75,-.25)--(-.25,-.25);
         \draw[red!60](-.5,-.5)--(.5,-.5)--(.75,-.25)--(-.25,-.25)--(-.5,-.5);

         \fill[red!25](.5,-.5)--(.75,-.25)--(.75,-1.25)--(.5,-1.5);
         \draw[red!60](.5,-.5)--(.75,-.25)--(.75,-1.25)--(.5,-1.5)--(.5,-.5);

           \end{scope}

        \end{scope}

    \end{tikzpicture}
    \caption{The skein relation to fuse a red tube and a plane.}
\end{figure}
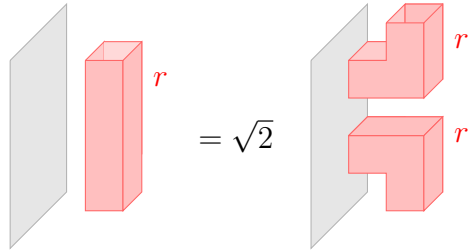
\end{proposition}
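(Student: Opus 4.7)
My plan is to reduce the identity to the key skein relation already established in Proposition~\ref{Prop: key relation} by enclosing the local interaction region in a small $3$-ball whose boundary sphere meets the stratification in exactly three disjoint circles. On such a sphere, $\tilde V(S^2,C)$ is spanned, modulo one linear relation, by the five connected types enumerated in the proof of Proposition~\ref{Prop: key relation}, so it suffices to expand both sides of the claimed equality in these five vectors and check equality modulo that relation.

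First I would apply the definition $r = \mathrm{cyl} - \tfrac{1}{\sqrt{2}}(\text{top cap}\sqcup\text{bottom cap})$ from Fig.~\ref{fig: red tube} to the left-hand side. This rewrites the left-hand side as
\[
(P)\sqcup(\mathrm{cyl}) \;-\; \tfrac{1}{\sqrt{2}}\,(P)\sqcup(\text{top cap})\sqcup(\text{bottom cap}).
\]
Labeling the three boundary circles as $P$ (plane cross-section), $T$ (top of tube) and $B$ (bottom of tube), these two summands are supported on the connected types $(P)(TB)$ and $(P)(T)(B)$ respectively, after adjusting by powers of $\sqrt{2}$ coming from the Euler-number weight $2^{1-e/4}$ in the definition of $Z$.

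Next I would analyze the right-hand side, where the red tube now passes through the plane. After using the intersection moves of Fig.~\ref{fig:Intersection} to place the plane and tube in standard transverse position, I would apply the definition of $r$ a second time, now to the portion of $r$ lying inside the local ball, and also expand the piece of the plane lying inside the ball using the same Euler-number rescaling. The resulting linear combination is supported on the three remaining connected types $(PT)(B)$, $(PB)(T)$ and $(PTB)$, again with explicit scalars determined by Euler-number bookkeeping.

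At this stage both sides are explicit vectors in the span of the five connected types, and the identity to be proven is a single linear dependence in that span, which must agree with the skein relation of Proposition~\ref{Prop: key relation} up to the overall factor $\sqrt{2}$ claimed on the right-hand side. I expect the main obstacle to be precisely this Euler-number bookkeeping: every capping, tubing, or transverse passage of a surface changes the exponent in $2^{1-e/4}$, and one must verify term by term that these exponents combine with the two $-\tfrac{1}{\sqrt 2}$ factors from the expansions of $r$ and with the coefficient $-\sqrt 2$ of the key relation to yield exactly the prefactor $\sqrt 2$ on the right-hand side, rather than some other power of $2$. This is essentially the same structural computation that produced the coefficient $-\sqrt{2}$ in Proposition~\ref{Prop: key relation}, and I expect it to succeed for the same reason.
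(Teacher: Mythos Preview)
Your proposal is correct and follows essentially the same approach as the paper: the paper's proof simply states that the relation ``follows from the definition of the red tube in Fig.~\ref{fig: red tube} and Prop.~\ref{Prop: key relation},'' which is exactly the two-step expansion (unpack $r$ via its defining linear combination, then invoke the five-term skein relation on the three boundary circles) that you outline. Your more explicit description of the connected-type bookkeeping and the Euler-number scalars is just a fleshing-out of what the paper leaves implicit.
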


\begin{proof}
It follows from the definition of the red tube in Fig.~\ref{fig: red tube} and Prop.~\ref{Prop: key relation}.
\end{proof}

\begin{theorem}\label{Thm: RP}
The $S^3$ functional $Z$ in Def.~\ref{Def: Ising Z} is reflection positive.
The dimension of the vector space $\tilde{V}(S^2,C)$ is $2^{|C|-1}$.
\end{theorem}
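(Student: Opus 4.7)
The plan is to establish both claims simultaneously by producing an explicit spanning set of $\tilde{V}(S^2,C)$ of cardinality $2^{|C|-1}$ whose Gram matrix under the bilinear form $(u,v)\mapsto Z(u\cup\theta(v))$ is positive definite. Positive definiteness will deliver linear independence (hence the dimension equality) and reflection positivity in one stroke, since a positive-definite Gram matrix in particular shows $Z(v\cup\theta(v))>0$ for every non-zero vector $v$ in the span.

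First I would reduce an arbitrary configuration to a canonical form by exploiting the interior relations already at our disposal: the Euler-number normalization $2^{1-e/4}$ for each connected surface, the three intersection-splitting identities of Fig.\ \ref{fig:Intersection} (which lie in $K_Z$), the red-tube/plane fusion of Prop.\ \ref{Prop: merge red tube}, and the key three-circle relation of Prop.\ \ref{Prop: key relation}. Any element of $\tilde{V}(S^2,C)$ is thereby represented as a linear combination of configurations consisting of mutually non-intersecting genus-zero surfaces in $D^3$, each bounded by a subset of $C$, possibly decorated with red tubes. Applying the key relation to triples of circles lying in a small disc on $\partial D^3$ and iterating over all such triples, I aim to bound the resulting spanning set by $2^{|C|-1}$ elements via induction on $|C|$, taking Prop.\ \ref{Prop: 1-curve}, Prop.\ \ref{Prop: 2-curve}, and Prop.\ \ref{Prop: key relation} as the base cases $|C|=1,2,3$.

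The natural indexing of the candidate basis is by $\mathbb{Z}/2$-valued parity labels on the $|C|$ boundary circles modulo the global flip, yielding exactly $2^{|C|-1}$ elements; concretely, circle $i$ is either capped off by a plain disc ($\varepsilon_i=0$) or by a disc carrying a red tube ($\varepsilon_i=1$), with the global flip identification enforced by the fusion identity of Prop.\ \ref{Prop: merge red tube}. Writing $\{b_\varepsilon\}$ for this family, I would compute the Gram matrix
\[
G_{\varepsilon,\varepsilon'}=Z\bigl(b_\varepsilon\cup\theta(b_{\varepsilon'})\bigr),
\]
which is a product of scalars $2^{1-e_i/4}$ over the closed components of the resulting configuration in $S^3$, whose Euler numbers are determined combinatorially by circle-by-circle agreement or disagreement of $\varepsilon$ with $\varepsilon'$. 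If the circles on the equator are arranged so they do not interact in the gluing, $G$ should factor as a tensor product $G=\bigotimes_j G_j$ of small positive-definite blocks indexed by circles, collapsing positive definiteness to a one-circle check.

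The principal obstacle will be closed-form control of the Gram matrix: the key relation introduces signs and the coefficient $-\sqrt{2}$, and these must conspire with Euler-number bookkeeping across the glued $S^3$ to yield a genuinely positive-definite $G$. If the tensor factorization holds cleanly after quotienting by the global flip, the remaining verification is a one-circle computation; otherwise, the parity characters of $(\mathbb{Z}/2)^{|C|}/\mathbb{Z}/2$ should diagonalize $G$ explicitly, turning positivity into a check on each character block. Once the theorem is established for $|C|\ge 1$, the case $|C|=0$ reduces to $\dim=1$ by Prop.\ \ref{Prop: 0-curve}, with reflection positivity following from $Z(S^3)=2>0$.
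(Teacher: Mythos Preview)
Your overall strategy—exhibit an explicit spanning set of size $2^{|C|-1}$ and verify that its Gram matrix is positive definite—is sound in principle, but the concrete basis you propose collapses. The red-tube projection $t_r$ annihilates a plain disc: from Fig.~\ref{fig: red tube}, $t_r\cdot(\text{disc}) = (\text{disc}) - \tfrac{1}{\sqrt 2}\cdot\sqrt 2\cdot(\text{disc}) = 0$, since a capping disc already lies in the range of $t_0'$. Hence every $b_\varepsilon$ with some $\varepsilon_i=1$ is zero, and your family reduces to the single vector $b_{(0,\dots,0)}$. The ``global flip'' identification is also not what Prop.~\ref{Prop: merge red tube} furnishes: that proposition merges one red tube into an \emph{adjacent} surface with a positive scalar; it does not identify $\varepsilon$ with $1-\varepsilon$. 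If you instead diagonalize by the commuting projections $\{t_0'(i),t_r(i)\}_i$, the nonzero joint eigenspaces are those with an \emph{even} number of $t_r$-labels (for $|C|=2$ the nonzero eigenspaces are $(0,0)$ and $(1,1)$, which lie in the \emph{same} flip orbit, while the orbit $\{(0,1),(1,0)\}$ contributes nothing). This does give the right count $2^{|C|-1}$, but a nonzero vector in an eigenspace with several $\varepsilon_i=1$ necessarily involves a surface connecting those boundary circles; it cannot be assembled from independent caps, so the tensor-factorized Gram computation you sketch will not materialize.

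The paper's proof sidesteps all of this with a one-circle-at-a-time induction. Pick one boundary circle $\partial D$ and split the identity there as $t_0'+t_r$. The $t_0'$-summand is isometric to $\tilde V(S^2, C\setminus\{\partial D\})$ because $t_0'$ is annular equivalent to the empty type $t_0$; the $t_r$-summand is isometric, up to the positive scalar $\sqrt 2$, to the same smaller space by Prop.~\ref{Prop: merge red tube}, which absorbs the red tube into a neighbouring surface. Orthogonality of the two self-adjoint projections together with positivity of the scalars transports reflection positivity and the dimension count from $|C|-1$ to $|C|$, with base cases $|C|\le 1$ handled by Props.~\ref{Prop: 0-curve} and~\ref{Prop: 1-curve}. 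No global Gram matrix is ever computed.
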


\begin{proof}
An isotopy on the boundary induces an isometry, so we assume that the curves in $C$ do not intersect. 
If $C$ has zero or one curve, then $\tilde{V}(S^2,C)$ is one dimensional by Prop.~\ref{Prop: 0-curve},\ref{Prop: 1-curve}. Reflection positivity holds on these vectors spaces. 

Now we prove the statement by induction on the number of circles in $C$.
Take a disc $D$ of $S^2\setminus C$, s.t. $\partial D$ is a circle in $C$, we decompose $\bar{D} \times [0,1]$ as  a sum of orthogonal projections $t_0'$ and $t_r$. 
Therefore $\tilde{V}(S^2,C)$ decomposes a direct sum of two vector spaces. 
The vector space labelled by $t_0'$ is isomorphic to the vector space with $\tilde{V}(S^2, C\setminus \{\partial D\}$, because $t_0'$ is annular equivalent to $t_0$.
The vector space labelled by $t_r$ is isomorphic to the vector space with $\tilde{V}(S^2, C\setminus \{\partial D\}$ by Prop.\ref{Prop: merge red tube}.
Therefore $Z$ is reflection positive and the dimension of the vector space $\tilde{V}(S^2,C)$ is $2^{|C|-1}$.
\end{proof}

The type of a $k$-morphism is the transversal intersection of a $k$-simplex with  
stratified $3$-manifold $\mathcal{M}$.
When $k=1$, it is $D^1$ stratified by finite points.
When $k=2$, it is $D^2$ stratified by curves, which may intersect.
To prove complete finite, we need to reduce to finite types by the skein relations of their types induced from the annular equivalence.

\begin{theorem}\label{Thm: finite 2-morphism}
The $S^3$ functional $Z$ in Def.~\ref{Def: Ising Z} is 2-finite.   
\end{theorem}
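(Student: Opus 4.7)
The goal is to show that for every stratified $\mathcal{S}$ with $|\mathcal{S}|=S^1$, having, say, $2n$ marked boundary points, only finitely many annular equivalence classes of indecomposable 2-morphisms have link boundary $\mathcal{S}$. A 2-morphism is a minimal idempotent of $A(\mathcal{D}^2\times D^1)$ for some stratified 2-disc $\mathcal{D}^2$ with $\partial \mathcal{D}^2=\mathcal{S}$. Because in the Ising $S^3$-functional four surfaces never meet (Def.~\ref{Def: Ising Z}), generic interior strata of $\mathcal{D}^2$ are 4-valent crossings coming from transverse intersection lines of two surfaces in $M^3$, together with possible closed loops and arcs with endpoints in $\mathcal{S}$. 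My plan is to reduce an arbitrary pair $(\mathcal{D}^2,\alpha)$ by annular equivalence, first to a disc with no closed loops, then to one with no interior crossings, and finally to a standard chord-diagram form belonging to an explicit finite list.

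The first reduction (loop elimination) mirrors the argument used in Theorem~\ref{Thm: RP}. Given an innermost closed curve in $\mathcal{D}^2$ bounding a sub-disc $D$ with no further strata inside, the tube $D\times[-\varepsilon,0]\subset\mathcal{D}^2\times D^1$ decomposes as $D\times[-\varepsilon,0]=t_0'+t_r$ exactly as in Fig.~\ref{fig: red tube}. The $t_0'$-summand yields a minimal idempotent annular equivalent to one living on the smaller 2-disc obtained by deleting the loop; by Prop.~\ref{Prop: merge red tube} the $t_r$-summand can be merged with any adjacent surface, transferring the red label there. Iterating, I can assume $\mathcal{D}^2$ carries no internal closed loops. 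The second reduction (crossing resolution) is where I invoke Prop.~\ref{Prop: key relation}. Near a 4-valent crossing point of $\mathcal{D}^2$, the two crossing curves thicken to an arrangement of two transverse surface patches in $D^3$; enclosing these in a small 3-ball whose boundary $S^2$ meets them in three circles, Prop.~\ref{Prop: key relation} provides a 4-term linear dependence among the five natural connected types, exactly one of which reproduces the original crossing. Solving for this term expresses the local configuration as a sum of four configurations each of which, after local isotopy, has strictly fewer crossings. Because each step strictly decreases the crossing count and the complete-finiteness-style bookkeeping keeps the coefficients finite, the procedure terminates after finitely many steps.

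After both reductions the underlying 1-strata of $\mathcal{D}^2$ form a non-crossing chord diagram on $2n$ boundary points, of which there are Catalan-many; on each such diagram $A(\mathcal{D}^2\times D^1)$ is a finite-dimensional algebra by Prop.~\ref{prop:vectorfinite}, and hence has finitely many minimal idempotents. This produces a finite list of representatives for the annular equivalence classes, proving 2-finiteness. The main obstacle will be carrying out step two cleanly: Prop.~\ref{Prop: key relation} is a relation inside $\tilde V(S^2,C)$, so I must verify that embedding the corresponding 3-ball as a normal microbundle of a crossing line actually translates this relation into an equality of minimal idempotents in $A(\mathcal{D}^2\times D^1)$, and then use Prop.~\ref{Prop: Morita and annular} to upgrade such equalities into genuine annular equivalences. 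A secondary subtlety is ensuring that the chord-diagram normal form, together with red/non-red decorations that can be created by the loop-elimination step, really does exhaust all classes rather than merely all types; this follows again from the split of identity into $t_0'+t_r$ together with Prop.~\ref{Prop: merge red tube} applied on the boundary.
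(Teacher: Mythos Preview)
Your loop-elimination step and your final chord-diagram count match the paper's argument. The discrepancy is in step two, the crossing resolution, and there is both a concrete error and an unnecessary detour.

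The error: when two surface patches cross transversely along a line and you enclose a piece of that line in a small $3$-ball, the boundary $S^2$ meets the two surfaces in \emph{two} circles (intersecting at two points), not three. So Prop.~\ref{Prop: key relation}, which concerns $|C|=3$, is not the relevant relation; you would be in the $|C|=2$ situation of Prop.~\ref{Prop: 2-curve}. More importantly, even if you had the right relation, expressing a configuration as a \emph{linear sum} of others does not directly produce an annular equivalence of a given minimal idempotent; you flagged this yourself, and the fix you sketch via Prop.~\ref{Prop: Morita and annular} would still need work.

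The paper avoids all of this with a single sentence: ``By isotopy, we may assume that the curves do not intersect.'' The point is that in this particular $S^3$ functional, the relations of Fig.~\ref{fig:Intersection} say that two intersecting surface patches in $D^3$ are equal (in $\tilde V$) to two disjoint ones. An annular element in $A(\mathcal{D}^2,\mathcal{D}^{2\prime})$ can therefore interpolate between a type $\mathcal{D}^2$ with a crossing and a type $\mathcal{D}^{2\prime}$ with one of its two smoothings, giving an honest annular equivalence rather than a linear decomposition. This removes all crossings at once, with no induction on crossing number and no appeal to Prop.~\ref{Prop: key relation}.

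After that, the paper's remaining steps agree with yours, with one refinement you only hint at: if the boundary $\mathcal{S}$ has no marked points, the loop-elimination may leave a single surviving closed curve labelled $t_r$ (since there is no adjacent curve to merge it into), yielding the extra class inequivalent to $t_0$.
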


\begin{proof}
The type of 2-morphisms is $D^2$ stratified by curves. By isotopy, we may assume that the curves do not intersect. 
A contractible curve decomposes as a sum of $t_0'$ and $t_g$. Moreover, the curve labelled by $t_0'$ is equivalent to empty set $t_0$. If the curve labelled by $t_g$ is adjacent to another curve, then we can merge $t_g$ into the other curve by Prop.\ref{Prop: merge red tube}.
So all closed curves can be eliminated unless the type has only one closed curve labelled by $t_g$.
For a fixed link boundary, the non-closed curves pair the boundary points, so there are only finitely many types of 2-morphisms. So $Z$ is 2-finite.  
\end{proof}

We will explicitly construct the simplical $k$-morphisms in the following subsection and compute the 20j symbols. In particular, we prove that $Z$ is 1-finite in Theorem \ref{Thm: finite 1-morphism}.

\begin{theorem}\label{Thm: RP and CF}
The $S^3$ functional $Z$ in Def.~\ref{Def: Ising Z} is reflection positive and complete finite. So we obtain an $3+1$ alterfold TQFT with reflection positivity.
\end{theorem}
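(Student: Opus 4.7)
The plan is to collect the ingredients already established and invoke Theorem~\ref{Thm: Alterfold TQFT}. Reflection positivity is exactly Theorem~\ref{Thm: RP}, so the only remaining task is to verify complete finiteness, i.e.\ that $Z$ is $k$-finite for every $k=0,1,2,3$ in the sense of Def.~\ref{Def: Z CF}. The case $k=3$ is immediate from Prop.~\ref{prop:vectorfinite}, which gives $\dim \tilde{V}_{\mathcal{S}^2,-}<\infty$ for every $\mathcal{S}^2$; in fact Theorem~\ref{Thm: RP} sharpens this to $2^{|C|-1}$ when $\mathcal{S}^2$ has only non-intersecting circles on its support, and the general case with transverse intersections reduces to the non-intersecting case by the local resolution moves of Fig.~\ref{fig:Intersection}. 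The case $k=2$ is precisely Theorem~\ref{Thm: finite 2-morphism}.

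For $k=0$, Prop.~\ref{Prop: 0-curve} shows $\tilde{V}_{\emptyset,-}$ on $S^2$ is one-dimensional, so up to annular equivalence there is a single indecomposible $0$-morphism (the unique semisimple summand), hence $0$-finite. For $k=1$, the link boundary is $\mathcal{S}^0$, i.e.\ two marked points on $S^2$, and one must bound the number of annular equivalence classes of $1$-morphisms of that type. I would argue as follows: by Prop.~\ref{Prop: 1-curve} the vector space on $S^2$ with one boundary circle is one-dimensional, so up to isotopy the only way a surface in a $D^2\times D^2$ neighbourhood of a $1$-simplex can connect the two marked points is by a disc (with or without extra ``red tube'' decorations). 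Using the red-tube decomposition of Fig.~\ref{fig: red tube} together with the absorption rule of Prop.~\ref{Prop: merge red tube}, any such decoration can either be absorbed into the connecting disc or annihilated, reducing every $1$-morphism to finitely many annular classes (the trivial one, with the disc type, and possibly a red-tube-labelled variant). This is essentially the content of the forthcoming Theorem~\ref{Thm: finite 1-morphism}, which we may invoke directly.

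With all four finiteness statements in hand and reflection positivity from Theorem~\ref{Thm: RP}, the hypotheses (RP), (HI) (built into Def.~\ref{Def: Ising Z} via the Euler-number formula, which is a homeomorphism invariant of each surface), and (CF) of Theorem~\ref{Thm: Alterfold TQFT} are all satisfied over the field $\mathbb{C}$. Applying Theorem~\ref{Thm: Alterfold TQFT} for any $\zeta>0$ produces the desired unitary $3+1$ alterfold TQFT, completing the proof.

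The main obstacle, conceptually, is the $1$-finiteness: every other case was already handled in this section, but bounding the annular equivalence classes of indecomposible $1$-morphisms requires controlling all ways a regular stratified surface in $D^2\times D^2$ can meet a fixed pair of marked points, and this relies crucially on the key relation of Prop.~\ref{Prop: key relation} together with the red-tube absorption Prop.~\ref{Prop: merge red tube}. This is why the paper defers the detailed verification to the dedicated Theorem~\ref{Thm: finite 1-morphism}, which our proof simply cites.
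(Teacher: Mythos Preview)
Your proof is correct and follows essentially the same route as the paper: reflection positivity from Theorem~\ref{Thm: RP}, $3$-finiteness from Prop.~\ref{prop:vectorfinite}/Theorem~\ref{Thm: RP}, $2$-finiteness from Theorem~\ref{Thm: finite 2-morphism}, $1$-finiteness from Theorem~\ref{Thm: finite 1-morphism}, $0$-finiteness from Prop.~\ref{Prop: 0-curve}, and then Theorem~\ref{Thm: Alterfold TQFT}. Your informal sketch of the $1$-finite case is slightly off in the details (the link boundary $\mathcal{S}^0$ has support $S^0$, not two marked points on $S^2$, and Prop.~\ref{Prop: 1-curve} concerns a single boundary circle on $S^2$ rather than $1$-morphisms directly), but since you ultimately defer to Theorem~\ref{Thm: finite 1-morphism} this does not affect the validity of the argument.
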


\begin{proof}
It follows from Theorems~\ref{Thm: RP}, \ref{Thm: finite 2-morphism} and \ref{Thm: finite 1-morphism}.    
\end{proof}

\begin{remark}
If we only consider non-intersecting surfaces, then the Theorem \ref{Thm: finite 1-morphism} fails. Instead, we will obtain infinitely many indecomposible 1-morphisms indexed by natural numbers $\mathbb{N}$. The reflection positivity, 3-finite, 2-finite conditions still hold.
This example has its own interests. In the paper, we will focus on the examples which are complete finite, so that we can construct the $n+1$ alterfold TQFT.      
\end{remark}

\subsection{Simplicial 1-morphisms}

In this subsection, we construct the indecomposible 1-morphisms up to annular equivalence.
\begin{theorem}\label{Thm: finite 1-morphism}
There are three indecomposible 1-morphisms up to annular equivalence, denoted as $E_1=\{\mathbbm{1},\tau, g\}$. 
\end{theorem}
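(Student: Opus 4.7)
A $1$-morphism is by definition a minimal idempotent of the commutative algebra $A(\mathcal{D}^1\times D^2)$, and up to isotopy the type $\mathcal{D}^1$ is parameterised by the number $m\geq 0$ of interior stratification points of $D^1$ (surfaces meeting $D^1$ transversely). My plan is to carry out a case analysis on $m$, reduce the case $m\geq 3$ to $m\leq 2$ using the Ising-type skein relation of Prop.~\ref{Prop: key relation}, and verify pairwise inequivalence.

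For the small cases I combine the dimension count of Theorem~\ref{Thm: RP} with explicit Gram-matrix computations. When $m=0$, the splitting moves of Fig.~\ref{fig:Intersection} reduce every vector in $A(D^1\times D^2)$ to a linear combination of non-intersecting closed surfaces, and for each single connected closed surface $F\subset D^3$ the Gram matrix on $\{\emptyset,F\}$ is rank one with entries $1,Z_F,Z_F,Z_F^2$ where $Z_F=2^{1-e(F)/4}$, so $F=Z_F\cdot\emptyset$ in the quotient; hence $A(D^1\times D^2)$ is one-dimensional with unique minimal idempotent $\mathbbm{1}$. When $m=1$, Prop.~\ref{Prop: 1-curve} gives a one-dimensional algebra with unique class $\tau$. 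When $m=2$, Prop.~\ref{Prop: 2-curve} gives a two-dimensional algebra spanned by the two parallel discs $a$ (the algebra identity) and the cylinder $b$, whose non-degenerate Gram matrix $\bigl(\begin{smallmatrix}2&\sqrt 2\\\sqrt 2&2\end{smallmatrix}\bigr)$ forces the identity to split as $a=p_{\mathbbm{1}}+p_g$ into two orthogonal minimal idempotents. The idempotent $p_{\mathbbm{1}}$ is Morita-equivalent to $\mathbbm{1}$ via the ``cap'' bimodule that closes the two strands, while $p_g$ defines the new class $g$.

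For $m\geq 3$ I proceed by induction on $m$: in a microbundle around two adjacent transverse surfaces, the five-term relation of Prop.~\ref{Prop: key relation} realises geometrically the Ising fusion $\tau\otimes\tau=\mathbbm{1}\oplus g$, rewriting the local configuration as a linear combination of configurations with $m-2$ crossings, possibly dressed by an internal closed surface that is then absorbed into a neighbouring transverse surface by Prop.~\ref{Prop: merge red tube}. Hence every $1$-morphism is annularly equivalent to one of $\mathbbm{1},\tau,g$. Pairwise inequivalence is established by direct $Z$-computation: the trace $Tr(\tau)=\sqrt 2$ separates $\tau$ from the other two, while $\mathbbm{1}$ and $g$ are separated by the orthogonality $p_{\mathbbm{1}}\cdot p_g=0$ in the $m=2$ algebra together with the uniqueness of the cap bimodule (only $p_{\mathbbm{1}}$ admits a bimodule to the $m=0$ empty type), so no annular equivalence can interchange them.

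The main obstacle is the termination of the inductive reduction for $m\geq 3$: Prop.~\ref{Prop: key relation} is a five-term identity and some of its terms contain surfaces that are individually more complicated than the starting configuration. I expect to need a carefully chosen lexicographic monovariant -- for example, the triple (number of connected components of the stratifying surfaces, total genus, number of absorbable internal tubes) -- and to verify that the combined move of Prop.~\ref{Prop: key relation} followed by the absorption of Prop.~\ref{Prop: merge red tube} strictly decreases it, so that the iterative fusion process terminates at $m\in\{0,1,2\}$.
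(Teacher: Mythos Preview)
Your overall skeleton—case split on $m$, explicit analysis for $m\le 2$, inductive reduction for $m\ge 3$—is the paper's strategy. But the reduction step and one inequivalence argument both have genuine gaps.

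\textbf{The $m\ge 3$ reduction.} You propose to use the five-term relation of Prop.~\ref{Prop: key relation} as a rewriting rule that lowers $m$. That relation is a \emph{linear identity among vectors} in $\tilde V(S^2,C)$; it controls the dimension of the algebra $A(\mathcal D^1_m\times D^2)$ but does not by itself establish annular equivalence of minimal idempotents. Annular (equivalently Morita) equivalence of 1-morphisms requires exhibiting a bimodule, i.e.\ an actual 2-morphism between the two types—not rewriting one vector as a combination of others. The paper's route is to compute the four minimal idempotents at $m=3$ explicitly, namely $P(123)$ and the three $P(ij,k)-P(123)$, and for each to draw a concrete annular tangle: a 1-simplex that passes through the tunnel between two of the planes and meets the remaining plane transversely, giving a direct equivalence to $\tau$ (Fig.~\ref{fig: tau equivalence}). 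The vanishing of the trace-zero projection in Fig.~\ref{fig:trace0} is what makes this list of four exhaustive. The general $m\to m-2$ step then iterates this on two adjacent planes. Your ``termination'' worry is thus a red herring: the difficulty is not that a rewriting process might fail to terminate, but that linear rewriting is the wrong mechanism for producing annular equivalences.

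\textbf{Separating $\tau$.} You distinguish $\tau$ from $\mathbbm 1$ and $g$ by $Tr(\tau)=\sqrt 2$. This is invalid: the paper states explicitly (just before Theorem~\ref{Thm: invariance of the intrinsic dimension}) that the quantum dimension is \emph{not} a Morita invariant, and the $m=3$ computation already exhibits idempotents of trace $\sqrt 2/2$ that are annular equivalent to $\tau$. The correct argument is parity: a bimodule between $\tau$ (type $\mathcal D^1_1$) and either $\mathbbm 1$ (type $\mathcal D^1_0$) or $g$ (type $\mathcal D^1_2$) would be a stratified $D^2$ whose boundary carries an odd total number of stratification points, hence would contain a 1-manifold with an odd number of endpoints—impossible. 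Your separation of $\mathbbm 1$ from $g$ via the cap bimodule is essentially Lemma~\ref{Lem: g 1 inequivalence} and is fine.
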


Now let us study 1-morphisms of type $\mathcal{D}^1_m$, which has 0-dim stratification of $m$ points.
When $m=0$, the algebra $A(\mathcal{D}^1\times D^2)$ is one-dimensional by Prop.~\ref{Prop: 0-curve}. 
The identity $D^1\times D^2$ is an indecomposible 1-morphism, denoted by $\mathbbm{1}$.

When $m=1$, the algebra $A(\mathcal{D}^1_1\times D^2)$ is one-dimensional by Prop.~\ref{Prop: 1-curve}. 
The identity $D^1_1\times D^2$ is an indecomposible 1-morphism, denoted by $\tau$. Its quantum dimension is $\sqrt{2}$ as shown in Fig.~\ref{fig: tr(tau)}.

\begin{figure}[H]
    \centering
    \begin{tikzpicture}
    \begin{scope}[shift={(.2,0)}]
    \node at (.5,-.25){$Tr($};
            \fill[gray!20] (1,0)--(1.5,.5)--(1.5,-.5)--(1,-1);
            \draw[gray!60] (1,0)--(1.5,.5)--(1.5,-.5)--(1,-1)--(1,0);
        \node at (2,-.25){$):=$};
        \end{scope}
        \begin{scope}[shift={(3,0)}]
        
            \fill[gray!20] (0,0)--(1,0)--(1,-1)--(0,-1);
            \fill[gray!20] (0,0)--(1,0)--(1.5,.5)--(.5,.5);
            \fill[gray!20] (1,0)--(1.5,.5)--(1.5,-.5)--(1,-1);
            \draw[gray!60] (1,0)--(1.5,.5)--(1.5,-.5)--(1,-1)--(1,0);
            \draw[gray!60] (0,0)--(1,0)--(1.5,.5)--(.5,.5)--(0,0);
            \draw[gray!60] (0,0)--(1,0)--(1,-1)--(0,-1)--(0,0);
            \draw[gray!60,dashed] (.5,.5)--(.5,-.5);
            \draw[gray!60,dashed] (.5,-.5)--(1.5,-.5);
            \draw[gray!60,dashed] (.5,-.5)--(0,-1);

        \node at (2.5,-.25){$=\sqrt{2}$};
        \end{scope}
    \end{tikzpicture}
    \caption{The quantum dimension of the 1-morphism $\tau$.}
    \label{fig: tr(tau)}
\end{figure}

When $m=2$, the algebra $A(\mathcal{D}^1_2\times D^2)$ is two-dimensional by Prop.~\ref{Prop: 2-curve}. The identity $\mathcal{D}^1_2\times D^2$ has decomposes as a sum of two minimal idempotent, namely two indecomposible 1-morphisms.
There are two basis vector according to the connected type $(12)$,$(1,2)$. 
As shown in Fig.\ref{fig: tau0}, $(1,2)$ is a multiple of a minimal idempotent, corresponding to a 1-morphism $\mathbbm{1}'$, which is annular equivalent to $\mathbbm{1}$, because it has no intersection with the 1-simplex in the middle of the tunnel.
The orthogonal complement of $\mathbbm{1}'$ is a 1-morphism, denoted by $g$, illustrated in Fig.\ref{fig: tau g} as a red line between double planes.

\begin{figure}[H]
    \centering
    \begin{tikzpicture}
    \begin{scope}

         \begin{scope}[shift={(.2,0)}]

         \fill[gray!20](0,0)--(.75,.75)--(.75,.75-2.5)--(0,-2.5);
         \draw[gray!60](0,0)--(.75,.75)--(.75,.75-2.5)--(0,-2.5)--(0,0);
        
 \end{scope}

        \draw[red, line width=1pt](.6,-1)--(1.6,-1);

            \begin{scope}[shift={(1.2,0)}]
             \fill[opacity=.2](0,0)--(.75,.75)--(.75,.75-2.5)--(0,-2.5);
         \draw[gray!60](0,0)--(.75,.75)--(.75,.75-2.5)--(0,-2.5)--(0,0);
        \end{scope}

        \end{scope}

    \begin{scope}[shift={(4,0)}]
    
     \node at (-.4,-1){$:=$};

         \begin{scope}[shift={(.2,0)}]

         \fill[gray!20](0,0)--(.75,.75)--(.75,.75-2.5)--(0,-2.5);
         \draw[gray!60](0,0)--(.75,.75)--(.75,.75-2.5)--(0,-2.5)--(0,0);
        
 \end{scope}

            \begin{scope}[shift={(1.2,0)}]
             \fill[opacity=.2](0,0)--(.75,.75)--(.75,.75-2.5)--(0,-2.5);
         \draw[gray!60](0,0)--(.75,.75)--(.75,.75-2.5)--(0,-2.5)--(0,0);
        \end{scope}

        \end{scope}

     \begin{scope}[shift={(7,0)}]
     \node at (-.4,-1){$-\frac{1}{\sqrt{2}}$};
        
         \begin{scope}[shift={(.2,0)}]

         \fill[gray!20](0,0)--(.75,.75)--(.75,.75-2.5)--(0,-2.5);
         \draw[gray!60](0,0)--(.75,.75)--(.75,.75-2.5)--(0,-2.5)--(0,0);
        
 \end{scope}

        \begin{scope}[shift={(1,-.25)}]


        \fill[gray!20](-.5,-.5)--(0,-.5)--(0.25,-.25)--(-.25,-.25);
         \draw[gray!60](-.5,-.5)--(0,-.5)--(0.25,-.25)--(-.25,-.25)--(-.5,-.5);

        \fill[gray!25](0.25,-.25)--(.75,-.25)--(.75,-.75)--(0.25,-.75)--(-.25,-.75)--(-.25,-.25)--(0.25,-.25);
        \draw[gray!60](0.25,-.25)--(.75,-.25)--(.75,-.75)--(0.25,-.75)--(-.25,-.75)--(-.25,-.25)--(0.25,-.25);

         \fill[gray!25](0,-.5)--(.5,-.5)--(.5,-1)--(0,-1)--(-.5,-1)--(-.5,-.5)--(0,-.5);
         \draw[gray!60](0,-.5)--(.5,-.5)--(.5,-1)--(0,-1)--(0,-1)--(-.5,-1)--(-.5,-.5)--(0,-.5);

         \fill[gray!25](-.5,-.5)--(.5,-.5)--(.75,-.25)--(-.25,-.25);
         \draw[gray!60](-.5,-.5)--(.5,-.5)--(.75,-.25)--(-.25,-.25)--(-.5,-.5);

           \end{scope}

            \begin{scope}[shift={(1.2,0)}]
             \fill[opacity=.2](0,0)--(.75,.75)--(.75,.75-2.5)--(0,-2.5);
         \draw[gray!60](0,0)--(.75,.75)--(.75,.75-2.5)--(0,-2.5)--(0,0);
        \end{scope}

        \begin{scope}[shift={(1,-.25)}]
        \fill[gray!10](.5,-.5)--(.75,-.25)--(.75,-.75)--(.5,-1);
         \draw[gray!60](.5,-.5)--(.75,-.25)--(.75,-.75)--(.5,-1)--(.5,-.5);

        \end{scope}

        \end{scope}
    \end{tikzpicture}
    \caption{The red line in the double-surfaces}
    \label{fig: tau g}
\end{figure}

\begin{lemma}\label{Lem: g 1 inequivalence}
If a 2-disc is connected by a red line from $g$, then it is zero. 
Consequently the 1-morphism $g$ is inequivalent to $\mathbbm{1}$.  
\end{lemma}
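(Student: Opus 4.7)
The plan is to substitute the definition of $g$ from Figure~\ref{fig: tau g}, namely
\[
g \;=\; \mathbbm{1}_2 - \tfrac{1}{\sqrt{2}}\, t,
\]
where $\mathbbm{1}_2$ denotes the identity (two parallel surfaces) in $A(\mathcal{D}^1_2\times D^2)$ and $t$ denotes the tube connecting them, and to reduce the vanishing claim to an Euler-characteristic identity for the partition function~\eqref{partfun}.

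The key observation is that substituting $\mathbbm{1}_2 \to t$ in a local region attaches exactly one $1$-handle to the associated surface (the tube bridges the two parallel sheets), thereby decreasing its Euler characteristic by~$2$. Once the $2$-disc is attached so as to merge the two parallel surfaces of $g$ into a single closed surface in the ambient $3$-manifold, each term produces a well-defined closed surface and formula~\eqref{partfun} gives
\[
Z\bigl(t + (\text{2-disc})\bigr) \;=\; 2^{-(-2)/4}\,Z\bigl(\mathbbm{1}_2 + (\text{2-disc})\bigr) \;=\; \sqrt{2}\,Z\bigl(\mathbbm{1}_2 + (\text{2-disc})\bigr).
\]
Therefore
\[
Z\bigl(g + (\text{2-disc})\bigr) \;=\; Z\bigl(\mathbbm{1}_2 + (\text{2-disc})\bigr) - \tfrac{1}{\sqrt{2}}\,Z\bigl(t + (\text{2-disc})\bigr) \;=\; 0.
\]
Concretely, if the $2$-disc is attached as a bridging cap, one can take $\mathbbm{1}_2 + (\text{2-disc}) = S^2$ (Euler~$2$, contributing $2^{1-2/4} = \sqrt{2}$) and $t + (\text{2-disc}) = T^2$ (Euler~$0$, contributing $2^{1-0/4} = 2$), giving $\sqrt{2} - \tfrac{1}{\sqrt{2}}\cdot 2 = 0$.

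The main obstacle is to justify precisely that the ``$2$-disc connected by a red line from $g$'' produces a single connected closed surface in both terms, so that the Euler-characteristic change from $\mathbbm{1}_2$ to $t$ is exactly $-2$. This is where the coefficient $\tfrac{1}{\sqrt{2}}$ in the definition of $g$ is forced, matching the single $1$-handle attachment. The transversality theorem~\ref{Thm: transversality} together with the local skein picture of Figure~\ref{fig: tau g} can be used to put the $2$-disc in general position and verify that it merges the two parallel surfaces of $g$ into one closed surface in both the $\mathbbm{1}_2$ and $t$ terms.

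For the consequence, I would argue by contradiction: if $g$ were annular equivalent to $\mathbbm{1}$, then by Proposition~\ref{Prop: Morita and annular} there would be an indecomposable $g$-$\mathbbm{1}$ bimodule $\beta$, and by Proposition~\ref{Prop: non-zero trace} (applied in a semisimple setting) its quantum dimension $\mathrm{Tr}(\beta)$ would be non-zero. Inserting $\beta$ into the $2$-disc configuration via the annular action of Equation~\eqref{Equ: annular action} would transport the evaluation of the $g$-configuration to a non-zero multiple of the corresponding configuration for $\mathbbm{1}$; however, the $\mathbbm{1}$-side configuration (having no red line to cancel against) evaluates to a non-zero scalar, contradicting the vanishing just established. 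Hence $g$ cannot be annular equivalent to $\mathbbm{1}$.
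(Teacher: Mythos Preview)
Your proposal is correct and follows essentially the same approach as the paper. The paper's proof is extremely terse: it simply says that gluing the vertical half tube to the defining equation for $g$ in Figure~\ref{fig: tau g} yields zero on the right side, which is exactly your Euler-characteristic computation $\sqrt{2} - \tfrac{1}{\sqrt{2}}\cdot 2 = 0$ spelled out in more detail.

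For the inequivalence, the paper is slightly more concrete than your contradiction argument: rather than invoking Propositions~\ref{Prop: Morita and annular} and~\ref{Prop: non-zero trace} abstractly, it observes directly that the vertical half tube is the \emph{only} possible $g$--$\mathbbm{1}$ bimodule (since any such bimodule must have boundary consisting of the two points of $g$ on one side and nothing on the other, forcing a single cap), and this candidate is zero by the first part. Your route via the annular action is valid but slightly indirect; the paper's version avoids the need to check that the $\mathbbm{1}$-side configuration is nonzero by instead showing no nonzero bimodule exists at all.
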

\begin{proof}
Gluing the vertical half tube to $g$ in Fig.\ref{fig: tau g} is zero on the right side. The left side is a 2-disc is connected by a red line from $g$.
The only possible bimodule between $g$ and $\mathbbm{1}$ is the vertical half tube, but gluing it with $g$ is zero. 
\end{proof}

When $m=3$, the algebra $A(\mathcal{D}^1_2\times D^2)$ is four-dimensional by Prop.~\ref{Prop: key relation}.
It has five idempotents according to the connected types
$P(1,2,3)$, $P(12,3)$, $P(13,2)$, $P(23,1)$, $P(123)$ with quantum dimension $2\sqrt{2}, \sqrt{2}, \sqrt{2}, \sqrt{2}, \frac{\sqrt{2}}{2}$.
We obtain four minimal idempotents $P(12,3)-P(123)$, $P(13,2)-P(123)$, $P(23,1)-P(123)$, $P(123)$ with the same quantum dimension $\frac{\sqrt{2}}{2}$. They are all annular equivalent to $\tau$.
For example, a 1-simplex can go through the tunnel of the first two planes of $P(12,3)-P(123)$ and intersect with the third plane as illustrated in Fig.~\ref{fig: tau equivalence}, so it is annular equivalent to $\tau$.

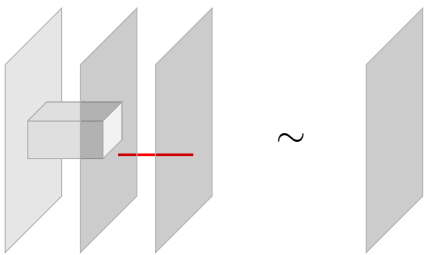
\begin{figure}[H]
    \centering
    \begin{tikzpicture}
        \begin{scope}
        
        \begin{scope}[shift={(.2,0)}] 
        
         \fill[gray!20](0,0)--(.75,.75)--(.75,.75-2.5)--(0,-2.5);
         \draw[gray!60](0,0)--(.75,.75)--(.75,.75-2.5)--(0,-2.5)--(0,0);
        
        \end{scope}

        \begin{scope}[shift={(1,-.25)}]

        \fill[gray!20](-.5,-.5)--(0,-.5)--(0.25,-.25)--(-.25,-.25);
         \draw[gray!60](-.5,-.5)--(0,-.5)--(0.25,-.25)--(-.25,-.25)--(-.5,-.5);

        \fill[gray!25](0.25,-.25)--(.75,-.25)--(.75,-.75)--(0.25,-.75)--(-.25,-.75)--(-.25,-.25)--(0.25,-.25);
        \draw[gray!60](0.25,-.25)--(.75,-.25)--(.75,-.75)--(0.25,-.75)--(-.25,-.75)--(-.25,-.25)--(0.25,-.25);

         \fill[gray!25](0,-.5)--(.5,-.5)--(.5,-1)--(0,-1)--(-.5,-1)--(-.5,-.5)--(0,-.5);
         \draw[gray!60](0,-.5)--(.5,-.5)--(.5,-1)--(0,-1)--(0,-1)--(-.5,-1)--(-.5,-.5)--(0,-.5);

         \fill[gray!25](-.5,-.5)--(.5,-.5)--(.75,-.25)--(-.25,-.25);
         \draw[gray!60](-.5,-.5)--(.5,-.5)--(.75,-.25)--(-.25,-.25)--(-.5,-.5);

           \end{scope}

        \begin{scope}[shift={(1.2,0)}]
        
        \draw[red, line width=1pt](.5,-1.2)--(1.5,-1.2);
        \fill[opacity=.2](0,0)--(.75,.75)--(.75,.75-2.5)--(0,-2.5);
         \draw[gray!60](0,0)--(.75,.75)--(.75,.75-2.5)--(0,-2.5)--(0,0);
        \end{scope}

        \begin{scope}[shift={(1,-.25)}]
        \fill[gray!10](.5,-.5)--(.75,-.25)--(.75,-.75)--(.5,-1);
         \draw[gray!60](.5,-.5)--(.75,-.25)--(.75,-.75)--(.5,-1)--(.5,-.5);

        \end{scope}

        \begin{scope}[shift={(2.2,0)}]
        \fill[opacity=.2](0,0)--(.75,.75)--(.75,.75-2.5)--(0,-2.5);
         \draw[gray!60](0,0)--(.75,.75)--(.75,.75-2.5)--(0,-2.5)--(0,0);
        \end{scope}

        \end{scope}

        \begin{scope}[shift={(5,0)}]
        \node at (-1,-1){\large{$\sim$}};
        \fill[opacity=.2](0,0)--(.75,.75)--(.75,.75-2.5)--(0,-2.5);
         \draw[gray!60](0,0)--(.75,.75)--(.75,.75-2.5)--(0,-2.5)--(0,0);
        \end{scope}

    \end{tikzpicture}
    \caption{Annular equivalence between $P(12,3)-P(123)$ and $\tau$.}
    \label{fig: tau equivalence}
\end{figure}

It's worth mentioning that $P(1,2,3)- P(12,3)-P(13,2)-P(23,1)+2P(1,2,3)$, illustrated in Fig.\ref{fig:trace0}, is a projection with quantum dimension zero. It becomes zero in the quotient by the kernel $K_Z$. It is the key to achieve the 1-finite condition of $Z$.

\begin{proof}[Proof of Theorem \ref{Thm: finite 1-morphism}]
As discussed above if $m\geq 3$, the indecomposible $1$-morphisms of type $\mathcal{D}^1_m$ are annular equivalent indecomposible $1$-morphisms of type  $\mathcal{D}^1_{m-2}$. So there are only three indecomposible $1$-morphisms up to equivalence.
\end{proof}

\begin{proposition}\label{Prop: Ising fusion rule}
We obtain the following fusion rule of Ising type for composition of indecomposible 1-morphisms:
\begin{align*}
\tau\otimes \tau &\sim \mathbbm{1} \oplus g;\\
\tau\otimes g &\sim g\otimes \tau \sim  \tau;\\
g \otimes g &\sim \mathbbm{1}.
\end{align*}    
\end{proposition}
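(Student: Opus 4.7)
The plan is to prove the three fusion relations separately, each by identifying the composite as a specific idempotent in the appropriate algebra $A(\mathcal{D}^1_m \times D^2)$ and then applying the skein relations already established in this section to reduce it to an annular equivalence class of an indecomposible $1$-morphism.

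First, for $\tau \otimes \tau \sim \mathbbm{1} \oplus g$: this is essentially immediate from the structure of $A(\mathcal{D}^1_2 \times D^2)$. By Proposition~\ref{Prop: 2-curve}, this algebra is two-dimensional, so its identity decomposes as a sum of exactly two minimal idempotents. I would identify these as $\mathbbm{1}'$ (the tunnel configuration appearing on the right side of Figure~\ref{fig: tau g}, which is annular equivalent to $\mathbbm{1}$ since a $1$-simplex can pass through the tunnel without meeting any surface) and $g$ (inequivalent to $\mathbbm{1}$ by Lemma~\ref{Lem: g 1 inequivalence}). Since $\tau \otimes \tau$ is precisely this identity viewed as two planes placed side-by-side along the segment, the decomposition yields $\tau \otimes \tau \sim \mathbbm{1} \oplus g$.

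For $\tau \otimes g \sim g \otimes \tau \sim \tau$, the composite $\tau \otimes g$ realizes as an idempotent in $A(\mathcal{D}^1_3 \times D^2)$: three parallel planes with a red tube connecting the two planes from the $g$-factor. I would apply Proposition~\ref{Prop: merge red tube} to fuse the red tube into the adjacent plane contributed by $\tau$, reducing the configuration to a single plane stratification. By Theorem~\ref{Thm: finite 1-morphism} every minimal idempotent in $A(\mathcal{D}^1_3 \times D^2)$ is annular equivalent to $\tau$, so it suffices to verify that the reduced idempotent is nonzero and indecomposible; the nonzero trace computation together with Proposition~\ref{Prop: Morita and annular} then gives $\tau \otimes g \sim \tau$. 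The identity $g \otimes \tau \sim \tau$ follows by the same argument applied on the other side, using commutativity of $A(\mathcal{D}^1 \times D^2)$.

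For $g \otimes g \sim \mathbbm{1}$, the composite is an idempotent in $A(\mathcal{D}^1_4 \times D^2)$ with two red tubes, one between planes 1--2 and another between planes 3--4. I would apply Proposition~\ref{Prop: merge red tube} twice to absorb each red tube into an adjacent plane, then use the surface-separation moves (Figure~\ref{fig:Intersection}) together with the Euler-number rescaling built into Definition~\ref{Def: Ising Z} to collapse all residual planes. The expected endpoint is an idempotent supported on the empty stratification, i.e.\ $\mathbbm{1}$. The main obstacle will be tracking scalar normalizations through the successive red-tube merges and annular equivalences, to confirm that each composite reduces to a single minimal idempotent rather than a scalar multiple. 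As a consistency check, associativity combined with part~(a) gives $\tau \otimes \tau \otimes \tau \sim \tau \oplus (\tau \otimes g) \sim \tau \oplus (g \otimes \tau)$, constraining parts~(b) and~(c), and the decomposition of the identity of $A(\mathcal{D}^1_3 \times D^2)$ into four minimal idempotents all annular equivalent to $\tau$ (as computed following Proposition~\ref{Prop: key relation}) must match the direct skein calculation.
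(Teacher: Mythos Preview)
Your treatment of $\tau\otimes\tau\sim\mathbbm{1}\oplus g$ and $\tau\otimes g\sim g\otimes\tau\sim\tau$ is essentially the paper's: both relations were already established in the $m=2$ and $m=3$ discussions preceding the proposition, and the paper simply says ``the first two have been shown before.'' Your phrasing via Proposition~\ref{Prop: merge red tube} for part~(b) is a valid alternative packaging of the same $m=3$ analysis.

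For $g\otimes g\sim\mathbbm{1}$ your route differs. You propose a direct skein computation in $A(\mathcal{D}^1_4\times D^2)$: apply the merge-red-tube relation twice and track scalars until the configuration collapses to the empty stratification. The paper instead gives a one-line bimodule argument: the vertical half $g$-colored tube is a bimodule between $g\otimes g$ and $\mathbbm{1}$, hence induces a sub-idempotent of $g\otimes g$ annular equivalent to $\mathbbm{1}$; since $Tr(g)=Tr(\tau\otimes\tau)-Tr(\mathbbm{1}')=2-1=1$, the composite $g\otimes g$ has quantum dimension~$1$, so this sub-idempotent already exhausts it. Your approach can be made to work, but the scalar bookkeeping you flag as ``the main obstacle'' is genuinely delicate---each application of Proposition~\ref{Prop: merge red tube} introduces a $\sqrt{2}$ and changes the connected type, and you must also argue indecomposability of the result. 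The paper's dimension count sidesteps all of this: once you know $Tr(g\otimes g)=1$ and exhibit \emph{any} nonzero bimodule to $\mathbbm{1}$, you are done by Proposition~\ref{Prop: Morita and annular} and semisimplicity.
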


\begin{proof}
The first two have been shown before. The vertical half $g$-color tube is a bimodule between $g \otimes g$ and $\mathbbm{1}$, which induces a sub idempotent of $g \otimes g$ with quantum dimension one. So it equals to $g \otimes g$. So $g \otimes g \sim \mathbbm{1}.$
\end{proof}

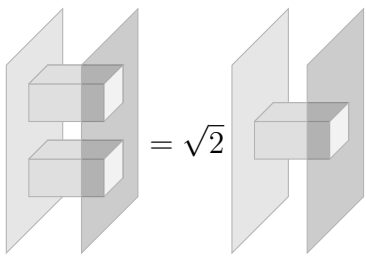
\begin{figure}[H]
    \centering
    \begin{tikzpicture}

    \begin{scope}
         \begin{scope}[shift={(.2,0)}]

         \fill[gray!20](0,0)--(.75,.75)--(.75,.75-2.5)--(0,-2.5);
         \draw[gray!60](0,0)--(.75,.75)--(.75,.75-2.5)--(0,-2.5)--(0,0);
        
        \end{scope}

        \begin{scope}[shift={(1,.25)}]


        \fill[gray!20](-.5,-.5)--(0,-.5)--(0.25,-.25)--(-.25,-.25);
         \draw[gray!60](-.5,-.5)--(0,-.5)--(0.25,-.25)--(-.25,-.25)--(-.5,-.5);

        \fill[gray!25](0.25,-.25)--(.75,-.25)--(.75,-.75)--(0.25,-.75)--(-.25,-.75)--(-.25,-.25)--(0.25,-.25);
        \draw[gray!60](0.25,-.25)--(.75,-.25)--(.75,-.75)--(0.25,-.75)--(-.25,-.75)--(-.25,-.25)--(0.25,-.25);

         \fill[gray!25](0,-.5)--(.5,-.5)--(.5,-1)--(0,-1)--(-.5,-1)--(-.5,-.5)--(0,-.5);
         \draw[gray!60](0,-.5)--(.5,-.5)--(.5,-1)--(0,-1)--(0,-1)--(-.5,-1)--(-.5,-.5)--(0,-.5);

         \fill[gray!25](-.5,-.5)--(.5,-.5)--(.75,-.25)--(-.25,-.25);
         \draw[gray!60](-.5,-.5)--(.5,-.5)--(.75,-.25)--(-.25,-.25)--(-.5,-.5);

           \end{scope}

        \begin{scope}[shift={(1,-.75)}]

        \fill[gray!20](-.5,-.5)--(0,-.5)--(0.25,-.25)--(-.25,-.25);
         \draw[gray!60](-.5,-.5)--(0,-.5)--(0.25,-.25)--(-.25,-.25)--(-.5,-.5);

        \fill[gray!25](0.25,-.25)--(.75,-.25)--(.75,-.75)--(0.25,-.75)--(-.25,-.75)--(-.25,-.25)--(0.25,-.25);
        \draw[gray!60](0.25,-.25)--(.75,-.25)--(.75,-.75)--(0.25,-.75)--(-.25,-.75)--(-.25,-.25)--(0.25,-.25);

         \fill[gray!25](0,-.5)--(.5,-.5)--(.5,-1)--(0,-1)--(-.5,-1)--(-.5,-.5)--(0,-.5);
         \draw[gray!60](0,-.5)--(.5,-.5)--(.5,-1)--(0,-1)--(0,-1)--(-.5,-1)--(-.5,-.5)--(0,-.5);

         \fill[gray!25](-.5,-.5)--(.5,-.5)--(.75,-.25)--(-.25,-.25);
         \draw[gray!60](-.5,-.5)--(.5,-.5)--(.75,-.25)--(-.25,-.25)--(-.5,-.5);

           \end{scope}

        \begin{scope}[shift={(1.2,0)}]
        \fill[opacity=.2](0,0)--(.75,.75)--(.75,.75-2.5)--(0,-2.5);
         \draw[gray!60](0,0)--(.75,.75)--(.75,.75-2.5)--(0,-2.5)--(0,0);
        \end{scope}

        \begin{scope}[shift={(1,.25)}]
                     \fill[gray!10](.5,-.5)--(.75,-.25)--(.75,-.75)--(.5,-1);
         \draw[gray!60](.5,-.5)--(.75,-.25)--(.75,-.75)--(.5,-1)--(.5,-.5);

        \end{scope}

        \begin{scope}[shift={(1,-.75)}]
                     \fill[gray!10](.5,-.5)--(.75,-.25)--(.75,-.75)--(.5,-1);
         \draw[gray!60](.5,-.5)--(.75,-.25)--(.75,-.75)--(.5,-1)--(.5,-.5);

        \end{scope}

        \end{scope}

    \begin{scope}[shift={(3,0)}]
     \node at (-.4,-1){$=\sqrt{2}$};
        
        \begin{scope}[shift={(.2,0)}] 
        
         \fill[gray!20](0,0)--(.75,.75)--(.75,.75-2.5)--(0,-2.5);
         \draw[gray!60](0,0)--(.75,.75)--(.75,.75-2.5)--(0,-2.5)--(0,0);
        
 \end{scope}

        \begin{scope}[shift={(1,-.25)}]

        \fill[gray!20](-.5,-.5)--(0,-.5)--(0.25,-.25)--(-.25,-.25);
         \draw[gray!60](-.5,-.5)--(0,-.5)--(0.25,-.25)--(-.25,-.25)--(-.5,-.5);

        \fill[gray!25](0.25,-.25)--(.75,-.25)--(.75,-.75)--(0.25,-.75)--(-.25,-.75)--(-.25,-.25)--(0.25,-.25);
        \draw[gray!60](0.25,-.25)--(.75,-.25)--(.75,-.75)--(0.25,-.75)--(-.25,-.75)--(-.25,-.25)--(0.25,-.25);

         \fill[gray!25](0,-.5)--(.5,-.5)--(.5,-1)--(0,-1)--(-.5,-1)--(-.5,-.5)--(0,-.5);
         \draw[gray!60](0,-.5)--(.5,-.5)--(.5,-1)--(0,-1)--(0,-1)--(-.5,-1)--(-.5,-.5)--(0,-.5);

         \fill[gray!25](-.5,-.5)--(.5,-.5)--(.75,-.25)--(-.25,-.25);
         \draw[gray!60](-.5,-.5)--(.5,-.5)--(.75,-.25)--(-.25,-.25)--(-.5,-.5);

           \end{scope}

        \begin{scope}[shift={(1.2,0)}]
        \fill[opacity=.2](0,0)--(.75,.75)--(.75,.75-2.5)--(0,-2.5);
         \draw[gray!60](0,0)--(.75,.75)--(.75,.75-2.5)--(0,-2.5)--(0,0);
        \end{scope}

        \begin{scope}[shift={(1,-.25)}]
        \fill[gray!10](.5,-.5)--(.75,-.25)--(.75,-.75)--(.5,-1);
         \draw[gray!60](.5,-.5)--(.75,-.25)--(.75,-.75)--(.5,-1)--(.5,-.5);

        \end{scope}

        \end{scope}
    \end{tikzpicture}
    \caption{ 2-morphism }
    \label{fig: tau0}
\end{figure}

\begin{figure}[H]
    \centering
     \begin{tikzpicture}
        \begin{scope}[shift={(0,0)}]

         \begin{scope}[shift={(.2,0)}]

         \fill[gray!20](0,0)--(.75,.75)--(.75,.75-2.5)--(0,-2.5);
         \draw[gray!60](0,0)--(.75,.75)--(.75,.75-2.5)--(0,-2.5)--(0,0);
        
 \end{scope}

        \draw[red, line width=1pt](.6,-1)--(1.6,-1);

            \begin{scope}[shift={(2.2,0)}]
             \fill[opacity=.2](0,0)--(.75,.75)--(.75,.75-2.5)--(0,-2.5);
         \draw[gray!60](0,0)--(.75,.75)--(.75,.75-2.5)--(0,-2.5)--(0,0);
        \end{scope}

        \begin{scope}[shift={(1.2,0)}]

         \draw[red, line width=1pt](.5,-1)--(1.5,-1);
             \fill[opacity=.2](0,0)--(.75,.75)--(.75,.75-2.5)--(0,-2.5);
         \draw[gray!60](0,0)--(.75,.75)--(.75,.75-2.5)--(0,-2.5)--(0,0);
        \end{scope}

         \draw[red, line width=1pt](.6,-.6)--(2.7,-.6);

        \end{scope}
    \end{tikzpicture}
    \caption{The 1-morphism with quantum dimension 0.}
    \label{fig:trace0}
\end{figure}
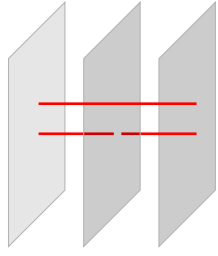

\subsection{Simplicial 2-morphisms}

In this subsection, we construct the indecomposible simplicial 2-morphisms up to annular equivalence.

\begin{theorem}
\label{Thm:delta2finite}
We construct a minimal representative set of indecomposible simplicial 2-morphisms
The set $$E_2=\{a_+,a_-,b_\tau,b_{g0},b_{g1},c_0,c_1\}.$$ is an $E_1$-complete minimal representative set of indecomposible simplicial 2-morphisms 
\end{theorem}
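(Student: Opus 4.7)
The strategy is to enumerate all indecomposable simplicial $2$-morphisms up to annular equivalence by classifying their admissible simplicial boundary data and, for each admissible boundary, determining the minimal idempotents of the local algebra $A(\mathcal{D}^2\times D^1)$. First I will enumerate the admissible simplicial boundaries: a simplicial $2$-morphism of type $\mathcal{D}^2$ is triangulated by a single $\Delta^2$, and to each $0$-face of $\Delta^2$ is associated a boundary $1$-morphism which must be homeomorphically equivalent to an element of $E_1=\{\mathbbm{1},\tau,g\}$. Using the fusion rule of Prop.~\ref{Prop: Ising fusion rule} together with the constraint that the stratification on adjacent edges must glue compatibly through the interior of the triangle, I will list the admissible triples of $1$-morphisms up to homeomorphisms of $D^2$ permuting the $0$-faces; this reduces to a small finite list of boundary types.

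Next, for each admissible boundary type I will describe the corresponding algebra $A(\mathcal{D}^2\times D^1)$ explicitly. Here I will use the skein relations established earlier in this section, namely the three-circle relation of Prop.~\ref{Prop: key relation}, the red-tube identification of Prop.~\ref{Prop: merge red tube}, and the vanishing criterion of Lemma~\ref{Lem: g 1 inequivalence}. These allow any stratified filling of the triangle, possibly with intersecting surfaces, to be reduced to a linear combination of a finite set of canonical fillings, whose Gram matrix under $Z$ from Def.~\ref{Def: Ising Z} can be computed from the Euler-number formula. Minimal idempotents of the algebra are then read off from its block decomposition, with nonzero quantum dimensions guaranteed by Theorem~\ref{Thm: RP} combined with Prop.~\ref{Prop: non-zero trace}.

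After that, I will identify $a_\pm$, $b_\tau$, $b_{g0}$, $b_{g1}$, $c_0$, $c_1$ as explicit representatives of these minimal idempotents, one for each annular equivalence class. By Prop.~\ref{Prop: Morita and annular}, annular inequivalence is equivalent to the non-existence of a non-zero bimodule, which in turn is detected by a pairing under $Z$; this will handle the distinctions within the pairs $b_{g0}$ vs.~$b_{g1}$ and $c_0$ vs.~$c_1$ at fixed boundary. Representatives with distinct boundary classes are automatically inequivalent. Completeness will then follow because the skein reduction in the previous step terminates in these canonical fillings, and minimality of $E_2$ will be immediate from the resulting count.

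The main obstacle will be the multiplicity count for those boundary types that support more than one inequivalent $2$-morphism. Intersecting $g$-membranes and $\tau$-planes produce a priori infinitely many stratified fillings, and the three-circle relation of Prop.~\ref{Prop: key relation} will have to be iterated, together with the red-tube vanishing of Lemma~\ref{Lem: g 1 inequivalence}, to cut these down to the correct finite number. I expect the multiplicities reflected by $a_\pm$, $b_{g0}/b_{g1}$, and $c_0/c_1$ to correspond to the two inequivalent ways a $g$-coloured surface can attach to the boundary strata, distinguished by a topological $\mathbb{Z}/2$ invariant that survives the skein reductions.
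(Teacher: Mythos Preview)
Your proposal is essentially the same strategy as the paper's proof: enumerate admissible boundary triples in $E_1$, reduce the stratified fillings of $\Delta^2$ via the established skein relations, and then verify pairwise inequivalence within each boundary class. The paper carries this out more combinatorially than you suggest---rather than computing Gram matrices of $A(\mathcal{D}^2\times D^1)$ and extracting block decompositions, it simply lists the finitely many curve configurations that survive the reduction of Theorem~\ref{Thm: finite 2-morphism} (no closed curves if the boundary has points; at most one closed curve otherwise), together with a parity observation that curve endpoints on $\partial\Delta^2$ come in even number, which immediately kills the boundaries $a_\tau$ and $e_\tau$. For inequivalence it invokes Lemma~\ref{Lem: g 1 inequivalence} and the definition of the red tube directly rather than a bimodule pairing, but this is the same content.

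One correction to your expectation: the $a_+/a_-$ splitting is \emph{not} about two ways a $g$-coloured surface attaches to the boundary. The boundary type there is $(\mathbbm{1},\mathbbm{1},\mathbbm{1})$, and the two classes are the empty filling versus a single closed curve labelled by the red idempotent $t_r$---exactly the $t_0/t_r$ dichotomy already isolated in Fig.~\ref{fig: red tube}. The $\mathbb{Z}/2$ crossing invariant you have in mind does govern $b_{g0}/b_{g1}$ and $c_0/c_1$, but not $a_\pm$.
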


\begin{proof}
Now we consider simplicial 2-morphisms whose boundary are labelled by 1-morphisms in $E_1$ as in Fig.~\ref{fig: 1-morphisms labels of the boundary of a 2-simplex}.
Since only transversal intersection of the 2-simplex $\Delta^2$ and surfaces are curves, the boundary have even number of points. So $a_\tau,e_\tau$ and $e_\tau$ are illegal. 
As $g$ is inequivalent to $\mathbbm{1}$, the two points of $g$ cannot be connected by a curve in $\Delta^2$.
According to the fusion rule in Prop.~\ref{Prop: Ising fusion rule} or Fig.~\ref{fig:trace0}, any 2-morphism with boundary $e_g$ has zero quantum dimension. 
By the discussions in the proof of Theorem \ref{Thm: finite 2-morphism}, $\Delta^2$ has no closed curve when the boundary has points; and $\Delta^2$ has at most one closed curve when the boundary has no points.  We list all the possible curves in $\Delta^2$ with given boundary labels in $E_1$ in Fig.\ref{fig:labels of 2-simplices} 2-morphisms.
There are seven permissible simplical 2-morphisms $E_2=\{a_+,a_-,b_\tau,b_{g0},b_{g1},c_0,c_1\}.$ The 2-morphisms $e_i$, $1\leq i \leq 4$, have boundary $e_g$ and they have zero quantum dimension, so they are in $K_Z$.   
So the $E_2$ is $E_1$-complete.

Now we show that the seven are pairwise inequivalent, so $E_2$ is a minimal representative set. We only need to compare the ones with the same boundary.
By the definition of $r$ in Fig. \ref{fig: red tube}, $a_+$ and $a_-$ inequivalent 
By Lemma \ref{Lem: g 1 inequivalence}, $b_{g0}$ and $b_{g1}$ are inequivalent; $c_{0}$ and $c_{1}$ are inequivalent. 
\end{proof}

\begin{figure}[H]
    \centering
    \begin{tikzpicture}
    
    \begin{scope}[scale=.25]
        \begin{scope}
            \node at (-3.5,0){$a_1=$};
            \draw(0,2)--(1.732,-1)--(-1.732,-1)--(0,2);
            \node at (1.5,1){$\mathbbm{1}$};
            \node at (-1.5,1){$\mathbbm{1}$};
            \node at (0,-1.8){$\mathbbm{1}$};
        \end{scope}

         \begin{scope}[shift={(10,0)}]
        
            \node at (-3.5,0){$a_\tau=$};
            \draw(0,2)--(1.732,-1)--(-1.732,-1)--(0,2);
            \node at (1.5,1){$\mathbbm{1}$};
            \node at (-1.5,1){$\tau$};
            \node at (0,-1.8){$\mathbbm{1}$};
        \end{scope}

        \begin{scope}[shift={(20,0)}]
        
            \node at (-3.5,0){$a_g=$};
            \draw(0,2)--(1.732,-1)--(-1.732,-1)--(0,2);
            \node at (1.5,1){$\mathbbm{1}$};
            \node at (-1.5,1){$g$};
            \node at (0,-1.8){$\mathbbm{1}$};
        \end{scope}

        \begin{scope}[shift={(30,0)}]
        
            \node at (-3.5,0){$b_\tau=$};
            \draw(0,2)--(1.732,-1)--(-1.732,-1)--(0,2);
            \node at (1.5,1){$\tau$};
            \node at (-1.5,1){$\tau$};
            \node at (0,-1.8){$\mathbbm{1}$};
        \end{scope}

        \begin{scope}[shift={(0,-6)}]
        
            \node at (-3.5,0){$b_g=$};
            \draw(0,2)--(1.732,-1)--(-1.732,-1)--(0,2);
            \node at (1.5,1){$g$};
            \node at (-1.5,1){$g$};
            \node at (0,-1.8){$\mathbbm{1}$};
        \end{scope}

         \begin{scope}[shift={(10,-6)}]
        
            \node at (-3.5,0){$c=$};
            \draw(0,2)--(1.732,-1)--(-1.732,-1)--(0,2);
            \node at (1.5,1){$\tau$};
            \node at (-1.5,1){$\tau$};
            \node at (0,-1.8){$g$};
        \end{scope}

        \begin{scope}[shift={(30,-6)}]
        
            \node at (-3.5,0){$e_g=$};
            \draw(0,2)--(1.732,-1)--(-1.732,-1)--(0,2);
            \node at (1.5,1){$g$};
            \node at (-1.5,1){$g$};
            \node at (0,-1.8){$g$};
        \end{scope}

        \begin{scope}[shift={(20,-6)}]
        
            \node at (-3.5,0){$e_\tau=$};
            \draw(0,2)--(1.732,-1)--(-1.732,-1)--(0,2);
            \node at (1.5,1){$\tau$};
            \node at (-1.5,1){$\tau$};
            \node at (0,-1.8){$\tau$};
        \end{scope}
        \end{scope}
        
    \end{tikzpicture}
    \caption{Possible 1-morphisms labels of the boundary of a 2-simplex}
    \label{fig: 1-morphisms labels of the boundary of a 2-simplex}
\end{figure}

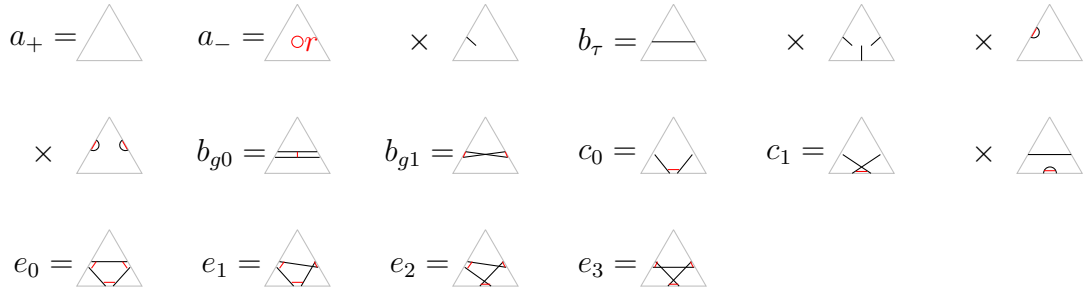
\begin{figure}[H]
    \centering
    \begin{tikzpicture}
        \begin{scope}[scale=.25]
        \begin{scope}[shift={(0,0)}]
            \node at (-3.5,0){$a_+=$};
            \draw[gray!50](0,2)--(1.732,-1)--(-1.732,-1)--(0,2);
        \end{scope}
               
        \begin{scope}[shift={(10,0)}]
        \draw[red] (0,0) circle(.3);
            \node at (-3.5,0){$a_-=$};    
            \node at (.75,-.1)[red]{$r$};
            \draw[gray!50](0,2)--(1.732,-1)--(-1.732,-1)--(0,2);
        \end{scope}

        \begin{scope}[shift={(50,0)}]
            \node at (-3.5,0){$\times$};
            \draw[gray!50](0,2)--(1.732,-1)--(-1.732,-1)--(0,2);

            \draw(-1,.25)arc[start angle=-120,end angle=60,radius=.3];
            \draw[red](-.95,.28)--(-.95+.25,.28+.25*1.732);
        \end{scope}

        \begin{scope}[shift={(30,0)}]

            \node at (-3.5,0){$b_\tau=$};
            \draw[gray!50](0,2)--(1.732,-1)--(-1.732,-1)--(0,2);
            \draw(1.15, 0)--(-1.15, 0);
        \end{scope}

        \begin{scope}[shift={(10,-6)}]

            \node at (-3.5,0){$b_{g0}=$};
            \draw[gray!50](0,2)--(1.732,-1)--(-1.732,-1)--(0,2);

            \draw(1.05, 0.15)--(-1.05, .15);
             \draw(1.18, -.15)--(-1.19, -.15);
             \draw[red](0,0.15)--(0,-.15);
        \end{scope}
        \begin{scope}[shift={(20,-6)}]

            \node at (-3.5,0){$b_{g1}=$};
            \draw[gray!50](0,2)--(1.732,-1)--(-1.732,-1)--(0,2);

            \draw(1.05, 0.15)--(-1.19, -.15);
             \draw(-1.05, .15)--(1.18, -.15);
             \draw[red](1.05, 0.15)--(1.18, -.15);
             \draw[red](-1.05, .15)--(-1.19, -.15);
         \end{scope}

         \begin{scope}[shift={(30,-6)}]

            \node at (-3.5,0){$c_0=$};
            \draw[gray!50](0,2)--(1.732,-1)--(-1.732,-1)--(0,2);

            \draw(1,0)--(0.2,-1);
            \draw(-1,0)--(-0.2,-1);

             \draw[red](-.3,-.8)--(.3,-.8);

        \end{scope}

        \begin{scope}[shift={(20,0)}]
            \node at (-3.5,0){$\times$};
            \draw[gray!50](0,2)--(1.732,-1)--(-1.732,-1)--(0,2);

            \draw(-1,.25)--(-.5,-.2);
        \end{scope}

        \begin{scope}[shift={(40,0)}]
            \node at (-3.5,0){$\times$};
            \draw[gray!50](0,2)--(1.732,-1)--(-1.732,-1)--(0,2);

            \draw(-1,.25)--(-.5,-.2);
            \draw(1,.25)--(.5,-.2);
            \draw(0,.-1)--(0,-.2);
        \end{scope}
        
        \begin{scope}[shift={(0,-6)}]
            \node at (-3.5,0){$\times$};
            \draw[gray!50](0,2)--(1.732,-1)--(-1.732,-1)--(0,2);

            \draw(-1,.25)arc[start angle=-120,end angle=60,radius=.3];
            \draw[red](-.95,.28)--(-.95+.25,.28+.25*1.732);

            \draw(1,.25)arc[start angle=300,end angle=120,radius=.3];
            \draw[red](.95,.28)--(.95-.25,.28+.25*1.732);
        \end{scope}

         \begin{scope}[shift={(40,-6)}]

            \node at (-3.5,0){$c_1=$};
            \draw[gray!50](0,2)--(1.732,-1)--(-1.732,-1)--(0,2);

            \draw(1,0)--(-0.5,-1);
            \draw(-1,0)--(0.5,-1);

             \draw[red](-.3,-.9)--(.3,-.9);

        \end{scope}

         \begin{scope}[shift={(50,-6)}]

            \node at (-3.5,0){$\times$};
            \draw[gray!50](0,2)--(1.732,-1)--(-1.732,-1)--(0,2);

            \draw(1.15,0)--(-1.15,0);

            \draw(-.32,-1)arc[start angle=180,end angle=0, radius=.34];
             \draw[red](-.3,-.86)--(.3,-.86);

        \end{scope}

         \begin{scope}[shift={(0,-12)}]

            \node at (-3.5,0){$e_0=$};
            \draw[gray!50](0,2)--(1.732,-1)--(-1.732,-1)--(0,2);

            \draw(1.1,0)--(0.2,-1);
            \draw(-1.1,0)--(-0.2,-1);

            \draw(0.95,0.3)--(-0.95,0.3);
            \draw[red](.95,-.05)--(.7,0.3);

            \draw[red](-.95,-.05)--(-.7,0.3);

             \draw[red](-.3,-.8)--(.3,-.8);

        \end{scope}

        \begin{scope}[shift={(10,-12)}]

            \node at (-3.5,0){$e_1=$};
            \draw[gray!50](0,2)--(1.732,-1)--(-1.732,-1)--(0,2);

            \draw(0.95,0.3)--(0.2,-1);%
            \draw(-1.1,0)--(-0.2,-1);

            \draw(1.1,0)--(-0.95,0.3);%
            \draw[red](1.05,-.0)--(.9,0.3);

            \draw[red](-.95,-.05)--(-.7,0.3);

             \draw[red](-.3,-.8)--(.3,-.8);

        \end{scope}

        \begin{scope}[shift={(20,-12)}]

            \node at (-3.5,0){$e_2=$};
            \draw[gray!50](0,2)--(1.732,-1)--(-1.732,-1)--(0,2);

            \draw(0.95,0.3)--(-0.3,-1);%
            \draw(-1.1,0)--(0.3,-1);

            \draw(1.1,0)--(-0.95,0.3);%
            \draw[red](1.05,-.0)--(.9,0.3);

            \draw[red](-.95,-.05)--(-.7,0.3);
             \draw[red](-.25,-.9)--(.25,-.9);

        \end{scope}
        
        \begin{scope}[shift={(30,-12)}]

            \node at (-3.5,0){$e_3=$};
            \draw[gray!50](0,2)--(1.732,-1)--(-1.732,-1)--(0,2);

            \draw(0.95,0.3)--(-0.3,-1);%
            \draw(-0.95,0.3)--(0.3,-1);

            \draw(1.1,0)--(-1.1,0);%
            \draw[red](1.05,.0)--(.9,0.3);

            \draw[red](-1,0)--(-.85,0.3);
             \draw[red](-.25,-.9)--(.25,-.9);

        \end{scope}

        \end{scope}

    \end{tikzpicture}
    \caption{Possible labels of 2-simplices}
    \label{fig:labels of 2-simplices}
\end{figure}

\subsection{3-morphisms}

In this subsection, we construct the minimal representative set of indecomposible simplicial 3-morphisms, which form a basis of the vector space with boundary labels in $E_2$.

\begin{theorem}
\label{Thm:delta3finite}
We construct a $E_2$-complete minimal representative set of 3-morphisms $E_3$ in Fig.\ref{fig:3-morphism set}.
\end{theorem}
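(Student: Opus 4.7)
The plan is to enumerate the finitely many admissible labellings of the boundary of a $3$-simplex $\Delta^3$ whose four faces carry $2$-morphisms from $E_2$ and whose six edges carry $1$-morphisms from $E_1$, and for each admissible boundary configuration to exhibit a basis of the corresponding $3$-morphism space. The union of these bases, one for each boundary configuration, is declared to be $E_3$ and is pictured in Fig.~\ref{fig:3-morphism set}.

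First, I would list all admissible labellings of $\partial \Delta^3$. Each of the four faces must be labelled by an element of $E_2=\{a_+,a_-,b_\tau,b_{g0},b_{g1},c_0,c_1\}$, and each of the six edges by an element of $E_1=\{\mathbbm{1},\tau,g\}$; the face labels must be consistent with the edge labels in the sense of Fig.~\ref{fig: 1-morphisms labels of the boundary of a 2-simplex}. The Ising fusion rule (Prop.~\ref{Prop: Ising fusion rule}) forces parity restrictions on the edges, and Lemma~\ref{Lem: g 1 inequivalence} forbids any configuration in which a $g$-coloured red line terminates on a free $2$-disc. After quotienting by the $S_4$-symmetry of $\Delta^3$ this reduces the enumeration to a manageable finite list of representative boundary types.

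For each admissible boundary $\mathcal{S}$ I would compute $\dim \tilde{V}(\Delta^3,\mathcal{S})$. By the proof of Theorem~\ref{Thm: RP} this equals $2^{|C|-1}$, where $|C|$ is the number of closed curves on $\partial \Delta^3\cong S^2$ obtained by joining the boundary arcs via the edge stratification. I would then draw explicit surface fillings of $\Delta^3$ with the prescribed boundary and, using the three-circle relation of Prop.~\ref{Prop: key relation}, the red-tube fusion of Prop.~\ref{Prop: merge red tube}, and the decomposition of a closed loop into $t_0'$ and $t_r$ from Fig.~\ref{fig: red tube}, reduce every filling to a linear combination of the chosen basis vectors; this gives $E_2$-completeness. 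Minimality is verified by computing the Gram matrix of each candidate basis under the inner product induced by $Z$ (evaluated via formula~\eqref{partfun}), since by reflection positivity (Theorem~\ref{Thm: RP}) a non-null basis is automatically linearly independent and the distinct elements are pairwise annular-inequivalent.

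The main obstacle will be the combinatorial bookkeeping: the tetrahedron carries ten labels (four faces and six edges), and although the symmetry group cuts the raw count considerably, one must still analyse each admissible class, count closed boundary curves, and explicitly draw generating surfaces. A particular subtlety arises for faces labelled $b_{g0}$ vs.\ $b_{g1}$, or $c_0$ vs.\ $c_1$: these pairs differ by an additional red ($g$-coloured) segment, and distinguishing the corresponding $3$-morphisms requires tracking the red-line pattern carefully and invoking the null-vector principle behind Lemma~\ref{Lem: g 1 inequivalence} to discard otherwise plausible fillings. Once this case analysis is carried out, the explicit set $E_3$ of Fig.~\ref{fig:3-morphism set} emerges as the minimal basis for each admissible boundary type, and the theorem follows.
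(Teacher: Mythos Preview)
Your proposal is correct and follows essentially the same approach as the paper: enumerate admissible $E_2$-labellings of $\partial\Delta^3$, fill each boundary by explicit surfaces, and check linear independence. The paper's argument is slightly more direct in that it classifies the fillings topologically (a surface with one boundary curve is a disc, with two a tube) rather than reducing via the algebraic relations, and it verifies minimality by a bare linear-independence check rather than a Gram-matrix computation, but the overall strategy is the same.
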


\begin{proof}
We first label four 2-simplices on the boundary of $\Delta^3$ by elements in $E_2$, so that they are compatible on 1-simplices. For a fixed boundary, we construct 3-morphisms as surfaces with the given boundary.  
By Lemma \ref{Lem: g 1 inequivalence}, every connected surface cannot contains the boundary of a red line. 
The surface with one boundary curve is a disc. The surface with two boundary curves is a tube. 
Modular isotopy, we list all possible diagrams in Fig.\ref{fig:3-morphism set} and $E_3$ is $E_2$ complete.
It is a direct computation to check that 3-morphisms with the same boundary are linearly independent in the vector space. So $E_3$ is minimal.
\end{proof}

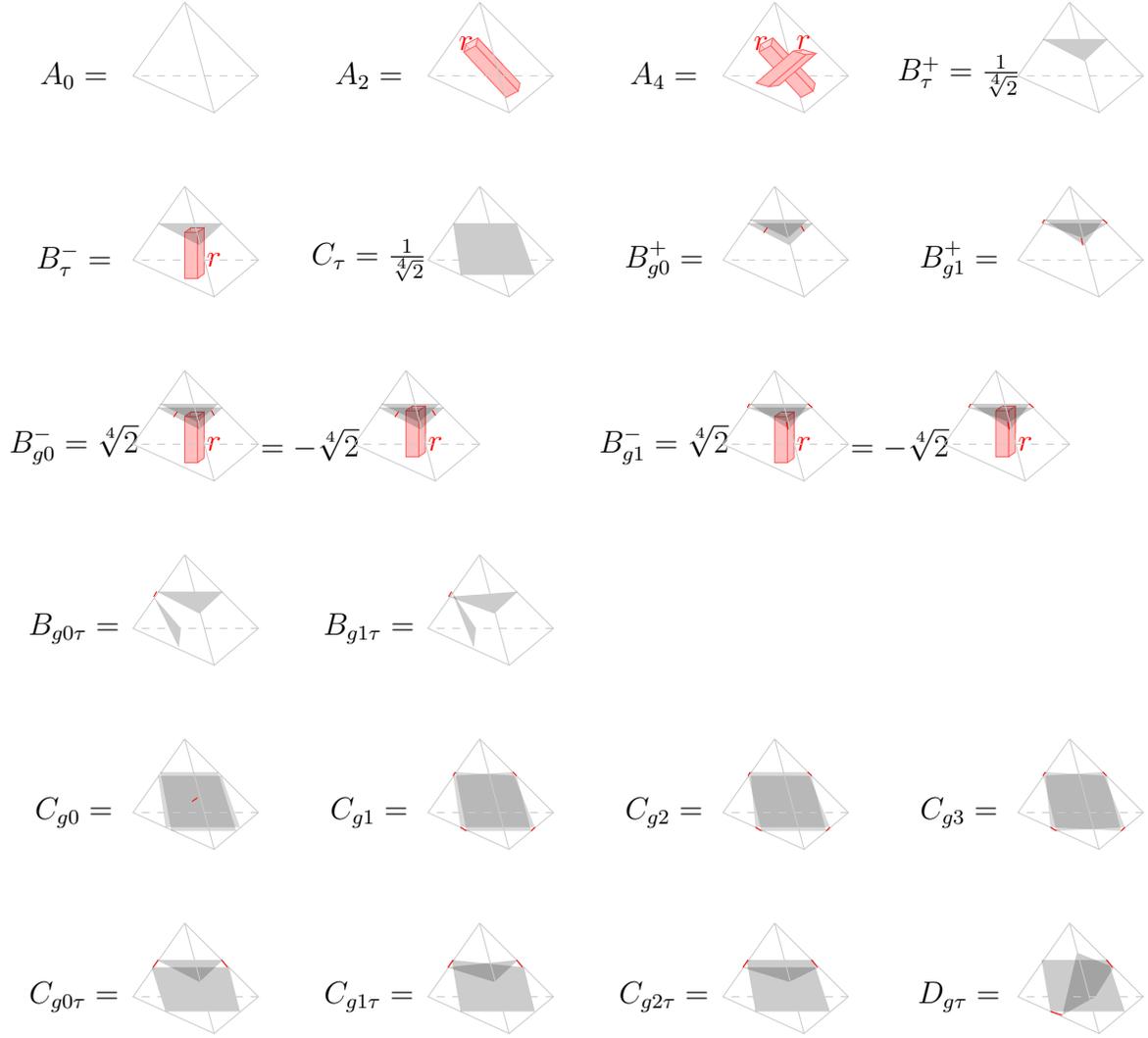
\begin{figure}[H]
    \centering
    \begin{tikzpicture}
        
    \begin{scope}
        \begin{scope}
        \node at (-1.5,-1) {$A_0=$};
            \draw[gray!40](0,0)--(1,-1)--(.4,-1.5)--(-.7,-1)--(0,0);
            \draw[gray!40,dashed](1,-1)--(-.7,-1);
            \draw[gray!40](0,0)--(.4,-1.5);
        \end{scope}

        \begin{scope}[shift={(4,0)}]
        \node at (-1.5,-1) {$A_2=$};

            \begin{scope}[shift={(-.1,-.8)},xscale=.4,yscale=.3,rotate=35]
        \begin{scope}[shift={(0,.5)}]

        \fill[red!20](0,0)--(.25,.25)--(.25,.25-2.5)--(0,-2.5);
        \node at (.3,.3)[red]{$r$};
        \draw[red!60](0,0)--(.25,.25)--(.25,.25-2.5);


        \fill[red!15](0.25,0.25)--(.75,.25)--(.75,-1.25)--(0.25,-1.25);
        \draw[red!60](0.25,0.25)--(.75,.25)--(.75,-1.25);

         \fill[red!25](0,0)--(.5,0)--(.5,-2.5)--(0,-2.5);

         \draw[red!60](0,0)--(.5,0)--(.5,-2.5)--(0,-2.5)--(0,0);
          \fill[red!25](0.5,0)--(.75,.25)--(.75,.25-2.5)--(0.5,-2.5);
          \draw[red!60](0.5,0)--(.75,.25)--(.75,.25-2.5)--(0.5,-2.5)--(0.5,0);
           \end{scope}
        \end{scope}

        \draw[gray!40](0,0)--(1,-1)--(.4,-1.5)--(-.7,-1)--(0,0);
            \draw[gray!40,dashed](1,-1)--(-.7,-1);
            \draw[gray!40](0,0)--(.4,-1.5);
        \end{scope}
        
        \begin{scope}[shift={(8,0)}]
        \node at (-1.5,-1) {$A_4=$};
        
            \draw[gray!40](0,0)--(1,-1)--(.4,-1.5)--(-.7,-1)--(0,0);
            \draw[gray!40,dashed](1,-1)--(-.7,-1);
            \draw[gray!40](0,0)--(.4,-1.5);

            \begin{scope}[shift={(-.1,-.8)},xscale=.4,yscale=.3,rotate=35]
        \begin{scope}[shift={(0,.5)}]

        \fill[red!20](0,0)--(.25,.25)--(.25,.25-2.5)--(0,-2.5);
        \draw[red!60](0,0)--(.25,.25)--(.25,.25-2.5);
        
        \node at (.3,.3)[red]{$r$};


        \fill[red!15](0.25,0.25)--(.75,.25)--(.75,-1.25)--(0.25,-1.25);
        \draw[red!60](0.25,0.25)--(.75,.25)--(.75,-1.25);

        \node at (1.5,-.5)[red]{$r$};

         \fill[red!25](0,0)--(.5,0)--(.5,-2.5)--(0,-2.5);

         \draw[red!60](0,0)--(.5,0)--(.5,-2.5)--(0,-2.5)--(0,0);
          \fill[red!25](0.5,0)--(.75,.25)--(.75,.25-2.5)--(0.5,-2.5);
          \draw[red!60](0.5,0)--(.75,.25)--(.75,.25-2.5)--(0.5,-2.5)--(0.5,0);
           \end{scope}
        \end{scope}

           \begin{scope}[shift={(.15,-.75)},xscale=.4,yscale=.2,rotate=-30]
        \begin{scope}[shift={(0,.5)}]

        \fill[red!20](0,0)--(.25,.25)--(.25,.25-2.5)--(0,-2.5);
        \draw[red!60](0,0)--(.25,.25)--(.25,.25-2.5);


        \fill[red!15](0.25,0.25)--(.75,.25)--(.75,-1.25)--(0.25,-1.25);
        \draw[red!60](0.25,0.25)--(.75,.25)--(.75,-1.25);

         \fill[red!25](0,0)--(.5,0)--(.5,-2.5)--(0,-2.5);

         \draw[red!60](0,0)--(.5,0)--(.5,-2.5)--(0,-2.5)--(0,0);
          \fill[red!25](0.5,0)--(.75,.25)--(.75,.25-2.5)--(0.5,-2.5);
          \draw[red!60](0.5,0)--(.75,.25)--(.75,.25-2.5)--(0.5,-2.5)--(0.5,0);
           \end{scope}
        \end{scope}

        \end{scope}

         \begin{scope}[shift={(12,0)}]
        \node at (-1.5,-1) {$B_{\tau}^+=\frac{1}{\sqrt[4]{2}}$};
            \draw[gray!40](0,0)--(1,-1)--(.4,-1.5)--(-.7,-1)--(0,0);
            \fill[fill opacity=.2](.52,-.5)--(-.38,-.5)--(.2,-.8);
            \draw[gray!40,dashed](1,-1)--(-.7,-1);
            \draw[gray!40](0,0)--(.4,-1.5);
        \end{scope}
        
         \begin{scope}[shift={(0,-2.5)}]
        \node at (-1.5,-1) {$B_{\tau}^-=$};

        \node at (.4,-1)[red]{$r$};

            \draw[gray!40,dashed](1,-1)--(-.7,-1);
           \begin{scope}[shift={(0,-.75)},xscale=.35,yscale=.25]
        \begin{scope}[shift={(0,.5)}]

        \fill[red!20](0,0)--(.25,.25)--(.25,.25-2.5)--(0,-2.5);
        \draw[red!60](0,0)--(.25,.25)--(.25,.25-2.5);


        \fill[red!15](0.25,0.25)--(.75,.25)--(.75,-1.25)--(0.25,-1.25);
        \draw[red!60](0.25,0.25)--(.75,.25)--(.75,-1.25);

         \fill[red!25](0,0)--(.5,0)--(.5,-2.5)--(0,-2.5);

         \draw[red!60](0,0)--(.5,0)--(.5,-2.5)--(0,-2.5)--(0,0);
          \fill[red!25](0.5,0)--(.75,.25)--(.75,.25-2.5)--(0.5,-2.5);
          \draw[red!60](0.5,0)--(.75,.25)--(.75,.25-2.5)--(0.5,-2.5)--(0.5,0);
           \end{scope}
        \end{scope}

            \fill[fill opacity=.2](.52,-.5)--(-.38,-.5)--(.2,-.8);
             \draw[gray!40](0,0)--(1,-1)--(.4,-1.5)--(-.7,-1)--(0,0);
               
            \draw[gray!40](0,0)--(.4,-1.5);
        \end{scope}

         \begin{scope}[shift={(4,-2.5)}]
        \node at (-1.5,-1) {$C_{\tau}=\frac{1}{\sqrt[4]{2}}$};
            \draw[gray!40](0,0)--(1,-1)--(.4,-1.5)--(-.7,-1)--(0,0);
            \fill[fill opacity=.2](.5,-.5)--(-.36,-.5)--(-.26,-1.2)--(.75,-1.2);
            \draw[gray!40,dashed](1,-1)--(-.7,-1);
            \draw[gray!40](0,0)--(.4,-1.5);
        \end{scope}

         \begin{scope}[shift={(8,-2.5)}]
            \node at (-1.5,-1) {$B_{g0}^+=$};
        
            \draw[gray!40](0,0)--(1,-1)--(.4,-1.5)--(-.7,-1)--(0,0);
            \fill[fill opacity=.2](.52,-.5)--(-.38,-.5)--(.2,-.8);
            \fill[fill opacity=.2](.46,-.45)--(-.35,-.45)--(.18,-.7);
            \draw[red](-.15,-.63)--(-.1,-.56);
            \draw[red](.4,-.61)--(.36,-.54);
            \draw[gray!40,dashed](1,-1)--(-.7,-1);
            \draw[gray!40](0,0)--(.4,-1.5);
        \end{scope}

        \begin{scope}[shift={(12,-2.5)}]
            \node at (-1.5,-1) {$B_{g1}^+=$};
        
            \draw[gray!40](0,0)--(1,-1)--(.4,-1.5)--(-.7,-1)--(0,0);
            \fill[fill opacity=.2](.52,-.5)--(-.38,-.5)--(.18,-.7);
            \fill[fill opacity=.2](.46,-.45)--(-.35,-.45)--(.2,-.8);
            \draw[red](.51,-.5)--(.45,-.45);
            \draw[red](.16,-.7)--(.18,-.8);
            \draw[red](-.37,-.5)--(-.34,-.45);
            \draw[gray!40,dashed](1,-1)--(-.7,-1);
            \draw[gray!40](0,0)--(.4,-1.5);
            
        \end{scope}
        
        \begin{scope}[shift={(0,-5)}]
        
        \node at (.4,-1)[red]{$r$};
        \node at (-1.5,-1) {$B_{g0}^{-}=\sqrt[4]{2}$};
         \begin{scope}[shift={(0,-.75)},xscale=.35,yscale=.25]
        \begin{scope}[shift={(0,.5)}]

        \fill[red!20](0,0)--(.25,.25)--(.25,.25-2.5)--(0,-2.5);
        \draw[red!60](0,0)--(.25,.25)--(.25,.25-2.5);


        \fill[red!15](0.25,0.25)--(.75,.25)--(.75,-1.25)--(0.25,-1.25);
        \draw[red!60](0.25,0.25)--(.75,.25)--(.75,-1.25);

         \fill[red!25](0,0)--(.5,0)--(.5,-2.5)--(0,-2.5);

         \draw[red!60](0,0)--(.5,0)--(.5,-2.5)--(0,-2.5)--(0,0);
          \fill[red!25](0.5,0)--(.75,.25)--(.75,.25-2.5)--(0.5,-2.5);
          \draw[red!60](0.5,0)--(.75,.25)--(.75,.25-2.5)--(0.5,-2.5)--(0.5,0);
           \end{scope}
        \end{scope}
           
            \draw[gray!40](0,0)--(1,-1)--(.4,-1.5)--(-.7,-1)--(0,0);
            \fill[fill opacity=.2](.52,-.5)--(-.38,-.5)--(.2,-.8);
            \fill[fill opacity=.2](.46,-.45)--(-.35,-.45)--(.18,-.7);
            \draw[red](-.15,-.63)--(-.1,-.56);
            \draw[red](.4,-.61)--(.36,-.54);
            \draw[gray!40,dashed](1,-1)--(-.7,-1);
            \draw[gray!40](0,0)--(.4,-1.5);
            
        \end{scope}
        
\begin{scope}[shift={(3,-5)}]

        \node at (.4,-1)[red]{$r$};
        \node at (-1.3,-1) {$=-\sqrt[4]{2}$};

        \begin{scope}[shift={(0,-.67)},xscale=.35,yscale=.25]
        \begin{scope}[shift={(0,.5)}]

        \fill[red!20](0,0)--(.25,.25)--(.25,.25-2.5)--(0,-2.5);
        \draw[red!60](0,0)--(.25,.25)--(.25,.25-2.5);


        \fill[red!15](0.25,0.25)--(.75,.25)--(.75,-1.25)--(0.25,-1.25);
        \draw[red!60](0.25,0.25)--(.75,.25)--(.75,-1.25);

         \fill[red!25](0,0)--(.5,0)--(.5,-2.5)--(0,-2.5);

         \draw[red!60](0,0)--(.5,0)--(.5,-2.5)--(0,-2.5)--(0,0);
          \fill[red!25](0.5,0)--(.75,.25)--(.75,.25-2.5)--(0.5,-2.5);
          \draw[red!60](0.5,0)--(.75,.25)--(.75,.25-2.5)--(0.5,-2.5)--(0.5,0);
           \end{scope}
        \end{scope}
           
            \draw[gray!40](0,0)--(1,-1)--(.4,-1.5)--(-.7,-1)--(0,0);
            \fill[fill opacity=.2](.52,-.5)--(-.38,-.5)--(.2,-.8);
            \fill[fill opacity=.2](.46,-.45)--(-.35,-.45)--(.18,-.7);
            \draw[red](-.15,-.63)--(-.1,-.56);
            \draw[red](.4,-.61)--(.36,-.54);
            \draw[gray!40,dashed](1,-1)--(-.7,-1);
            \draw[gray!40](0,0)--(.4,-1.5);

        \end{scope}

        \begin{scope}[shift={(8,-5)}]
            \node at (-1.5,-1) {$B_{g1}^{-}=\sqrt[4]{2}$};
            
        \node at (.4,-1)[red]{$r$};
            \begin{scope}[shift={(0,-.75)},xscale=.35,yscale=.25]
        \begin{scope}[shift={(0,.5)}]

        \fill[red!20](0,0)--(.25,.25)--(.25,.25-2.5)--(0,-2.5);
        \draw[red!60](0,0)--(.25,.25)--(.25,.25-2.5);


        \fill[red!15](0.25,0.25)--(.75,.25)--(.75,-1.25)--(0.25,-1.25);
        \draw[red!60](0.25,0.25)--(.75,.25)--(.75,-1.25);

         \fill[red!25](0,0)--(.5,0)--(.5,-2.5)--(0,-2.5);

         \draw[red!60](0,0)--(.5,0)--(.5,-2.5)--(0,-2.5)--(0,0);
          \fill[red!25](0.5,0)--(.75,.25)--(.75,.25-2.5)--(0.5,-2.5);
          \draw[red!60](0.5,0)--(.75,.25)--(.75,.25-2.5)--(0.5,-2.5)--(0.5,0);
           \end{scope}
        \end{scope}

            \draw[gray!40](0,0)--(1,-1)--(.4,-1.5)--(-.7,-1)--(0,0);
            \fill[fill opacity=.2](.52,-.5)--(-.38,-.5)--(.18,-.7);
            \fill[fill opacity=.2](.46,-.45)--(-.35,-.45)--(.2,-.8);
            \draw[red](.51,-.5)--(.45,-.45);
            \draw[red](.16,-.7)--(.18,-.8);
            \draw[red](-.37,-.5)--(-.34,-.45);
            
            \draw[gray!40,dashed](1,-1)--(-.7,-1);
            \draw[gray!40](0,0)--(.4,-1.5);

        \end{scope}

\begin{scope}[shift={(11,-5)}]
            \node at (-1.3,-1) {$=-\sqrt[4]{2}$};
            
        \node at (.4,-1)[red]{$r$};
            \begin{scope}[shift={(0,-.67)},xscale=.35,yscale=.25]
        \begin{scope}[shift={(0,.5)}]

        \fill[red!20](0,0)--(.25,.25)--(.25,.25-2.5)--(0,-2.5);
        \draw[red!60](0,0)--(.25,.25)--(.25,.25-2.5);


        \fill[red!15](0.25,0.25)--(.75,.25)--(.75,-1.25)--(0.25,-1.25);
        \draw[red!60](0.25,0.25)--(.75,.25)--(.75,-1.25);

         \fill[red!25](0,0)--(.5,0)--(.5,-2.5)--(0,-2.5);

         \draw[red!60](0,0)--(.5,0)--(.5,-2.5)--(0,-2.5)--(0,0);
          \fill[red!25](0.5,0)--(.75,.25)--(.75,.25-2.5)--(0.5,-2.5);
          \draw[red!60](0.5,0)--(.75,.25)--(.75,.25-2.5)--(0.5,-2.5)--(0.5,0);
           \end{scope}
        \end{scope}

            \draw[gray!40](0,0)--(1,-1)--(.4,-1.5)--(-.7,-1)--(0,0);
            \fill[fill opacity=.2](.52,-.5)--(-.38,-.5)--(.18,-.7);
            \fill[fill opacity=.2](.46,-.45)--(-.35,-.45)--(.2,-.8);
            \draw[red](.51,-.5)--(.45,-.45);
            \draw[red](.16,-.7)--(.18,-.8);
            \draw[red](-.37,-.5)--(-.34,-.45);
            
            \draw[gray!40,dashed](1,-1)--(-.7,-1);
            \draw[gray!40](0,0)--(.4,-1.5);

        \end{scope}

         \begin{scope}[shift={(0,-7.5)}]
        \node at (-1.5,-1) {$B_{g0\tau}=$};
            \draw[gray!40](0,0)--(1,-1)--(.4,-1.5)--(-.7,-1)--(0,0);
            
            \fill[fill opacity=.2](.52,-.5)--(-.38,-.5)--(.2,-.8);

            \fill[opacity=.2](-.42,-.57)--(-.05,-1)--(-.08,-1.26);
            \draw[red](-.42,-.57)--(-.38,-.5);
            
            \draw[gray!40,dashed](1,-1)--(-.7,-1);
            \draw[gray!40](0,0)--(.4,-1.5);
            
        \end{scope}

        \begin{scope}[shift={(4,-7.5)}]
        \node at (-1.5,-1) {$B_{g1\tau}=$};
            \draw[gray!40](0,0)--(1,-1)--(.4,-1.5)--(-.7,-1)--(0,0);
     
            \fill[fill opacity=.2](.52,-.5)--(-.42,-.57)--(.2,-.8);
            \fill[opacity=.2](-.38,-.5)--(-.05,-1)--(-.08,-1.26);
            \draw[red](-.42,-.57)--(-.38,-.5);
            \draw[gray!40,dashed](1,-1)--(-.7,-1);
            \draw[gray!40](0,0)--(.4,-1.5);
            
        \end{scope}

         \begin{scope}[shift={(0,-12.5)}]
        \node at (-1.5,-1) {$C_{g0\tau}=$};
            \draw[gray!40](0,0)--(1,-1)--(.4,-1.5)--(-.7,-1)--(0,0);
            
            \fill[fill opacity=.2](.52,-.5)--(-.38,-.5)--(.2,-.8);
            \fill[fill opacity=.2](.6,-.6)--(-.45,-.6)--(-.26,-1.2)--(.75,-1.2);

            \draw[red](.50,-.5)--(.58,-.6);
            \draw[red](-.36,-.5)--(-.43,-.6);
            \draw[gray!40,dashed](1,-1)--(-.7,-1);
            \draw[gray!40](0,0)--(.4,-1.5);
            
        \end{scope}

        \begin{scope}[shift={(4,-12.5)}]
        \node at (-1.5,-1) {$C_{g1\tau}=$};
            \draw[gray!40](0,0)--(1,-1)--(.4,-1.5)--(-.7,-1)--(0,0);
            
            \fill[fill opacity=.2](.52,-.5)--(-.45,-.6)--(.2,-.8);
            \fill[fill opacity=.2](.6,-.6)--(-.38,-.5)--(-.26,-1.2)--(.75,-1.2);

            \draw[red](.50,-.5)--(.58,-.6);
            \draw[red](-.36,-.5)--(-.43,-.6);
            \draw[gray!40,dashed](1,-1)--(-.7,-1);
            \draw[gray!40](0,0)--(.4,-1.5);
            
        \end{scope}

         \begin{scope}[shift={(8,-12.5)}]
        \node at (-1.5,-1) {$C_{g2\tau}=$};
            \draw[gray!40](0,0)--(1,-1)--(.4,-1.5)--(-.7,-1)--(0,0);
            
            \fill[fill opacity=.2](.6,-.6)--(-.45,-.6)--(.2,-.8);
            \fill[fill opacity=.2](.52,-.5)--(-.38,-.5)--(-.26,-1.2)--(.75,-1.2);

            \draw[red](.50,-.5)--(.58,-.6);
            \draw[red](-.36,-.5)--(-.43,-.6);
            \draw[gray!40,dashed](1,-1)--(-.7,-1);
            \draw[gray!40](0,0)--(.4,-1.5);
            
        \end{scope}
        
        \begin{scope}[shift={(12,-12.5)}]
        \node at (-1.5,-1) {$D_{g\tau}=$};
            \draw[gray!40](0,0)--(1,-1)--(.4,-1.5)--(-.7,-1)--(0,0);
            
            \fill[fill opacity=.2](.6,-.6)--(.35,-1)--(-.1,-1.25)--(.1,-.4);
            \fill[fill opacity=.2](.52,-.5)--(-.38,-.5)--(-.26,-1.2)--(.75,-1.2);

            \draw[red](.50,-.5)--(.58,-.6);
            \draw[red](-.1,-1.25)--(-.26,-1.2);
            \draw[gray!40,dashed](1,-1)--(-.7,-1);
            \draw[gray!40](0,0)--(.4,-1.5);
            
        \end{scope}

         \begin{scope}[shift={(0,-10)}]
        \node at (-1.5,-1) {$C_{g0}=$};
            \draw[gray!40](0,0)--(1,-1)--(.4,-1.5)--(-.7,-1)--(0,0);
            \fill[fill opacity=.15](.5,-.5)--(-.36,-.5)--(-.26,-1.2)--(.75,-1.2);
            \fill[fill opacity=.15](.45,-.45)--(-.33,-.45)--(-.18,-1.25)--(.7,-1.25);

            \draw[red](.1,-.85)--(.17,-.8);
            \draw[gray!40,dashed](1,-1)--(-.7,-1);
            \draw[gray!40](0,0)--(.4,-1.5);
            
        \end{scope}

        \begin{scope}[shift={(8,-10)}]
        \node at (-1.5,-1) {$C_{g2}=$};
            \draw[gray!40](0,0)--(1,-1)--(.4,-1.5)--(-.7,-1)--(0,0);
            \fill[fill opacity=.15](.5,-.5)--(-.36,-.5)--(-.18,-1.25)--(.7,-1.25);
            \fill[fill opacity=.15](.45,-.45)--(-.33,-.45)--(-.26,-1.2)--(.75,-1.2);

            \draw[red](-.36,-.5)--(-.33,-.45);
            \draw[red](.5,-.5)--(.45,-.45);
            \draw[red](-.26,-1.2)--(-.18,-1.25);
            \draw[red](.75,-1.2)--(.7,-1.25);
            \draw[gray!40,dashed](1,-1)--(-.7,-1);
            \draw[gray!40](0,0)--(.4,-1.5);
            
        \end{scope}

        \begin{scope}[shift={(4,-10)}]
        \node at (-1.5,-1) {$C_{g1}=$};
            \draw[gray!40](0,0)--(1,-1)--(.4,-1.5)--(-.7,-1)--(0,0);
            \fill[fill opacity=.15](.5,-.5)--(-.33,-.45)--(-.18,-1.25)--(.7,-1.25);
            \fill[fill opacity=.15](.45,-.45)--(-.36,-.5)--(-.26,-1.2)--(.75,-1.2);

            \draw[red](-.36,-.5)--(-.33,-.45);
            \draw[red](.5,-.5)--(.45,-.45);
            \draw[red](-.26,-1.2)--(-.18,-1.25);
            \draw[red](.75,-1.2)--(.7,-1.25);
            \draw[gray!40,dashed](1,-1)--(-.7,-1);
            \draw[gray!40](0,0)--(.4,-1.5);
            
        \end{scope}

         \begin{scope}[shift={(12,-10)}]
        \node at (-1.5,-1) {$C_{g3}=$};
            \draw[gray!40](0,0)--(1,-1)--(.4,-1.5)--(-.7,-1)--(0,0);
            \fill[fill opacity=.15](.45,-.45)--(-.36,-.5)--(-.18,-1.25)--(.75,-1.2);
            \fill[fill opacity=.15](.5,-.5)--(-.33,-.45)--(-.26,-1.2)--(.7,-1.25);

            \draw[red](-.36,-.5)--(-.33,-.45);
            \draw[red](.5,-.5)--(.45,-.45);
            \draw[red](-.26,-1.2)--(-.18,-1.25);
            \draw[red](.75,-1.2)--(.7,-1.25);
            \draw[gray!40,dashed](1,-1)--(-.7,-1);
            \draw[gray!40](0,0)--(.4,-1.5);
            
        \end{scope}

    \end{scope}
    \end{tikzpicture}
    \caption{A minimal representative set of 3-simplical morphisms (up to rotation and reflection in $\mathbb{R}^3$). All the tubes are red.}
    \label{fig:3-morphism set}
\end{figure}

\subsection{20j-Symbols}

\begin{theorem}
\label{Thm:20j symbols}
The $Z$ value of 4-simplices with compatible boundary morphisms in $E_3$ are listed in Table~\ref{tab:20j-symbols}, which we call the 20j symbols according to the 20 labels for 1-simplices and 2-simplices.
\end{theorem}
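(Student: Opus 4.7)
The plan is to compute $Z$ of the labelled boundary $\partial\Delta^4$ directly, exploiting that $|\partial\Delta^4| \sim S^3$ and that $Z$ is homeomorphic invariant on $S^3$ by Def.~\ref{Def: Ising Z}. The $20$ indexing labels of a $20j$-symbol are the $\binom{5}{2}=10$ edge labels in $E_1=\{\mathbbm{1},\tau,g\}$ together with the $\binom{5}{3}=10$ face-triangle labels in $E_2$; these determine, via Theorem \ref{Thm:delta3finite}, the admissible $3$-morphisms in $E_3$ to be placed on each of the $\binom{5}{4}=5$ tetrahedral faces of $\Delta^4$.

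First I would enumerate the admissible $20$-label configurations of $\partial\Delta^4$: an edge coloring by $\{\mathbbm{1},\tau,g\}$, a triangle coloring in $E_2$ whose three edge-restrictions agree with the chosen edge labels (Theorem \ref{Thm:delta2finite}), and for each tetrahedron a $3$-morphism in $E_3$ whose four triangular boundaries match the already chosen triangle labels. Complete finiteness (Theorem \ref{Thm: RP and CF}) guarantees that there are only finitely many such configurations; moreover the fusion rule of Prop.~\ref{Prop: Ising fusion rule} imposes a $\mathbb{Z}_2$ grading by $g$-parity along the edges, which substantially cuts this count.

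Next, for each admissible configuration I would glue the five chosen $3$-morphisms along the ten common triangular faces to produce a labelled stratified $3$-manifold $\mathcal{S}$ with support homeomorphic to $S^3$. The surface patches and red tubes drawn in Fig.~\ref{fig:3-morphism set} merge under the gluing into a finite collection of closed surfaces $\{S_i\}$ inside $S^3$, possibly carrying transverse intersection curves and triple points, exactly in the form of stratification required by Def.~\ref{Def: Ising Z}, while the scalar prefactors $\frac{1}{\sqrt[4]{2}}$ and $\pm\sqrt[4]{2}$ in Fig.~\ref{fig:3-morphism set} accumulate into an overall constant $c$. Then Def.~\ref{Def: Ising Z} yields directly
\begin{equation*}
Z(\mathcal{S}) \;=\; c \prod_i 2^{\,1-e_i/4}, \qquad e_i \;=\; \chi(S_i),
\end{equation*}
which is the entry of Table~\ref{tab:20j-symbols} associated to that configuration.

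The main obstacle is the combinatorial bookkeeping: for each configuration one must identify which local patches across the five tetrahedra assemble into the same connected closed surface, determine its Euler characteristic (tracking the genus contributions of red tubes absorbed via Prop.~\ref{Prop: merge red tube} and of any red handles added to existing surfaces), and keep orientation conventions consistent across the five gluings. I would exploit the $S_5$-action permuting the vertices of $\Delta^4$, the reflection symmetry $a_+ \leftrightarrow a_-$ and $B_\tau^+ \leftrightarrow B_\tau^-$ implicit in Prop.~\ref{Prop: merge red tube}, and the $g$-parity grading, to reduce the explicit work to a small list of orbit representatives; the remaining table entries then follow by symmetry. As an independent consistency check, I would verify the $(3,3)$, $(2,4)$ and $(1,5)$ Pachner moves, which are guaranteed to hold for the resulting table by the homeomorphic invariance established in Theorem \ref{Thm: Semisimple invariant}.
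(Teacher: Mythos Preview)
Your proposal is correct and follows essentially the same approach as the paper: enumerate the admissible labelings of $\partial\Delta^4$, glue the five $3$-morphisms from $E_3$ along their shared triangles to obtain a linear combination of closed surfaces in $S^3$, and evaluate via the Euler-number formula of Def.~\ref{Def: Ising Z}, tracking the scalar prefactors to distinguish the unnormalized $\widetilde{F}$ from the normalized $F$. The paper's proof is terser but identical in substance; your use of the $S_5$-symmetry and the Pachner-move check are sensible organizational aids that the paper treats separately.
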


\begin{proof}
Take a 4-simplex whose five vertices of $\Delta^4$ are numbered from 1 to 5. 
We label the five 3-simplices of $\partial \Delta^4$ by morphisms in $E_3$, so that they have the same label on 2-simplices. We list the compatible labels in Tables~\ref{tab:20j-symbolsb} and \ref{tab:20j-symbolsb}.
Then the boundary of the 4-simplex is a linear sum of $S^3$ containing surfaces. The shape of the surfaces are listed in the column with head $\sigma^2$.  
We denote $S_\tau$ as a $\tau$-colored sphere, $T_{\tau-}$ as a red-colored torus, $T_{\tau-}^2$ as a $\tau$-colored genus-2 torus. Denote $S_g$ as a red-line connected double sphere. $T_g$ be digging an inner red tube intersecting an inner face $a_-$ on $S_g$, $T^2_g$ digging two red tubes.
Their $Z$ values can be computed directly by Def.~\ref{Def: Ising Z} and is shown in the column with head $\tilde{F}$.

Note that the 3-morphisms in $E_3$ are unnormalized vectors. The corresponding unnormalized 20j-symbol is $\widetilde{F}$. If we normalize the 3-morphisms to be unital vectors, then the normalized 20j-symbol is $F$.
\end{proof}

\begin{table}[H]
\centering
\setlength{\tabcolsep}{.75pt}
\renewcommand{\arraystretch}{0.93} 
\begin{tabular}{||c||c|c|c|c|c|c|c|c|c|c||c|c|c|c|c|c|c|c|c|c||c|c|c|c|c||}
\hline

id&12&13&14&15&23&24&25&34&35&45&123&124&125&134&135&145&234&235&245&345&1234&1235&1245&1345&2345

\\\hline

$0^0$&$1$&$1$&$1$&$1$&$1$&$1$&$1$&$1$&$1$&$1$&$a^+$&$a^+$&$a^+$&$a^+$&$a^+$&$a^+$&$a^+$&$a^+$&$a^+$&$a^+$&$A_0$&$A_0$&$A_0$&$A_0$&$A_0$\\
$0^1$&$1$&$1$&$1$&$1$&$1$&$1$&$1$&$1$&$1$&$1$&$a^+$&$a^+$&$a^+$&$a^+$&$a^+$&$a^-$&$a^+$&$a^+$&$a^-$&$a^-$&$A_0$&$A_0$&$A_2$&$A_2$&$A_2$\\
$0^2$&$1$&$1$&$1$&$1$&$1$&$1$&$1$&$1$&$1$&$1$&$a^+$&$a^+$&$a^+$&$a^+$&$a^-$&$a^-$&$a^+$&$a^-$&$a^-$&$a^+$&$A_0$&$A_2$&$A_2$&$A_2$&$A_2$\\
$0^3$&$1$&$1$&$1$&$1$&$1$&$1$&$1$&$1$&$1$&$1$&$a^+$&$a^+$&$a^-$&$a^-$&$a^+$&$a^-$&$a^-$&$a^-$&$a^+$&$a^+$&$A_2$&$A_2$&$A_2$&$A_2$&$A_2$\\
$0^4$&$1$&$1$&$1$&$1$&$1$&$1$&$1$&$1$&$1$&$1$&$a^+$&$a^+$&$a^-$&$a^-$&$a^+$&$a^+$&$a^-$&$a^-$&$a^-$&$a^-$&$A_2$&$A_2$&$A_2$&$A_2$&$A_4$\\
$0^5$&$1$&$1$&$1$&$1$&$1$&$1$&$1$&$1$&$1$&$1$&$a^+$&$a^+$&$a^+$&$a^-$&$a^-$&$a^-$&$a^-$&$a^-$&$a^-$&$a^-$&$A_2$&$A_2$&$A_2$&$A_4$&$A_4$\\
$0^6$&$1$&$1$&$1$&$1$&$1$&$1$&$1$&$1$&$1$&$1$&$a^-$&$a^-$&$a^-$&$a^-$&$a^-$&$a^-$&$a^-$&$a^-$&$a^-$&$a^-$&$A_4$&$A_4$&$A_4$&$A_4$&$A_4$\\
$1^0_\tau$&$1$&$1$&$1$&$\tau$&$1$&$1$&$\tau$&$1$&$\tau$&$\tau$&$a^+$&$a^+$&$b_\tau$&$a^+$&$b_\tau$&$b_\tau$&$a^+$&$b_\tau$&$b_\tau$&$b_\tau$&$A_0$&$B_\tau^+$&$B_\tau^+$&$B_\tau^+$&$B_\tau^+$\\
$1^1_\tau$&$1$&$1$&$1$&$\tau$&$1$&$1$&$\tau$&$1$&$\tau$&$\tau$&$a^+$&$a^+$&$b_\tau$&$a^-$&$b_\tau$&$b_\tau$&$a^-$&$b_\tau$&$b_\tau$&$b_\tau$&$A_2$&$B_\tau^+$&$B_\tau^+$&$B_\tau^-$&$B_\tau^-$\\
$1^2_\tau$&$1$&$1$&$1$&$\tau$&$1$&$1$&$\tau$&$1$&$\tau$&$\tau$&$a^-$&$a^-$&$b_\tau$&$a^-$&$b_\tau$&$b_\tau$&$a^-$&$b_\tau$&$b_\tau$&$b_\tau$&$A_4$&$B_\tau^-$&$B_\tau^-$&$B_\tau^-$&$B_\tau^-$\\
$1^0_{g0}$&$1$&$1$&$1$&$g$&$1$&$1$&$g$&$1$&$g$&$g$&$a^+$&$a^+$&$b_{g0}$&$a^+$&$b_{g0}$&$b_{g0}$&$a^+$&$b_{g0}$&$b_{g0}$&$b_{g0}$&$A_0$&$B_{g0}^+$&$B_{g0}^+$&$B_{g0}^+$&$B_{g0}^+$\\
$1^0_{g1}$&$1$&$1$&$1$&$g$&$1$&$1$&$g$&$1$&$g$&$g$&$a^+$&$a^+$&$b_{g0}$&$a^+$&$b_{g0}$&$b_{g1}$&$a^+$&$b_{g0}$&$b_{g1}$&$b_{g1}$&$A_0$&$B_{g0}^+$&$B_{g1}^+$&$B_{g1}^+$&$B_{g1}^+$\\
$1^0_{g2}$&$1$&$1$&$1$&$g$&$1$&$1$&$g$&$1$&$g$&$g$&$a^+$&$a^+$&$b_{g0}$&$a^+$&$b_{g1}$&$b_{g1}$&$a^+$&$b_{g1}$&$b_{g1}$&$b_{g0}$&$A_0$&$B_{g1}^+$&$B_{g1}^+$&$B_{g1}^+$&$B_{g1}^+$\\
$1^1_{g0}$&$1$&$1$&$1$&$g$&$1$&$1$&$g$&$1$&$g$&$g$&$a^+$&$a^+$&$b_{g0}$&$a^-$&$b_{g0}$&$b_{g0}$&$a^-$&$b_{g0}$&$b_{g0}$&$b_{g0}$&$A_2$&$B_{g0}^+$&$B_{g0}^+$&$B_{g0}^-$&$B_{g0}^-$\\
$1^1_{g1}$&$1$&$1$&$1$&$g$&$1$&$1$&$g$&$1$&$g$&$g$&$a^+$&$a^+$&$b_{g0}$&$a^-$&$b_{g0}$&$b_{g1}$&$a^-$&$b_{g0}$&$b_{g1}$&$b_{g1}$&$A_2$&$B_{g0}^+$&$B_{g1}^+$&$B_{g1}^-$&$B_{g1}^-$\\
$1^1_{g2}$&$1$&$1$&$1$&$g$&$1$&$1$&$g$&$1$&$g$&$g$&$a^+$&$a^+$&$b_{g0}$&$a^-$&$b_{g1}$&$b_{g1}$&$a^-$&$b_{g1}$&$b_{g1}$&$b_{g0}$&$A_2$&$B_{g1}^+$&$B_{g1}^+$&$B_{g1}^-$&$B_{g1}^-$\\
$1^2_{g0}$&$1$&$1$&$1$&$g$&$1$&$1$&$g$&$1$&$g$&$g$&$a^-$&$a^-$&$b_{g0}$&$a^-$&$b_{g0}$&$b_{g0}$&$a^-$&$b_{g0}$&$b_{g0}$&$b_{g0}$&$A_4$&$B_{g0}^-$&$B_{g0}^-$&$B_{g0}^-$&$B_{g0}^-$\\
$1^2_{g1}$&$1$&$1$&$1$&$g$&$1$&$1$&$g$&$1$&$g$&$g$&$a^-$&$a^-$&$b_{g0}$&$a^-$&$b_{g0}$&$b_{g1}$&$a^-$&$b_{g0}$&$b_{g1}$&$b_{g1}$&$A_4$&$B_{g0}^-$&$B_{g1}^-$&$B_{g1}^-$&$B_{g1}^-$\\
$1^2_{g2}$&$1$&$1$&$1$&$g$&$1$&$1$&$g$&$1$&$g$&$g$&$a^-$&$a^-$&$b_{g0}$&$a^-$&$b_{g1}$&$b_{g1}$&$a^-$&$b_{g1}$&$b_{g1}$&$b_{g0}$&$A_4$&$B_{g1}^-$&$B_{g1}^-$&$B_{g1}^-$&$B_{g1}^-$\\

$1_{g0\tau}^+$&$1$&$1$&$\tau$&$\tau$&$1$&$\tau$&$\tau$&$\tau$&$\tau$&$g$&$a^+$&$b_\tau$&$b_\tau$&$b_\tau$&$b_\tau$&$c_0$&$b_\tau$&$b_\tau$&$c_0$&$c_0$&$B_\tau^+$&$B_\tau^+$&$B_{g0\tau}$&$B_{g0\tau}$&$B_{g0\tau}$\\
$1_{g1\tau}^+$&$1$&$1$&$\tau$&$\tau$&$1$&$\tau$&$\tau$&$\tau$&$\tau$&$g$&$a^+$&$b_\tau$&$b_\tau$&$b_\tau$&$b_\tau$&$c_1$&$b_\tau$&$b_\tau$&$c_1$&$c_1$&$B_\tau^+$&$B_\tau^+$&$B_{g1\tau}$&$B_{g1\tau}$&$B_{g1\tau}$\\
$1_{g0\tau}^-$&$1$&$1$&$\tau$&$\tau$&$1$&$\tau$&$\tau$&$\tau$&$\tau$&$g$&$a^-$&$b_\tau$&$b_\tau$&$b_\tau$&$b_\tau$&$c_0$&$b_\tau$&$b_\tau$&$c_0$&$c_0$&$B_\tau^-$&$B_\tau^-$&$B_{g0\tau}$&$B_{g0\tau}$&$B_{g0\tau}$\\
$1_{g1\tau}^-$&$1$&$1$&$\tau$&$\tau$&$1$&$\tau$&$\tau$&$\tau$&$\tau$&$g$&$a^-$&$b_\tau$&$b_\tau$&$b_\tau$&$b_\tau$&$c_1$&$b_\tau$&$b_\tau$&$c_1$&$c_1$&$B_\tau^-$&$B_\tau^-$&$B_{g1\tau}$&$B_{g1\tau}$&$B_{g1\tau}$\\

\hline

\end{tabular}
\caption{\label{tab:20j-symbolsb}Labels of simplices of $\Delta^4$.}
\end{table}

\begin{table}[H]
\centering
\setlength{\tabcolsep}{.75pt}
\renewcommand{\arraystretch}{0.93} 
\begin{tabular}{||c||c|c|c|c|c|c|c|c|c|c||c|c|c|c|c|c|c|c|c|c||c|c|c|c|c||}
\hline

id&12&13&14&15&23&24&25&34&35&45&123&124&125&134&135&145&234&235&245&345&1234&1235&1245&1345&2345

\\\hline
$2_\tau^+$&$1$&$1$&$\tau$&$\tau$&$1$&$\tau$&$\tau$&$\tau$&$\tau$&$1$&$a^+$&$b_\tau$&$b_\tau$&$b_\tau$&$b_\tau$&$b_\tau$&$b_\tau$&$b_\tau$&$b_\tau$&$b_\tau$&$B_\tau^+$&$B_\tau^+$&$C_\tau$&$C_\tau$&$C_\tau$\\
$2_\tau^-$&$1$&$1$&$\tau$&$\tau$&$1$&$\tau$&$\tau$&$\tau$&$\tau$&$1$&$a^-$&$b_\tau$&$b_\tau$&$b_\tau$&$b_\tau$&$b_\tau$&$b_\tau$&$b_\tau$&$b_\tau$&$b_\tau$&$B_\tau^-$&$B_\tau^-$&$C_\tau$&$C_\tau$&$C_\tau$\\
$2_{g0\tau}^+$&$1$&$1$&$\tau$&$g$&$1$&$\tau$&$g$&$\tau$&$g$&$\tau$&$a^+$&$b_\tau$&$b_{g0}$&$b_\tau$&$b_{g0}$&$c_0$&$b_\tau$&$b_{g0}$&$c_0$&$c_0$&$B_\tau^+$&$B_{g0}^+$&$C_{g0\tau}$&$C_{g0\tau}$&$C_{g0\tau}$\\

$2_{g1\tau}^+$&$1$&$1$&$\tau$&$g$&$1$&$\tau$&$g$&$\tau$&$g$&$\tau$&$a^+$&$b_\tau$&$b_{g0}$&$b_\tau$&$b_{g1}$&$c_1$&$b_\tau$&$b_{g1}$&$c_1$&$c_0$&$B_\tau^+$&$B_{g1}^+$&$C_{g2\tau}$&$C_{g1\tau}$&$C_{g1\tau}$\\
$2_{g2\tau}^+$&$1$&$1$&$\tau$&$g$&$1$&$\tau$&$g$&$\tau$&$g$&$\tau$&$a^+$&$b_\tau$&$b_{g0}$&$b_\tau$&$b_{g0}$&$c_1$&$b_\tau$&$b_{g0}$&$c_1$&$c_1$&$B_\tau^+$&$B_{g0}^+$&$C_{g2\tau}$&$C_{g2\tau}$&$C_{g2\tau}$\\
$2_{g0\tau}^-$&$1$&$1$&$\tau$&$g$&$1$&$\tau$&$g$&$\tau$&$g$&$\tau$&$a^-$&$b_\tau$&$b_{g0}$&$b_\tau$&$b_{g0}$&$c_0$&$b_\tau$&$b_{g0}$&$c_0$&$c_0$&$B_\tau^-$&$B_{g0}^-$&$C_{g0\tau}$&$C_{g0\tau}$&$C_{g0\tau}$\\
$2_{g1\tau}^-$&$1$&$1$&$\tau$&$g$&$1$&$\tau$&$g$&$\tau$&$g$&$\tau$&$a^-$&$b_\tau$&$b_{g0}$&$b_\tau$&$b_{g1}$&$c_1$&$b_\tau$&$b_{g1}$&$c_1$&$c_0$&$B_\tau^-$&$B_{g1}^-$&$C_{g2\tau}$&$C_{g1\tau}$&$C_{g1\tau}$\\
$2_{g2\tau}^-$&$1$&$1$&$\tau$&$g$&$1$&$\tau$&$g$&$\tau$&$g$&$\tau$&$a^-$&$b_\tau$&$b_{g0}$&$b_\tau$&$b_{g0}$&$c_1$&$b_\tau$&$b_{g0}$&$c_1$&$c_1$&$B_\tau^-$&$B_{g0}^-$&$C_{g2\tau}$&$C_{g2\tau}$&$C_{g2\tau}$\\
$2_{g2\tau}^-$&$1$&$1$&$\tau$&$g$&$1$&$\tau$&$g$&$\tau$&$g$&$\tau$&$a^-$&$b_\tau$&$b_{g0}$&$b_\tau$&$b_{g0}$&$c_1$&$b_\tau$&$b_{g0}$&$c_1$&$c_1$&$B_\tau^-$&$B_{g0}^-$&$C_{g2\tau}$&$C_{g2\tau}$&$C_{g2\tau}$\\

$2_{g0}^+$&$1$&$1$&$g$&$g$&$1$&$g$&$g$&$g$&$g$&$1$&$a^+$&$b_{g0}$&$b_{g0}$&$b_{g0}$&$b_{g0}$&$b_{g0}$&$b_{g0}$&$b_{g0}$&$b_{g0}$&$b_{g0}$&$B_{g0}^+$&$B_{g0}^+$&$C_{g0}$&$C_{g0}$&$C_{g0}$\\
$2_{g1}^+$&$1$&$1$&$g$&$g$&$1$&$g$&$g$&$g$&$g$&$1$&$a^+$&$b_{g0}$&$b_{g0}$&$b_{g0}$&$b_{g1}$&$b_{g1}$&$b_{g0}$&$b_{g1}$&$b_{g1}$&$b_{g0}$&$B_{g0}^+$&$B_{g1}^+$&$C_{g2}$&$C_{g1}$&$C_{g1}$\\
$2_{g2}^+$&$1$&$1$&$g$&$g$&$1$&$g$&$g$&$g$&$g$&$1$&$a^+$&$b_{g0}$&$b_{g1}$&$b_{g1}$&$b_{g0}$&$b_{g1}$&$b_{g1}$&$b_{g1}$&$b_{g0}$&$b_{g0}$&$B_{g1}^+$&$B_{g1}^+$&$C_{g1}$&$C_{g1}$&$C_{g2}$\\
$2_{g3}^+$&$1$&$1$&$g$&$g$&$1$&$g$&$g$&$g$&$g$&$1$&$a^+$&$b_{g0}$&$b_{g0}$&$b_{g0}$&$b_{g0}$&$b_{g1}$&$b_{g0}$&$b_{g0}$&$b_{g1}$&$b_{g1}$&$B_{g0}^+$&$B_{g0}^+$&$C_{g2}$&$C_{g2}$&$C_{g2}$\\
$2_{g4}^+$&$1$&$1$&$g$&$g$&$1$&$g$&$g$&$g$&$g$&$1$&$a^+$&$b_{g0}$&$b_{g1}$&$b_{g1}$&$b_{g0}$&$b_{g0}$&$b_{g1}$&$b_{g1}$&$b_{g1}$&$b_{g1}$&$B_{g1}^+$&$B_{g1}^+$&$C_{g1}$&$C_{g1}$&$C_{g3}$\\

$2_{g5}^+$&$1$&$1$&$g$&$g$&$1$&$g$&$g$&$g$&$g$&$1$&$a^+$&$b_{g0}$&$b_{g0}$&$b_{g1}$&$b_{g1}$&$b_{g1}$&$b_{g1}$&$b_{g1}$&$b_{g1}$&$b_{g1}$&$B_{g1}^+$&$B_{g1}^+$&$C_{g2}$&$C_{g3}$&$C_{g3}$\\

$2_{g0}^-$&$1$&$1$&$g$&$g$&$1$&$g$&$g$&$g$&$g$&$1$&$a^-$&$b_{g0}$&$b_{g0}$&$b_{g0}$&$b_{g0}$&$b_{g0}$&$b_{g0}$&$b_{g0}$&$b_{g0}$&$b_{g0}$&$B_{g0}^-$&$B_{g0}^-$&$C_{g0}$&$C_{g0}$&$C_{g0}$\\
$2_{g1}^-$&$1$&$1$&$g$&$g$&$1$&$g$&$g$&$g$&$g$&$1$&$a^-$&$b_{g0}$&$b_{g0}$&$b_{g0}$&$b_{g1}$&$b_{g1}$&$b_{g0}$&$b_{g1}$&$b_{g1}$&$b_{g0}$&$B_{g0}^-$&$B_{g1}^-$&$C_{g2}$&$C_{g1}$&$C_{g1}$\\
$2_{g2}^-$&$1$&$1$&$g$&$g$&$1$&$g$&$g$&$g$&$g$&$1$&$a^-$&$b_{g0}$&$b_{g1}$&$b_{g1}$&$b_{g0}$&$b_{g1}$&$b_{g1}$&$b_{g1}$&$b_{g0}$&$b_{g0}$&$B_{g1}^-$&$B_{g1}^-$&$C_{g1}$&$C_{g1}$&$C_{g2}$\\
$2_{g3}^-$&$1$&$1$&$g$&$g$&$1$&$g$&$g$&$g$&$g$&$1$&$a^-$&$b_{g0}$&$b_{g0}$&$b_{g0}$&$b_{g0}$&$b_{g1}$&$b_{g0}$&$b_{g0}$&$b_{g1}$&$b_{g1}$&$B_{g0}^-$&$B_{g0}^-$&$C_{g2}$&$C_{g2}$&$C_{g2}$\\
$2_{g4}^-$&$1$&$1$&$g$&$g$&$1$&$g$&$g$&$g$&$g$&$1$&$a^-$&$b_{g0}$&$b_{g1}$&$b_{g1}$&$b_{g0}$&$b_{g0}$&$b_{g1}$&$b_{g1}$&$b_{g1}$&$b_{g1}$&$B_{g1}^-$&$B_{g1}^-$&$C_{g1}$&$C_{g1}$&$C_{g3}$\\
$2_{g5}^-$&$1$&$1$&$g$&$g$&$1$&$g$&$g$&$g$&$g$&$1$&$a^-$&$b_{g0}$&$b_{g0}$&$b_{g1}$&$b_{g1}$&$b_{g1}$&$b_{g1}$&$b_{g1}$&$b_{g1}$&$b_{g1}$&$B_{g1}^-$&$B_{g1}^-$&$C_{g2}$&$C_{g3}$&$C_{g3}$\\

$2_{g0\tau}$&$1$&$\tau$&$\tau$&$\tau$&$\tau$&$\tau$&$\tau$&$1$&$g$&$g$&$b_\tau$&$b_\tau$&$b_\tau$&$b_\tau$&$c_0$&$c_0$&$b_\tau$&$c_0$&$c_0$&$b_{g0}$&$C_\tau$&$B_{g0\tau}$&$B_{g0\tau}$&$C_{g0\tau}$&$C_{g0\tau}$\\
$2_{g1\tau}$&$1$&$\tau$&$\tau$&$\tau$&$\tau$&$\tau$&$\tau$&$1$&$g$&$g$&$b_\tau$&$b_\tau$&$b_\tau$&$b_\tau$&$c_0$&$c_1$&$b_\tau$&$c_0$&$c_1$&$b_{g1}$&$C_\tau$&$B_{g0\tau}$&$B_{g1\tau}$&$C_{g1\tau}$&$C_{g1\tau}$\\
$2_{g2\tau}$&$1$&$\tau$&$\tau$&$\tau$&$\tau$&$\tau$&$\tau$&$1$&$g$&$g$&$b_\tau$&$b_\tau$&$b_\tau$&$b_\tau$&$c_1$&$c_1$&$b_\tau$&$c_1$&$c_1$&$b_{g0}$&$C_\tau$&$B_{g1\tau}$&$B_{g1\tau}$&$C_{g2\tau}$&$C_{g2\tau}$\\

$3_{g0\tau}$&$1$&$\tau$&$g$&$g$&$\tau$&$g$&$g$&$\tau$&$\tau$&$1$&$b_\tau$&$b_{g0}$&$b_{g0}$&$c_0$&$c_0$&$b_{g0}$&$c_0$&$c_0$&$b_{g0}$&$b_\tau$&$C_{g0\tau}$&$C_{g0\tau}$&$C_{g0}$&$C_{g0\tau}$&$C_{g0\tau}$\\
$3_{g1\tau}$&$1$&$\tau$&$g$&$g$&$\tau$&$g$&$g$&$\tau$&$\tau$&$1$&$b_\tau$&$b_{g0}$&$b_{g0}$&$c_0$&$c_1$&$b_{g1}$&$c_0$&$c_1$&$b_{g1}$&$b_\tau$&$C_{g0\tau}$&$C_{g2\tau}$&$C_{g2}$&$C_{g1\tau}$&$C_{g1\tau}$\\
$3_{g2\tau}$&$1$&$\tau$&$g$&$g$&$\tau$&$g$&$g$&$\tau$&$\tau$&$1$&$b_\tau$&$b_{g0}$&$b_{g1}$&$c_0$&$c_0$&$b_{g0}$&$c_0$&$c_1$&$b_{g1}$&$b_\tau$&$C_{g0\tau}$&$C_{g1\tau}$&$C_{g1}$&$C_{g0\tau}$&$C_{g1\tau}$\\
$3_{g3\tau}$&$1$&$\tau$&$g$&$g$&$\tau$&$g$&$g$&$\tau$&$\tau$&$1$&$b_\tau$&$b_{g0}$&$b_{g1}$&$c_1$&$c_0$&$b_{g1}$&$c_1$&$c_1$&$b_{g0}$&$b_\tau$&$C_{g2\tau}$&$C_{g1\tau}$&$C_{g1}$&$C_{g1\tau}$&$C_{g2\tau}$\\
$3_{g4\tau}$&$1$&$\tau$&$g$&$g$&$\tau$&$g$&$g$&$\tau$&$\tau$&$1$&$b_\tau$&$b_{g1}$&$b_{g1}$&$c_0$&$c_1$&$b_{g1}$&$c_1$&$c_0$&$b_{g1}$&$b_\tau$&$C_{g1\tau}$&$C_{g1\tau}$&$C_{g3}$&$C_{g1\tau}$&$C_{g1\tau}$\\

$4_{g0\tau}$&$1$&$\tau$&$\tau$&$g$&$\tau$&$\tau$&$g$&$g$&$\tau$&$\tau$&$b_\tau$&$b_\tau$&$b_{g0}$&$c_0$&$c_0$&$c_1$&$c_0$&$c_0$&$c_1$&$c_1$&$B_{g0\tau}$&$C_{g0\tau}$&$C_{g2\tau}$&$D_{g\tau}$&$D_{g\tau}$\\
$4_{g1\tau}$&$1$&$\tau$&$\tau$&$g$&$\tau$&$\tau$&$g$&$g$&$\tau$&$\tau$&$b_\tau$&$b_\tau$&$b_{g1}$&$c_0$&$c_0$&$c_1$&$c_0$&$c_1$&$c_0$&$c_1$&$B_{g0\tau}$&$C_{g1\tau}$&$C_{g1\tau}$&$D_{g\tau}$&$D_{g\tau}$\\
$4_{g2\tau}$&$1$&$\tau$&$\tau$&$g$&$\tau$&$\tau$&$g$&$g$&$\tau$&$\tau$&$b_\tau$&$b_\tau$&$b_{g1}$&$c_1$&$c_0$&$c_1$&$c_1$&$c_1$&$c_0$&$c_0$&$B_{g1\tau}$&$C_{g1\tau}$&$C_{g1\tau}$&$D_{g\tau}$&$D_{g\tau}$\\\hline

\end{tabular}
\caption{\label{tab:20j-symbolsa} Labels of simplices of $\Delta^4$.}
\end{table}

\begin{table}[thp]
    \centering
    
\setlength{\tabcolsep}{1pt}
    \begin{tabular}{||c||c|c|c|c|c||c|c|c||c|}
\hline

id&1234&1235&1245&1345&2345&$\sigma^2$&$\widetilde{F}$&$F$ & intersecting variants
\\\hline

$0^0$&$A_0$&$A_0$&$A_0$&$A_0$&$A_0$&$\emptyset$&1&1&$\emptyset$\\

$0^1$&$A_0$&$A_0$&$A_2$&$A_2$&$A_2$&$T_{\tau-}$&1&1&$\emptyset$\\

$0^2$&$A_2$&$A_2$&$A_2$&$A_2$&$A_2$&$T_{\tau-}$&1&1&$\emptyset$\\

$0^3$&$A_2$&$A_2$&$A_2$&$A_2$&$A_4$&$2T_{\tau-}$&1&1&$\emptyset$\\

$0^4$&$A_2$&$A_2$&$A_2$&$A_4$&$A_4$&$2T_{\tau-}$&1&1&$\emptyset$\\

$0^5$&$A_4$&$A_4$&$A_4$&$A_4$&$A_4$&$3T_{\tau-}$&1&1&$\emptyset$\\

$1^0_\tau$&$A_0$&$B_\tau^+$&$B_\tau^+$&$B_\tau^+$&$B_\tau^+$&$S_\tau$&$\sqrt{2}$&$\sqrt{2}/2$&$\emptyset$\\

$1^1_\tau$&$A_2$&$B_\tau^-$&$B_\tau^-$&$B_\tau^+$&$B_\tau^+$&$T_{\tau-}$&1&$\sqrt{2}/2$&$\emptyset$\\

$1^2_\tau$&$A_4$&$B_\tau^-$&$B_\tau^-$&$B_\tau^-$&$B_\tau^-$&$T^{2}_{\tau-}$&$\sqrt{2}/2$&$\sqrt{2}/2$&$\emptyset$\\

$2^+_\tau$&$C_\tau$&$B_\tau^+$&$C_\tau$&$B_\tau^+$&$C_\tau$&$S_\tau$&$\sqrt{2}$&$\sqrt{2}/2^{5/4}$&$\emptyset$\\

$2^-_\tau$&$C_{\tau}$&$B_\tau^-$&$C_\tau$&$B_\tau^-$&$C_\tau$&$T_{\tau-}$&$1$&$1/2^{3/4}$&$\emptyset$\\

$1^0_{g0}$&$A_0$&$B_{g0}^+$&$B_{g0}^+$&$B_{g0}^+$&$B_{g0}^+$&$S_g$&$1$&1&$1^0_{g1}, 1^0_{g2}$\\

$1^1_{g0}$&$A_2$&$B_{g0}^-$&$B_{g0}^-$&$B_{g0}^+$&$B_{g0}^+$&$T_{g-}$&$\sqrt{2}-\sqrt{2}/2$&1&$1^1_{g1},1^1_{g2}$\\

$1^2_{g0}$&$A_4$&$B_{g0}^-$&$B_{g0}^-$&$B_{g0}^-$&$B_{g0}^-$&$T^2_{g-}$&$1/2$&$1$&$1^2_{g1},1^2_{g2}$\\

$1_{g0\tau}^+$&$B_\tau^+$&$B_\tau^+$&$B_{g0\tau}$&$B_{g0\tau}$&$B_{g0\tau}$&$S_g$&$2-1$&$1/\sqrt{2}$&$1^+_{g1\tau}$\\

$1_{g0\tau}^-$&$B_\tau^-$&$B_\tau^-$&$B_{g0\tau}$&$B_{g0\tau}$&$B_{g0\tau}$&$\frac{1}{\sqrt{2}}S_g$&$1/\sqrt{2}$&$1/\sqrt{2}$&$1^-_{g1\tau}$\\

$2^+_{g0\tau}$&$B_\tau^+$&$B_{g0}^+$&$C_{g0\tau}$&$C_{g0\tau}$&$C_{g0\tau}$&$S_g$&$2-1$&$1/2^{1/4}$&$2^+_{g1\tau},2^+_{g2\tau}$\\

$2^-_{g0\tau}$&$B_\tau^-$&$B_{g0}^-$&$C_{g0\tau}$&$C_{g0\tau}$&$C_{g0\tau}$&$\frac{1}{\sqrt{2}}S_g$&$1/\sqrt{2}$&$1/2^{1/4}$&$2^-_{g1\tau},2^-_{g2\tau}$\\

$2_{g0\tau}$&$C_\tau$&$B_{g0\tau}$&$B_{g0\tau}$&$C_{g0\tau}$&$C_{g0\tau}$&$S_g$&$2-1$&$1/2^{1/4}$&$2_{g1\tau},2_{g2\tau}$\\

$2^+_{g0}$&$B_{g0}^+$&$B_{g0}^+$&$C_{g0}$&$C_{g0}$&$C_{g0}$&$S_g$&$2-1$&1&$2^+_{g1},2^+_{g2},2^+_{g3},2^+_{g4},2^+_{g5}$\\

$2^-_{g0}$&$B_{g0}^-$&$B_{g0}^-$&$C_{g0}$&$C_{g0}$&$C_{g0}$&$T_{g-}$&$\sqrt{2}/2$&$1$&$2^-_{g1},2^-_{g2},2^-_{g3},2^-_{g4},2^-_{g5}$\\

$3_{g0\tau}$&$C_{g0\tau}$&$C_{g0\tau}$&$C_{g0}$&$C_{g0\tau}$&$C_{g0\tau}$&$S_g$&$2-1$&1&$3_{g1\tau},3_{g2\tau},3_{g3\tau},3_{g4\tau}$\\

$4_{g0\tau}$&$B_{g0\tau}$&$C_{g0\tau}$&$C_{g2\tau}$&$D_{g\tau}$&$D_{g\tau}$&$S_g$&$2-1$&1&$4_{g1\tau},4_{g2\tau}$ \\

\hline

\end{tabular}
    \caption{Table of bulks, surfaces and values of 20j-symbols. Here $\tilde{F}$ is unnormalized and $F$ is normalized. The last column presents the other simplex id with the same $F$ symbols.}
    \label{tab:20j-symbols}
\end{table}

\clearpage

\subsection{Pachner Moves}
A well-known result of Pachner states that two triangulations of a PL-manifold are related by a finite sequence of Pachner moves \cite{Pan91}. The boundary of the $(n+2)$-simplex $\Delta^{n+2}$ has $(n+1)$-simplices $\Delta_{n+1,i}$, $1\leq i\leq n+3$. The $(k,n+3-k)$ Pachner move changes $k$ $(n+1)$-simplices to the rest $(n+3-k)$ $(n+1)$-simplices.

When $n=2$, it is well-know that the value $F$ of $\Delta_{3,i}$ labelled by morphisms of a spherical 2-category on the four 0-faces is called the quantum 6j-symbol or the $F$-symbol. They satisfy the pentagon equation corresponding to the $(2,3)$ Pachner move. The number $6$ comes from the number of $1$-simplices in $\Delta^3$, and the spin $j$ was originally considered as the spin $j$ irreducible representation of $SU(2)$. For a general $n$, the value $F$ of $\Delta_{n+1,i}$ labelled by $n$-morphisms of a spherical $n$-category on the $n+2$ 0-faces is called the $F$-symbol. 

By Theorem \ref{Thm: Semisimple invariant}, the $F$-symbols satisfy the $(k,n+3-k)$ Pachner move for all $k$. In particular, the 20j-symbols satisfies the higher pentagon equations corresponding to $(3,3)$, $(2,4)$ and $(1,5)$ Pachner moves. To double check the theory and the numerical computation, we have verified all Pachner moves of 20j-symbols on a computer. There are 2044 (1-5) equations, 30464 (4-2) equations and 50709 (3-3) equations.

We list the first two identities of Pachner moves of our 20j-symbols as examples.

\begin{table}[thp]
\centering
\setlength{\tabcolsep}{.6pt}
\begin{tabular}{|c||c|c|c|c|c||c||c|c|c|c|c|c|c|c||}

\hline
id&1234&1235&1245&1345&2345&16,...,56&$\prod\limits_{i=1}^5\frac{Tr(\Delta^1_i)}{\mu_1}$&$126,...,456$&$\prod\limits_{i=1}^{10}\frac{Tr(\Delta^2_i)}{\mu_2}$&\#&12346,...,23456&$\prod\limits_{i=1}^5F(\Delta^4_i)$\\\hline  
$0^0$&$A_0$&$A_0$&$A_0$&$A_0$&$A_0$&$5\times 1$&$\frac{1}{2^5}$&10$a_\pm$&1&$2^4$&$0^1$ or $0^2$&1\\

$0^0$&$A_0$&$A_0$&$A_0$&$A_0$&$A_0$&$5 \times\tau$&$\frac{\sqrt{2}^5}{2^5}$&10$b_\tau$&$\sqrt{2}^{10}$&1&$5\times1^0_\tau$&$\frac{1}{\sqrt{2}^5}$\\

$0^0$&$A_0$&$A_0$&$A_0$&$A_0$&$A_0$&$5 \times g$&$\frac{1}{2^5}$&10$b_{g0}$ or $b_{g1}$&1&$2^4$&$5 \times 1^0_g$&$1$\\

\hline
\multicolumn{9}{c}

$2 = \frac{1}{2}+1+\frac{1}{2}$

\\
\hline

$1_\tau^0$&$A_0$&$B_\tau^+$&$B_\tau^+$&$B_\tau^+$&$B_\tau^+$&$\tau,4\times1$&$\frac{\sqrt{2}}{2^5}$&4$b_\tau$,6$a_\pm$&$\sqrt{2}^{4}$&$2^3$&$4 \times 1^i_\tau,0$&$\frac{1}{\sqrt{2}^4}$\\  

$1_\tau^0$&$A_0$&$B_\tau^+$&$B_\tau^+$&$B_\tau^+$&$B_\tau^+$&$1,4\times\tau$&$\frac{\sqrt{2}^4}{2^5}$&$10b_\tau$&$\sqrt{2}^{10}$&1&$4\times2_\tau^+,1_\tau^0$&$\frac{\sqrt{2}}{2^4}$\\ 

$1_\tau^0$&$A_0$&$B_\tau^+$&$B_\tau^+$&$B_\tau^+$&$B_\tau^+$&$g$,$4 \times\tau$&$\frac{\sqrt{2}^4}{2^5}$&4$c$,6$b_\tau$&$\sqrt{2}^{6}$&$2$&$4\times 1_{gi\tau}^+$, $1_{\tau}^0$&$(\frac{\sqrt{2}}{2})^5$\\

$1_\tau^0$&$A_0$&$B_\tau^+$&$B_\tau^+$&$B_\tau^+$&$B_\tau^+$&$\tau$,$4 \times g$&$
\frac{\sqrt{2}}{2^5}$&4$c$,6$b_g^{\pm}$&1&$2^4$&$4\times2_{gi\tau}^+,1_{g0}^0$&$\frac{1}{2}$\\  
\hline
\multicolumn{10}{c}

$\sqrt{2} = \frac{\sqrt{2}}{4}+\frac{\sqrt{2}}{4}+\frac{\sqrt{2}}{4}+\frac{\sqrt{2}}{4}$\\

\hline
\end{tabular}
\caption{\label{tab:Example of Pachner Moves}Examples of Pachner moves of 20j-symbols}
\end{table}

\subsection{Conjecture}
The methods in this section seem to work in all dimensions based on a carefully study of the cobordism theory  \cite{Tho54,Wil66}. We conjecture that 

\begin{conjecture}
We define the $S^n$ functional $Z$ on stratified n-manifold $(S^n, \cup_i S_i)$ where $\{S_i\}_{i\in I}$ are PL surfaces in $S^n$ which may intersect transversely and be unoriented. It has the stratification:
\begin{enumerate}
    \item $M^n$ is $S^n$; 
    \item $M^{n-1}$ is the union of PL $(n-1)$ hyper surfaces;
    \item $M^{n-k}$ is the union of intersection $(n-k)$-manifolds of every $k$ hyper surfaces, $2\leq k\leq n$.
\end{enumerate}
The intersection of every $(n+1)$ hyper surfaces is empty.
Let $e_i$ be the Euler number of $S_i$. We define 
\begin{equation}
Z(S^n, \cup_i {S_i})=\prod_{i=1}^n 2^{1-\frac{e_i}{4}}.
\end{equation}
We conjecture that $S^n$ function $Z$ is complete positive and complete finite.
\end{conjecture}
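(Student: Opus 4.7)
The plan is to follow the strategy that works for $n=3$ (Theorems \ref{Thm: RP} and \ref{Thm: finite 1-morphism}), adapting each step to higher codimension. First I would verify the two structural properties on which the rest of the argument rests: (i) the functional $Z$ is homeomorphism invariant, which is immediate since $e_i$ is a topological invariant of $S_i$, and (ii) for each connected hypersurface $S_i$ in a chart, $Z$ depends only on the Euler characteristic, so that any PL isotopy of $S_i$ followed by a Morse-type modification costs only a scalar $2^{-\Delta e/4}$. These together imply that the value of $Z$ on a configuration in a disc $D^n$ depends only on the combinatorial ``connected type'' of the hypersurfaces (which hypersurfaces are joined to which, through possible higher intersections), and in particular on finitely many discrete data once the boundary on $S^{n-1}$ is fixed. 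This already gives 3-finiteness, i.e.\ $\dim \tilde V(S^{n-1},C)<\infty$, as in Prop.~\ref{prop:vectorfinite}.

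Next I would carry out the analogue of Prop.~\ref{Prop: key relation}. The vector space $\tilde V(S^{n-1},C)$ with $|C|=m$ hypersurface boundaries is, before relations, spanned by configurations indexed by set partitions of $\{1,\dots,m\}$; I would compute the inner product matrix using the Euler-number formula and show its rank equals $2^{m-1}$. This should be doable by recognizing the Gram matrix as a product of $2\times 2$ blocks associated with toggling a single pair of boundary components, which is exactly the pattern one sees in the Ising-type planar algebra. From such a computation one extracts a universal skein relation (the higher-dimensional analogue of Fig.~\ref{fig:SkeinRelation}) asserting that a specific alternating sum of connected types vanishes. Granted this relation, the inductive argument of Theorem~\ref{Thm: RP} applies verbatim: picking any boundary hypersurface $\partial D \subset C$ one decomposes the ``cap'' along $D$ as a sum of two orthogonal projections $t_0'\sim t_0$ and $t_r$ (a red tube), where $t_r$ can be merged into an adjacent hypersurface via the skein relation, yielding an isomorphism $\tilde V(S^{n-1},C)\cong \tilde V(S^{n-1},C')\oplus \tilde V(S^{n-1},C')$ with $|C'|=|C|-1$. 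This simultaneously proves reflection positivity (inductively) and the dimension formula $2^{|C|-1}$.

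For complete finiteness I would then repeat the inductive classification of $k$-morphisms for $0\le k\le n-1$ as in Theorems~\ref{Thm: finite 2-morphism}--\ref{Thm: finite 1-morphism}. At the top level ($k=n-1$) there are three inequivalent 1-morphisms $\mathbbm 1,\tau,g$: $\mathbbm 1$ is the empty type, $\tau$ the single transverse point, $g$ the red line connecting two parallel hypersurfaces; the fusion rules of Prop.~\ref{Prop: Ising fusion rule} should extend with essentially the same proof, since the red tube merging relation is dimension independent. For $k<n-1$, a $k$-morphism is labelled on a $k$-disc whose transverse intersection with $\cup S_i$ lives in an $(n-k)$-dimensional slice; by the same ``no-closed-component and boundary-matching'' argument, after applying the skein relation one can eliminate all internal closed strata except possibly one red tube, leaving only finitely many types for each fixed boundary.

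The main obstacle is the higher-dimensional skein relation itself. For $n=3$ it was verified by a direct rank computation on a $5\times 5$ Gram matrix of connected types for $|C|=3$. For general $n$ I expect the analogue to involve the Gram matrix over all partitions of a larger set, and the difficulty is to show algebraically that the rank collapses to $2^{m-1}$ and to identify the vanishing vector as the ``alternating trace-zero projection'' whose higher version generalizes Fig.~\ref{fig:trace0}. I would attack this by using that closed hypersurfaces with Euler number $e$ evaluate to $2^{1-e/4}$, reinterpreting the Gram matrix entries as $\prod_j 2^{1-e_j/4}$ over connected components of the glued configuration, and reducing the factorization to an inclusion-exclusion over partitions. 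Once this block structure is established, the rank count and the extraction of the universal skein relation should follow from a cohomological argument on the lattice of partitions, and the rest of the proof proceeds exactly as in Theorem~\ref{Thm: RP and CF}.
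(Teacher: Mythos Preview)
The statement you are attempting to prove is a \emph{conjecture} in the paper, not a theorem: the paper gives no proof of it, only the remark that ``the methods in this section seem to work in all dimensions based on a carefully study of the cobordism theory.'' There is therefore no proof in the paper to compare your proposal against, and what you have written is a sketch of an approach to an open problem.

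That said, your outline follows the $n=3$ template closely, and it is worth flagging a structural issue you have not addressed. For general $n$ the hypersurfaces $S_i$ are $(n-1)$-dimensional, and when $n$ is even these are odd-dimensional closed manifolds, whose Euler number is always zero. In that regime the formula $Z=\prod_i 2^{1-e_i/4}$ collapses to $2^{\#\text{components}}$ for every closed configuration, so the Gram matrix entries for gluings along $S^{n-1}$ no longer record genus-type information in the way they do for $n=3$. Your plan to ``recognize the Gram matrix as a product of $2\times 2$ blocks'' and to read off the rank $2^{m-1}$ relies implicitly on the $n=3$ pattern where Euler numbers of glued pieces vary; for even $n$ you would need a genuinely different computation, and it is not clear a priori that the same rank collapse and the same Ising-type skein relation emerge. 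At a minimum the argument must split into even and odd $n$, and the even case needs an independent analysis before the inductive reflection-positivity and finiteness steps can proceed.
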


\begin{acknowledgement}
The author would like to thank many people for helpful discussions on this topic, especially to 
Arthur Jaffe, Liang Kong, Maxim Konsevitch, Zhenghan Wang, Xiao-Gang Wen, Edward Witten, and colleagues Song Cheng, Jianfeng Lin, Fan Lu, Shuang Ming, Nicolai Reshetikhin, Yuze Ruan, Ningfeng Wang, Yilong Wang, Jinsong Wu, Shing-Tung Yau, Zishuo Zhao and Hao Zheng at Tsinghua University and BIMSA for fruitful discussions and comments; thank Yuze Ruan and Zishuo Zhao for pointing out gaps in early notes; thank Ningfeng Wang for the help on computing the 20j-symbols. The author would like to thank Guoliang Yu for the hospitality during the visit at Texas A$\&$M University.
The author is supported by BMSTC and ACZSP (Grant No. Z221100002722017), Beijing Natural Science Foundation Key Program (Grant No. Z220002), and by NKPs (Grant no. 2020YFA0713000).
\end{acknowledgement}

 \bibliographystyle{abbrv}
 \bibliography{TQFT}

\end{document}